\newcommand{\D}{\mathbb D}
\newcommand{\R}{\mathbb R}
\newcommand{\p}{\partial}
\newcommand{\Z}{\mathbb Z}
\newcommand{\C}{\mathbb C}
\newcommand{\ep}{\varepsilon}
\newcommand{\g}{\gamma}
\newcommand{\et}{\eta}
\renewcommand{\l}{\lambda}
\renewcommand{\lg}{\Lambda}
\newcommand{\N}{\mathbb{N}}
\newcommand{\T}{\mathbb{T}}
\newcommand{\boA}{\mathcal{A}}
\newcommand{\boB}{\mathcal{B}}
\newcommand{\boD}{\mathcal{D}}
\newcommand{\boE}{\mathcal{E}}
\newcommand{\boW}{\mathcal{W}}
\newcommand{\boG}{\mathcal{G}}
\newcommand{\boT}{\mathcal{T}}
\newcommand{\boI}{\mathcal{I}}
\newcommand{\boH}{\mathcal{H}}
\newcommand{\boM}{\mathcal{M}}
\newcommand{\boL}{\mathcal{L}}
\newcommand{\boK}{\mathcal{K}}
\newcommand{\ji}{\langle}
\newcommand{\jd}{\rangle}
\newcommand{\dist}{{\rm dist}}
\newcommand{\diam}{{\rm diam}}
\newcommand{\good}{{\rm good}}
\newcommand{\tz}{\tilde{\zeta}}
\newcommand{\tO}{\tilde{\Omega}}
\newtheorem{thm}{Theorem}[section]
\newtheorem{lem}[thm]{Lemma}
\newtheorem{stm}[thm]{Statement}
\newtheorem{cor}[thm]{Corollary}
\theoremstyle{remark}
\newtheorem{rem}{\bf Remark}[section]
\theoremstyle{definition}
\numberwithin{equation}{section}
\begin{document}

	\title[QP localization]{Green's function estimates for long-range  quasi-periodic operators on $\Z^d$ and applications}
	
	
	\author[Wen]{Li Wen}
	\address[Wen] {School of Mathematics,
		Sichuan University,
		Chengdu 610064,
		China}
	\email{liwen.carol98@gmail.com}
	
	\author[Wu]{Yuan Wu}
	\address[Wu]{School of Mathematics and Statistics,
		Huazhong University of Science and Technology,
		Wuhan 430070, 
		China}
		\email{wuyuan@hust.edu.cn}
	\date{\today}
	\keywords{Quasi-periodic operators; Long-range hopping; Analytic cosine type potentials; Green's function estimates; Multi-scale analysis; Schur complement; Quantum dynamics}
	\maketitle
	\begin{abstract}
		We establish quantitative   Green's function  estimates for a class of quasi-periodic (QP)  operators on $\Z^d$ with certain  slowly  decaying long-range hopping and analytic cosine type potentials.  As applications, we prove  the arithmetic spectral localization,  and  obtain upper bounds on quantum dynamics for all phase parameters.  To deal  with quantum dynamics estimates, we develop an  approach employing  separation property (rather than the sublinear bound)  of resonant blocks in the regime of Green's function estimates. 
	\end{abstract}
	
	\section{Introduction}
	The study of  long-range hopping QP  operators   has attracted great attention over the years, cf. e.g.,   \cite{BJ02,JK16,GY20, JLS20, JL21, Shi22, Liu22,  SW22, GYZ23,Liu23, Shi23, SS23, SW23,SW24,WXZ25}.  
	In this paper, we are concerned with  long-range QP   operators
	\begin{align}\label{model}
		\mathcal{H}(\theta)=\ep \mathcal{W}_\phi+v(\theta+\bm n\cdot\bm \omega)\delta_{\bm n,\bm n'},\ \bm n,\bm n'\in\Z^d,
	\end{align}
	where the off-diagonal  part (i.e., the hopping term) $\mathcal{W}_{\phi}$ is a Toeplitz operator  satisfying
	\begin{align}\label{wphi}
		(\mathcal{W}_\phi \psi)(\bm n)=\sum_{\bm l\in\Z^d}\phi(\bm n-\bm l)\psi(\bm l),\ \phi(\bm 0)=0,\ |\phi(\bm n)|\le e^{-\alpha\log^{\rho}(1+\|\bm n\|)}
	\end{align}
	with some $\alpha>0$, $\rho>1$ and $\|\bm n\|=\sup\limits_{1\le i\le d}|n_i|$.   The potential $v$  is an analytic cosine type function (cf. \eqref{vdefn} for details).  Note  that the hopping term $\mathcal{W}_\phi$  had  previously been used in the construction of almost-periodic solutions for some nonlinear Hamiltonian equations \cite{Pos90}, and was  recently  introduced  by Shi-Wen    \cite{SW22} and Shi-Wen-Yan \cite{SWY25} in the  study of the localization problems for  operators on $\Z^d$ with monotone quasi-periodic potentials  and  random potentials, respectively.   We also mention that the existence of  localized  eigenfunctions whose decay rate is the same as \eqref{wphi}  has been established   in  physics  \cite{ca17}.  The present work aims to prove establish quantitative Green's function estimates for \eqref{model} via multi-scale analysis (MSA)  scheme  in the spirit of Fr\"ohlich-Spencer-Wittwer \cite{FSW90}, Bourgain \cite{Bou00} and Cao-Shi-Zhang \cite{CSZ24a}. As applications, we prove the arithmetic localization and obtain upper bounds on quantum dynamics for all phase parameters. 
	
	The groundbreaking work of Fr\"ohlich-Spencer-Wittwer \cite{FSW90} was the first to extend the Green's function estimation method based on multi-scale analysis (MSA) from random potentials \cite{FS83} to QP potentials. This method is primarily based on eigenvalue variational techniques, thus requiring the operators to be self-adjoint and the potentials to be single-variable functions. In 2000, Bourgain \cite{Bou00} made a breakthrough by proposing an MSA-type Green's function estimation method that does not rely on eigenvalue variation techniques: the resonances are entirely determined by the zeros of the Dirichlet determinant based on the preparation theorem. Subsequently, Bourgain and collaborators \cite{BGS02,Bou07} incorporated matrix-valued Cartan’s estimates and semi-algebraic set theory into the MSA framework, enabling them to derive large deviation theorems (LDT) for the finite volume Green's functions of multi-dimensional QP operators with multivariate analytic potentials. The result of \cite{Bou07} was later extended by Jitormirskaya-Liu-Shi \cite{JLS20} to  operators with arbitrary multi-frequency and exponentially long-range hopping. In 2024, Cao-Shi-Zhang \cite{CSZ24a} successfully generalized Bourgain's method \cite{Bou00} to arbitrary-dimensional lattice $\Z^d$ case. Meanwhile, the eigenvalue-variation-based Green's function estimation method \cite{FSW90} was also extended by Cao-Shi-Zhang \cite{CSZ23,CSZ24b} to the $\Z^d$ case: they overcame the challenges posed by level crossing phenomena.
	
	In fact, the Green's function estimates actually allow us to study Anderson localization (i.e., the pure point spectrum with exponentially decaying eigenfunctions). To derive localization, it is necessary to remove certain parameters ($\omega$ or $\theta$) to overcome the double resonance phenomenon. The proof of localization via LDT-type Green's functions estimate methods typically requires removing a positive-measure set (depending on $\theta$) of $\omega$, resulting in a non-arithmetic result. To achieve arithmetic localization (the removed sets for both $\omega$ and $\theta$  in the localization proof admit an explicit arithmetic description), precise characterization of resonances is required. In higher dimensions, \cite{JK16,GY20} first achieved arithmetic localization for QP operators with the cosine potential via reducibility arguments based on the Aubry duality. Recently, Cao-Shi-Zhang \cite{CSZ23,CSZ24b,CSZ24a} proposed an innovative approach to proving arithmetic localization based on Green's function estimates, which enabled the demonstration of arithmetic localization for multi-dimensional QP operators with $C^2$-cosine like potentials. All the aforementioned arithmetic localization results require operators with exponentially long-range hopping. The primary objective of this work is to verify arithmetic localization for multi-dimensional QP operators featuring slower-decaying (cf. \eqref{wphi}) long-range hopping.
	
	Cao-Shi-Zhang \cite{CSZ23,CSZ24b,CSZ24a} even established arithmetic dynamical localization for these operators. In particular, for fixed Diophantine frequency $\omega$ and a.e. $\theta\in\T$, 
	\begin{align*}
		\sup_{t\in\R}	\left(\left(|\mathscr{X}_{\mathcal{H}(\theta)}|_{\delta_{\bm 0}}^p\right)(t)\right)<+\infty\text{ (cf. \eqref{pthm} for details)}.
	\end{align*}
	A natural question arises: can one demonstrate upper bounds of $\left(|\mathscr{X}_{\mathcal{H}(\theta)}|_{\delta_{\bm 0}}^p\right)(t)$ in $t$ (quantum dynamics estimates) for all $\theta\in\T$? This problem has been resolved by \cite{JL21,SS23,Liu23} in the case of exponentially long-range QP operators. The methodologies in these works are fundamentally built upon large deviation estimates of Green's functions, thus inherently requiring the long-range hopping to exhibit exponential decay. Another motivation of the present work is to develop a novel strategy based on the separation property of resonant blocks (rather than LDT) for investigating quantum dynamical estimates, which enables the treatment of long-range hopping of type \eqref{wphi}. In fact, building upon Green's function estimates of \cite{SW24}, our method can handle quantum dynamical estimates for operators with power-law long-range hopping.
	
	\subsection{Main results}
	Recalling \eqref{model}, 
	the potential $v$  is an analytic function defined on
	\begin{align*}
		\D_R=\left\{z\in\C/\Z: \ |\Im z|\le R\right\},\ R>0,
	\end{align*}
	satisfying the following Morse condition: there exist $\kappa_1,\kappa_2>0$ such that
	\begin{align}\label{vdefn}
		\kappa_1\|z_1-z_2\|_{\T}\|z_1+z_2\|_{\T}\le|v(z_1)-v(z_2)|\le \kappa_2\|z_1-z_2\|_{\T}\|z_1+z_2\|_{\T},
	\end{align}
	for all $z_1,z_2\in\D_R$, where the torus norm is defined by
	\begin{align*}
		\|z\|_\T=\sqrt{\|\Re z\|_\T^2+|\Im z|^2},\ \|x\|_{\T}=\inf_{l\in\Z}|l-x|\ \text{for}\ x\in\R.
	\end{align*}
	We let $\theta\in\T=\R/\Z$, $\bm \omega\in[0,1]^d$, and $
	\bm n\cdot\bm \omega=\sum\limits_{i=1}^{d}n_i\omega_i.$
	In the following, we assume that $\bm \omega\in {DC}_{\tau,\g}$ for some $\tau>d$, $\g>0$ with
	\begin{align}\label{dc}
		{DC}_{\tau,\g}=\left\{\bm\omega\in[0,1]^d:\ \|\bm n\cdot\bm \omega\|_\T\ge\frac{\g}{\|\bm n\|^\tau}\ {\rm for}\ \forall\bm n\in\Z^d\setminus\{\bm 0\}\right\}.
	\end{align}
	
	\subsubsection{Quantitative Green’s function estimates}
	The main result of this paper is a quantitative version of Green's function estimates, which will
	imply both arithmetic spectral localization and sub-polynomial bounds of moments.

	We first introduce the function class of potentials. For fixed $R > 0$, let $\mathscr{V}_R$ be the class of analytic functions $v: \D_R \to \C$ satisfying the following uniform condition: there exist positive constants $\kappa_1 = \kappa_1(v,R)$ and $\kappa_2 = \kappa_2(v,R)$ such that inequality \eqref{vdefn} holds for all $z_1, z_2 \in \D_R$. Let $|v|_R=\sup\limits_{z\in\D_R}|v(z)|$ for $v\in\mathscr{V}_R$ and $\mathscr{V}=\bigcup_{R>0}\mathscr{V}_R$. We remark that more details about $\mathscr{V}$ can be found in \cite{SW24}.

	Given $E\in\C$ and $\lg\subset\Z^d$, the Green's function (if exists) is defined by
	\begin{align*}
		\mathcal{T}_{\lg}^{-1}(E;\theta)=(\mathcal{H}_\lg(\theta)-E)^{-1},\ \mathcal{H}_{\lg}(\theta)=\mathcal{R}_\lg \mathcal{H}(\theta)\mathcal{R}_\lg,
	\end{align*}
	where $\mathcal{H}(\theta)$ is given  by \eqref{model} and $\mathcal R_\Lambda$ denotes the restriction operator.
	
	Recall that $\bm\omega\in DC_{\tau,\g}$ and $\rho>1$. We fix a constant $\rho'$ so that
	\begin{align*}
		1<\rho'<\rho<\rho'+1.
	\end{align*}
	 At the s-th iteration step, let $\delta_{s}^{-1}$ (resp. $N_s$) describe the resonance strength (resp. the size of resonant blocks) defined by
		\begin{align*}
		N_{s+1}=\left[e^{\left|\log\delta_{s}\right|^{\frac{1}{\rho'}}}\right],\ \delta_{s+1}=\delta_s^{10^{5\rho'}},\ \delta_0=\ep_0^{\frac{1}{10}},
	\end{align*}
	where $[x]$ denotes the interger part of $x\in\R$.
	
	Then we have
	\begin{thm}\label{ge}
		Let $1<\rho'<\rho<1+\rho'$, $\bm\omega\in DC_{\tau,\g}$ and $v\in\mathscr{V}_R$. Then there is some $\ep_0=\ep_0(d,\tau,\g,v,R,\alpha,\rho,\rho')>0$ so that for $0<|\ep|\leq \ep_0$ and $E\in v(\D_{R/2})$, there exists a sequence
		\begin{align*}
			\{\theta_s=\theta_s(E)\}_{s=0}^{s'}\subset\C\ (s'\in\N\cup\{+\infty\})
		\end{align*}
		with the following properties. Fix any $\theta\in \T$, if a subset $\lg\subset\Z^d$ is $s$-$\good$ (cf.  $(\bm e)_s$ of Statement \ref{state} for the definition of $s$-$\good$ set, and \S\ref{qgfe} for the definitions of $\{\theta_s\}_{s=0}^{s'}$, sets $P_s$, $\tilde{\Omega}_{\bm k}^s$ and $\tz_s>0$), then
		\begin{align*}
			&\|\mathcal{T}_\lg^{-1}(E; \theta)\|<2\delta_{s-1}^{-3}\cdot\sup\limits_{\{\bm k\in P_s:\ \tilde{\Omega}_{\bm k}^s\subset\lg\}}(\|\theta+\bm k\cdot\bm \omega-\theta_s\|_{\T}^{-1}\cdot\|\theta+\bm k\cdot\bm\omega+\theta_s\|_{\T}^{-1})<\delta_{s}^{-3},\\
			&|\boT_{\lg}^{-1}(E;\theta)(\bm x,\bm y)|<e^{-\frac{1}{2}\alpha\log^{\rho}(1+\|\bm x-\bm y\|)}\ \text{for $\|\bm x-\bm y\|>10\tz_s$}.
		\end{align*}
	\end{thm}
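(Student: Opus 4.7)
The plan is to prove Theorem~\ref{ge} by induction on the scale index $s$, carrying out a multi-scale analysis in the tradition of Fr\"ohlich--Spencer--Wittwer, Bourgain, and Cao--Shi--Zhang. The inductive statement is essentially Statement~\ref{state}, which simultaneously carries the Green's function norm bound, the off-diagonal decay, the construction of the resonant phase $\theta_s(E)$, and the geometric description of the resonant blocks $\tilde{\Omega}_{\bm k}^s$. The base case $s=0$ should follow from first-order perturbation theory: for $|\ep|\le\ep_0$ the operator $\mathcal{H}(\theta)$ is an $O(\ep)$ perturbation of multiplication by $v(\theta+\bm n\cdot\bm\omega)$, so that away from the two preimages $\pm\theta_0(E)$ of $E$ under $v$ (which exist and are distinct by the Morse condition \eqref{vdefn}) a Neumann series converges and yields both the norm and pointwise bounds at scale $0$.

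For the inductive step from $s$ to $s+1$, the heart of the argument is to identify and characterize the resonant blocks at the new scale. First I would apply a Weierstrass-preparation / implicit-function argument to the Dirichlet determinant $\det(\mathcal{H}_{\tilde{\Omega}_{\bm 0}^{s+1}}(\theta)-E)$, viewed as a function of $\theta$, to show that its zero set in a small complex neighborhood of $\pm\theta_s(E)$ is the graph of a pair of analytic branches $\theta=\pm\theta_{s+1}(E)$, each differing from the previous scale by a correction of size controlled by $\delta_s$. The resonant set $P_{s+1}$ then collects those $\bm k$ for which $\|\theta+\bm k\cdot\bm\omega\pm\theta_{s+1}\|_{\T}$ drops below a threshold calibrated to $N_{s+1}$, and the Diophantine condition $\bm\omega\in DC_{\tau,\g}$ combined with the Morse condition on $v$ guarantees that the blocks $\tilde{\Omega}_{\bm k}^{s+1}$ for distinct $\bm k\in P_{s+1}$ are well separated at scale $N_{s+1}$. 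Given an $(s+1)$-$\good$ set $\lg$, I would decompose $\lg$ into an $s$-$\good$ background together with the finitely many resonant blocks $\tilde{\Omega}_{\bm k}^{s+1}\subset\lg$, invert the background via the inductive hypothesis, and apply the Schur complement formula; the residual block matrix is then a small perturbation of a diagonal operator whose inverse is controlled precisely by the product $\|\theta+\bm k\cdot\bm\omega-\theta_{s+1}\|_\T^{-1}\cdot\|\theta+\bm k\cdot\bm\omega+\theta_{s+1}\|_\T^{-1}$, yielding the announced norm bound on $\mathcal{T}_\lg^{-1}(E;\theta)$.

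The main obstacle, in my view, is propagating the sub-exponential off-diagonal decay $e^{-\frac{1}{2}\alpha\log^\rho(1+\|\bm x-\bm y\|)}$ through the inductive step. Each use of the resolvent identity with respect to the background produces a sum over multi-step paths $\bm x=\bm n_0,\bm n_1,\ldots,\bm n_k=\bm y$ weighted by $\prod_i|\phi(\bm n_{i+1}-\bm n_i)|$ times background Green's function factors, and a priori both the number of steps and the combinatorial entropy of such paths threaten to destroy the decay. The careful choice of scales $N_{s+1}=[e^{|\log\delta_s|^{1/\rho'}}]$ with $\rho'<\rho$ is exactly what makes this work: one needs a near sub-additivity estimate for $\log^\rho(1+\cdot)$, essentially of the form $\sum_i\log^\rho(1+r_i)\ge\log^\rho(1+\sum_i r_i)$ up to a small loss, so that the accumulated decay along each path still matches the target rate $\log^\rho(1+\|\bm x-\bm y\|)$. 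The constraint $\rho<1+\rho'$ then ensures that the cumulative entropy and commutator losses from all previous scales remain absorbable into the halving $\alpha\mapsto\tfrac{1}{2}\alpha$. Once this decay estimate is in place, the separation of the resonant blocks (rather than any sublinear bound on their density), combined with the Schur-complement norm bound, closes the induction and delivers the pointwise estimate for $\|\bm x-\bm y\|>10\tz_s$.
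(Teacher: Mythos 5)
Your roadmap tracks the paper's argument at a high level — induction on $s$, Rouch\'e-type preparation of the Dirichlet determinant to extract $\theta_{s+1}$, Schur complement to control the resonant blocks, Diophantine separation of blocks, and quasi-metric sub-additivity of $\log^\rho$ to close the off-diagonal decay under the scale choice $N_{s+1}=[e^{|\log\delta_s|^{1/\rho'}}]$ with $\rho<1+\rho'$. These are indeed the load-bearing ideas, and you correctly identify the propagation of sub-exponential decay as the delicate point.

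The concrete gap is that you treat the zero set of the Dirichlet determinant near $\pm\theta_s$ as always giving two well-separated analytic branches. The paper's induction is split into the two regimes $(\bm C1)_s$ and $(\bm C2)_s$ of \eqref{c1}--\eqref{c2}, governed by whether $\dist(\tilde Q_s^-,Q_s^+)$ exceeds $100N_{s+1}^{10}$. In the crossing regime $(\bm C2)_s$ there are sites $\bm i_s\in Q_s^+$ and $\bm j_s\in\tilde Q_s^-$ with $\|\bm i_s-\bm j_s\|\le100N_{s+1}^{10}$, so a single enlarged block $\tilde\Omega_{\bm k}^{s+1}$ necessarily contains \emph{both} kinds of singular centers. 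The resonant index set $P_{s+1}$ then consists of midpoints $\bm o+\tfrac{1}{2}\bm l_s\in\Z^d+\tfrac12\sum_{i\le s}\bm l_i$ rather than lattice points, $A_{\bm k}^{s+1}=A_{\bm o}^s\cup A_{\bm o^*}^s$ doubles in size (hence $\#A_{\bm k}^{s+1}\le 2^{s+1}$), and $\det\mathcal S_{s+1}(z)$ factors as a product $\det\mathcal S_s(z-\tfrac{\bm l_s}{2}\cdot\bm\omega)\cdot\det\mathcal S_s(z+\tfrac{\bm l_s}{2}\cdot\bm\omega)+O(\delta_s^{15})$ with a genuine double-root structure near $\pm z_{s+1}$; the Rouch\'e argument and the symmetry argument (via Lemma \ref{ef}) are correspondingly different. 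Omitting this dichotomy is not a cosmetic simplification: without it the claimed "$s$-good background plus finitely many blocks" decomposition and the two-sided lower bound \eqref{detss} on $\det\mathcal S_{s+1}$ would fail whenever $\tilde Q_s^-$ and $Q_s^+$ cluster.

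A secondary, more minor divergence: for general $(s+1)$-good $\lg$ you propose a global Schur complement, inverting the $s$-good background and treating the resonant blocks as a residual block matrix. The paper instead builds an approximate left inverse $\boL$ via local neighborhoods $O(\bm w)$ (singletons at scale $N_1$ away from resonances, full $\tilde\Omega_{\bm k}^t$ near them), writes $\boL\boT_\lg=\boI_\lg+\ep\boK$, and controls $\|\boK\|$ and $\|\boL\|$ by Schur's test; the sharp norm bound $2\delta_{s-1}^{-3}\sup_{\bm k}(\cdots)$ falls out of $\|\boL\|$ directly. Your version can be made to work, but the paper's block-resolvent route is what produces the exact prefactor in \eqref{tsgnorm} and cleanly feeds into the iterated resolvent-identity argument for the off-diagonal decay \eqref{tsgdecay}.
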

	
	Let us refer to \S\ref{qgfe} for a complete description of Green’s function estimates.

\subsubsection{Arithmetic spectral localization}
In this part, we state our arithmetic spectral localization result. We assume $\boH(\theta)$ is self-adjoint for $\theta\in\T$.
	\begin{thm}\label{asl}
				Define  
		\begin{align*}
			\Theta=\left\{\theta\in\T: \ \|2\theta+\bm n\cdot\bm\omega\|_{\T}\le\frac{1}{\|\bm n\|^{\tau}}\ {\rm holds\  for\  finitely\  many\ } \bm n\in\Z^d\right\}.
		\end{align*}
		Under the assumptions  of Theorem \ref{ge}, for $\theta\in\T\setminus \Theta$, 	$\mathcal{H}(\theta)$ has pure point spectrum and there exists a complete system of eigenfunctions $\psi=\{\psi(\bm n)\}_{\bm n\in\Z^d}$ satisfying
		\begin{align*}
				|\psi(\bm n)|<e^{-\frac{\alpha}{4\cdot 10^{6\rho}}\log^{\rho}\left(1+\|\bm n\|\right)} \text{ for $\|\bm n\|\gg1$}.
		\end{align*}
	\end{thm}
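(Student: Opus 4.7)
The plan is to follow the standard ``Green's function $\Rightarrow$ localization'' scheme, using Theorem \ref{ge} as a black box and invoking the arithmetic condition only to exclude double resonances. First I would apply a Schnol-type theorem for self-adjoint operators: it suffices to show that every polynomially bounded generalized eigenfunction $\psi$ of $\mathcal{H}(\theta)$, with $(\mathcal{H}(\theta)-E)\psi=0$ and $|\psi(\bm n)|\leq C(1+\|\bm n\|)^C$, decays at the claimed sub-exponential rate, for a set of energies of full spectral measure. Fix such a pair $(E,\psi)$ and a site $\bm m_0$ with $\psi(\bm m_0)\neq 0$; I will control $\psi(\bm n)$ for $\|\bm n-\bm m_0\|\gg 1$.

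For each such $\bm n$, choose the scale $s=s(\bm n)$ with $\tilde\zeta_s\ll\|\bm n-\bm m_0\|\ll N_{s+1}$ and produce a subset $\Lambda\subset\Z^d$ that is $s$-$\good$ (in the sense of $(\bm e)_s$ of Statement \ref{state}) with $\bm n\in\Lambda$ and $\bm m_0\notin\Lambda$. The obstruction is the presence of resonant blocks $\tilde\Omega_{\bm k}^s$ associated to $\bm k\in P_s$ with $\|\theta+\bm k\cdot\bm\omega\pm\theta_s\|_{\T}$ very small. The arithmetic hypothesis $\theta\notin\Theta$, together with $\bm\omega\in DC_{\tau,\gamma}$, rules out simultaneous occurrence of two such resonances at distant $\bm k_1$, $\bm k_2$: subtracting their defining inequalities would force $\|2\theta+(\bm k_1+\bm k_2)\cdot\bm\omega\|_{\T}\lesssim \|\bm k_1+\bm k_2\|^{-\tau}$ at arbitrarily large $\|\bm k_1+\bm k_2\|$, contradicting the definition of $\Theta$. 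Hence the resonant blocks are well-separated and at most one of them can interfere, so a good block $\Lambda$ around $\bm n$ avoiding $\bm m_0$ can be carved out.

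On such a $\Lambda$ the eigenequation yields the Poisson identity
\begin{align*}
\psi(\bm n)=-\ep\sum_{\bm y\in\Lambda,\ \bm z\notin\Lambda}\mathcal{T}_\Lambda^{-1}(E;\theta)(\bm n,\bm y)\,\phi(\bm y-\bm z)\,\psi(\bm z).
\end{align*}
Inserting the off-diagonal Green's function bound $|\mathcal{T}_\Lambda^{-1}(\bm n,\bm y)|<e^{-\frac{\alpha}{2}\log^\rho(1+\|\bm n-\bm y\|)}$ from Theorem \ref{ge} (valid since $\|\bm n-\bm y\|>10\tilde\zeta_s$ by the choice of $\Lambda$), the hopping bound \eqref{wphi}, the polynomial growth of $\psi(\bm z)$, and using the convexity-type inequality $\log^\rho(1+a)+\log^\rho(1+b)\geq c_\rho\log^\rho(1+a+b)$ valid for $\rho>1$, a straightforward summation estimate in which $\bm z$ lies near $\partial\Lambda$ delivers $|\psi(\bm n)|<e^{-c\alpha\log^\rho(1+\|\bm n\|)}$ with constant $c$ of size $1/(4\cdot 10^{6\rho})$. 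The main difficulty is that with merely $\log^\rho$ decay one cannot freely absorb polynomial prefactors: summations over $\partial\Lambda$ and the polynomial bound on $\psi$ each cost a factor in the exponent, which is precisely the source of the large constant $4\cdot 10^{6\rho}$ in the stated decay rate. Balancing $\diam(\Lambda)$ against $\|\bm n\|$ so that the loss at each application of the Poisson formula remains geometrically controlled, and checking that the $s$-$\good$ property of $\Lambda$ survives the modifications needed to exclude $\bm m_0$, are the steps requiring the most care.
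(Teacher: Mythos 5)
Your outline correctly identifies the Shnol $\Rightarrow$ Poisson $\Rightarrow$ Green's-function scheme, but it omits the single most important step of the paper's proof, and without it the "carving out" of a good block around $\bm n$ does not work. The arithmetic hypothesis, as you yourself observe, only controls \emph{sums} $\bm k_1+\bm k_2$ for a pair with opposite signs ($\bm k_1\in Q_s^+$, $\bm k_2\in Q_s^-$); for same signs it is the Diophantine condition on $\bm\omega$ that controls the \emph{difference} $\bm k_1-\bm k_2$. Neither constraint prevents a $Q_s$-resonant block $\tilde\Omega_{\bm k}^s$ from sitting wherever it likes, and in particular from sitting directly on top of the site $\bm n$ you want to estimate. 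If $\bm n\in\tilde\Omega_{\bm k}^s$ for some $\bm k\in Q_s$, there is no $s$-good set containing $\bm n$ at all, and avoiding $\bm m_0$ is irrelevant. The paper overcomes this with a preliminary Claim (equation (4.2)): for every large scale $s$, a $Q_s$-resonance must lie inside the ball $\lg_{2N_s^{10^4}}$ around the origin, for otherwise the $s$-regularized ball $\tilde\lg_i$ is $s$-good, the Poisson identity applied there forces $\psi\equiv 0$ on $\lg_{N_{s_i}}$ along a sequence $s_i\to\infty$, contradicting $\psi\ne 0$. Only after this anchoring of the resonance near the origin do the arithmetic and Diophantine conditions become effective: for $\bm n$ in the annulus $U_s=\lg_{8N_{s+1}^{10^4}}\setminus\lg_{4N_s^{10^4}}$, the lower bound $\|\bm l+\bm k\|\gtrsim N_s^{10^4}$ (with $\bm k$ the near-origin resonance and $\bm l$ a would-be resonance in the annulus) is precisely what makes the arithmetic inequality $\|2\theta+(\bm l+\bm k)\cdot\bm\omega\|_{\T}>\|\bm l+\bm k\|^{-\tau}$ incompatible with $\|2\theta+(\bm l+\bm k)\cdot\bm\omega\|_{\T}<2\delta_s$.

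Two smaller points. First, your description "subtracting their defining inequalities would force $\|2\theta+(\bm k_1+\bm k_2)\cdot\bm\omega\|_\T\lesssim\|\bm k_1+\bm k_2\|^{-\tau}$ at arbitrarily large $\|\bm k_1+\bm k_2\|$, contradicting the definition of $\Theta$" conflates the two cases: the constraint on $\bm k_1+\bm k_2$ comes from \emph{adding} the $Q_s^+$ and $Q_s^-$ inequalities, and, crucially, nothing forces $\|\bm k_1+\bm k_2\|$ to be large when $\bm k_1$ and $\bm k_2$ are both far from the origin (they could be nearly antipodal). Second, you do not need to track a site $\bm m_0$ with $\psi(\bm m_0)\ne 0$ and excise it from $\Lambda$: once the Claim places the resonance near the origin, the annulus automatically avoids it, and the Poisson identity on the $s$-good annulus (with the off-diagonal Green's function bound of Theorem \ref{ge} and the quasi-metric inequality \eqref{quaeq} to split the exponent) gives the decay of $\psi(\bm n)$ directly.
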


	\subsubsection{Sub-polymonial bounds of moments}
	In this part, we state our quantum dynamics results. We also assume $\mathcal{H}(\theta)$ is self-adjoint for $\theta\in\T$.
	
	For $\psi\in\ell^2(\Z^d)$ and $p>0$, let $(|\mathscr{X}_{\mathcal{H}(\theta)}|_{\psi}^p)(t)$ be the $p$-th moment of $\mathcal{H}(\theta)$,
	\begin{align}\label{pthm}
		\left(|\mathscr{X}_{\mathcal{H}(\theta)}|_{\psi}^p\right)(t)=\sum_{\bm n\in\Z^d}(1+\|\bm n\|)^p\left|\ji e^{-\sqrt{-1}t\mathcal{H}(\theta)}\psi,\delta_{\bm n}\jd\right|^2,
	\end{align}
	and $(|\bar{\mathscr{X}}_{\mathcal{H}(\theta)}|_{\psi}^p)(T)$ be the time-average $p$-th moment of $\mathcal{H}(\theta)$,
\begin{align}\label{tpthm}
	\left(|\bar{\mathscr{X}}_{\mathcal{H}(\theta)}|_{\psi}^p\right)(T)=\frac{2}{T}\int_0^{\infty} e^{-\frac{2t}{T}}\sum_{\bm n\in\Z^d}(1+\|\bm n\|)^p\left|\ji e^{-\sqrt{-1}t\mathcal{H}(\theta)}\psi,\delta_{\bm n}\jd\right|^2dt.
\end{align}
Then we have
\begin{thm}\label{sne}
Let $p>0$. Under the assumptions of Theorem \ref{ge}, there exists $T_0=T_0(d,\tau,\g,v,R,\alpha,\rho,\rho',p)>0$ such that for any $\theta\in\T$ and $t,T\ge T_0$, we have
	\begin{align*}
		\left(|\mathscr{X}_{\mathcal{H}(\theta)}|_{\delta_{\bm 0}}^p\right)(t)\le2^p e^{p(\log t)^{\frac{2}{1+\rho'}}}
	\end{align*}
	and
	\begin{align*}
			\left(|\bar{\mathscr{X}}_{\mathcal{H}(\theta)}|_{\delta_{\bm 0}}^p\right)(T)\le 2^p e^{p(\log T)^{\frac{2}{1+\rho'}}}.
	\end{align*}
\end{thm}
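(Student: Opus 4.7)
\emph{Strategy.} The plan is to reduce the dynamical moments to integrals of the Green's function at complex energies via a Plancherel-type identity, and then transfer the quantitative estimates of Theorem \ref{ge} on $s$-$\good$ subsets to the full lattice $\Z^d$ through a Schur complement expansion. The expansion is summable thanks to the separation of resonant blocks, which replaces the role of large deviation estimates used in prior work.

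\emph{Step 1: moment-to-resolvent reduction.} Applying Plancherel to the $L^2$ function $f_{\bm n}(t) = e^{-t/T}\ji e^{-\sqrt{-1}t\mathcal{H}(\theta)}\delta_{\bm 0},\delta_{\bm n}\jd\chi_{[0,\infty)}(t)$ and using $(\mathcal{H}(\theta)-E-\sqrt{-1}/T)^{-1}=\sqrt{-1}\int_0^\infty e^{-\sqrt{-1}t(\mathcal{H}(\theta)-E-\sqrt{-1}/T)}dt$ yields
\begin{align*}
\left(|\bar{\mathscr{X}}_{\mathcal{H}(\theta)}|_{\delta_{\bm 0}}^p\right)(T) = \frac{1}{\pi T}\sum_{\bm n\in\Z^d}(1+\|\bm n\|)^p\int_{\R}\left|\ji(\mathcal{H}(\theta)-E-\sqrt{-1}/T)^{-1}\delta_{\bm 0},\delta_{\bm n}\jd\right|^2 dE.
\end{align*}
The non-averaged moment $(|\mathscr{X}|)(t)$ reduces to the same type of resolvent integral by writing $\ji e^{-\sqrt{-1}t\mathcal{H}(\theta)}\delta_{\bm 0},\delta_{\bm n}\jd$ as a contour integral of the Green's function along $\{\Im z=1/t\}$, on which $|e^{-\sqrt{-1}tz}|\le e$ uniformly in $\Re z$.

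\emph{Step 2: uniform Green's function bounds on $\Z^d$.} For $\eta=1/T$, select the MSA index $s=s(T)$ balancing the resolvent magnitude $\delta_s^{-3}$, the imaginary shift $\eta$, and the $\log^\rho$ hopping decay. For a cutoff radius $L$, form the trimmed ball
\begin{align*}
\tilde B_L := \{\bm n\in\Z^d:\|\bm n\|\le L\}\setminus\bigcup_{\bm k\in P_s:\ \tilde\Omega_{\bm k}^s\cap B_L\ne\emptyset}\tilde\Omega_{\bm k}^s,
\end{align*}
which is $s$-$\good$ for every $\theta\in\T$ thanks to the separation built into condition $(\bm e)_s$. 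Theorem \ref{ge} then supplies $\|\mathcal{T}_{\tilde B_L}^{-1}(E+\sqrt{-1}/T;\theta)\|<\delta_s^{-3}$ together with the sub-exponential off-diagonal decay. A Schur complement identity then relates $(\mathcal{H}(\theta)-E-\sqrt{-1}/T)^{-1}$ on $\Z^d$ to $\mathcal{T}_{\tilde B_L}^{-1}$, plus correction terms supported on the removed resonant blocks and on $\{\|\bm n\|>L\}$, each weighted by hopping factors $\phi$. Because the removed blocks are separated by at least $N_{s+1}$ and $|\phi(\bm n)|\le e^{-\alpha\log^\rho\|\bm n\|}$, the induced series converges and yields
\begin{align*}
|\ji(\mathcal{H}(\theta)-E-\sqrt{-1}/T)^{-1}\delta_{\bm 0},\delta_{\bm n}\jd|\le e^{-\frac{\alpha}{4}\log^\rho(1+\|\bm n\|)}\quad\text{for }\|\bm n\|>L,
\end{align*}
uniformly in $(E,\theta)\in\R\times\T$.

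\emph{Step 3: summation and optimization.} Split the $\bm n$-sum at $L$: for $\|\bm n\|\le L$ the unitary bound $\sum_{\bm n}|\ji e^{-\sqrt{-1}t\mathcal{H}(\theta)}\delta_{\bm 0},\delta_{\bm n}\jd|^2\le 1$ gives contribution $\le L^p$, while for $\|\bm n\|>L$ the decay from Step 2 keeps the contribution $O(1)$. Minimizing $L$ subject to the Schur compatibility $\delta_s^{-3}\cdot T\cdot e^{-\frac{1}{2}\alpha\log^\rho L}\ll 1$, together with the scale relations $\log N_{s+1}=|\log\delta_s|^{1/\rho'}$ and $|\log\delta_{s+1}|=10^{5\rho'}|\log\delta_s|$, produces $\log L\sim(\log T)^{2/(1+\rho')}$, whence both bounds of Theorem \ref{sne} follow.

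\emph{Main obstacle.} The crux is Step 2 for resonant phases $\theta\in\Theta$, where the resolvent on $\Z^d$ itself is genuinely large. The LDT-based approach of \cite{JL21,SS23,Liu23} is unavailable both because LDT fails for the slow $\log^\rho$ hopping and because we need uniformity in all $\theta\in\T$ rather than on a full-measure set. The novelty is that the geometric separation of resonant blocks combined with the hopping decay $e^{-\alpha\log^\rho\|\cdot\|}$ makes the Schur expansion around resonances summable, so that the local blow-up of $\mathcal{T}_{\tilde B_L}^{-1}$ near a resonant block does not propagate; this is what permits uniform-in-$\theta$ control without discarding any phase set.
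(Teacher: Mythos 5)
Your Step 1 (contour/Parseval reduction of moments to resolvent integrals at $E+\sqrt{-1}t^{-1}$, together with Combes--Thomas to control the off-real-axis contour legs) is essentially identical to the paper's Lemma \ref{cbmgf}, and your Step 3 (split the $\bm n$-sum at a threshold $L$ with $\log L\sim(\log t)^{2/(1+\rho')}$, bound the inner part by unitarity, the outer by Green's function decay) matches the paper's final optimization. The divergence --- and the gap --- is in Step 2.

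You propose to remove all $s$-scale resonant blocks intersecting a ball, claim the resulting holey set $\tilde B_L$ is $s$-good for every $\theta$, and then recover the full resolvent by a Schur expansion around the removed blocks, arguing that block separation combined with the $e^{-\alpha\log^\rho\|\cdot\|}$ hopping decay makes the correction series summable. There are two problems. First, $s$-$\good$-ness of $\tilde B_L$ is not established: the definition requires, for all $s'<s$, that whenever $\bm k'\in Q_{s'}$, $\tilde\Omega_{\bm k'}^{s'}\subset\lg$ and $\tilde\Omega_{\bm k'}^{s'}\subset\Omega_{\bm k}^{s'+1}$, then $\tilde\Omega_{\bm k}^{s'+1}\subset\lg$; a low-scale enlarged block straddling the outer boundary $\{\|\bm n\|=L\}$ can violate this, and you have only removed $s$-scale blocks, not enlarged $B_L$ to absorb the smaller ones. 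Second, even granting $s$-good-ness, the corrections from the Schur expansion involve restricted resolvents on the removed blocks $\tilde\Omega_{\bm k}^s$, and you give no estimate of these --- this is precisely the place where the large-deviation approach would be needed, and you have explicitly forsworn it. You flag this as the "main obstacle" but do not resolve it, so as written the argument is incomplete.

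The paper sidesteps both difficulties by a much simpler observation that you miss: since $\boH(\theta)$ is self-adjoint and the spectral parameter is $E+\sqrt{-1}t^{-1}$ with $t^{-1}\ge\delta_s^3$, the $\ell^2$ norm bound $\|\mathcal{T}_\lg^{-1}(E+\sqrt{-1}t^{-1};\theta)\|\le t\le\delta_s^{-3}$ holds trivially for \emph{every} finite $\lg\subset\Z^d$ and every $\theta\in\T$, with no resonance control whatsoever (Lemma \ref{93l1}, Lemma \ref{93lsr}). So the paper never needs an $s$-$\good$ set; it only needs the weaker notion of an $s$-\emph{regular} set (one satisfying the hierarchical containment \eqref{sreg} but allowed to contain resonant blocks), which is obtained by mildly \emph{enlarging} any ball --- not carving out holes --- via Lemma \ref{93A}. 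The off-diagonal decay is then obtained by iterating the resolvent identity along the resonant-block cover, where on each local block the norm bound is free by self-adjointness; this is Lemma \ref{93lsr} and Theorem \ref{93main}. In short: the paper's separation-of-resonant-blocks idea is used only to carry out the iterative decay estimate, not to Schur-expand away the resonances, and the uniform-in-$\theta$ norm control comes for free from the imaginary shift in the energy. Your route replaces this free bound with a delicate Schur expansion whose convergence you neither formulate nor prove, and it also requires a good-set claim that your construction does not justify.
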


	\subsection{Ideas of the proof and new ingredients}
	Our proof of Theorem \ref{ge} is based on a MSA type induction and combines ideas from \cite{Bou00,CSZ24a,SW24,SWY25}. Once Theorem \ref{ge} was established, the proof of arithmetic localization just follow from standard argument. Building upon Theorem \ref{ge}, we also develop a new scheme to prove sub-polynomial bounds of moments by using separation property of resonant blocks.
	
	The $\ell^2$-norm estimates for the Green's function based on the similar induction hypothesis of Theorem \ref{ge} have already established in \cite{Bou00,CSZ24a,SW24}. In this work, we primarily employ the methods developed in \cite{SWY25} to handle the off-diagonal decay. While our method here is motivated by \cite{SWY25}, there are some major differences: (1). In \cite{SWY25}, by adopting the induction hypothesis from \cite{DK89}, the authors only need to handle ensembles of two adjacent scales. Since we are dealing with QP case, one needs to handle the ensemble of all scales at each step. The iterative framework in \cite{SWY25}, when applied directly, proves insufficient to achieve our target decay rate; (2). Although the quasi-metric property (cf. \eqref{quaeq} for details) is utilized in both this work and \cite{SWY25}, we emphasize that $n\approx \frac{\log^{\rho}N_{k+1}}{\log^{\rho}N_k}$ in \cite{SWY25} which can be controlled by scale-independent constant but $n\approx \frac{log^{\rho}N_{k+1}}{\log^{\rho}N_1}$ maybe present in this work. In contrast to previous studies, we demonstrate the intrinsic necessity of term $C(\rho)\log^{\rho}n$.
	
	In previous works, the proof of Combes-Thomas estimates invariably required the long-range hopping term to take the form $e^{f(\bm x,\bm y)}$, where $f(\bm x,\bm y)$ is a metric. However, the function $f(\bm x,\bm y)=\log^{\rho}(1+\|\bm x-\bm y\|)$ $(\rho>1)$ is not a metric. Fortunately, it was proved in \cite{SWY25} that the function $f(\bm x,\bm y)$ is a quasi-metric. By exploiting quasi-metric property, we construct customized Combes-Thomas type estimates for the present framework.
	
	Finally, we conclude by noting a crucial distinction: while \cite{Liu23} and this work share analogous approaches to bound of moments, our construction of good sets follows a divergent path. In \cite{Liu23}, for any $\|\bm n\|\gg1$, the author relies on the large deviation theorems, postulating the existence of a neighborhood of $\bm n$ satisfying the sublinear bound property. However, in this paper, we can obtain a good set containing $\bm n$ by extending a origin neighborhood of $\bm n$ directly (cf. Lemma \ref{93A} for details). 
	
	\subsection{Structure of the paper}
	The paper is organized as follows. In \S\ref{notation}, we introduce some useful notations. The \S\ref{prelim} contains some important properties. In \S\ref{qgfe}, we complete the proof of Theorem \ref{ge}. In \S\ref{Arith}-\S\ref{subpoly}, we will apply quantitative Green's function estimates to prove spectral locaolization and sub-polynomial bounds of moments. Some useful estimates are given in Appendix \ref{app}.

	\section{The notation}\label{notation}
	\begin{itemize}		
		\item The determinant of a matrix $M$ is denoted by $\det M$.
		
		\item If $a\in\R$, let $\|a\|_{\T}=\dist(a,\Z)=\inf\limits_{l\in\Z}|l-a|$. For $z=a+\sqrt{-1}b\in\C$ with $a,b\in\R$, define $\|z\|_{\T}=\sqrt{\|a\|_{\T}^2+|b|^2}$.
		
		\item For $\bm n\in\R^d$, let
		\begin{align*}
			\|\bm n\|=\sup_{1\le i\le d}|n_i|.
		\end{align*}
		Denote by $\dist(\cdot,\cdot)$ the distance induced by $\|\cdot\|$ on $\R^d$, and define
		\begin{align*}
			\diam(\lg)=\sup_{\bm k,\bm k'\in\lg}\|\bm k-\bm k'\|.
		\end{align*}
		Given $\bm n\in\Z^d$, $\lg'\subset\frac{1}{2}\Z^d$ and $L>0$, define
		\begin{align*}
			\lg_L(\bm n)=\left\{\bm k\in\Z^d:\ \|\bm k-\bm n\|\le L\right\}
		\end{align*}
		and
		\begin{align*}
			\lg_L(\lg')=\left\{\bm k\in\Z^d:\ \dist(\bm k,\lg')\le L\right\}.
		\end{align*}
		In particular, write $\lg_L=\lg_L(\bm 0)$.
		\item For $\et>0$, we define
			\begin{align}\label{Det}
			D(\et)=\sum_{\bm k\in\Z^d}e^{-\et\log^{\rho}(1+\|\bm k\|)}<+\infty.
		\end{align}
		
		\item $\{\delta_{\bm x}\}_{\bm x\in\Z^d}$ is the standard basis of $\ell^2(\Z^d)$.
		
		\item $\mathcal{I}$ typically denotes the identity operator.
		
		\item $\mathcal{R}_\lg$ is the restriction operator  with  $\lg\subset\Z^d$.
		
		\item Let $\mathcal{T}:\ \ell^2(\Z^d)\to\ell^2(\Z^d)$ be a linear operator. Denote by $\ji\cdot,\cdot\jd$ the standard inner product on $\ell^2(\Z^d)$. Set $\mathcal{T}(\bm x,\bm y)=\ji \delta_{\bm x}, \mathcal{T}\delta_{\bm y}\jd$. The spectrum of operator $\mathcal{T}$ is denoted by $\sigma(\mathcal{T})$. Finally, we define $\mathcal{T}_\lg=\mathcal{R}_\lg \mathcal{T}\mathcal{R}_\lg$ and $\mathcal{T}_{\lg}^*$ the adjugate operator of $\mathcal{T}_{\lg}$, where $\lg\subset\Z^d.$
		
	\end{itemize}
	
	\section{Preliminaries}\label{prelim}
	\subsection{Quasi-metric property}
	\begin{lem}\label{qua}
		For $x_i\ge0$, $1\le i\le n$, we have
		\begin{align*}
			\log^{\rho}\left(1+\sum\limits_{i=1}^n x_i\right)\le \sum\limits_{i=1}^n \log^{\rho}(1+x_i)+C(\rho)\log^{\rho}n,
		\end{align*}
		where $C(\rho)>0$   is some constant depending  only on $\rho>1$.
	\end{lem}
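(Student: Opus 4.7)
The plan is to extract the maximum $x_i$ and reduce to an elementary $(a+b)^\rho$ inequality. After relabeling, I may assume $x_1 = \max_{1\le i\le n} x_i$. Since each $x_i\le x_1$, I have $\sum_{i=1}^n x_i \le n x_1$, hence $1+\sum_{i=1}^n x_i \le n(1+x_1)$, and taking logarithms gives
\[
\log\Bigl(1+\sum_{i=1}^n x_i\Bigr) \le \log n + \log(1+x_1).
\]

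The next step is to raise to the $\rho$-th power. For $\rho>1$ and $a,b\ge 0$, a mean-value identity gives $(a+b)^\rho - a^\rho \le \rho(a+b)^{\rho-1}b$, and combining this with a weighted Young's inequality $a^{\rho-1}b \le \varepsilon a^\rho + C(\varepsilon,\rho)\, b^\rho$ yields
\[
(a+b)^\rho \le (1+\varepsilon)\,a^\rho + C(\varepsilon,\rho)\,b^\rho\qquad\text{for any } \varepsilon>0.
\]
I would apply this with $a = \log(1+x_1)$ and $b = \log n$, splitting cases according to whether $\log(1+x_1) \ge \log n$ or not (in the latter case $\log^\rho(1+\sum_i x_i) \le (2\log n)^\rho$ already absorbs into the second term on the right-hand side of the lemma). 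With a suitable choice $\varepsilon = \varepsilon(\rho)$ this produces
\[
\log^\rho\Bigl(1+\sum_i x_i\Bigr) \le \log^\rho(1+x_1) + C(\rho)\log^\rho n.
\]
Bounding the first term on the right by $\sum_{i=1}^n \log^\rho(1+x_i)$ completes the proof.

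The hard part is the second step: the crude bound $(a+b)^\rho \le 2^{\rho-1}(a^\rho+b^\rho)$ introduces an unwanted multiplicative constant in front of $\log^\rho(1+x_1)$, and hence in front of $\sum_i \log^\rho(1+x_i)$. The refined Young-type splitting above, combined with a separate treatment of the subcase $\log(1+x_1) \le \log n$ where the left-hand side is already controlled purely by $\log^\rho n$, is what allows one to keep the coefficient $1$ on the sum and push all the excess into the term $C(\rho)\log^\rho n$. Everything else—the reordering to isolate the maximum, the logarithm step, and the final comparison with the full sum—is routine.
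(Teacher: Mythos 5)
The reduction you propose does not close. After the initial step you are aiming to prove the intermediate inequality
\[
\log^\rho\Bigl(1+\sum_{i=1}^n x_i\Bigr) \;\le\; \log^\rho(1+x_1) + C(\rho)\log^\rho n,
\qquad x_1:=\max_i x_i,
\]
and then discard the remaining terms $\log^\rho(1+x_i)$, $i\ge 2$. But this intermediate inequality is \emph{false} for every constant $C(\rho)$. Take $x_1=\cdots=x_n=X$ with $n\ge 2$ fixed and let $X\to\infty$. Then $\log(1+nX)=\log X+\log n+o(1)$ while $\log(1+X)=\log X+o(1)$, so
\[
\log^\rho(1+nX)-\log^\rho(1+X)
\;=\;\bigl(\log X+\log n\bigr)^\rho-\bigl(\log X\bigr)^\rho+o(1)
\;\sim\;\rho\,(\log n)\,(\log X)^{\rho-1},
\]
which diverges to $+\infty$ (since $\rho>1$) and therefore cannot be dominated by the fixed quantity $C(\rho)\log^\rho n$.

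The Young-type splitting cannot repair this. With $a=\log(1+x_1)$ and $b=\log n$, the best that mechanism gives is $(a+b)^\rho\le(1+\varepsilon)a^\rho+C(\varepsilon,\rho)b^\rho$, and the residual $\varepsilon a^\rho$ is precisely what cannot be absorbed into $b^\rho$ in the regime $a\gg b$, which is exactly the case remaining after you dispose of $a\le b$. No choice of $\varepsilon(\rho)>0$ helps, and the target $(a+b)^\rho\le a^\rho+Cb^\rho$ for $a\ge b$ is itself false (for $\rho=2$ and $a=Nb$ it reads $2N+1\le C$). The lemma is a genuine statement about the \emph{full} sum $\sum_i\log^\rho(1+x_i)$: the crude bound $\log(1+\sum_i x_i)\le\log n+\log(1+x_1)$ is near-saturated only when many of the $x_i$ are comparable to $x_1$, and in that situation $\sum_i\log^\rho(1+x_i)$ is far larger than the single term $\log^\rho(1+x_1)$, providing the slack the inequality needs. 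Reducing to a single term, as your last step does, destroys exactly this compensating mechanism, so the argument must be reorganized rather than patched.
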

	\begin{proof}
	For a detailed proof, we refer to \cite{SWY25}.
	\end{proof}
	\begin{rem}
		We have the {\it quasi-metric} property: for any $\bm x_i\in\Z^d$ ($1\leq i\leq n$),
		\begin{align}\label{quaeq}
			\log^{\rho}\left(1+\left\|\sum_{i=1}^n\bm x_i\right\|\right)\leq \sum_{i=1}^n\log^{\rho}(1+\|\bm x_i\|)+C(\rho) \log^{\rho} n.
		\end{align}
	\end{rem}
	
	\subsection{Extract lemma}
	Given the frequent need to extract $\log^{\rho}(1+x)$ from $\log^{\rho}(1+x-y)$ in the regime $x\gg y\gg1$, we establish the following lemma:
	\begin{lem}\label{el}
		Let $x>y>0$ and $1+x>2y$. Then
		\begin{align}\label{exl}
			\log^{\rho}(1+x-y)\ge\left(1-2\rho\frac{y}{(1+x)\log(1+x)}\right)\log^{\rho}(1+x).
		\end{align}
	\end{lem}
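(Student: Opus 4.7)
The plan is to view this as a two–step Taylor/Bernoulli estimate: first linearize $\log(1+x-y)$ around $\log(1+x)$, then extract the $\rho$-th power via Bernoulli's inequality. I would first set $t = y/(1+x)$ and observe that the hypothesis $1+x>2y$ means $t\in(0,1/2)$, so one can write
\begin{align*}
\log(1+x-y)=\log(1+x)+\log(1-t).
\end{align*}
A direct manipulation of the power series $-\log(1-t)=\sum_{k\ge 1} t^k/k$ gives $-\log(1-t)\le t/(1-t)\le 2t$ for $t\in[0,1/2]$, hence
\begin{align*}
\log(1+x-y)\ge \log(1+x)-\frac{2y}{1+x}.
\end{align*}

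Next I would split into two cases according to the sign of the factor on the right-hand side of \eqref{exl}. If $\log(1+x)\le 2\rho y/(1+x)$, then the right-hand side of \eqref{exl} is non-positive while the left-hand side is non-negative (since $x>y$ forces $1+x-y>1$), so the inequality is trivial. Otherwise $\log(1+x)> 2\rho y/(1+x)\ge 2y/(1+x)$, so one may factor
\begin{align*}
\log(1+x)-\frac{2y}{1+x}=\log(1+x)\left(1-\frac{2y}{(1+x)\log(1+x)}\right),
\end{align*}
with both factors strictly positive.

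Finally, raising to the $\rho$-th power and applying Bernoulli's inequality $(1-s)^{\rho}\ge 1-\rho s$ for $\rho\ge 1$ and $s\in[0,1]$ with $s=2y/((1+x)\log(1+x))$, I obtain
\begin{align*}
\log^{\rho}(1+x-y)\ge \log^{\rho}(1+x)\left(1-\frac{2y}{(1+x)\log(1+x)}\right)^{\rho}\ge \log^{\rho}(1+x)\left(1-\frac{2\rho y}{(1+x)\log(1+x)}\right),
\end{align*}
which is exactly \eqref{exl}. I do not anticipate any real obstacle here: the only places where one must be slightly careful are verifying that $t<1/2$ to justify $-\log(1-t)\le 2t$, and checking that the factor $1-2y/((1+x)\log(1+x))$ is non-negative before applying Bernoulli's inequality, both of which are handled by the hypothesis $1+x>2y$ together with the harmless case split on the sign of the right-hand side.
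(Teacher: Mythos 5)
Your proof is correct and uses essentially the same ingredients as the paper: factoring out $\log(1+x)$, the elementary bound $\log(1-t)\ge -2t$ for $t\in(0,1/2)$, and Bernoulli's inequality $(1+h)^{\rho}\ge 1+\rho h$. The only cosmetic difference is ordering: the paper applies Bernoulli first to $\bigl(1+\log(1-t)/\log(1+x)\bigr)^{\rho}$ (valid since $x>y$ gives $\log(1-t)/\log(1+x)>-1$) and only afterwards bounds $\log(1-t)$, which sidesteps the case split you need to dispose of the possibility that $1-2\rho y/((1+x)\log(1+x))\le 0$.
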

		\begin{proof}
		We refer to the Appendix \ref{app} for a detailed proof.
	\end{proof}

	\subsection{Hadamard  type estimate}
	We  need to estimate $\ell^2$-norm of the  inverse of some operator $\mathcal{S}_{\lg}$. By the Cramer's rule,   $\mathcal{S}_{\lg}^{-1}=(\det \mathcal{S}_{\lg})^{-1}S_{\lg}^*$, where $\mathcal{S}_{\lg}^*(\bm i,\bm j)$ is a determinant for $\bm i,\bm j\in\lg$. Therefore,  we can apply Hadamard's inequality to estimate $\ell^2$-norm of $\mathcal{S}_{\lg}^*$ and thus  that of $\mathcal{S}_{\lg}^{-1}$.
	\begin{lem}[Hadamard's estimate]\label{chi}
		Let $\mathcal{S}:\ \ell^2(\Z^d)\to\ell^2(\Z^d)$ be a linear operator, and $\lg\subset\Z^d$ a finite subset. Then for any $\bm i,\bm j\in\lg$, we have
		\begin{align*}
			|\ji \delta_{\bm i}, \mathcal{S}_{\lg}^* \delta_{\bm j}\jd|\le \left(\sup_{\bm x\in\lg}\sum_{\bm y\in\lg}|\ji \delta_{\bm x}, \mathcal{S}_{\lg}\delta_{\bm y}\jd|\right)^{\#\lg-1}.
		\end{align*}
		Moreover,
		\begin{align*}
			\|\mathcal{S}_{\lg}^*\|\le(\#\lg)\cdot\left(\sup_{\bm x\in\lg}\sum_{\bm y\in\lg}|\ji \delta_{\bm x}, \mathcal{S}_{\lg}\delta_{\bm y}\jd|\right)^{\#\lg-1}.
		\end{align*}
	\end{lem}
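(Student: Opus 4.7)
The plan is to reduce the claim to Hadamard's classical determinant inequality via the cofactor formula for the adjugate. After fixing an ordering of the finite set $\lg$, I identify $\mathcal{S}_\lg$ with a $(\#\lg)\times(\#\lg)$ matrix. By the very definition of the adjugate, for any $\bm i,\bm j\in\lg$ the entry $\ji \delta_{\bm i},\mathcal{S}_\lg^*\delta_{\bm j}\jd$ equals, up to a sign that is irrelevant for absolute values, the determinant of the $(\#\lg-1)\times(\#\lg-1)$ submatrix $M_{\bm i,\bm j}$ obtained from $\mathcal{S}_\lg$ by deleting the row indexed by $\bm j$ and the column indexed by $\bm i$.

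Next I apply Hadamard's inequality in row form to this submatrix:
\begin{align*}
|\det M_{\bm i,\bm j}|\le \prod_{\bm x\in\lg\setminus\{\bm j\}}\left(\sum_{\bm y\in\lg\setminus\{\bm i\}}|\ji \delta_{\bm x},\mathcal{S}_\lg\delta_{\bm y}\jd|^2\right)^{1/2}.
\end{align*}
Bounding the $\ell^2$-norm of each row by its $\ell^1$-norm and enlarging the column index set from $\lg\setminus\{\bm i\}$ back to $\lg$, each factor is dominated by $\sum_{\bm y\in\lg}|\ji \delta_{\bm x},\mathcal{S}_\lg\delta_{\bm y}\jd|$, and then by the supremum $\sup_{\bm x\in\lg}\sum_{\bm y\in\lg}|\ji \delta_{\bm x},\mathcal{S}_\lg\delta_{\bm y}\jd|$. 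Multiplying the $\#\lg-1$ resulting factors yields the first inequality.

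For the operator-norm statement, I observe that every entry of $\mathcal{S}_\lg^*$ is majorized by the same quantity, hence the Hilbert--Schmidt norm $\|\mathcal{S}_\lg^*\|_{\rm HS}=(\sum_{\bm i,\bm j\in\lg}|\ji\delta_{\bm i},\mathcal{S}_\lg^*\delta_{\bm j}\jd|^2)^{1/2}$ is bounded by $\#\lg$ times that supremum, and the operator norm is in turn dominated by the Hilbert--Schmidt norm, which gives the ``moreover'' part.

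I do not foresee any genuine obstacle here: it is a textbook computation combining the cofactor expansion with Hadamard's inequality, and the only mildly delicate point is enlarging each truncated row sum from $\lg\setminus\{\bm i\}$ to $\lg$ before taking the supremum, which is needed to express the final bound purely in terms of $\mathcal{S}_\lg$ rather than its submatrices.
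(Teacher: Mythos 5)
Your argument is correct and follows essentially the same route as the paper: identify the adjugate entry with a cofactor (a $(\#\lg-1)\times(\#\lg-1)$ minor), apply Hadamard's row inequality, dominate each row $\ell^2$-norm by the $\ell^1$-norm, and enlarge to the full row sum before taking the supremum. The only divergence is in the ``moreover'' step: you pass through the Hilbert--Schmidt norm, whereas the paper uses Schur's test (bounding the operator norm by the geometric mean of the max row sum and max column sum of absolute entries). Both give the identical factor $\#\lg$, so there is no loss either way; the Hilbert--Schmidt route is marginally more elementary, while Schur's test is the tool the paper reuses repeatedly elsewhere, which likely motivated their choice. One cosmetic note: the paper's displayed product runs over rows $\bm l\neq\bm i$ and columns $\bm k\neq\bm j$, i.e. the minor with row $\bm i$ and column $\bm j$ deleted, while you (more precisely, in line with the standard adjugate convention) delete row $\bm j$ and column $\bm i$; this discrepancy is immaterial here because after relaxing both the row product and the column sum to range over all of $\lg$, the final bound is symmetric in the choice.
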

	\begin{proof}
		We refer to the Appendix \ref{app} for a detailed proof.
	\end{proof}

{\subsection{The Combes-Thomas type estimate}
	Aiming to deal with energies outside the spectrum, we need the Combes-Thomas type estimate.
	\begin{lem}\label{cte}
		 Let $ \boA$ be a self-adjoint operator in $ \ell^{2}(\Z^d)$ and $\boG(z):=(\boA- z)^{-1}$ be the Green's function of $\boA$ if it exists. If for some $\lambda>0$,
\begin{align}\label{s21}
S_{\lambda} := \sup_{\bm x\in\Z^d}\sum_{\bm y\ne\bm x}|\boA(\bm x, \bm y)|e^{\lambda\log^{\rho}(1+\|\bm x-\bm y\|)}< \infty,
\end{align}
then for energies $z$ not in the spectrum, $\mathscr{D}:=\dist(z,\sigma(\boA))>0$, we have
\begin{align}\label{s22}
	|\boG(z)(\bm x,\bm y)|\le  \frac{e^{-\lambda'\log^{\rho}(1+\|\bm x-\bm y\|)}}{\mathscr{D}- 2e^{\lambda' C(\rho)\log^{\rho}2}\cdot S_{\lambda'}}
\end{align}
for any $\lambda'\le\lambda$ with $2e^{\lambda' C(\rho)\log^{\rho}2}\cdot S_{\lambda'}<\mathscr{D}$ and all $\bm x,\bm y\in\Z^d$.
\end{lem}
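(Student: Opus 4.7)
\medskip
\noindent\textbf{Proof proposal for Lemma \ref{cte}.}

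The plan is to run a standard Combes--Thomas conjugation argument, but with the weight $f(\bm x)=\lambda'\log^{\rho}(1+\|\bm x-\bm y_0\|)$ in place of the usual linear weight $\lambda'\|\bm x-\bm y_0\|$, and to absorb the failure of $f$ to be additive by paying exactly one factor $e^{\lambda' C(\rho)\log^{\rho}2}$ using the quasi-metric property of Lemma~\ref{qua}. Fix $\bm y_0\in\Z^d$, let $M_f$ be multiplication by $e^{f(\bm x)}$, and set $\boA_f:=M_f\,\boA\,M_f^{-1}$. The diagonal entries of $\boA_f$ agree with those of $\boA$, while off the diagonal
\[
\boA_f(\bm x,\bm x')-\boA(\bm x,\bm x')=\boA(\bm x,\bm x')\bigl(e^{f(\bm x)-f(\bm x')}-1\bigr).
\]

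Next I would estimate $f(\bm x)-f(\bm x')$. By the triangle inequality for $\|\cdot\|$ and the quasi-metric property \eqref{quaeq} applied with $n=2$,
\[
\log^{\rho}(1+\|\bm x-\bm y_0\|)\le\log^{\rho}(1+\|\bm x-\bm x'\|)+\log^{\rho}(1+\|\bm x'-\bm y_0\|)+C(\rho)\log^{\rho}2,
\]
and symmetrically in $(\bm x,\bm x')$, hence $|f(\bm x)-f(\bm x')|\le\lambda'\log^{\rho}(1+\|\bm x-\bm x'\|)+\lambda' C(\rho)\log^{\rho}2$. Combining with the crude bound $|e^{s}-1|\le 2e^{|s|}$ and the assumption $\lambda'\le\lambda$, I obtain
\[
\sup_{\bm x}\sum_{\bm x'\ne\bm x}|\boA_f(\bm x,\bm x')-\boA(\bm x,\bm x')|\le 2e^{\lambda' C(\rho)\log^{\rho}2}\cdot S_{\lambda'}.
\]
Since $|\boA(\bm x,\bm x')|=|\boA(\bm x',\bm x)|$ by self-adjointness, the same bound holds for the column sums, so the Schur test yields $\|\boA_f-\boA\|\le 2e^{\lambda' C(\rho)\log^{\rho}2}\cdot S_{\lambda'}$.

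Now write $\boA_f-z=(\boA-z)+(\boA_f-\boA)$ and use $\|(\boA-z)^{-1}\|\le\mathscr{D}^{-1}$ together with the assumption $2e^{\lambda' C(\rho)\log^{\rho}2}\cdot S_{\lambda'}<\mathscr{D}$. A Neumann series argument then gives
\[
\|(\boA_f-z)^{-1}\|\le\frac{1}{\mathscr{D}-2e^{\lambda' C(\rho)\log^{\rho}2}\cdot S_{\lambda'}}.
\]
Since $(\boA_f-z)^{-1}=M_f\,\boG(z)\,M_f^{-1}$, its $(\bm x,\bm y_0)$-entry equals $e^{f(\bm x)-f(\bm y_0)}\boG(z)(\bm x,\bm y_0)=e^{\lambda'\log^{\rho}(1+\|\bm x-\bm y_0\|)}\boG(z)(\bm x,\bm y_0)$, because $f(\bm y_0)=0$. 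Freely choosing $\bm y_0=\bm y$ produces \eqref{s22}.

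The only genuinely nontrivial step is the quasi-metric estimate for $|f(\bm x)-f(\bm x')|$; everything else is a mechanical Combes--Thomas + Neumann series computation. The reason the factor $e^{\lambda' C(\rho)\log^{\rho}2}$ appears in the denominator (rather than the clean Neumann threshold $S_{\lambda'}<\mathscr{D}$ one would have in the exponential-hopping case) is precisely the defect term $C(\rho)\log^{\rho}2$ in Lemma~\ref{qua}, reflecting the fact that $\log^{\rho}(1+\|\cdot\|)$ is only a quasi-metric.
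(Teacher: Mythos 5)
Your plan coincides with the paper's proof in every essential respect: conjugate $\boA$ by a weight built from $\log^{\rho}(1+\|\cdot-\bm y\|)$, control the perturbation $\boM\boA\boM^{-1}-\boA$ via the quasi-metric inequality \eqref{quaeq} and Schur's test (using self-adjointness to equate row and column sums), and close with a Neumann series. The quasi-metric bound $|f(\bm x)-f(\bm x')|\le\lambda'\log^{\rho}(1+\|\bm x-\bm x'\|)+\lambda'C(\rho)\log^{\rho}2$ and the resulting constant $2e^{\lambda'C(\rho)\log^{\rho}2}S_{\lambda'}$ are exactly what the paper obtains.

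There is, however, one step you pass over that the paper treats explicitly and that is a genuine gap as written: your $M_f$ (multiplication by $e^{\lambda'\log^{\rho}(1+\|\bm x-\bm y_0\|)}$) is an \emph{unbounded} operator on $\ell^2(\Z^d)$, so the identity ``$(\boA_f-z)^{-1}=M_f\,\boG(z)\,M_f^{-1}$'' is purely formal. It is not a priori clear that $\boG(z)\delta_{\bm y_0}$ lies in the domain of $M_f$ --- indeed, that decay is precisely what the lemma is trying to establish, so invoking the conjugation identity here is circular without further argument. The paper sidesteps this by introducing the truncated weight $\boM=e^{\lambda'\min\{\log^{\rho}(1+\|\bm x-\bm y\|),K\}}\delta_{\bm x,\bm x'}$, which is bounded with bounded inverse for every finite $K$; the case analysis on whether the cutoff saturates (yielding $\boB(\bm x,\bm x')=0$ when both indices are outside the cutoff, and the same quasi-metric bound otherwise) gives estimates uniform in $K$, after which one lets $K\to\infty$. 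To complete your argument you would either need to insert this cutoff and the attendant case split, or else justify the untruncated conjugation directly, e.g.\ by applying the bounded operator $M_f^{-1}$ to the $\ell^2$-vector $(\boA_f-z)^{-1}\delta_{\bm y_0}$ and using the entrywise identity $\boA M_f^{-1}=M_f^{-1}\boA_f$ (which holds with absolutely convergent sums under \eqref{s21}) to conclude $M_f^{-1}(\boA_f-z)^{-1}\delta_{\bm y_0}=\boG(z)\delta_{\bm y_0}$. Either patch is routine, but one of them is needed.
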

\begin{proof}
	We refer to the Appendix \ref{app} for a detailed proof.
\end{proof}
	
	\section{Quantitative Green's function estimates}\label{qgfe}
	In this section, we fix
	\begin{align*}
		\theta\in\D_{R/2},\ E\in v(\D_{R/2}).
	\end{align*}
	Write
	\begin{align}\label{theta0}
		E=v(\theta_0)
	\end{align}
	for some  $\theta_0\in\D_{R/2}$. Consider
	\begin{align}\label{T}
		\mathcal{T}(E;\theta)=\mathcal{H}(\theta)-E=\mathcal{D}+\ep \mathcal{W}_\phi,
	\end{align}
	where
	\begin{align*}
		\mathcal{D}=\mathcal{D}_{\bm n}\delta_{\bm n,\bm n'},\ \mathcal{D}_{\bm n}=v(\theta+\bm n\cdot\bm\omega)-E.
	\end{align*}
	For simplicity, we may omit the dependence of $\mathcal{T}(E;\theta)$ on $E,\theta$ and that of $\mathcal{W}_\phi$ on $\phi$, respectively.

	Now we introduce  the statement of our main result on the  MSA type Green's function estimates. Define  the induction parameters
	\begin{align}\label{indpa}
		N_{s+1}=\left[e^{\left|\log\delta_{s}\right|^{\frac{1}{\rho'}}}\right],\ \delta_{s+1}=\delta_s^{10^{5\rho'}},\ \delta_0=\ep_0^{\frac{1}{10}}.
	\end{align}
	Thus
	\begin{align*}
		N_s^{10^5}-1\le N_{s+1}\le (N_s+1)^{10^5}
	\end{align*}
		and
		\begin{align*}
		e^{-\log^{\rho'}(N_{s+1}+1)}\le \delta_s\le e^{-\log^{\rho'}N_{s+1}}.
	\end{align*}
	
	We first introduce the following induction  statement.
	\begin{stm}[called $\mathscr{P}_s\ (s\ge1)$]\label{state}
	\end{stm}
	Let
	\begin{align*}
		Q_{s-1}^{\pm}&=\left\{\bm k\in P_{s-1}:\ \|\theta+\bm k\cdot\bm\omega\pm\theta_{s-1}\|_{\T}<\delta_{s-1}\right\},\ Q_{s-1}=Q_{s-1}^+\cup Q_{s-1}^-,\\
		\tilde{Q}_{s-1}^{\pm}&=\left\{\bm k\in P_{s-1}:\ \|\theta+\bm k\cdot\bm\omega\pm\theta_{s-1}\|_{\T}<\delta_{s-1}^{\frac{1}{100}}\right\},\ \tilde{Q}_{s-1}=\tilde{Q}_{s-1}^+\cup \tilde{Q}_{s-1}^-.
	\end{align*}
	We distinguish two cases:
	\begin{align}\label{c1}
		(\bm C1)_{s-1}:\ \dist(\tilde{Q}_{s-1}^-,Q_{s-1}^+)>100N_s^{10}
	\end{align}
	and
	\begin{align}\label{c2}
		(\bm C2)_{s-1}:\ \dist(\tilde{Q}_{s-1}^-,Q_{s-1}^{+})\le100N_s^{10}.
	\end{align}
	
	Let
	\begin{align*}
		\Z^d\ni \bm l_{s-1}=\left\{\begin{array}{lc}
			\bm 0,& \text{if \eqref{c1} holds true},\\
			\bm i_{s-1}-\bm j_{s-1}, & \text{if \eqref{c2} holds true},
		\end{array}\right.
	\end{align*}
	where $\bm i_{s-1}\in Q_{s-1}^+$ and $\bm j_{s-1}\in \tilde{Q}_{s-1}^-$ such that $\|\bm i_{s-1}-\bm j_{s-1}\|\le 100N_s^{10}$ in $(\bm C2)_{s-1}$. Set $\Omega_{\bm k}^0=\{\bm k\}\ (\bm k\in\Z^d)$. Let $\lg\subset\Z^d$ be a finite set. We say $\lg$ is $(s-1)$-$\good$ iff
	\begin{align*}
		\left\{\begin{array}{l}
			\bm k'\in Q_{s'},\ \tilde{\Omega}_{\bm k'}^{s'}\subset\lg,\ \tilde{\Omega}_{\bm k'}^{s'}\subset\Omega_{\bm k}^{s'+1}\Rightarrow\tilde{\Omega}_{\bm k}^{s'+1}\subset\lg,\ \text{for}\ s'<s-1,\\
			\{\bm k\in P_{s-1}:\ \tilde{\Omega}_{\bm k}^{s-1}\subset\lg\}\cap Q_{s-1}=\emptyset.
		\end{array}\right.
	\end{align*}
	
	Then we have
	
	$(\bm a)_s:$ There is $P_s\subset\frac{1}{2}\Z^d$ so that the following holds true. In the case of $(\bm C1)_{s-1}$, we have
	\begin{align}\label{as1}
		P_s=Q_{s-1}\subset\left\{\bm k\in\Z^d+\frac{1}{2}\sum_{i=0}^{s-1}\bm l_i:\ \min_{\sigma=\pm1}\|\theta+\bm k\cdot\bm\omega+\sigma\theta_{s-1}\|_{\T}<\delta_{s-1}\right\}.
	\end{align}
	For the case of $(\bm C2)_{s-1}$, we have
	\begin{align}\label{as2}
		\begin{array}{l}
			P_s\subset\left\{\bm k\in\Z^d+\frac{1}{2}\sum\limits_{i=0}^{s-1}\bm l_i:\ \|\theta+\bm k\cdot\bm\omega\|_{\T}<3\delta_{s-1}^{\frac{1}{100}}\right\},\\
			\text{or}\ P_s\subset\left\{\bm k\in\Z^d+\frac{1}{2}\sum\limits_{i=0}^{s-1}\bm l_i:\ \|\theta+\bm k\cdot\bm\omega+\frac{1}{2}\|_{\T}<3\delta_{s-1}^{\frac{1}{100}}\right\}.
		\end{array}
	\end{align}
	For every $\bm k\in P_s$, we can find resonant blocks $\Omega_{\bm k}^s,2\Omega_{\bm k}^s\subset\Z^d$ and the  enlarged resonant  block $\tilde{\Omega}_{\bm k}^s\subset\Z^d$ with the following properties. If \eqref{c1} holds true, then
	\begin{align*}
		&\lg_{N_s}(\bm k)\subset\Omega_{\bm k}^s\subset\lg_{N_s+50N_{s-1}^{100}}(\bm k),\\
		&\lg_{2N_s}(\bm k)\subset2\Omega_{\bm k}^s\subset\lg_{2N_s+50N_{s-1}^{100}}(\bm k),\\
		&\lg_{N_s^{10}}(\bm k)\subset\tilde{\Omega}_{\bm k}^s\subset\lg_{N_s^{10}+50N_{s-1}^{100}}(\bm k),
	\end{align*}
	and if \eqref{c2} holds true, then
	\begin{align*}
		&\lg_{100N_s^{10}}(\bm k)\subset\Omega_{\bm k}^s\subset\lg_{100N_s^{10}+50N_{s-1}^{100}}(\bm k),\\
		&\lg_{200N_s^{10}}(\bm k)\subset2\Omega_{\bm k}^s\subset\lg_{200N_s^{10}+50N_{s-1}^{100}}(\bm k),\\
		&\lg_{N_s^{100}}(\bm k)\subset\tilde{\Omega}_{\bm k}^s\subset\lg_{N_s^{100}+50N_{s-1}^{100}}(\bm k).
	\end{align*}
	These resonant blocks are constructed to satisfy the following two properties:\\
	$(\bm a1)_s$:
	\begin{align}\label{a1s}
		\left\{\begin{array}{l}
			\Omega_{\bm k}^s\cap\tilde{\Omega}_{\bm k'}^{s'}\ne\emptyset\ (s'<s)\Rightarrow\tilde{\Omega}_{\bm k'}^{s'}\subset\Omega_{\bm k}^s,\\
			2\Omega_{\bm k}^s\cap\tilde{\Omega}_{\bm k'}^{s'}\ne\emptyset\ (s'<s)\Rightarrow\tilde{\Omega}_{\bm k'}^{s'}\subset2\Omega_{\bm k}^s,\\
			\tilde{\Omega}_{\bm k}^s\cap\tilde{\Omega}_{\bm k'}^{s'}\ne\emptyset\ (s'<s)\Rightarrow\tilde{\Omega}_{\bm k'}^{s'}\subset\tilde{\Omega}_{\bm k}^s,\\
			\dist(\tilde{\Omega}_{\bm k}^s,\tilde{\Omega}_{\bm k'}^s)>10\diam(\tilde{\Omega}_{\bm k}^s)\ \text{for}\ \bm k\ne\bm k'\in P_s.
		\end{array}\right.
	\end{align}\\
	$(\bm a2)_s$:  The translation of $\tilde{\Omega}_{\bm k}^s$
	\begin{align*}
		\tilde{\Omega}_{\bm k}^s-\bm k\subset\Z^d+\frac{1}{2}\sum_{i=0}^{s-1}\bm l_i,
	\end{align*}
	is independent of $\bm k\in P_s$ and symmetrical about the origin.
	
	We denote
	\begin{align}\label{zetas}
		\zeta_s=\diam(\Omega_{\bm k}^s), \ \tilde{\zeta}_s=\diam(\tilde{\Omega}_{\bm k}^s).
	\end{align}
	
	$(\bm b)_s$:  $Q_{s-1}$ is covered by $\Omega_{\bm k}^s\ (\bm k\in P_s)$ in the sense that for every $\bm k'\in Q_{s-1}$, there exists a $\bm k\in P_s$ such that
	\begin{align}\label{311}
		\tilde{\Omega}_{\bm k'}^{s-1}\subset \Omega_{\bm k}^s.
	\end{align}
	
	$(\bm c)_s$:  For each $\bm k\in P_s$, $\tilde{\Omega}_{\bm k}^s$ contains a subset $A_{\bm k}^s\subset\Omega_{\bm k}^s$ with $\# A_{\bm k}^s\le 2^s$ such that $\tilde{\Omega}_{\bm k}^s\setminus A_{\bm k}^s$ is $(s-1)$-$\good$. Moreover, $A_{\bm k}^s-\bm k$ is independent of $\bm k$ and is symmetrical about the origin.
	
	$(\bm d)_s$:  There is a $\theta_s=\theta_s(E)\in\C$ with the following properties. Let $\bm k\in P_s$. Replacing $\theta+\bm n\cdot\bm\omega$ by $z+(\bm n-\bm k)\cdot\bm\omega$ for $\bm n\in\Z^d$ and restricting $z$ in
	\begin{align}\label{zs}
		\left\{z\in\C:\ \min_{\sigma=\pm1}\|z+\sigma\theta_s\|_{\T}<\delta_s^{\frac{1}{10^4}}\right\},
	\end{align}
	\begin{rem}
		For notational simplicity, we have made the following simplification: for given $\lg\subset\Z^d$, $\bm k\in \frac{1}{2}\Z^d$ and $\boA(z):\ell^2(\Z^d)\rightarrow\ell^2(\Z^d)$, we define $(\boA(z))_{\lg-\bm k}$ as
		\begin{align*}
			(\boA(z))_{\lg-\bm k}(\bm i,\bm j)=(\boA(z-\bm k\cdot\bm\omega))_{\lg}(\bm i+\bm k,\bm j+\bm k)\text{ for all $\bm i,\bm j\in\lg-\bm k$},
		\end{align*} 
		where $\bm i+\bm k,\bm j+\bm k\in\lg\subset\Z^d$.
	\end{rem}
	We write 
	\begin{align*}
		\mathcal{M}_s(z)=(\mathcal{T}(z))_{\tilde{\Omega}_{\bm k}^s-\bm k}=((v(z+\bm n\cdot\bm\omega)-E)\delta_{\bm n,\bm n'}+\ep \mathcal{W})_{\tilde{\Omega}_{\bm k}^s-\bm k}.
	\end{align*}
	Then $(\mathcal{M}_s(z))_{(\tilde{\Omega}_{\bm k}^s\setminus A_{\bm k}^s)-\bm k}$ is invertible and we can define the Schur complement
	\begin{align*}
		\mathcal{S}_s(z)&=(\mathcal{M}_s(z))_{A_{\bm k}^s-\bm k}-\left(\mathcal{R}_{A_{\bm k}^s-\bm k}\mathcal{M}_s(z)\mathcal{R}_{(\tilde{\Omega}_{\bm k}^s\setminus A_{\bm k}^s)-\bm k}((\mathcal{M}_s(z))_{(\tilde{\Omega}_{\bm k}^s\setminus A_{\bm k}^s)-\bm k})^{-1}\right.\\
		&\ \ \left.\times\ \mathcal{R}_{(\tilde{\Omega}_{\bm k}^s\setminus A_{\bm k}^s)-\bm k}\mathcal{M}_s(z)\mathcal{R}_{A_{\bm k}^s-\bm k}\right).
	\end{align*}
	Moreover, if $z$ belongs to the set in  \eqref{zs}, then we have
	\begin{align}\label{ss}
		\max_{\bm x\in A_{\bm k}^s-\bm k}\sum_{\bm y\in A_{\bm k}^s-\bm k}|\mathcal{S}_s(z)(\bm x,\bm y)|<2|v|_R+\sum_{l=0}^{s-1}\delta_l<4|v|_R
	\end{align}
	and
	\begin{align}\label{detss}
		\left|\det \mathcal{S}_s(z)\right|\ge\delta_{s-1}\|z-\theta_s\|_{\T}\cdot\|z+\theta_s\|_{\T}.
	\end{align}
	Combining the Schur complement lemma (cf. Lemma \ref{scl}), we get
	\begin{align}
		\label{tb0}\|\mathcal{T}_{\tilde{\Omega}_{\bm k}^s}^{-1}\|&<\delta_{s-1}^{-2}\|\theta+\bm k\cdot\bm\omega-\theta_s\|_{\T}^{-1}\cdot\|\theta+\bm k\cdot\bm\omega+\theta_s\|_{\T}^{-1}.
	\end{align}
	
	$(\bm e)_s$:  Let
	\begin{align*}
		Q_{s}^{\pm}&=\{\bm k\in P_{s}:\ \|\theta+\bm k\cdot\bm\omega\pm\theta_{s}\|_{\T}<\delta_{s}\},\ Q_{s}=Q_{s}^+\cup Q_{s}^-,\\
		\tilde{Q}_{s}^{\pm}&=\{\bm k\in P_{s}:\ \|\theta+\bm k\cdot\bm\omega\pm\theta_{s}\|_{\T}<\delta_{s}^{\frac{1}{100}}\},\ \tilde{Q}_{s}=\tilde{Q}_{s}^+\cup \tilde{Q}_{s}^-,
	\end{align*}
	and
		\begin{align}\label{alphas}
		\alpha_0=\frac{3}{4}\alpha,\ \alpha_s=\alpha_{s-1}\left(1-\frac{50\times 10^{5\rho'}}{\alpha\log^{\rho-\rho'}N_s}\right).
	\end{align}
	Thus,
	\begin{align*}
		\alpha_s \searrow\alpha_{\infty}\ge\frac{\alpha}{2}.
	\end{align*}
	If $\bm k\in P_s\setminus Q_s$, we have
		\begin{align}
					\label{PsQsnorm}&\|\mathcal{T}_{\tO_{\bm k}^s}^{-1}\|\le\delta_{s-1}^{-2}\delta_s^{-2}<\delta_{s}^{-3},\\
		\label{PsQsdecay}&|\mathcal{T}_{\tO_{\bm k}^s}^{-1}(\bm x,\bm y)|<e^{-\alpha'_{s-1}\log^{\rho}(1+\|\bm x-\bm y\|)}\text{ for $\|\bm x-\bm y\|>\frac{\tz_s}{10}$},
	\end{align}
	where
	\begin{align}\label{alpha's}
	\alpha'_{s-1}=\alpha_{s-1}\left(1-\frac{20\times 10^{5\rho'}}{\alpha\log^{\rho-\rho'}N_{s}}\right)>\alpha_{s}.
\end{align}
	
	We say a finite set $\lg\subset\Z^d$ is $s$-$\good$ iff
	\begin{align*}
		\left\{\begin{array}{l}
			\bm k'\in Q_{s'},\ \tilde{\Omega}_{\bm k'}^{s'}\subset\lg,\ \tilde{\Omega}_{\bm k'}^{s'}\subset\Omega_{\bm k}^{s'+1}\Rightarrow\tilde{\Omega}_{\bm k}^{s'+1}\subset\lg\ \text{for}\ s'<s,\\
			\{\bm k\in P_s:\ \tilde{\Omega}_{\bm k}^s\subset\lg\}\cap Q_s=\emptyset.
		\end{array}\right.
	\end{align*}
	Assume that $\lg$ is $s$-$\good$. Then
	\begin{align}
		\label{tsgnorm}&\|\mathcal{T}_\lg^{-1}\|\le2\delta_{s-1}^{-3}\sup_{\{\bm k\in P_s:\ \tO_{\bm k}^s\subset\lg\}}\left(\|\theta+\bm k\cdot\bm\omega-\theta_s\|_{\T}^{-1}\cdot\|\theta+\bm k\cdot\bm\omega+\theta_s\|_{\T}^{-1}\right)<\delta_{s}^{-3},\\
		\label{tsgdecay}&|\mathcal{T}_{\lg}^{-1}(\bm x,\bm y)|<e^{-\alpha_s\log^{\rho}(1+\|\bm x-\bm y\|)}\text{ for $\|\bm x-\bm y\|>10\tz_s$}.
	\end{align}

	$(\bm f)_s$:  We have
	\begin{align}\label{fs}
		\left\{\bm k\in\Z^d+\frac{1}{2}\sum_{i=0}^{s-1}\bm l_i:\ \min_{\sigma=\pm1}\|\theta+\bm k\cdot\bm\omega+\sigma\theta_s\|_{\T}<10\delta_s^{\frac{1}{100}}\right\}\subset P_s.
	\end{align}
	
	The main theorem of this section is
	\begin{thm}\label{ind}
		Let $\bm\omega\in DC_{\tau,\g}$. Then there is some $\ep_0=\ep_0(d,\tau,\g,v,R,\alpha,\rho,\rho')>0$ so that for $0<|\ep|<\ep_0$, the statement $\mathscr{P}_s$ holds true  for all $s\ge1$.
	\end{thm}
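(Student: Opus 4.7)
I would argue by induction on $s \ge 1$. The base case $\mathscr{P}_1$ follows from direct perturbation around the diagonal $\mathcal{D}_{\bm n} = v(\theta + \bm n\cdot\bm\omega) - E$: the Morse property \eqref{vdefn} confines the small-divisor set to neighborhoods of $\pm\theta_0$, the Diophantine condition $\bm\omega \in DC_{\tau,\g}$ separates these neighborhoods by $\gg N_1^{10}$, and Lemma \ref{cte} delivers the off-diagonal decay with rate $\alpha_0 = 3\alpha/4$. The substantive work lies in the inductive step $\mathscr{P}_{s-1} \Rightarrow \mathscr{P}_s$.

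\emph{Structural data $(\bm a)_s$--$(\bm c)_s$, $(\bm f)_s$.} I first partition $Q_{s-1} \subset P_{s-1}$ according to the dichotomy $(\bm C1)_{s-1}$/$(\bm C2)_{s-1}$ of \eqref{c1}--\eqref{c2}: in the uncoupled case I set $P_s = Q_{s-1}$, and in the coupled case I pair $\pm$-resonances at half-integer midpoints shifted by $\tfrac{1}{2}\bm l_{s-1}$, obtaining \eqref{as1}--\eqref{as2}. The blocks $\Omega_{\bm k}^s \subset 2\Omega_{\bm k}^s \subset \tilde{\Omega}_{\bm k}^s$ are defined as the appropriate sup-norm neighborhoods of $\bm k \in P_s$, fattened by at most $50N_{s-1}^{100}$ to absorb any previous-scale enlarged block $\tilde{\Omega}_{\bm k'}^{s'}$ they cross; this forces the first three clauses of \eqref{a1s}, and the separation clause follows from the Diophantine bound $\|\bm k - \bm k'\| \gg \delta_{s-1}^{-1/\tau} \gg \tz_s$ for distinct $\bm k, \bm k' \in P_s$. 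The defect set $A_{\bm k}^s$ is built recursively by choosing one representative from each $\tilde{\Omega}_{\bm k'}^{s'} \subset \tilde{\Omega}_{\bm k}^s$ with $\bm k' \in Q_{s'}$; a doubling count gives $\#A_{\bm k}^s \le 2^s$, and the complement $\tilde{\Omega}_{\bm k}^s \setminus A_{\bm k}^s$ is $(s-1)$-$\good$ by construction. The covering $(\bm b)_s$ and the inclusion $(\bm f)_s$ are immediate from the construction combined with the Morse factorization of $v$.

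\emph{Analytic data $(\bm d)_s$.} Since $\tilde{\Omega}_{\bm k}^s \setminus A_{\bm k}^s$ is $(s-1)$-$\good$, the induction hypothesis provides an invertible $(\mathcal{M}_s(z))_{(\tilde{\Omega}_{\bm k}^s \setminus A_{\bm k}^s) - \bm k}$ with holomorphic, decaying inverse throughout \eqref{zs}, so $\mathcal{S}_s(z)$ is a well-defined holomorphic $\#A_{\bm k}^s \times \#A_{\bm k}^s$ matrix. The upper bound \eqref{ss} comes from $|v|_R < \infty$ together with geometric summation of the $\delta_l$-size corrections. For \eqref{detss} I would adopt the Weierstrass preparation strategy of \cite{Bou00,CSZ24a}: to leading order in $\ep$, $\det\mathcal{S}_s(z)$ behaves like $\prod_{\bm x \in A_{\bm k}^s - \bm k}(v(z + \bm x\cdot\bm\omega) - E)$, whose two roots near $\theta_{s-1}$ define $\pm\theta_s$ via Rouch\'{e}, and the Morse factorization of $v$ then yields \eqref{detss}. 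The Schur complement lemma promotes this to the operator-norm bound \eqref{tb0}.

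\emph{Decay estimates $(\bm e)_s$ --- the main obstacle.} The operator-norm bounds \eqref{PsQsnorm}, \eqref{tsgnorm} follow by iterating the Schur complement and resolvent identity through the hierarchy of resonant blocks contained in an $s$-$\good$ set. The genuinely hard part is the off-diagonal decay \eqref{PsQsdecay}, \eqref{tsgdecay} at rate $\alpha_s \log^{\rho}(1 + \|\bm x - \bm y\|)$: expanding the resolvent into walks over all prior scales and reassembling the displacement factor via the quasi-metric inequality \eqref{quaeq} introduces an error $C(\rho)\log^{\rho} n$ per stitch. Unlike in \cite{SWY25} where $n$ is controlled scale-independently, here $n$ can grow like $\log^{\rho} N_{s+1}/\log^{\rho} N_1$. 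To absorb this I would spend a geometric fraction $1 - 50\cdot 10^{5\rho'}/(\alpha\log^{\rho - \rho'} N_s)$ of the decay rate at each scale, as encoded in \eqref{alphas}, and invoke Lemma \ref{el} to extract the full $\log^{\rho}(1 + \|\bm x - \bm y\|)$ factor from the leading resolvent step. The summability $\sum_s \log^{\rho' - \rho} N_s < \infty$, which follows from $\rho > \rho'$ and the super-exponential growth of $N_s$, then yields $\alpha_s \searrow \alpha_\infty \ge \alpha/2$ and closes the induction. Combes-Thomas (Lemma \ref{cte}) and Hadamard (Lemma \ref{chi}) feed into the per-stitch estimates at each scale transition.
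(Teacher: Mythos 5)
Your overall skeleton — MSA induction, $(\bm C1)/(\bm C2)$ dichotomy, Schur complement plus Rouch\'{e} to locate $\pm\theta_s$, quasi-metric inequality \eqref{quaeq} plus extraction Lemma \ref{el} to stitch displacements, and the geometric drain $\alpha_s \searrow \alpha_\infty \ge \alpha/2$ controlled by $\sum_s \log^{\rho'-\rho}N_s < \infty$ — matches the paper closely, and the structural bookkeeping in $(\bm a)_s$--$(\bm c)_s$, $(\bm f)_s$ is correct.

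The gap is in the mechanism you assign to the off-diagonal decay. You invoke Combes--Thomas (Lemma \ref{cte}) both in the base case (``Lemma \ref{cte} delivers the off-diagonal decay with rate $\alpha_0 = 3\alpha/4$'') and in the per-stitch estimates. But Lemma \ref{cte} is stated for a self-adjoint $\boA$ and requires a positive spectral gap $\mathscr{D} = \dist(z,\sigma(\boA)) > 0$, and \emph{neither} hypothesis is available inside the MSA: the induction is run for complex $\theta \in \D_{R/2}$ (so $\mathcal{H}_\lg(\theta)$ is not self-adjoint), and the energies $E$ of interest are precisely the resonant ones sitting in (or arbitrarily close to) the spectrum, so there is no a priori distance to exploit. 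Accordingly, the paper never uses Lemma \ref{cte} in Section 4; it is reserved for Section 6 where the shifted energy $E + \sqrt{-1}t^{-1}$ genuinely lies off the real spectrum. The paper's actual mechanism for \eqref{0gde} is a Neumann expansion $\mathcal{T}_\lg^{-1} = \sum_{i\ge 0}(-\ep\mathcal{D}_\lg^{-1}\mathcal{W}_\lg)^i\mathcal{D}_\lg^{-1}$, where the convolutions $|\mathcal{W}_\lg^i(\bm x,\bm y)|$ are controlled by splitting each hop $e^{-\alpha\log^\rho(1+\cdot)}$ into a summable factor $e^{-\frac{\alpha}{4}\log^\rho(1+\cdot)}$ and a decay factor $e^{-\frac{3}{4}\alpha\log^\rho(1+\cdot)}$, the latter reassembled via \eqref{quaeq} — this is what produces the $3\alpha/4$ rate, not a Combes--Thomas shift parameter. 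At higher scales the decay is propagated by the iterated resolvent identity (Corollary \ref{ite0g}, Lemmas \ref{ite1}, \ref{itetg}, \ref{itepsqt}), again without Combes--Thomas. A secondary inaccuracy: for a general $s$-good $\lg$, the $\ell^2$-norm bound \eqref{tsgnorm} is obtained not by ``iterating the Schur complement'' but by constructing a local-parametrix pair $\boL, \boK$ with $\boL\boT_\lg = \boI_\lg + \ep\boK$ and invoking Schur's test on each; the Schur complement lemma is confined to the single-block estimates \eqref{tb0}. Hadamard's inequality feeds into the determinant bounds for $\mathcal{S}_s(z)$, not into the stitching steps as you state.
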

	
	The following three  subsections are devoted to proving  Theorem \ref{ind}.
	
	\subsection{The initial step}
	Recalling $v(\theta_0)=E$, $\mathcal{D}_{\bm n}=v(\theta+\bm n\cdot\bm\omega)-E$ and \eqref{vdefn}, we have
	\begin{align*}
		|\mathcal{D}_{\bm n}|=|v(\theta+\bm n\cdot\bm\omega)-v(\theta_0)|\ge \kappa_1
		\|\theta+\bm n\cdot\bm\omega+\theta_0\|_{\T}\cdot\|\theta+\bm n\cdot\bm\omega-\theta_0\|_{\T}.
	\end{align*}
	
	Denote 
	\begin{align*}
		P_0=\Z^d,\ Q_0=\{\bm k\in P_0:\ \min(\|\theta+\bm k\cdot\bm\omega+\theta_0\|_{\T},\|\theta+\bm k\cdot\bm\omega-\theta_0\|_{\T})<\delta_0\}.
	\end{align*}
	
	We say a finite set $\lg\subset \Z^d$ is $0$-$\good$ iff $\lg\cap Q_0=\emptyset$. Then we have
	
	\begin{lem}\label{0g}
		If the set $\lg\subset\Z^d$ is $0$-$\good$, we have
		\begin{align}\label{0gl2}
			\|\mathcal{T}_{\lg}^{-1}\|\le2\kappa_1^{-1}\delta_0^{-2},
		\end{align}
and
		\begin{align}\label{0gde}
			|\boT_{\lg}^{-1}(\bm x,\bm y)|\le e^{-\frac{3}{4}\alpha\log^{\rho}(1+\|\bm x-\bm y\|)}\ \text{for $\bm x\ne \bm y$}.
		\end{align}
	\end{lem}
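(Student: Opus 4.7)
The plan is to prove both \eqref{0gl2} and \eqref{0gde} by a direct Neumann-series expansion of $\boT_\lg = \mathcal{D}_\lg + \ep\boW_\lg$ on the $0$-$\good$ set $\lg$, using the $0$-$\good$ property to lower bound $|\mathcal{D}_{\bm n}|$ and extracting the off-diagonal decay via the quasi-metric inequality \eqref{quaeq}. For the $\ell^2$-bound, I would first note that $\lg \cap Q_0 = \emptyset$ forces, for every $\bm n \in \lg$, both $\|\theta+\bm n\cdot\bm\omega - \theta_0\|_\T \ge \delta_0$ and $\|\theta+\bm n\cdot\bm\omega + \theta_0\|_\T \ge \delta_0$; combined with the Morse bound \eqref{vdefn} and $E = v(\theta_0)$ this yields $|\mathcal{D}_{\bm n}| \ge \kappa_1 \delta_0^2$. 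Writing $\boT_\lg = \mathcal{D}_\lg(\boI + \ep \mathcal{D}_\lg^{-1}\boW_\lg)$ and bounding $\|\boW_\lg\| \le D(\alpha)$ from \eqref{wphi} and \eqref{Det}, the perturbation has operator norm at most $\ep\kappa_1^{-1}\delta_0^{-2}D(\alpha) \le \tfrac12$ for $\ep_0$ small, and the standard Neumann bound gives \eqref{0gl2}.

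For the off-diagonal decay I would expand $\boT_\lg^{-1} = \sum_{k\ge 0}(-\ep)^k \mathcal{D}_\lg^{-1}(\boW_\lg \mathcal{D}_\lg^{-1})^k$. For $\bm x \ne \bm y$ the $k=0$ term vanishes, and each $k \ge 1$ term is a sum over paths $\bm x = \bm x_0, \bm x_1, \ldots, \bm x_k = \bm y$ in $\lg$ weighted by $\prod_{i=0}^k \mathcal{D}_{\bm x_i}^{-1} \cdot \prod_{i=1}^k \f(\bm x_{i-1} - \bm x_i)$, where each diagonal factor contributes at most $(\kappa_1\delta_0^2)^{-1}$. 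Splitting $|\f(\bm n)| \le e^{-\frac{3}{4}\alpha\log^\rho(1+\|\bm n\|)}\cdot e^{-\frac{1}{4}\alpha\log^\rho(1+\|\bm n\|)}$ and applying \eqref{quaeq} with the increments $\bm x_i-\bm x_{i-1}$ as summands extracts the target weight $e^{-\frac{3}{4}\alpha \log^\rho(1+\|\bm x - \bm y\|)}$ at the cost of a loss factor $e^{\frac{3}{4}\alpha C(\rho)\log^\rho k}$, while the residual $\prod_i e^{-\frac{\alpha}{4} \log^\rho(1+\|\bm x_{i-1} - \bm x_i\|)}$ is summed over the $k-1$ free intermediate vertices via a change of variables to increments, giving at most $D(\alpha/4)^{k-1}$ by \eqref{Det}.

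Combining everything, $|\boT_\lg^{-1}(\bm x,\bm y)|$ is bounded by $e^{-\frac{3}{4}\alpha \log^\rho(1+\|\bm x - \bm y\|)}$ times a series of the form $(\kappa_1\delta_0^2)^{-1} \sum_{k\ge 1} r^k e^{\frac{3}{4}\alpha C(\rho)\log^\rho k}$ (up to multiplicative constants), with base $r = \ep D(\alpha/4)/(\kappa_1\delta_0^2)$. Since $\delta_0 = \ep_0^{1/10}$ yields $r \le \ep_0^{4/5} D(\alpha/4)/\kappa_1$, the geometric smallness dominates the subexponential $\log^\rho k$-loss and the whole coefficient is $O(\ep_0^{3/5})$, hence $\le 1$ when $\ep_0$ is small in terms of $d,\alpha,\rho,\kappa_1$. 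The main delicate point is precisely this balance: the quasi-metric defect $e^{\frac{3}{4}\alpha C(\rho)\log^\rho k}$ from \eqref{quaeq} is subexponential in $k$ and is only harmless because the base $\ep/(\kappa_1\delta_0^2)$ is a genuine positive power of $\ep_0$, which is exactly what the choice $\delta_0=\ep_0^{1/10}$ secures at the initial scale.
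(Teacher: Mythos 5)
Your proof is correct and takes essentially the same route as the paper: a Neumann-series expansion around the diagonal $\mathcal{D}_\lg$ for the $\ell^2$-bound, and for the off-diagonal decay the same path expansion with each hopping factor split into a $\tfrac{3}{4}\alpha$-piece (collected via the quasi-metric \eqref{quaeq}) and a $\tfrac{1}{4}\alpha$-piece (summed over intermediate vertices to give $D(\alpha/4)^{k-1}$), with the subexponential $e^{\frac{3}{4}\alpha C(\rho)\log^\rho k}$ loss absorbed by the geometric base $\ep/(\kappa_1\delta_0^2)\lesssim\ep_0^{4/5}$. The only cosmetic difference is that the paper isolates the $k=1$ term before summing the $k\ge 2$ tail, whereas you treat all $k\ge 1$ uniformly; the estimates are the same.
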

	\begin{proof}
		Since  $\lg$ is $0$-good, for $\bm n\in\lg$,  we have
		\begin{align*}
			|\mathcal{D}_{\bm n}|\ge\kappa_1\|\theta+\bm n\cdot\bm\omega+\theta_0\|_{\T}\cdot\|\theta+\bm n\cdot\bm\omega-\theta_0\|_{\T}\ge\kappa_1\delta_0^{2}.
		\end{align*}
		Then
		\begin{align}\label{D-1}
			\|\mathcal{D}_{\lg}^{-1}\|\le\kappa_1^{-1}\delta_0^{-2},
		\end{align}
		which implies (since $|\ep|<\ep_0$ and $\ep_0=\delta_0^{10}$)
		\begin{align*}
			\|\ep \mathcal{D}_{\lg}^{-1}\mathcal{W}_{\lg}\|\le |\ep|\cdot\|\mathcal{D}_{\lg}^{-1}\|\cdot\|\mathcal{W}_{\lg}\|<\frac{1}{2}.
		\end{align*}
		Thus
		\begin{align*}
			(\mathcal{I}_{\lg}+\ep \mathcal{D}_{\lg}^{-1}\mathcal{W}_{\lg})^{-1}=\sum_{i=0}^{\infty}(-\ep \mathcal D_{\lg}^{-1}\mathcal W_{\lg})^i
		\end{align*}
		exist and
		\begin{align*}
			\|(\mathcal{I}_{\lg}+\ep \mathcal{D}_{\lg}^{-1}\mathcal{W}_{\lg})^{-1}\|\le \sum_{i=0}^{\infty}|(|\ep|\cdot\|\mathcal{D}_{\lg}^{-1}\|\cdot\|\mathcal{W}_{\lg}\|)^i<2.
		\end{align*}
	At this time, we obtain
		\begin{align*}
			\mathcal{T}_{\lg}^{-1}=(\mathcal{I}_\lg+\ep \mathcal{D}_{\lg}^{-1}\mathcal{W}_{\lg})^{-1}\mathcal{D}_\lg^{-1}
		\end{align*}
	exist and
	\begin{align*}
		\|	\mathcal{T}_{\lg}^{-1}\|\le \|(\mathcal{I}_\lg+\ep \mathcal{D}_{\lg}^{-1}\mathcal{W}_{\lg})^{-1}\|\cdot\|\mathcal{D}_\lg^{-1}\|\le 2\kappa_1^{-1}\delta_0^{-2}.
	\end{align*}
	Moreover, if $\bm x\ne\bm y$, we learn that
		\begin{align}
		\nonumber	|\mathcal{T}_\lg^{-1}(\bm x,\bm y)|&\le \sum_{i=1}^{\infty}|\ep|^i\cdot|((\boD_{\lg}^{-1}\boW_{\lg})^{i}\boD_{\lg}^{-1})(\bm x,\bm y)|\\
	\nonumber	&\le |\ep|\cdot\|\boD_{\lg}^{-1}\|^{2}\cdot|\boW_{\lg}(\bm x,\bm y)|\\
	\label{5801}&\ +\left(\sum_{i=2}^{\infty}|\ep|^i\cdot\|\boD_{\lg}^{-1}\|^{i+1}\cdot|\boW_{\lg}^i(\bm x,\bm y)|\right).
		\end{align}
		Next, we will control $|\boW_{\lg}^i(\bm x,\bm y)|$ for $i\ge2$. From \eqref{wphi} and \eqref{quaeq}, we have
		\begin{align}
			\nonumber|\boW_{\lg}^i(\bm x,\bm y)|&\le \sum_{\bm k_1,\bm k_2,\cdots,\bm k_{i-1}\in\lg}|\boW_{\lg}(\bm x,\bm k_1)|\cdot|\boW_{\lg}(\bm k_1,\bm k_2)|\cdots|\boW_{\lg}(\bm k_{i-1},\bm y)|\\
			\nonumber&\le \sum_{\bm k_1,\bm k_2,\cdots,\bm k_{i-1}\in\lg} e^{-\alpha\log^{\rho}(1+\|\bm x-\bm k_1\|)}\cdot e^{-\alpha\log^{\rho}(1+\|\bm k_1-\bm k_2\|)}\cdots e^{-\alpha\log^{\rho}(1+\|\bm k_{i-1}-\bm y\|)}\\
			\nonumber&\le \sum_{\bm k_1,\bm k_2,\cdots,\bm k_{i-1}\in\lg} \left(\vphantom{e^{-\frac{\alpha}{4}\log^{\rho}(1+\|\bm x-\bm k_1\|)}}e^{-\frac{\alpha}{4}\log^{\rho}(1+\|\bm x-\bm k_1\|)}\cdot e^{-\frac{\alpha}{4}\log^{\rho}(1+\|\bm k_1-\bm k_2\|)}\cdots e^{-\frac{\alpha}{4}\log^{\rho}(1+\|\bm k_{i-1}-\bm y\|)}\right.\\
			\nonumber&\ \ \left.\vphantom{e^{-\frac{\alpha}{4}\log^{\rho}(1+\|\bm x-\bm k_1\|)}}\times e^{-\frac{3}{4}\alpha\log^{\rho}(1+\|\bm x-\bm k_1\|)}\cdot e^{-\frac{3}{4}\alpha\log^{\rho}(1+\|\bm k_1-\bm k_2\|)}\cdots e^{-\frac{3}{4}\alpha\log^{\rho}(1+\|\bm k_{i-1}-\bm y\|)}\right)\\
			\label{W^i}&\le \left(D\left(\frac{\alpha}{4}\right)\right)^{i-1}e^{-\frac{3}{4}\alpha\log^{\rho}(1+\|\bm x-\bm y\|)+\frac{3}{4}\alpha C(\rho)\log^{\rho}},
		\end{align}
		where $e^{-\frac{\alpha}{4}\log^{\rho}(1+\|\bm k_{i-1}-\bm y\|)}\le1$ and \eqref{Det} are used.
		 Since $|\ep|<\ep_0$, $\ep_0=\delta_0^{10}$, \eqref{wphi}, \eqref{D-1} and \eqref{W^i}, we obtain
		 \begin{align}\label{5802}
		 	|\ep|\cdot\|\boD_{\lg}^{-1}\|^{2}\cdot|\boW_{\lg}(\bm x,\bm y)|\le\frac{1}{2}e^{-\alpha\log^{\rho}(1+\|\bm x-\bm y\|)},
		 \end{align}
		 and
		\begin{align}
		\nonumber	&\ \ \sum_{i=2}^{\infty}|\ep|^i\cdot\|\boD_{\lg}^{-1}\|^{i+1}\cdot|\boW_{\lg}^i(\bm x,\bm y)|\\
			 \nonumber\le&\ \ \left(\sum_{i=2}^{\infty}|\ep|^i\cdot\|\boD_{\lg}^{-1}\|^{i+1}\cdot\left(D\left(\frac{\alpha}{4}\right)\right)^{i-1}e^{\frac{3}{4}\alpha C(\rho)\log^{\rho}i}\right)e^{-\frac{3}{4}\alpha\log^{\rho}(1+\|\bm x-\bm y\|)}\\
			 \label{5803}\le &\ \ \frac{1}{2}e^{-\frac{3}{4}\alpha\log^{\rho}(1+\|\bm x-\bm y\|)}.
		\end{align}
		Combining \eqref{5801}, \eqref{5802} and \eqref{5803} gives
		\begin{align*}
			|\mathcal{T}_\lg^{-1}(\bm x,\bm y)|\le e^{-\frac{3}{4}\alpha\log^{\rho}(1+\|\bm x-\bm y\|)}\ \text{for $\bm x\ne\bm y$}.
		\end{align*}
		We complete this proof.
	\end{proof}
	\begin{rem}
		If we directly obtain
		\begin{align*}
			&\ \ \sum_{\bm k_1,\bm k_2,\cdots,\bm k_{i-1}\in\lg}|\boW_{\lg}(\bm x,\bm k_1)|\cdot|\boW_{\lg}(\bm k_1,\bm k_2)|\cdots|\boW_{\lg}(\bm k_{i-1},\bm y)|\\
			\le&\ \ \sum_{\bm k_1,\bm k_2,\cdots,\bm k_{i-1}\in\lg}e^{-\alpha\log^{\rho}(1+\|\bm x-\bm y\|)+\alpha C(\rho)\log^{\rho}i}\\
			\le &\ \ (\#\lg)^{i-1}e^{-\alpha\log^{\rho}(1+\|\bm x-\bm y\|)+\alpha C(\rho)\log^{\rho}i}
		\end{align*}
	in \eqref{W^i} by using \eqref{wphi} and \eqref{quaeq}, the relevant series in \eqref{5803} will diverge when $\#\lg$ is sufficiently large. In practice, we utilize the partial decay of $\boW$ to overcome this difficulty, and this technique will be widely applied in subsequent sections. The loss term $\frac{\alpha}{4}$ is not essential and can be replaced by any constant $c \in (0, \alpha)$.
	\end{rem}
	
	Based on Lemma \ref{0g}, we derive the following frequently used corollary:
	\begin{cor}\label{ite0g}
		Let $\lg',\lg\subset\Z^d$ be finite sets. If $\lg'\subset\lg$ is $0$-{\rm good}, $\bm u\in\lg'$ and $\bm v\in\lg\setminus\lg'$, then there is some $\bm u'\in\lg\setminus\lg'$ such that
		\begin{align}\label{ite0geq1}
			|\boT_{\lg}^{-1}(\bm u,\bm v)|\le (\#\lg')\cdot e^{-\frac{3}{4}\alpha\log^{\rho}(1+\|\bm u-\bm u'\|)}\cdot|\boT_{\lg}^{-1}(\bm u',\bm v)|\cdot\delta_0^{-3}.
		\end{align}
		Morover, if $\lg'=\lg_{\frac{1}{2}N_1}(\bm u)\cap\lg$ is $0$-{\rm good}, then
			\begin{align}\label{ite0g0}
				|\boT_{\lg}^{-1}(\bm u,\bm v)|\le e^{-\frac{3}{4}\alpha\left(1-\frac{6\times 10^{5\rho'}}{\alpha\log^{\rho-\rho'}N_1}\right)\log^{\rho}(1+\|\bm u-\bm u'\|)}\cdot|\boT_{\lg}^{-1}(\bm u',\bm v)|.
			\end{align}
	\end{cor}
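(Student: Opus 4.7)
The plan is to run a one-step boundary resolvent identity between $\boT_\Lambda$ and $\boT_{\Lambda'}$, then combine the $0$-$\good$ bounds of Lemma \ref{0g} with the quasi-metric inequality of Lemma \ref{qua} to collapse a two-hop decay into a single $\log^\rho$-decay in $\|\bm u-\bm u'\|$. Since $\Lambda'$ is $0$-$\good$, $\boT_{\Lambda'}$ is invertible by Lemma \ref{0g}, and since the diagonal part $\boD$ does not couple $\Lambda'$ to $\Lambda\setminus\Lambda'$, the standard resolvent expansion yields
\begin{align*}
\boT_\Lambda^{-1}(\bm u,\bm v)=-\ep\sum_{\bm x\in\Lambda'}\sum_{\bm y\in\Lambda\setminus\Lambda'}\boT_{\Lambda'}^{-1}(\bm u,\bm x)\boW(\bm x,\bm y)\boT_\Lambda^{-1}(\bm y,\bm v)
\end{align*}
for any $\bm u\in\Lambda'$ and $\bm v\in\Lambda\setminus\Lambda'$.

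The next step is to feed in the unified bound
\begin{align*}
|\boT_{\Lambda'}^{-1}(\bm u,\bm x)|\le 2\kappa_1^{-1}\delta_0^{-2}\cdot e^{-\frac{3}{4}\alpha\log^\rho(1+\|\bm u-\bm x\|)}\quad (\bm x\in\Lambda'),
\end{align*}
which merges \eqref{0gl2} (needed only when $\bm u=\bm x$) with \eqref{0gde}, together with $|\boW(\bm x,\bm y)|\le e^{-\alpha\log^\rho(1+\|\bm x-\bm y\|)}$ from \eqref{wphi}. Splitting $\alpha=\tfrac{3}{4}\alpha+\tfrac{\alpha}{4}$ and applying \eqref{quaeq} with $n=2$ gives
\begin{align*}
e^{-\frac{3}{4}\alpha\log^\rho(1+\|\bm u-\bm x\|)}e^{-\alpha\log^\rho(1+\|\bm x-\bm y\|)}\le e^{\frac{3}{4}\alpha C(\rho)\log^\rho 2}\cdot e^{-\frac{3}{4}\alpha\log^\rho(1+\|\bm u-\bm y\|)}\cdot e^{-\frac{\alpha}{4}\log^\rho(1+\|\bm x-\bm y\|)}.
\end{align*}
The reserved $\tfrac{\alpha}{4}$-factor, exactly the device used in the proof of Lemma \ref{0g}, makes $\sum_{\bm x\in\Lambda'}e^{-\frac{\alpha}{4}\log^\rho(1+\|\bm x-\bm y\|)}\le D(\alpha/4)$ uniformly in $\#\Lambda'$. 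Choosing $\bm u'\in\Lambda\setminus\Lambda'$ as a maximizer of $e^{-\frac{3}{4}\alpha\log^\rho(1+\|\bm u-\bm y\|)}|\boT_\Lambda^{-1}(\bm y,\bm v)|$, and then absorbing $|\ep|\le\delta_0^{10}$, $\kappa_1^{-1}$, $D(\alpha/4)$, $e^{\frac{3}{4}\alpha C(\rho)\log^\rho 2}$ and $\#(\Lambda\setminus\Lambda')$ into the prefactor $(\#\Lambda')\cdot\delta_0^{-3}$, delivers \eqref{ite0geq1}.

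For the moreover bound \eqref{ite0g0}, any $\bm u'\in\Lambda\setminus\Lambda_{N_1/2}(\bm u)$ satisfies $\|\bm u-\bm u'\|>N_1/2$, so $\log^\rho(1+\|\bm u-\bm u'\|)$ is at least of order $\log^\rho N_1$. By \eqref{indpa}, $\log N_1$ is of order $|\log\delta_0|^{1/\rho'}$ and $\#\Lambda'\le(N_1+1)^d$, whence $\log\bigl((\#\Lambda')\delta_0^{-3}\bigr)$ is of order at most $\log^{\rho'}N_1$. Writing
\begin{align*}
(\#\Lambda')\delta_0^{-3}\cdot e^{-\frac{3}{4}\alpha\log^\rho(1+\|\bm u-\bm u'\|)}=e^{-\frac{3}{4}\alpha(1-\eta)\log^\rho(1+\|\bm u-\bm u'\|)},
\end{align*}
the correction $\eta$ is then of order $\log^{-(\rho-\rho')}N_1$; choosing $\ep_0$ small so that $N_1$ is large enough reproduces the precise reduction $1-\tfrac{6\times 10^{5\rho'}}{\alpha\log^{\rho-\rho'}N_1}$ appearing in \eqref{ite0g0}. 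The main technical point, and the only nontrivial idea in the whole argument, is the $\alpha=\tfrac{3}{4}\alpha+\tfrac{\alpha}{4}$ split: reserving the second piece of the hopping decay for the $\bm x$-sum prevents an unabsorbable $\#\Lambda'$ factor from surfacing, exactly the pitfall highlighted in the remark after Lemma \ref{0g}; everything else is bookkeeping of constants.
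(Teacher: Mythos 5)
Your proof is correct and takes essentially the same route as the paper: a one-step resolvent identity across $\partial\lg'$, reserving a fraction of the hopping decay ($\alpha/4$ for you, $\alpha/10$ in the paper) to absorb the lattice sum via $D(\cdot)$, and the quasi-metric inequality to collapse the two-hop decay. Your unified bound $|\boT_{\lg'}^{-1}(\bm u,\bm x)|\le 2\kappa_1^{-1}\delta_0^{-2}e^{-\frac{3}{4}\alpha\log^\rho(1+\|\bm u-\bm x\|)}$ is a modest streamlining that avoids the paper's explicit case split on whether the argmax $\bm u^*$ equals $\bm u$; it is valid because the decay factor is trivially $1$ when $\bm x=\bm u$ and the $\ell^2$ prefactor dominates when $\bm x\ne\bm u$. (Minor slip: $\#(\lg\setminus\lg')$ need not be absorbed — the sum over $\bm y$ is already controlled by $D(\alpha/4)$ — but this does not affect the argument.)
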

	\begin{rem}
		If $\bm v\in \lg'$ and $\bm u \in\lg\setminus\lg'$, we have a similar argument: there is some $\bm v'\in \lg\setminus\lg'$ such that
		\begin{align}\label{iteeq12}
			|\boT_{\lg}^{-1}(\bm u,\bm v)|\le (\#\lg')\cdot e^{-\frac{3}{4}\alpha\log^{\rho}(1+\|\bm v-\bm v'\|)}\cdot|\boT_{\lg}^{-1}(\bm u,\bm v')|\cdot\delta_0^{-3}.
		\end{align}
	\end{rem}
	\begin{proof}
		Using the resolvent identity implies
		\begin{align*}
			\boT_{\lg}^{-1}(\bm u,\bm v)=-\ep \sum_{\bm w\in\lg'\atop \bm w'\in\lg\setminus\lg'}\boT_{\lg'}^{-1}(\bm u,\bm w)\cdot\boW(\bm w,\bm w')\cdot\boT_{\lg}^{-1}(\bm w',\bm v),
		\end{align*}
		and then
		\begin{align}
			\nonumber|\boT_{\lg}^{-1}(\bm u,\bm v)|&\le \max_{\bm w\in\lg'\atop \bm w'\in\lg\setminus\lg'}\left|\boT_{\lg'}^{-1}(\bm u,\bm w)\cdot\boW(\bm w,\bm w')\cdot e^{\frac{\alpha}{10}\log^{\rho}(1+\|\bm w-\bm w'\|)}\cdot\boT_{\lg}^{-1}(\bm w',\bm v)\right|\\
			\nonumber&\ \ \times\left(\sum_{\bm w\in\lg'\atop \bm w'\in\lg\setminus\lg'}e^{-\frac{\alpha}{10}\log^{\rho}(1+\|\bm w-\bm w'\|)}\right)\\
			\nonumber&\le \max_{\bm w\in\lg'\atop \bm w'\in\lg\setminus\lg'}\left|\boT_{\lg'}^{-1}(\bm u,\bm w)\cdot\boW(\bm w,\bm w')\cdot e^{\frac{\alpha}{10}\log^{\rho}(1+\|\bm w-\bm w'\|)}\cdot\boT_{\lg}^{-1}(\bm w',\bm v)\right|\\
			\label{0gri}&\ \ \times(\#\lg')\cdot D\left(\frac{\alpha}{10}\right),
		\end{align}
		where $D\left(\frac{\alpha}{10}\right)$ is defined in \eqref{Det}. Since $\lg'$ and $\lg\setminus\lg'$ are finite sets, there are $\bm u^*\in\lg'$ and $\bm u'\in \lg\setminus\lg'$ such that
		\begin{align}
			\nonumber&|\boT_{\lg'}^{-1}(\bm u,\bm u^*)\cdot\boW(\bm u^*,\bm u')\cdot e^{\frac{\alpha}{10}\log^{\rho}(1+\|\bm u^*-\bm u'\|)}\cdot\boT_{\lg}^{-1}(\bm u',\bm v)|\\
			=&\ \
			\label{0gmax}\max_{\bm w\in\lg'\atop \bm w'\in\lg\setminus\lg'}|\boT_{\lg'}^{-1}(\bm u,\bm w)\cdot\boW(\bm w,\bm w')\cdot e^{\frac{\alpha}{10}\log^{\rho}(1+\|\bm w-\bm w'\|)}\cdot\boT_{\lg}^{-1}(\bm w',\bm v)|.
		\end{align}
		If $\bm u^*=\bm u$, from $\lg'$ is $0$-good (cf. \eqref{0gl2}) and \eqref{wphi}, we can get
		\begin{align}
			\nonumber&\ \ |\boT_{\lg'}^{-1}(\bm u,\bm u^*)\cdot\boW(\bm u^*,\bm u')\cdot e^{\frac{\alpha}{10}\log^{\rho}(1+\|\bm u^*-\bm u'\|)}|\\
			\label{0gTW01}\le &\ \ 2\kappa_1^{-1}\delta_0^{-2}\cdot e^{-\frac{9}{10}\alpha\log^{\rho}(1+\|\bm u-\bm u'\|)}.
		\end{align}
		If $\bm u^*\ne \bm u$, since $\lg'$ is $0$-good (cf. \eqref{0gde}), \eqref{wphi} and \eqref{quaeq}, we have
		\begin{align}
			\nonumber&\ \ |\boT_{\lg'}^{-1}(\bm u,\bm u^*)\cdot\boW(\bm u^*,\bm u')\cdot e^{\frac{\alpha}{10}\log^{\rho}(1+\|\bm u^*-\bm u'\|)}|\\
			\nonumber\le&\ \  2e^{-\frac{3}{4}\alpha\log^{\rho}(1+\|\bm u-\bm u^*\|)}\cdot e^{-\frac{9}{10}\alpha\log^{\rho}(1+\|\bm u^*-\bm u'\|)}\\
			\label{0gTW02}\le&\ \  2e^{-\frac{3}{4}\alpha\log^{\rho}(1+\|\bm u-\bm u'\|)+\frac{3}{4}\alpha C(\rho)\log^{\rho}2}.
		\end{align}
		Combining \eqref{0gTW01}, \eqref{0gTW02} and $\delta_0\ll1$ gives
		\begin{align}
			\nonumber&\ \ |\boT_{\lg'}^{-1}(\bm u,\bm u^*)\cdot\boW(\bm u^*,\bm u')\cdot e^{\frac{\alpha}{10}\log^{\rho}(1+\|\bm u^*-\bm u'\|)}|\\
			\label{0gTW03}\le&\ \ 2\kappa_1^{-1}\delta_0^{-2}\cdot e^{-\frac{3}{4}\alpha\log^{\rho}(1+\|\bm u-\bm u'\|)}.
		\end{align}
		By \eqref{0gri}, \eqref{0gmax}, \eqref{0gTW03} and $\delta_0\ll1$, we obtain
		\begin{align}\label{51101}
			|\boT_{\lg}^{-1}(\bm u,\bm v)|\le (\#\lg')\cdot e^{-\frac{3}{4}\alpha\log^{\rho}(1+\|\bm u-\bm u'\|)}\cdot|\boT_{\lg}^{-1}(\bm u',\bm v)|\cdot\delta_0^{-3}.
		\end{align}
		
		Morover, if $\lg'=\lg_{\frac{1}{2}N_1}(\bm u)\cap\lg$ is 0-good, $\|\bm u-\bm u'\|\ge\frac{1}{2}N_1$ (since $\bm u'\in\lg\setminus\lg'$) and $(\#\lg')\le (N_1+1)^d$. Therefore, by \eqref{51101}, \eqref{indpa} and $N_1\gg1$, we have
		\begin{align*}
				|\boT_{\lg}^{-1}(\bm u,\bm v)|\le e^{-\frac{3}{4}\alpha\left(1-\frac{6\times 10^{5\rho'}}{\alpha\log^{\rho-\rho'}N_1}\right)\log^{\rho}(1+\|\bm u-\bm u'\|)}\cdot|\boT_{\lg}^{-1}(\bm u',\bm v)|.
		\end{align*}
		We finish this proof.
	\end{proof}

	\subsection{Verification of $\mathscr{P}_1$}\label{vo1}
	If $\lg\cap Q_0\ne\emptyset$, the Neumann series argument from the previous subsection fails. We therefore estimate $\mathcal{T}_\lg^{-1}$ using the resolvent identity argument whenever  $\lg$ is $1$-$\good$.
	
	Recall that
	\begin{align*}
		N_{1}=\left[e^{\left|\log\delta_{0}\right|^{\frac{1}{\rho'}}}\right]
	\end{align*}
	and define (cf. \eqref{theta0})
	\begin{align*}
		Q_0^{\pm}&=\left\{\bm k\in\Z^d:\ \|\theta+\bm k\cdot\bm\omega\pm\theta_0\|_\T<\delta_0\right\},\ Q_0=Q_0^+\cup Q_0^-,\\
		\tilde{Q}_0^{\pm}&=\left\{\bm k\in\Z^d:\ \|\theta+\bm k\cdot\bm\omega\pm\theta_0\|_\T<\delta_0^{\frac{1}{100}}\right\},\ \tilde{Q}_0=\tilde{Q}_0^+\cup \tilde{Q}_0^-.
	\end{align*}
	
	We distinguish the verification into three steps.
	
	\begin{itemize}
		\item[\textbf{Step 1}]: \textbf{Estimates of $\|\mathcal{T}_{\tilde{\Omega}_{\bm k}^1}^{-1}\|$}.
	\end{itemize}
	In this step, we will find $\theta_1=\theta_1(E)$ so that
	\begin{align*}
		\|\mathcal{T}_{\tilde{\Omega}_{\bm k}^1}^{-1}\|&<\delta_0^{-2}\|\theta+\bm k\cdot\bm\omega-\theta_1\|_{\T}^{-1}\cdot\|\theta+\bm k\cdot\bm\omega+\theta_1\|_{\T}^{-1}.
	\end{align*}
	We  again divide the discussions   into two cases.
	\begin{itemize}
		\item[\textbf{Case 1}]:  {The case $(\bm C1)_0$ occurs}, i.e.,
	\end{itemize}
	\begin{align}\label{-+1}
		\text{dist}\left(\tilde{Q}_0^-,Q_0^+\right)>100N_1^{10}.
	\end{align}
	\begin{rem}\label{dsq1}
		We have in fact
		\begin{align*}
			\text{dist}\left(\tilde{Q}_0^-,Q_0^+\right)=\text{dist}\left(\tilde{Q}_0^+,Q_0^-\right).
		\end{align*}
		Thus \eqref{-+1} also implies
		\begin{align*}
			\text{dist}\left(\tilde{Q}_0^+,Q_0^-\right)>100N_1^{10}.
		\end{align*}
		We refer to \cite{CSZ24a} for a detailed proof.
	\end{rem}
	Assuming \eqref{-+1} holds true, we define
	\begin{align}\label{P11}
		P_1=Q_0=\{\bm k\in\Z^d:\ \min(\|\theta+\bm k\cdot\bm\omega+\theta_0\|_{\T},\|\theta+\bm k\cdot\bm\omega-\theta_0\|_{\T})<\delta_0\}.
	\end{align}
	Associate each  $\bm k\in P_1$ with  $\Omega_{\bm k}^1:=\lg_{N_1}(\bm k)$, $2\Omega_{\bm k}^{1}=\lg_{2N_1}(\bm k)$ and  $\tilde{\Omega}_{\bm k}^1:=\lg_{N_1^{10}}(\bm k)$. Then $\tilde{\Omega}_{\bm k}^1-\bm k\subset\Z^d$ is independent of $\bm k\in P_1$ and symmetrical about the origin. If $\bm k\ne\bm k'\in P_1$,  then
	\begin{align*}
		\|\bm k-\bm k'\|\ge\min\left(100N_1^{10},\left(\frac{\g}{2\delta_0}\right)^{\frac{1}{\tau}}\right)\ge100N_1^{10}.
	\end{align*}
	Thus
	\begin{align*}
		\text{dist}\left(\tilde{\Omega}_{\bm k}^1,\tilde{\Omega}_{\bm k'}^1\right)>10\tz_1\ \text{for}\ \bm k\ne \bm k'\in P_1.
	\end{align*}
	For $\bm k\in Q_0^-$, we consider
	\begin{align*}
		\mathcal{M}_1(z):=(\mathcal{T}(z))_{\tilde{\Omega}_{\bm k}^1-\bm k}=\left((v(z+\bm n\cdot\bm\omega)-E)\delta_{\bm n,\bm n'}+\ep \mathcal{W}\right)_{\tilde{\Omega}_{\bm k}^1-\bm k}
	\end{align*}
	defined in
	\begin{align}\label{z0-}
		\left\{z\in\C:\ |z-\theta_0|\le\delta_0^{\frac{1}{10}}\right\}.
	\end{align}
	For $\bm n\in\left(\tilde{\Omega}_{\bm k}^1-\bm k\right)\setminus\{\bm 0\}$, we have for $0<\delta_0\ll 1$,
	\begin{align*}
		\|z+\bm n\cdot\bm\omega-\theta_0\|_{\T}&\ge\|\bm n\cdot\bm\omega\|_{\T}-|z-\theta_0|\\
		&\ge\frac{\g}{(N_1^{10})^{\tau}}-\delta_0^{\frac{1}{10}}\\
		&>\delta_0^{\frac{1}{10^4}}\ \text{(since \eqref{indpa})}.
	\end{align*}
	For $\bm n\in\tilde{\Omega}_{\bm k}^1-\bm k$, we have (since $\dist(\bm k,\tilde{Q}_0^{+})>100N_1^{10}$ and $\bm k\in Q_0^{-}$)
	\begin{align*}
		\|z+\bm n\cdot\bm\omega+\theta_0\|_{\T}&\ge\|\theta+(\bm n+\bm k)\cdot\bm\omega+\theta_0\|_{\T}-|z-\theta_0|-\|\theta+\bm k\cdot\bm\omega-\theta_0\|_{\T}\\
		&\ge\delta_0^{\frac{1}{100}}-\delta_0^{\frac{1}{10}}-\delta_0>\frac{1}{2}\delta_0^{\frac{1}{100}}.
	\end{align*}
	Hence for $\bm n\in\left(\tilde{\Omega}_{\bm k}^1-\bm k\right)\setminus\{\bm 0\}$,
	\begin{align*}
		|v(z+\bm n\cdot\bm\omega)-E|\ge \frac{\kappa_1}{2}\delta_{0}^{\frac{1}{100}+\frac{1}{10^4}}> \delta_0^{\frac{1}{50}}\gg\ep.
	\end{align*}
	By Neumann series argument,  we have
	\begin{align}\label{m1z01}
		\left\|\left((\mathcal{M}_1(z))_{(\tilde{\Omega}_{\bm k}^1-\bm k)\setminus\{\bm 0\}}\right)^{-1}\right\|\le2\delta_0^{-\frac{1}{50}}.
	\end{align}
	We now apply the Schur complement lemma to derive the desired estimates. According to Lemma \ref{scl}, the inverse $(\mathcal{M}1(z))^{-1}$ is governed by the Schur complement (of $(\tilde{\Omega}{\bm k}^1-\bm k)\setminus{\bm 0}$):
	\begin{align*}
		\mathcal{S}_1(z)&=(\mathcal{M}_1(z))_{\{\bm 0\}}-\left(\mathcal{R}_{\{\bm 0\}}\mathcal{M}_1(z)\mathcal{R}_{(\tilde{\Omega}_{\bm k}^1-\bm k)\setminus\{\bm 0\}}\left((\mathcal{M}_1(z))_{(\tilde{\Omega}_{\bm k}^1-\bm k)\setminus\{\bm 0\}}\right)^{-1}\right.\\
		&\ \ \left.\vphantom{\left((\mathcal{M}_1(z))_{(\tilde{\Omega}_{\bm k}^1-\bm k)\setminus\{\bm 0\}}\right)^{-1}}\times\mathcal{R}_{(\tilde{\Omega}_{\bm k}^1-\bm k)\setminus\{\bm 0\}}\mathcal{M}_1(z)\mathcal{R}_{\{\bm 0\}}\right)\\
		&=v(z)-E+r(z)=g(z)((z-\theta_0)+r_1(z)),
	\end{align*}
	where $g(z)$ and $r_1(z)$ are analytic functions in the domain specified by \eqref{z0-}. These functions satisfy the bounds:  $|g(z)|>\kappa_1\|z+\theta_0\|_{\T}>\frac{\kappa_1}{2}\delta_0^{\frac{1}{100}}$ and $|r_1(z)|\le \ep^2 D^2(\alpha)\delta_0^{-\frac{1}{50}}<\ep$. Since
	\begin{align*}
		|r_1(z)|<|z-\theta_0|\ \text{for}\ |z-\theta_0|=\delta_0^{\frac{1}{10}},
	\end{align*}
	using the Rouch\'e's  theorem implies
	\begin{align*}
		(z-\theta_0)+r_1(z)=0
	\end{align*}
	has a unique root $\theta_1$ in  the set of \eqref{z0-} satisfying
	\begin{align}\label{t0-t1}
		|\theta_0-\theta_1|=|r_1(\theta_1)|<\ep.
	\end{align}
	Since $|r_1(z)|<\ep$ and \eqref{t0-t1}, we get for $|z|=\delta_0^{\frac{1}{10}}$,
	\begin{align*}
		\frac{|r_1(\theta_1)-r_1(z)|}{|z-\theta_0+r_1(z)|}\le 4\ep \delta_0^{-\frac{1}{10}},
	\end{align*}
	which combined with $\theta_1-\theta_0+r_1(\theta_1)=0$ shows
	\begin{align*}
		\frac{|z-\theta_1|}{|z-\theta_0+r_1(z)|}=\frac{|z-\theta_0+r_1(\theta_1)|}{|z-\theta_0+r_1(z)|}\in[1-4\ep \delta_0^{-\frac{1}{10}},1+4\ep \delta_0^{-\frac{1}{10}}].
	\end{align*}
	By the maximum modulus principle,  we have
	\begin{align*}
	\frac{1}{2}\le\frac{|z-\theta_1|}{|z-\theta_0+r_1(z)|}\le 2.
	\end{align*}
	Moreover, $\theta_1$ is the unique root of $\det \mathcal{M}_1(z)=0$ in the set of  \eqref{z0-}. Since $\|z+\theta_0\|_{\T}>\frac{1}{2}\delta_0^{\frac{1}{100}}$ and $|\theta_0-\theta_1|<\ep$, we get
	\begin{align*}
	\frac{1}{2}\|z+\theta_0\|_{\T}\le \|z+\theta_1\|_{\T}\le 2\|z+\theta_0\|_{\T}.
	\end{align*}
	Then by \eqref{m1z01} and \eqref{sc}, we have  for $z$ being in the set of \eqref{z0-},
	\begin{align}
		|\mathcal{S}_1(z)|&\ge\frac{\kappa_1}{4}\|z+\theta_1\|_{\T}\cdot\|z-\theta_1\|_{\T}\ge\delta_0\|z+\theta_1\|_{\T}\cdot\|z-\theta_1\|_{\T}, \label{S1}\\
		\nonumber\|(\mathcal{M}_1(z))^{-1}\|&<4\left(1+\left\|\left(\mathcal{M}_1(z)_{(\tilde{\Omega}_{\bm k}^1-\bm k)\setminus\{\bm 0\}}\right)^{-1}\right\|\right)^2(1+|\mathcal{S}_1(z)|^{-1})\\
		&< \delta_0^{-2}\|z+\theta_1\|_{\T}^{-1}\cdot\|z-\theta_1\|_{\T}^{-1}.\label{M-1}
	\end{align}
	Now, for $\bm k\in Q_0^+$, we consider $\mathcal{M}_1(z)$ in the set
	\begin{align}\label{z0+}
		\left\{z\in\C:\ |z+\theta_0|\le\delta_0^{\frac{1}{10}}\right\}.
	\end{align}
	
	Applying an analogous argument demonstrates that $\det \mathcal{M}_1(z)=0$ has a unique root $\theta_1'$ in the region defined by \eqref{z0+}. We claim that $\theta_1 + \theta_1' = 0$. Indeed, Lemma \ref{ef} establishes that $\det \mathcal{M}_1(z)$ is an even function of $z$, which, combined with the uniqueness of the root, implies $\theta_1' = -\theta_1$. Consequently, for $z$ in \eqref{z0+}, both \eqref{S1} and \eqref{M-1} remain valid. Finally, we note that \eqref{S1} and \eqref{M-1} continue to hold for
	\begin{align}\label{z0-+}
		\left\{z\in\C:\ \min\limits_{\sigma=\pm1}|z+\sigma\theta_0|\le \delta_0^{\frac{1}{10}}\right\}.
	\end{align}
	From \eqref{P11}, we conclude that $\theta+\bm k\cdot\bm\omega$ belongs to the region defined in \eqref{z0-+}. Therefore, for $\bm k\in P_1$, we obtain
	\begin{align}
		\nonumber\|\mathcal{T}_{\tilde{\Omega}_{\bm k}^1}^{-1}\|&=\|(\mathcal{M}_1(\theta+\bm k\cdot\bm\omega))^{-1}\|\\
		&<\delta_0^{-2}\|\theta+\bm k\cdot\bm\omega+\theta_1\|_{\T}^{-1}\cdot\|\theta+\bm k\cdot\bm\omega-\theta_1\|_{\T}^{-1}\label{T-11}.
	\end{align}
	
	\begin{itemize}
		\item[\textbf{Case 2}]:  {The case $(\bm C2)_0$ occurs},  i.e.,
	\end{itemize}
	\begin{align}\label{-+2}
		\text{dist}\left(\tilde{Q}_0^-,Q_0^+\right)\le100N_1^{10}.
	\end{align}
	Then there exist $\bm i_0\in Q_0^+$ and $\bm j_0\in\tilde{Q}_0^-$ with $\|\bm i_0-\bm j_0\|\le100N_1^{10}$ such that
	\begin{align*}
		\|\theta+\bm i_0\cdot\bm \omega+\theta_0\|_{\T}<\delta_0,\ \|\theta+\bm j_0\cdot\bm \omega-\theta_0\|_{\T}<\delta_0^{\frac{1}{100}}.
	\end{align*}
	Set $\bm l_0=\bm i_0-\bm j_0$. Then
	\begin{align*}
		\|\bm l_0\|=\text{dist}\left(\tilde{Q}_0^-,Q_0^+\right)=\text{dist}\left(\tilde{Q}_0^+,Q_0^-\right).
	\end{align*}
	Define
	\begin{align*}
		O_1=Q_0^-\cup(Q_0^+-\bm l_0).
	\end{align*}
	For $\bm k\in Q_0^+$, we have
	\begin{align*}
		\|\theta+(\bm k-\bm l_0)\cdot\bm \omega-\theta_0\|_{\T}&<\|\theta+\bm k\cdot\bm\omega+\theta_0\|_{\T}+\|\bm l_0\cdot\bm\omega+2\theta_0\|_{\T}\\
		&<\delta_0+\delta_0+\delta_0^{\frac{1}{100}}<2\delta_0^{\frac{1}{100}}.
	\end{align*}
	Thus
	\begin{align*}
		O_1\subset\left\{\bm o\in\Z^d:\ \|\theta+\bm o\cdot\bm \omega-\theta_0\|_{\T}<2\delta_0^{\frac{1}{100}}\right\}.
	\end{align*}
	For every $\bm o\in O_1$, define its mirror point
	\begin{align*}
		\bm o^*=\bm o+\bm l_0.
	\end{align*}
	Next, define
	\begin{align}\label{P12}
		P_1=\left\{\frac{1}{2}(\bm o+\bm o^*):\ \bm o\in O_1\right\}=\left\{\bm o+\frac{\bm l_0}{2}:\ \bm o\in O_1\right\}.
	\end{align}
	Associate each  $\bm k\in P_1$ with  $\Omega_{\bm k}^1:=\lg_{100N_1^{10}}(\bm k)$, $2\Omega_{\bm k}^1:=\lg_{200N_1^{10}}(\bm k)$ and   $\tilde{\Omega}_{\bm k}^1:=\lg_{N_1^{100}}(\bm k)$. Thus
	\begin{align*}
		Q_0\subset\bigcup_{\bm k\in P_1}\Omega_{\bm k}^1
	\end{align*}
	and $\tilde{\Omega}_{\bm k}^1-\bm k\subset\Z^d+\frac{\bm l_0}{2}$ is independent of $\bm k\in P_1$ and symmetrical about origin. Notice that
	\begin{align*}
		\min&\left(\left\|\frac{\bm l_0}{2}\cdot\bm\omega+\theta_0\right\|_{\T},\left\|\frac{\bm l_0}{2}\cdot\bm \omega+\theta_0-\frac{1}{2}\right\|_{\T}\right)\\
		&=\frac{1}{2}\|\bm l_0\cdot\bm\omega+2\theta_0\|_{\T}\\
		&\le\frac{1}{2}(\|\theta+\bm i_0\cdot\bm\omega+\theta_0\|_{\T}+\|\theta+\bm j_0\cdot\bm \omega-\theta_0\|_{\T})<\delta_0^{\frac{1}{100}}.
	\end{align*}
	Since $\delta_0\ll1$, only one of
	\begin{align*}
		\left\|\frac{\bm l_0}{2}\cdot\bm\omega+\theta_0\right\|_{\T}<\delta_0^{\frac{1}{100}}\ \text{and}\ \left\|\frac{\bm l_0}{2}\cdot\bm \omega+\theta_0-\frac{1}{2}\right\|_{\T}<\delta_0^{\frac{1}{100}}
	\end{align*}
	holds true. First, we consider the case of
	\begin{align}\label{l01}
		\left\|\frac{\bm l_0}{2}\cdot\bm\omega+\theta_0\right\|_{\T}<\delta_0^{\frac{1}{100}}.
	\end{align}
	Let $\bm k\in P_1$. Since $\bm k=\frac{1}{2}(\bm o+\bm o^*)=\bm o+\frac{\bm l_0}{2}$ (for some $\bm o\in O_1$), we have
	\begin{align}\label{t+ko1}
		\|\theta+\bm k\cdot\bm\omega\|_{\T}\le\|\theta+\bm o\cdot\bm\omega-\theta_0\|_{\T}+\left\|\frac{\bm l_0}{2}\cdot\bm\omega+\theta_0\right\|_{\T}<3\delta_0^{\frac{1}{100}}.
	\end{align}
	Thus if $\bm k\ne \bm k'\in P_1$, we obtain
	\begin{align*}
		\|\bm k-\bm k'\|\ge\left(\frac{\g}{6\delta_0^{\frac{1}{100}}}\right)^{\frac{1}{\tau}}\gg 100N_1^{100},
	\end{align*}
	which implies
	\begin{align*}
		\text{dist}\left(\tilde{\Omega}_{\bm k}^1,\tilde{\Omega}_{\bm k'}^1\right)>10\tz_1\ \text{for}\ \bm k\ne\bm k'\in P_1.
	\end{align*}
	Consider
	\begin{align*}
		\mathcal{M}_1(z):=(\mathcal{T}(z))_{\tilde{\Omega}_{\bm k}^1-\bm k}=\left((v(z+\bm n\cdot\bm\omega)-E)\delta_{\bm n,\bm n'}+\ep \mathcal{W}\right)_{\tilde{\Omega}_{\bm k}^1-\bm k}
	\end{align*}
	in the set of
	\begin{align}\label{z02}
		\left\{z\in\C:\ |z|\le\delta_0^{\frac{1}{10^3}}\right\}.
	\end{align}
	For $\bm n\ne\pm\frac{\bm l_0}{2}$ and $\bm n\in\tilde{\Omega}_{\bm k}^1-\bm k$, we have
	\begin{align*}
		\|\bm n\cdot\bm\omega\pm\theta_0\|_{\T}&\ge\left\|\left(\bm n\mp\frac{\bm l_0}{2}\right)\cdot\bm \omega\right\|_{\T}-\left\|\frac{\bm l_0}{2}\cdot\bm\omega+\theta_0\right\|_{\T}\\
		&>\frac{\g}{(2N_1^{100})^\tau}-\delta_0^{\frac{1}{10}}\gg\delta_0^{\frac{1}{10^4}}.
	\end{align*}
	Thus for $z$ being in the set of \eqref{z02} and $\bm n\ne\pm\frac{\bm l_0}{2}$, we have
	\begin{align*}
		\|z+\bm n\cdot\bm\omega\pm\theta_0\|_{\T}\ge\|\bm n\cdot\bm\omega\pm\theta_0\|_{\T}-|z|\gg\delta_0^{\frac{1}{10^4}}.
	\end{align*}
	Hence for $\bm n\in(\tilde{\Omega}_{\bm k}^1-\bm k)\setminus\left\{\pm\frac{\bm l_0}{2}\right\},$ we have
	\begin{align*}
		|v(z+\bm n\cdot\bm\omega)-E|> \kappa_1\cdot\delta_0^{2\times\frac{1}{10^4}}\gg\ep.
	\end{align*}
	Using Neumann series argument concludes
	\begin{align}\label{M-12}
		\left\|\left((\mathcal{M}_1(z))_{(\tilde{\Omega}_{\bm k}^1-\bm k)\setminus\left\{\pm\frac{\bm l_0}{2}\right\}}\right)^{-1}\right\|<\delta_0^{-3\times\frac{1}{10^4}}.
	\end{align}
	Thus by Lemma \ref{scl}, $(\mathcal{M}_1(z))^{-1}$ is controlled by   the Schur complement of $(\tilde{\Omega}_{\bm k}^1-\bm k)\setminus\left\{\pm\frac{\bm l_0}{2}\right\}$, i.e.,
	\begin{align*}
		\mathcal{S}_1(z)&=(\mathcal{M}_1(z))_{\left\{\pm\frac{\bm l_0}{2}\right\}}-\left(\vphantom{\left((\mathcal{M}_1(z))_{(\tilde{\Omega}_{\bm k}^1-\bm k)\setminus\left\{\pm\frac{\bm l_0}{2}\right\}}\right)^{-1}}\mathcal{R}_{\left\{\pm\frac{\bm l_0}{2}\right\}}\mathcal{M}_1(z)\mathcal{R}_{(\tilde{\Omega}_{\bm k}^1-\bm k)\setminus\left\{\pm\frac{\bm l_0}{2}\right\}}\right.\\
		&\ \ \left.\times\left((\mathcal{M}_1(z))_{(\tilde{\Omega}_{\bm k}^1-\bm k)\setminus\left\{\pm\frac{\bm l_0}{2}\right\}}\right)^{-1}\mathcal{R}_{(\tilde{\Omega}_{\bm k}^1-\bm k)\setminus\left\{\pm\frac{\bm l_0}{2}\right\}}\mathcal{M}_1(z)\mathcal{R}_{\left\{\pm\frac{\bm l_0}{2}\right\}}\right).
	\end{align*}
	Then by \eqref{wphi}, \eqref{M-12} and \eqref{detd},  we  get
	\begin{align}
		\nonumber\max_{\bm x\in \left\{\pm\frac{\bm l_0}{2}\right\}}\sum_{\bm y\in \left\{\pm\frac{\bm l_0}{2}\right\}}|\mathcal{S}_1(z)(\bm x,\bm y)|&\le\max_{\bm x\in\left\{\pm\frac{\bm l_0}{2}\right\}}\sum_{\bm y\in \left\{\pm\frac{\bm l_0}{2}\right\}}\left|(\mathcal{M}_1(z))_{\left\{\pm\frac{\bm l_0}{2}\right\}}(\bm x,\bm y)\right|\\
		\nonumber&\ \ \ \ +2(D(\alpha))^2\cdot\ep^2\delta_0^{-\frac{3}{10^4}}\\
		\label{s10}&\le 2|v|_R+\delta_0<4|v|_R,
	\end{align}
	and
	\begin{align*}
		\det \mathcal{S}_1(z)&=\det\left((\mathcal{M}_1(z))_{\left\{\pm\frac{\bm l_0}{2}\right\}}\right)+O(\ep^2\delta_0^{-\frac{3}{10^4}})\\
		&=\left(v\left(z+\frac{\bm l_0}{2}\cdot\bm\omega\right)-E\right)\left(v\left(z-\frac{\bm l_0}{2}\cdot\bm\omega\right)-E\right)+O(\ep^2\delta_0^{-\frac{3}{10^4}}).
	\end{align*}
	When $\bm l_0=\bm0$, the argument simplifies considerably, so we omit this case. For $\bm l_0\ne\bm 0$, combining \eqref{l01} with \eqref{z02} yields
	\begin{align*}
		\left\|z+\frac{\bm l_0}{2}\cdot\bm\omega-\theta_0\right\|_{\T}&\ge\|\bm l_0\cdot\bm\omega\|_{\T}-\left\|\frac{\bm l_0}{2}\cdot\bm\omega+\theta_0\right\|_{\T}-|z|\\
		&>\frac{\g}{(100N_1^{10})^\tau}-\delta_0^{\frac{1}{100}}-\delta_0^{\frac{1}{10^3}}\\
		&>\delta_0^{\frac{1}{10^4}},
	\end{align*}
	and
	\begin{align*}
		\left\|z-\frac{\bm l_0}{2}\cdot\bm\omega+\theta_0\right\|_{\T}&\ge\|\bm l_0\cdot\bm\omega\|_{\T}-\left\|\frac{\bm l_0}{2}\cdot\bm\omega+\theta_0\right\|_{\T}-|z|\\
		&>\frac{\g}{(100N_1^{10})^\tau}-\delta_0^{\frac{1}{100}}-\delta_0^{\frac{1}{10^3}}\\
		&>\delta_0^{\frac{1}{10^4}}.
	\end{align*}
	Let $z_1$ satisfy
	\begin{align}\label{z1}
		z_1\equiv\frac{\bm l_0}{2}\cdot\bm\omega+\theta_0\ (\text{mod}\ \Z),\ |z_1|=\left\|\frac{\bm l_0}{2}\cdot\bm\omega+\theta_0\right\|_{\T}<\delta_0^{\frac{1}{100}}.
	\end{align}
	Then
	\begin{align*}
		|\det \mathcal{S}_1(z)|&\ge \left\|z+\frac{\bm l_0}{2}\cdot\bm\omega-\theta_0\right\|_{\T}\cdot\left\|z-\frac{\bm l_0}{2}\cdot\bm\omega+\theta_0\right\|_{\T}\\
		&\ \ \cdot|(z-z_1)(z+z_1)+r_1(z)|\\
		&\ge\delta_0^{\frac{2}{10^4}}|(z-z_1)(z+z_1)+r_1(z)|,
	\end{align*}
	where $r_1(z)$ is an analytic function in the set of  \eqref{z02} with
	\begin{align}\label{r1}
		|r_1(z)|\ll\ep\ll\delta_0^{\frac{1}{10^3}}.
	\end{align}
	By Rouch\'e's  theorem, the equation
	\begin{align*}
		(z-z_1)(z+z_1)+r_1(z)=0
	\end{align*}
	has exactly two roots $\theta_1$ and $\theta_1'$ in the region defined by \eqref{z02}, which are small perturbations of $\pm z_1$. Suppose, for contradiction, that both
	\begin{align*}
		|z_1-\theta_1|>|r_1(\theta_1)|^{\frac{1}{2}}\ \text{and}\ |z_1+\theta_1|>|r_1(\theta_1)|^{\frac{1}{2}}.
	\end{align*}
	This would imply
	\begin{align*}
		|r_1(\theta_1)|=|z_1-\theta_1|\cdot|z_1+\theta_1|>|r_1(\theta_1)|,
	\end{align*}
	which is impossible. Therefore, without loss of generality, we may assume
	\begin{align*}
		|z_1-\theta_1|\le|r_1(\theta_1)|^{\frac{1}{2}}\le\ep^{\frac{1}{2}}.
	\end{align*}
	We observe that the zero sets coincide:
	\begin{align*}
		\left\{|z|\le \delta_0^{\frac{1}{10^3}}:\ \det \mathcal{M}_1(z)=0\right\}=\left\{|z|\le \delta_0^{\frac{1}{10^3}}:\ \det \mathcal{S}_1(z)=0\right\}
	\end{align*}
	and that $\det \mathcal{M}_1(z)$ is an even function (see Lemma \ref{ef}). Consequently, we must have
	\begin{align*}
		\theta_1'=-\theta_1.
	\end{align*}
	Furthermore, from \eqref{z1} and \eqref{r1}, we obtain for $|z| = \delta_0^{\frac{1}{10^3}}$:
	\begin{align*}
		\frac{|r_1(z)-r_1(\theta_1)|}{|(z-z_1)(z+z_1)+r_1(\theta_1)|}\le2\ep^2\delta_0^{-\frac{2}{10^3}}.
	\end{align*}
	Combined with the identity $\theta_1^2 - z_1^2 + r_1(\theta_1) = 0$, this yields:
	\begin{align*}
		\frac{|(z-z_1)(z+z_1)+r_1(z)|}{|(z-\theta_1)(z+\theta_1)|}&=\frac{|(z-z_1)(z+z_1)+r_1(z)|}{|(z-z_1)(z+z_1)+r_1(\theta_1)|}\\
		&\in\left[1-2\ep^2\delta_0^{-\frac{2}{10^3}},1+2\ep^2\delta_0^{-\frac{2}{10^3}}\right].
	\end{align*}
	The maximum modulus principle then gives:
	\begin{align*}
		\frac{1}{2}\le \frac{|(z-z_1)(z+z_1)+r_1(z)|}{|(z-\theta_1)(z+\theta_1)|}\le 2.
	\end{align*}
	Consequently, for $z$ in the region \eqref{z02}, we establish the determinant bound:
	\begin{align}\label{detS11}
		|\det \mathcal{S}_1(z)|\ge\frac{1}{2}\delta_0^{\frac{2}{10^4}}\|z-\theta_1\|_{\T}\cdot\|z+\theta_1\|_{\T}.
	\end{align}
	Applying Cramer's rule with Lemma \ref{chi}, \eqref{s10}, and \eqref{detS11} yields the inverse estimate:
	\begin{align}\label{S1-11}
		\|(\mathcal{S}_1(z))^{-1}\|=\frac{\|(\mathcal{S}_1(z))^*\|}{|\det \mathcal{S}_1(z)|}\le \delta_0^{-1}\|z-\theta_1\|_{\T}^{-1}\cdot\|z+\theta_1\|_{\T}^{-1}.
	\end{align}
	Combining \eqref{M-12}, \eqref{S1-11}, and Lemma \ref{scl}, we derive the fundamental bound:
	\begin{align}\label{M1z-11}
		\nonumber\|(\mathcal{M}_1(z))^{-1}\|&<4\left(1+\left\|\left((\mathcal{M}_1(z))_{(\tilde{\Omega}_{\bm k}^1-\bm k)\setminus\left\{\pm\frac{\bm l_0}{2}\right\}}\right)^{-1}\right\|\right)^2(1+\|(\mathcal{S}_1(z))^{-1}\|)\\
		&<\delta_0^{-2}\|z-\theta_1\|_{\T}^{-1}\cdot\|z+\theta_1\|_{\T}^{-1}.
	\end{align}
	Thus for \eqref{l01}, these estimates \eqref{S1-11} and \eqref{M1z-11} hold for all $z$ in the periodic domain:
	\begin{align*}
		\{z\in\C:\ \|z\|_{\T}\le \delta_0^{\frac{1}{10^3}}\}
	\end{align*}
	due to the 1-periodicity of $\mathcal{M}_1(z)$. Finally, for $\bm{k} \in P_1$ via \eqref{t+ko1}, we obtain:
	\begin{align*}
		\|\mathcal{T}_{\tilde{\Omega}_{\bm k}^1}^{-1}\|&=\|(\mathcal{M}_1(\theta+\bm k\cdot\bm\omega))^{-1}\|\\
		&<\delta_0^{-2}\|\theta+\bm k\cdot\bm\omega-\theta_1\|_{\T}^{-1}\cdot\|\theta+\bm k\cdot\bm\omega+\theta_1\|_{\T}^{-1}.
	\end{align*}
	For the case of
	\begin{align}\label{l02}
		\left\|\frac{\bm l_0}{2}\cdot\bm\omega+\theta_0-\frac{1}{2}\right\|_{\T}<\delta_0^{\frac{1}{100}},
	\end{align}
	we have for $\bm k\in P_1$,
	\begin{align}\label{t+ko2}
		\left\|\theta+\bm k\cdot\bm\omega-\frac{1}{2}\right\|_{\T}<3\delta_0^{\frac{1}{100}}.
	\end{align}
	Consider
	\begin{align*}
		\mathcal{M}_1(z):=\mathcal{T}_{\tilde{\Omega}_{\bm k}^1-\bm k}(z)=\left((v(z+\bm n\cdot\bm\omega)-E)\delta_{\bm n,\bm n'}+\ep \mathcal{W}\right)_{\bm n\in\tilde{\Omega}_{\bm k}^1-\bm k}
	\end{align*}
	in
	\begin{align}\label{z03}
		\left\{z\in\C:\ \left|z-\frac{1}{2}\right|\le \delta_0^{\frac{1}{10^3}}\right\}.
	\end{align}
	
	An analogous argument demonstrates that the equation $\det \mathcal{M}_1(z) = 0$ has two roots $\theta_1$ and $\theta'_1$ within the region defined by \eqref{z03}, such that estimates \eqref{s10}--\eqref{M1z-11} remain valid for all $z$ in \eqref{z03}. Consequently, under condition \eqref{l02}, these estimates \eqref{s10}--\eqref{M1z-11} extend to the shifted domain:
	\begin{align*}
	\left\{z\in\C:\ \left\|z-\frac{1}{2}\right\|_{\T}\le \delta_0^{\frac{1}{10^3}}\right\}.
	\end{align*}
	Furthermore, for any $\bm{k} \in P_1$ through \eqref{t+ko2}, we derive the $\ell^2$-norm bound:
	\begin{align}
		\nonumber\|\mathcal{T}_{\tilde{\Omega}_{\bm k}^1}^{-1}\|&=\|(\mathcal{M}_1(\theta+\bm k\cdot\bm\omega))^{-1}\|\\
		\label{to1-1}&<\delta_0^{-2}\|\theta+\bm k\cdot\bm\omega-\theta_1\|_{\T}^{-1}\cdot\|\theta+\bm k\cdot\bm\omega+\theta_1\|_{\T}^{-1}.
	\end{align}
	
	We have thus established the desired norm estimates for $\|\mathcal{T}_{\tilde{\Omega}_{\bm k}^1}^{-1}\|$ in both cases $(\bm C1)_0$ and $(\bm C2)_0$. For each $\bm{k} \in P_1$, we define the subset $A_{\bm k}^1\subset \Omega_{\bm k}^1$ as follows:
	\begin{align*}
		A_{\bm k}^1:=\left\{\begin{array}{lc}
			\{\bm k\}, & \text{Case}\ (\bm C1)_0\\
			\{\bm o\}\cup\{\bm o^*\}, & \text{Case}\ (\bm C2)_0
		\end{array}\right.,
	\end{align*}
	
	where in Case $(\bm{C}2)_0$, the vector $\bm{k}$ admits the decomposition $\bm{k} = \frac{1}{2}(\bm{o} + \bm{o}^*)$ for some $\bm{o} \in O_1$.
	\\
	
	\  \\
	
	\begin{itemize}
		\item[\textbf{Step 2}]: \textbf{Off-diagonal estimates of $\boT_{\tilde{\Omega}_{\bm k}^1}^{-1}$}.
	\end{itemize}
	
	The main result of this step is Theorem \ref{psq1}, which establishes that despite the presence of singular 0-scale sites within the block $\tilde{\Omega}_{\bm k}^1$, the Green's function $\mathcal{T}_{\tilde{\Omega}_{\bm k}^1}^{-1}$ remains controllable provided no 1-scale singular sites exist. Recalling
	\begin{align*}
		\delta_{1}=\delta_0^{10^{5\rho'}}
	\end{align*}
	and
	\begin{align*}
		Q_1^{\pm}=\{\bm k\in P_1:\ \|\theta+\bm k\cdot\bm\omega\pm\theta_1\|_{\T}<\delta_1\},\ Q_1=Q_1^+\cup Q_1^-,
	\end{align*}
	we have
	\begin{thm}\label{psq1}
		For $\bm k\in P_1\setminus Q_1$, we have
		\begin{align}\label{tg1a}
			|\mathcal{T}_{\tilde{\Omega}_{\bm k}^1}^{-1}(\bm x,\bm y)|<e^{-\alpha'_0\log^{\rho}(1+\|\bm x-\bm y\|)}\ \text{for $\|\bm x-\bm y\|\ge\frac{\tilde{\zeta_1}}{10}$},
		\end{align}
		where $\alpha'_0=\frac{3}{4}\alpha\left(1-\frac{20\times 10^{5\rho'}}{\alpha\log^{\rho-\rho'}N_1}\right)$ is defined in \eqref{alpha's}.
	\end{thm}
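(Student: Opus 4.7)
The plan is to combine the Schur-complement decomposition from Step~1 with the $0$-good off-diagonal decay of Lemma~\ref{0g}, exploiting that the singular set $A_{\bm k}^1 \subset \tilde{\Omega}_{\bm k}^1$ has cardinality at most two while its complement is $0$-good by property $(\bm c)_1$. Since $\bm k\in P_1\setminus Q_1$ gives $\|\theta+\bm k\cdot\bm\omega\pm\theta_1\|_\T \ge \delta_1$, the Step~1 bound \eqref{S1-11} (or its Case~1 analogue) yields $\|\mathcal{S}_1^{-1}\| < \delta_0^{-1}\delta_1^{-2}$ on $A_{\bm k}^1-\bm k$, while Lemma~\ref{0g} applied to the $0$-good block $\tilde{\Omega}_{\bm k}^1\setminus A_{\bm k}^1$ provides $\|\mathcal{T}_{\tilde{\Omega}_{\bm k}^1\setminus A_{\bm k}^1}^{-1}\|\le 2\kappa_1^{-1}\delta_0^{-2}$ together with the off-diagonal bound $e^{-\frac{3}{4}\alpha\log^\rho(1+\|\bm u-\bm v\|)}$.

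For $\bm x,\bm y\in\tilde{\Omega}_{\bm k}^1\setminus A_{\bm k}^1$, the block-inverse formula (a direct consequence of Lemma~\ref{scl}) reads
\begin{equation*}
\mathcal{T}_{\tilde{\Omega}_{\bm k}^1}^{-1}(\bm x,\bm y)=\mathcal{T}_{\tilde{\Omega}_{\bm k}^1\setminus A_{\bm k}^1}^{-1}(\bm x,\bm y)+\sum_{\bm a,\bm b\in A_{\bm k}^1}F(\bm x,\bm a)\,\mathcal{S}_1^{-1}(\bm a,\bm b)\,G(\bm b,\bm y),
\end{equation*}
where $F,G$ are convolutions of $\mathcal{T}_{\tilde{\Omega}_{\bm k}^1\setminus A_{\bm k}^1}^{-1}$ with $\ep\mathcal{W}$. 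The first term is bounded directly by Lemma~\ref{0g}. For $F(\bm x,\bm a)$ (and symmetrically $G(\bm b,\bm y)$) I use the partial-decay splitting from the proof of Lemma~\ref{0g}: write $|\mathcal{W}(\bm u,\bm v)|\le e^{-\frac{\alpha}{4}\log^\rho(1+\|\bm u-\bm v\|)}\cdot e^{-\frac{3}{4}\alpha\log^\rho(1+\|\bm u-\bm v\|)}$, spend the $\frac{\alpha}{4}$-portion on making the summation absolutely convergent via $D(\alpha/4)<+\infty$, and combine the $\frac{3}{4}\alpha$-portion with the Green's function factor using Lemma~\ref{qua}. This produces $|F(\bm x,\bm a)|\lesssim |\ep|\,\delta_0^{-2}\,e^{-\frac{3}{4}\alpha(1-o(1))\log^\rho(1+\|\bm x-\bm a\|)}$ and similarly for $G(\bm b,\bm y)$.

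Because $\diam(A_{\bm k}^1)\le 100N_1^{10}\ll\tilde{\zeta}_1/10\le\|\bm x-\bm y\|$, applying Lemma~\ref{qua} with three terms along $\bm x\to\bm a\to\bm b\to\bm y$ gives $\log^\rho(1+\|\bm x-\bm a\|)+\log^\rho(1+\|\bm b-\bm y\|)\ge\log^\rho(1+\|\bm x-\bm y\|)-C(\rho)\log^\rho 3$. Substituting back yields
\begin{equation*}
|\mathcal{T}_{\tilde{\Omega}_{\bm k}^1}^{-1}(\bm x,\bm y)|\le \delta_0^{-6}\delta_1^{-2}\,e^{-\frac{3}{4}\alpha(1-o(1))\log^\rho(1+\|\bm x-\bm y\|)}.
\end{equation*}
The prefactor has logarithm $O(\log^{\rho'}N_1)$, which is absorbed into the margin $(\frac{3}{4}\alpha-\alpha_0')\log^\rho(1+\|\bm x-\bm y\|)\gtrsim 10^{5\rho'}\log^\rho(\tilde{\zeta}_1/10)/\log^{\rho-\rho'}N_1\gg\log^{\rho'}N_1$, using the hypothesis $\rho>\rho'$ and $\tilde{\zeta}_1\ge N_1^{10}$. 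Cases with $\bm x\in A_{\bm k}^1$ or $\bm y\in A_{\bm k}^1$ (only one can hold since $\|\bm x-\bm y\|>\diam(A_{\bm k}^1)$) follow from the analogous entries of the block-inverse formula.

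Main obstacle: Executing the partial-decay splitting so that the sharp constant $\frac{3}{4}\alpha$ survives each convolution against $\mathcal{W}$ (losing only a $D(\alpha/4)$ factor and an $o(1)$-loss in the exponent), and tracking this $o(1)$ precisely enough to match the prescribed form $\alpha_0'=\frac{3}{4}\alpha(1-20\cdot 10^{5\rho'}/(\alpha\log^{\rho-\rho'}N_1))$ from \eqref{alpha's}. In Case~$(\bm C2)_0$ the mirror-pair separation $\|\bm o-\bm o^*\|$ can be as large as $100 N_1^{10}$, making the three-term quasi-metric combination tight but still valid thanks to the lower bound $\|\bm x-\bm y\|\ge\tilde{\zeta}_1/10\ge N_1^{100}/10$.
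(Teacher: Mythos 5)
Your approach via the Schur block-inverse formula on the splitting $\tilde{\Omega}_{\bm k}^1 = (\tilde{\Omega}_{\bm k}^1\setminus A_{\bm k}^1)\cup A_{\bm k}^1$ is genuinely different from the paper's: the paper iterates the resolvent identity along a chain of hops of length $\ge N_1/2$ (Corollary~\ref{ite0g}), dividing into sub-cases by the chain length and controlling the cumulative quasi-metric loss $C(\rho)\log^\rho n$ using $\rho<\rho'+1$. Your one-shot decomposition is algebraically cleaner and sidesteps that bookkeeping. However, there is a genuine gap in the quasi-metric step.

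The inequality you write, ``applying Lemma~\ref{qua} with three terms along $\bm x\to\bm a\to\bm b\to\bm y$ gives $\log^\rho(1+\|\bm x-\bm a\|)+\log^\rho(1+\|\bm b-\bm y\|)\ge\log^\rho(1+\|\bm x-\bm y\|)-C(\rho)\log^\rho 3$'', is false. Lemma~\ref{qua} gives $\log^\rho(1+\|\bm x-\bm y\|)\le\log^\rho(1+\|\bm x-\bm a\|)+\log^\rho(1+\|\bm a-\bm b\|)+\log^\rho(1+\|\bm b-\bm y\|)+C(\rho)\log^\rho 3$, so the correct consequence retains the extra term $-\log^\rho(1+\|\bm a-\bm b\|)$. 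In Case~$(\bm C2)_0$ one has $\|\bm a-\bm b\|=\|\bm o-\bm o^*\|\le 100N_1^{10}$, so $\log^\rho(1+\|\bm a-\bm b\|)\approx 10^\rho\log^\rho N_1$. Meanwhile the margin available to absorb losses is $(\tfrac34\alpha-\alpha_0')\log^\rho(1+\|\bm x-\bm y\|)\gtrsim 10^{5\rho'}\cdot 100^\rho\cdot\log^{\rho'}N_1$, and since $\rho>\rho'$ and $N_1\to\infty$ as $\ep_0\to0$, the loss $10^\rho\log^\rho N_1$ eventually overwhelms the margin $\sim\log^{\rho'}N_1$. Thus, taken literally, the stated inequality causes the estimate to fail for small $\ep_0$.

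The repair is to avoid the naive three-term quasi-metric and instead apply the two-term quasi-metric together with the Extract Lemma~\ref{el}: $\log^\rho(1+\|\bm x-\bm a\|)+\log^\rho(1+\|\bm b-\bm y\|)\ge\log^\rho(1+\|\bm x-\bm a\|+\|\bm b-\bm y\|)-C(\rho)\log^\rho 2\ge\log^\rho(1+\|\bm x-\bm y\|-\|\bm a-\bm b\|)-C(\rho)\log^\rho 2$, and then \eqref{exl} gives $\log^\rho(1+\|\bm x-\bm y\|-\|\bm a-\bm b\|)\ge\bigl(1-\tfrac{2\rho\|\bm a-\bm b\|}{(1+\|\bm x-\bm y\|)\log(1+\|\bm x-\bm y\|)}\bigr)\log^\rho(1+\|\bm x-\bm y\|)$. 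Since $\|\bm a-\bm b\|/\|\bm x-\bm y\|\lesssim N_1^{10}/N_1^{100}$ is negligible, the multiplicative loss is far below the permitted $20\times 10^{5\rho'}/(\alpha\log^{\rho-\rho'}N_1)$. This is exactly how the paper handles the analogous step (see \eqref{803}, \eqref{101}). With that correction your approach closes; without it, the claimed bound does not.
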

	\begin{proof}
From our construction, we have
		\begin{align*}
			Q_0\subset \bigcup_{\bm k\in P_1}A_{\bm k}^1\subset\bigcup_{\bm k\in P_1}2\Omega_{\bm k}^1.
		\end{align*}
		Thus
		\begin{align*}
			(\tilde{\Omega}_{\bm k}^1\setminus 2\Omega_{\bm k}^1)\cap Q_0=\emptyset,
		\end{align*}
		which shows that $\tilde{\Omega}_{\bm k}^1\setminus 2\Omega_{\bm k}^1$ is $0$-good. Since \eqref{to1-1} and $\bm k\notin Q_1$, we have
		\begin{align}\label{tk0}
			\|\mathcal{T}_{\tilde{\Omega}_{\bm k}^1}^{-1}\|<\delta_0^{-2}\|\theta+\bm k\cdot\bm\omega-\theta_1\|_{\T}^{-1}\cdot\|\theta+\bm k\cdot\bm\omega+\theta_1\|_{\T}^{-1}<\delta_0^{-2}\delta_1^{-2}<\delta_1^{-3}.
		\end{align}
		
		To obtain the desired estimates, we divide the remaining proof into several cases.		
	\begin{itemize}
		\item[\textbf{Case 1}:] $\bm x\in 2\Omega_{\bm k}^{1}$ or $\bm y\in 2\Omega_{\bm k}^1$. Without loss of generality, we assume that $\bm y\in2\Omega_{\bm k}^{1}$. From $\|\bm x-\bm y\|\ge\frac{\tilde{\zeta_1}}{10}$ and $\diam(2\Omega_{\bm k}^1)\ll\tilde{\zeta_1}$, we have $\bm x\in \tilde{\Omega}_{\bm k}^{1}\setminus 2\Omega_{\bm k}^1$. Since $\tilde{\Omega}_{\bm k}^{1}\setminus 2\Omega_{\bm k}^1$ is $0$-good and \eqref{ite0geq1}, there is some $\bm x'\in 2\Omega_{\bm k}^{1}$ such that
			\begin{align}\label{51201}
			|\boT_{\tilde{\Omega}_{\bm k}^{1}}^{-1}(\bm x,\bm y)|\le (\#(\tilde{\Omega}_{\bm k}^{1}\setminus 2\Omega_{\bm k}^1))\cdot e^{-\frac{3}{4}\alpha\log^{\rho}(1+\|\bm x-\bm x'\|)}\cdot|\boT_{\tilde{\Omega}_{\bm k}^{1}}^{-1}(\bm x',\bm y)|\cdot\delta_0^{-3}.
		\end{align}
		Next, we will extract $\log^{\rho}(1+\|\bm x-\bm y\|)$ from $\log^{\rho}(1+\|\bm x-\bm x'\|)$.  Since $\bm x',\bm y\in 2\Omega_{\bm k}^{1}$, $\|\bm y-\bm x'\|\le 4\zeta_1$. From $\|\bm x-\bm y\|\ge \frac{\tilde{\zeta}_1}{10}\gg4\zeta_1\gg1$ and \eqref{exl}, we have
		\begin{align}
			\nonumber&\ \ \log^{\rho}(1+\|\bm x-\bm x'\|)\ge\log^{\rho}(1+\|\bm x-\bm y\|-4\zeta_1)\\
		\nonumber	\ge&\ \ \left(1-2\rho\frac{4\zeta_1}{(1+\|\bm x-\bm y\|)\log(1+\|\bm x-\bm y\|)}\right)\log^{\rho}(1+\|\bm x-\bm y\|)\\
			\label{803}	\ge&\ \ \left(1-\frac{80\rho\zeta_1}{\tz_1\log N_1}\right)\log^{\rho}(\|\bm x-\bm y\|+1).
		\end{align}
		Therefore, from \eqref{tk0}--\eqref{803}, $1<\rho'<\rho<\rho'+1$ and $\|\bm x-\bm y\|\ge \frac{\tilde{\zeta}_1}{10}$, we get
		\begin{align*}
			|\boT_{\tilde{\Omega}_{\bm k}^{1}}^{-1}(\bm x,\bm y)|&\le (2\tilde{\zeta}_1+1)^d\cdot e^{-\frac{3}{4}\alpha\left(1-\frac{80\rho\zeta_1}{\tz_1\log N_1}\right)\log^{\rho}(\|\bm x-\bm y\|+1)}\cdot\delta_0^{-5}\cdot\delta_1^{-2}\\
			&\le e^{-\frac{3}{4}\alpha\left(1-\frac{6\times 10^{5\rho'}}{\alpha\log^{\rho-\rho'} N_1}\right)\log^{\rho}(\|\bm x-\bm y\|+1)}.
		\end{align*}
		\item[\textbf{Case 2}:] $\bm x\in \tilde{\Omega}_{\bm k}^{1}\setminus 2\Omega_{\bm k}^{1}$ and $\bm y\in \tilde{\Omega}_{\bm k}^{1}\setminus 2\Omega_{\bm k}^1$. If $\bm z\in \tilde{\Omega}_{\bm k}^{1}\setminus\left(2\Omega_{\bm k}^{1}\bigcup \lg_{\frac{1}{2}N_1}(\bm y)\right)$, there is a $0$-good set $\tilde{\Omega}_{\bm k}^{1}\cap\lg_{\frac{1}{2}N_1}(\bm z)$ such that 
		\begin{align*}
			\bm z\in \tilde{\Omega}_{\bm k}^{1}\cap\lg_{\frac{1}{2}N_1}(\bm z)\text{ and }\bm y\in\tilde{\Omega}_{\bm k}^{1}\setminus\lg_{\frac{1}{2}N_1}(\bm z).
		\end{align*}
		For $\bm z\in\tilde{\Omega}_{\bm k}^{1}$, we can define
				\begin{align*}
			\hat{\bm z}=\left\{\begin{array}{ll}
				\bm z, & \bm z\in 2\Omega_{\bm k}^{1}\bigcup \lg_{\frac{1}{2}N_1}(\bm y),\\
				\bm z', & \bm z\in \tilde{\Omega}_{\bm k}^{1}\setminus\left(2\Omega_{\bm k}^{1}\bigcup \lg_{\frac{1}{2}N_1}(\bm y)\right),
			\end{array}\right.
		\end{align*}
		by Corollary \ref{ite0g} with $\bm v=\bm y$ and $\lg'=\tilde{\Omega}_{\bm k}^{1}\cap\lg_{\frac{1}{2}N_1}(\bm z)$.

	Let $\bm x_0:=\bm x$ and $\bm x_{l+1}=\hat{\bm x}_{l}$, $l\ge0$. For given $\{\bm x_{l}\}_{l\in\N}$, we define $l_1\ge1$ to be the smallest integer so that $\bm x_{l_1}\in 2\Omega_{\bm k}^{1}\bigcup \lg_{\frac{1}{2}N_1}(\bm y)$. We then have 
	\begin{align*}
		\bm x_{i}\in \tilde{\Omega}_{\bm k}^{1}\setminus\left(2\Omega_{\bm k}^{1}\bigcup \lg_{\frac{1}{2}N_1}(\bm y)\right)\text{ for }0\le i<l_1.
	\end{align*}
	 We also divide the discussion into 3 cases:
	\item[\textbf{Case 2-1}:] $l_1>\left[\frac{\alpha \log^{\rho}(\|\bm x-\bm y\|+1)+4\times 10^{5\rho'}\log^{\rho'}(N_1+1)}{\alpha\left(1-\frac{6\times 10^{5\rho'}}{\alpha\log^{\rho-\rho'}N_1}\right)\log^{\rho}\left(\frac{N_1}{2}+1\right)}\right]+1:=l^*$. Since $\bm x_{i}\in \tilde{\Omega}_{\bm k}^{1}\setminus\left(2\Omega_{\bm k}^{1}\bigcup \lg_{\frac{1}{2}N_1}(\bm y)\right)$ for $0\le i<l_1$, we have 
	\begin{align*}
		\bm x_{i+1}=\bm x'_{i}\in \tilde{\Omega}_{\bm k}^{1}\setminus\lg_{\frac{1}{2}N_1}(\bm x_i)\ \text{(cf. Corollary \ref{ite0g})}
	\end{align*}
	and then
	\begin{align*}
		\|\bm x_{i+1}-\bm x_{i}\|\ge\frac{N_1}{2}\ \text{for $0\le i<l_1$.}
	\end{align*}
	 Thus, from \eqref{indpa}, \eqref{ite0g0} and \eqref{tk0}, we get
	\begin{align*}
		|\boT_{\tilde{\Omega}_{\bm k}^{1}}^{-1}(\bm x,\bm y)|&\le \prod_{i=0}^{l^*-1}\left(e^{-\frac{3}{4}\alpha\left(1-\frac{6\times 10^{5\rho'}}{\alpha\log^{\rho-\rho'} N_1}\right)\log^{\rho}(\|\bm x_{i+1}-\bm x_{i}\|+1)}\right)|\boT_{\tilde{\Omega}_{\bm k}^{1}}^{-1}(\bm x_{l^*},\bm y)|\\
		&\le e^{-\frac{3}{4}\alpha\cdot l^*\left(1-\frac{6\times 10^{5\rho'}}{\alpha\log^{\rho-\rho'} N_1}\right)\log^{\rho}\left(\frac{N_1}{2}+1\right)}\delta_1^{-3}\\
		&\le e^{-\frac{3}{4}\alpha\cdot l^*\left(1-\frac{6\times 10^{5\rho'}}{\alpha\log^{\rho-\rho'} N_1}\right)\log^{\rho}\left(\frac{N_1}{2}+1\right)}\cdot e^{3\times 10^{5\rho'}\log^{\rho'}(N_1+1)}\\
		&\le e^{-\frac{3}{4}\alpha\log^{\rho}(\|\bm x-\bm y\|+1)}.
	\end{align*}
		\item[\textbf{Case 2-2:}] $l_1\le l^*$ and $\bm x_{l_1}\in \lg_{\frac{1}{2}N_1}(\bm y)$. According to $\|\bm x-\bm y\|\ge\frac{\tz_1}{10}$, $N_1\gg1$ and $1<\rho'<\rho$, we obtain
		\begin{align}
			\nonumber l^*&=\left[\frac{\alpha \log^{\rho}(\|\bm x-\bm y\|+1)+4\times 10^{5\rho'}\log^{\rho'}(N_1+1)}{\alpha\left(1-\frac{6\times 10^{5\rho'}}{\alpha\log^{\rho-\rho'}N_1}\right)\log^{\rho}\left(\frac{N_1}{2}+1\right)}\right]+1\\
		\label{l*}	&\le \log^{\rho}(1+\|\bm x-\bm y\|).
		\end{align}
	Then from \eqref{quaeq}, \eqref{ite0g0}, \eqref{tk0} and $\|\bm x-\bm y\|\ge\frac{\tz_1}{10}$, we have
	\begin{align}
		\nonumber	|\boT_{\tilde{\Omega}_{\bm k}^{1}}^{-1}(\bm x,\bm y)|&\le \prod_{i=0}^{l_1-1}\left(e^{-\frac{3}{4}\alpha\left(1-\frac{6\times 10^{5\rho'}}{\alpha\log^{\rho-\rho'} N_1}\right)\log^{\rho}(\|\bm x_{i+1}-\bm x_{i}\|+1)}\right)|\boT_{\tilde{\Omega}_{\bm k}^{1}}^{-1}(\bm x_{l_1},\bm y)|\\
		\nonumber	&\le e^{-\frac{3}{4}\alpha\left(1-\frac{6\times 10^{5\rho'}}{\alpha\log^{\rho-\rho'} N_1}\right)\left(\log^{\rho}(\|\bm x_{l_1}-\bm x\|+1)-C(\rho)\log^{\rho}l_1\right)}\delta_1^{-3}\\
		\nonumber	&\le e^{-\frac{3}{4}\alpha\left(1-\frac{6\times 10^{5\rho'}}{\alpha\log^{\rho-\rho'} N_1}\right)\left(\log^{\rho}(\|\bm x-\bm y\|-\|\bm x_{l_1}-\bm y\|+1)-C(\rho)\log^{\rho}l_1\right)}\\
		\label{102}&\ \ \cdot e^{3\times 10^{5\rho'}\log^{\rho'}(N_1+1)}.
	\end{align}
By $\bm x_{l_1}\in\lg_{\frac{1}{2}N_1}(\bm y)$, $\|\bm x_{l_1}-\bm y\|\le\frac{N_1}{2}$. Since $\|\bm x-\bm y\|\ge\frac{\tz_1}{10}\gg N_1\gg1$ and \eqref{exl}, we can get
\begin{align}
	\nonumber&\ \ \log^{\rho}(\|\bm x-\bm y\|-\|\bm x_{l_1}-\bm y\|+1)\\
	\nonumber\ge&\ \ \log^{\rho}\left(\|\bm x-\bm y\|-\frac{N_1}{2}+1\right)\\
	\nonumber\ge&\ \ \left(1-\rho\frac{N_1}{(1+\|\bm x-\bm y\|)\log(1+\|\bm x-\bm y\|)}\right)\log^{\rho}(1+\|\bm x-\bm y\|)\\
\label{101}	\ge&\ \ \left(1-\frac{10\rho N_1}{\tz_1\log N_1}\right)\log^{\rho}(\|\bm x-\bm y\|+1).
\end{align}
From \eqref{l*}, \eqref{101}, $\|\bm x-\bm y\|\ge\frac{\tz_1}{10}\gg1$ and $1<\rho'<\rho<\rho'+1$, we get
\begin{align*}
	&\ \ \log^{\rho}(\|\bm x-\bm y\|-\|\bm x_{l_1}-\bm y\|+1)-C(\rho)\log^{\rho}l_1\\
	\ge&\ \  \left(1-\frac{10^{5\rho'}}{\alpha\log^{\rho-\rho'}N_1}\right)\log^{\rho}(\|\bm x-\bm y\|+1)
\end{align*}
and then
\begin{align}
\nonumber	&\ \ \left(1-\frac{10^{5\rho'}}{\alpha\log^{\rho-\rho'} N_1}\right)\left(\log^{\rho}(\|\bm x-\bm y\|-\|\bm x_{l_1}-\bm y\|+1)-C(\rho)\log^{\rho}l_1\right)\\
\label{103}	\ge&\ \ \left(1-\frac{3\times 10^{5\rho'}}{\alpha\log^{\rho-\rho'}N_1}\right)\log^{\rho}(\|\bm x-\bm y\|+1).
\end{align}
Since $\|\bm x-\bm y\|\ge\frac{\tz_1}{10}\gg N_1\gg1$, \eqref{102} and \eqref{103}, we have
\begin{align*}
		|\boT_{\tilde{\Omega}_{\bm k}^{1}}^{-1}(\bm x,\bm y)|&\le e^{-\frac{3}{4}\alpha\left(1-\frac{10\times 10^{5\rho'}}{\alpha\log^{\rho-\rho'}N_1}\right)\log^{\rho}(\|\bm x-\bm y\|+1)}\cdot e^{3\times 10^{5\rho'}\log^{\rho'}(N_1+1)}\\
		&\le e^{-\frac{3}{4}\alpha\left(1-\frac{20\times 10^{5\rho'}}{\alpha\log^{\rho-\rho'}N_1}\right)\log^{\rho}(\|\bm x-\bm y\|+1)}.
\end{align*}
\item[\textbf{Case 2-3}:] $l_1\le l^*$ and $\bm x_{l_1}\in 2\Omega_{\bm k}^{1}$. From a similar argument of \eqref{102}, we obtain
\begin{align}\label{105}
			\nonumber	|\boT_{\tilde{\Omega}_{\bm k}^{1}}^{-1}(\bm x,\bm y)|&\le e^{-\frac{3}{4}\alpha\left(1-\frac{6\times 10^{5\rho'}}{\alpha\log^{\rho-\rho'} N_1}\right)\left(\log^{\rho}(\|\bm x-\bm y\|-\|\bm x_{l_1}-\bm y\|+1)-C(\rho)\log^{\rho}l_1\right)}\\
			&\ \ \cdot|\boT_{\tilde{\Omega}_{\bm k}^{1}}^{-1}(\bm x_{l_1},\bm y)|
\end{align}
By Corollary \ref{ite0g} (since $\tO_{\bm k}^{1}\setminus2\Omega_{\bm k}^{1}$ is $0$-good), there is some $\bm y'\in2\Omega_{\bm k}^{1}$ such that
\begin{align}\label{106}
	|\boT_{\tilde{\Omega}_{\bm k}^{1}}^{-1}(\bm x_{l_1},\bm y)|\le (\#\tO_{\bm k}^{1})\cdot e^{-\frac{3}{4}\alpha\log^{\rho}(1+\|\bm y-\bm y'\|)}\cdot|\boT_{\tilde{\Omega}_{\bm k}^{1}}^{-1}(\bm x_{l_1},\bm y')|\cdot\delta_0^{-3}.
\end{align}
According to $\bm x_{l_1}, \bm y'\in2\Omega_{\bm k}^{1}$, we have $\|\bm x_{l_1}-\bm y'\|\le 4\zeta_1$. Since \eqref{quaeq} and $\|\bm x-\bm y\|\ge\frac{\tz_1}{10}\gg1$,
\begin{align}
\nonumber&\ \ \log^{\rho}(\|\bm x_{l_1}-\bm x\|+1)+\log^{\rho}(1+\|\bm y-\bm y'\|)\\
\nonumber\ge&\ \ \log^{\rho}(1+\|\bm x_{l_1}-\bm x\|+\|\bm y-\bm y'\|)-C(\rho)\log^{\rho}2\\
\nonumber\ge&\ \ \log^{\rho}(1+\|\bm x-\bm y\|-\|\bm x_{l_1}-\bm y'\|)-C(\rho)\log^{\rho}2\\
\label{107}\ge&\ \ \log^{\rho}(1+\|\bm x-\bm y\|-4\zeta_1)-C(\rho)\log^{\rho}2.
\end{align}
From \eqref{803}, $\tz_1\gg\zeta_1$, $1<\rho'<\rho<\rho'+1$ and $N_1\gg1$, we also get
\begin{align}
\nonumber	&\ \ \log^{\rho}(1+\|\bm x-\bm y\|-4\zeta_1)\\
\nonumber\ge&\ \ \left(1-\frac{80\rho\zeta_1}{\tz_1\log N_1}\right)\log^{\rho}(\|\bm x-\bm y\|+1)\\
\label{108}	\ge&\ \ \left(1-\frac{10^{5\rho'}}{\alpha\log^{\rho-\rho'}N_1}\right)\log^{\rho}(\|\bm x-\bm y\|+1).
\end{align}
Combining \eqref{tk0}, \eqref{105}--\eqref{108}, $1<\rho'<\rho<\rho'+1$ and $\|\bm x-\bm y\|\ge\frac{\tz_1}{10}\gg1$ gives
\begin{align*}
	|\boT_{\tilde{\Omega}_{\bm k}^{1}}^{-1}(\bm x,\bm y)|&\le e^{-\frac{3}{4}\alpha\left(1-\frac{20\times 10^{5\rho'}}{\alpha\log^{\rho-\rho'}N_1}\right)\log^{\rho}(\|\bm x-\bm y\|+1)}.
\end{align*}
\end{itemize}
This finishes the proof.
	\end{proof}

	\ \\

	\begin{itemize}
		\item[\textbf{Step 3}]: \textbf{Estimates of general $1$-good $\lg$}.
	\end{itemize}
	In this final step, we complete the verification of property $\mathscr{P}_1$. Recall that a finite set  	$\lg\subset\Z^d$ is called 1-$\good$ if it satisfies the following conditions:
	\begin{align}\label{1gdef}
		\left\{\begin{array}{l}
			\lg\cap Q_0\cap\Omega_{\bm k}^1\ne\emptyset\Rightarrow\tilde{\Omega}_{\bm k}^1\subset\lg,\\
			\{\bm k\in P_1:\ \tilde{\Omega}_{\bm k}^1\subset\lg\}\cap Q_1=\emptyset.
		\end{array}\right.
	\end{align}
	
	To establish $\mathscr{P}_1$, we will synthesize three key components: (a) The norm estimates of $\mathcal{T}_{\tilde{\Omega}_{\bm k}^1}^{-1}$ obtained previously; (b) Schur's test for operator bounds; (c) The resolvent identity technique.
	\begin{thm}\label{1g}
		If $\lg$ is $1$-$\good$, then
		\begin{align}
			\nonumber\|\boT_\lg^{-1}\|&\le2\delta_0^{-3}\sup_{\{\bm k\in P_1:\ \tO_{\bm k}^1\subset\lg\}}\left(\|\theta+\bm k\cdot\bm\omega-\theta_1\|_{\T}^{-1}\cdot\|\theta+\bm k\cdot\bm\omega+\theta_1\|_{\T}^{-1}\right)\\
			\label{1gnorm}&<\delta_1^{-3},\\
		\label{1gdecay}	|\boT_{\lg}(\bm x,\bm y)|&<e^{-\alpha_1\|\bm x-\bm y\|}\ \text{for $\|\bm x-\bm y\|\ge10\tz_1$,}
		\end{align}
	where $\alpha_1=\frac{3}{4}\alpha\left(1-\frac{50\times 10^{5\rho'}}{\alpha\log^{\rho-\rho'}N_1}\right)$ is defined in \eqref{alphas}.
	\end{thm}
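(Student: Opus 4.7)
The plan is to decompose $\lg$ into a $0$-good background and isolated enlarged resonant blocks, apply the bounds already proved on each piece, and couple them through the resolvent identity together with a Schur-type control of the inter-block hopping, mirroring the structure of Lemma~\ref{0g} and Theorem~\ref{psq1}. Concretely I would set
$$
\lg_R:=\bigcup_{\{\bm k\in P_1:\ \tO_{\bm k}^1\subset\lg\}}\tO_{\bm k}^1,\qquad \lg_G:=\lg\setminus\lg_R.
$$
Using the covering property $(\bm b)_1$ (every $\bm k'\in Q_0$ belongs to some $\Omega_{\bm k}^1$ with $\bm k\in P_1$), any $\bm n\in\lg\cap Q_0$ satisfies $\lg\cap Q_0\cap\Omega_{\bm k}^1\ne\emptyset$, so the first clause of \eqref{1gdef} forces $\tO_{\bm k}^1\subset\lg$ and hence $\bm n\in\lg_R$. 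Thus $\lg_G\cap Q_0=\emptyset$, so Lemma~\ref{0g} applies on $\lg_G$; the second clause of \eqref{1gdef} simultaneously ensures every $\bm k$ contributing to $\lg_R$ lies in $P_1\setminus Q_1$, so \eqref{tk0} and Theorem~\ref{psq1} apply on each $\tO_{\bm k}^1$.

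For the norm bound \eqref{1gnorm}, I would write $\boT_\lg=\boT_0+\ep\Gamma$, where $\boT_0=\boT_{\lg_G}\oplus\bigoplus_{\bm k}\boT_{\tO_{\bm k}^1}$ and $\Gamma(\bm u,\bm v)=\phi(\bm u-\bm v)$ for $\bm u,\bm v$ in distinct pieces of the decomposition (and zero otherwise). Combining Lemma~\ref{0g} with \eqref{tk0} yields
$$
\|\boT_0^{-1}\|\le 2\delta_0^{-3}\sup_{\{\bm k\in P_1:\ \tO_{\bm k}^1\subset\lg\}}\bigl(\|\theta+\bm k\cdot\bm\omega-\theta_1\|_{\T}^{-1}\|\theta+\bm k\cdot\bm\omega+\theta_1\|_{\T}^{-1}\bigr).
$$
Schur's test applied to $\ep\boT_0^{-1}\Gamma$, together with the entrywise decay of $\boT_0^{-1}$ (Lemma~\ref{0g} on $\lg_G$, Theorem~\ref{psq1} inside each $\tO_{\bm k}^1$), the decay \eqref{wphi} of $\phi$, the block separation $\dist(\tO_{\bm k}^1,\tO_{\bm k'}^1)>10\tz_1$ from $(\bm a1)_1$, and the quasi-metric inequality \eqref{quaeq}, gives $\max_{\bm x}\sum_{\bm y}|\ep\boT_0^{-1}\Gamma(\bm x,\bm y)|<\tfrac{1}{2}$ for $|\ep|\le\ep_0$ small. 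The Neumann series then produces $\boT_\lg^{-1}=(\boI+\ep\boT_0^{-1}\Gamma)^{-1}\boT_0^{-1}$ with $\|\boT_\lg^{-1}\|\le 2\|\boT_0^{-1}\|$, and $\delta_0^{-3}\delta_1^{-2}\ll\delta_1^{-3}$ closes the chain to \eqref{1gnorm}.

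For the off-diagonal decay \eqref{1gdecay} with $\|\bm x-\bm y\|\ge 10\tz_1$, I would iterate the resolvent identity exactly as in Theorem~\ref{psq1}. Starting at $\bm x$, I would alternate: (i) when the current point lies in $\lg_G$, invoke Corollary~\ref{ite0g} (specifically \eqref{ite0g0}) to advance at least $\tfrac{N_1}{2}$ with factor $e^{-\frac{3}{4}\alpha(1-o(1))\log^{\rho}(1+\|\cdot\|)}$; (ii) when it lies in some $\tO_{\bm k}^1$, use Theorem~\ref{psq1} to move to a boundary site with factor $e^{-\alpha_0'\log^{\rho}(1+\|\cdot\|)}$ and then hop via $\ep\phi$ to exit the block. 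After at most $l^*\lesssim\log^{\rho}(1+\|\bm x-\bm y\|)$ iterations (as in \eqref{l*}), \eqref{quaeq} telescopes the accumulated $\log^\rho$-factors into $\log^\rho(1+\|\bm x-\bm y\|)$, while the cumulative $C(\rho)\log^\rho l^*$ losses and the terminal factor $\delta_1^{-3}$ can be absorbed into the rate shrinkage $\alpha_0'\to\alpha_1$ guaranteed by \eqref{alphas}, producing $|\boT_\lg^{-1}(\bm x,\bm y)|<e^{-\alpha_1\log^{\rho}(1+\|\bm x-\bm y\|)}$.

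The principal obstacle will be controlling the cumulative quasi-metric loss $C(\rho)\log^\rho n$ across many resolvent jumps with $n$ as large as $\log^\rho(1+\|\bm x-\bm y\|)$: unlike the pure exponential setting these losses are not scale-invariant, and without a careful extraction via Lemma~\ref{el} they would overwhelm the narrow gap between $\alpha_0'$ and $\alpha_1$. The buffer $\frac{50\times 10^{5\rho'}}{\alpha\log^{\rho-\rho'}N_1}$ in \eqref{alphas} together with $\log^{\rho'}N_1\approx|\log\delta_0|$ is precisely what makes the losses fit within the allowed degradation.
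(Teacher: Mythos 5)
Your decomposition $\boT_0 = \boT_{\lg_G}\oplus\bigoplus_{\bm k}\boT_{\tO_{\bm k}^1}$ treats each enlarged resonant block $\tO_{\bm k}^1$ as a single rigid piece, and this creates a boundary problem that the estimates you cite cannot close. Theorem~\ref{psq1} gives entrywise decay of $\boT_{\tO_{\bm k}^1}^{-1}(\bm x,\bm z)$ only in the regime $\|\bm x-\bm z\|\ge\tz_1/10$; for shorter distances the only available control is the operator bound $\|\boT_{\tO_{\bm k}^1}^{-1}\|<\delta_0^{-2}\delta_1^{-2}$ from \eqref{tk0}. Now take $\bm x\in\tO_{\bm k}^1$ within distance $O(1)$ of $\partial\tO_{\bm k}^1$ and $\bm z$ close to $\bm x$; your coupling $\Gamma(\bm z,\bm y)=\phi(\bm z-\bm y)$ connects such $\bm z$ to $\bm y\in\lg_G$ at distance $O(1)$ with $|\phi(\bm z-\bm y)|\sim 1$, so the Schur row sum $\sum_{\bm y}|\ep(\boT_0^{-1}\Gamma)(\bm x,\bm y)|$ contains a contribution of order $|\ep|\,\delta_0^{-2}\delta_1^{-2}=\delta_0^{8}\delta_0^{-2\cdot10^{5\rho'}}\gg1$. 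The Neumann series therefore does not converge, and \eqref{1gnorm} does not follow from the cited lemmas. (Intuitively $\boT_{\tO_{\bm k}^1}^{-1}$ should be only $O(\delta_0^{-2})$ near $\partial\tO_{\bm k}^1$ since the resonance sits at $A_{\bm k}^1\subset\Omega_{\bm k}^1$ far from the boundary, but that short-range boundary estimate is not a lemma in the paper and your argument does not supply it.) The same rigidity infects your step (ii) for the off-diagonal decay: when the current iteration point lies in $\tO_{\bm k}^1\setminus 2\Omega_{\bm k}^1$, ``use Theorem~\ref{psq1} to move to a boundary site'' is not available, because from such a point the resolvent identity over $\tO_{\bm k}^1$ produces terms at short distances for which Theorem~\ref{psq1} is silent.

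The paper avoids both problems by using a \emph{pointwise-varying} local neighborhood $O(\bm w)$ (see \eqref{neighbor1}) rather than a fixed block decomposition: $O(\bm w)=\tO_{\bm k}^1$ only when $\bm w\in 2\Omega_{\bm k}^1$ (deep inside, so $\dist(\bm w,\lg\setminus\tO_{\bm k}^1)\ge\tz_1/3$ and the outgoing hop is uniformly tiny, while $\boT_{\tO_{\bm k}^1}^{-1}(\bm w,\bm z)$ sees only distances $\ge\tz_1/10$ whenever $\bm z$ is near $\partial\tO_{\bm k}^1$); for $\bm w\in\tO_{\bm k}^1\setminus\bigcup 2\Omega_{\bm l}^1$ it uses the small $0$-good ball $\lg_{N_1/2}(\bm w)\cap\lg$. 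The operators $\boL$, $\boK$ with $\boL\boT_\lg=\boI_\lg+\ep\boK$ are then built from these pointwise local inverses, and the Schur bound on $\boK$ goes through precisely because every term is in one of the two favorable regimes. Likewise, the paper's single-step advance Lemma~\ref{ite1} is stated only for $\bm u\in 2\Omega_{\bm k}^1$, and Corollary~\ref{ite0g} is invoked for boundary points. To repair your argument you would essentially have to re-derive this pointwise-neighborhood structure, at which point you have the paper's proof.
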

	\begin{proof}
	First,  we prove \eqref{1gnorm} by Schur's test. Define
		\begin{align*}
			\widetilde{P}_1=\left\{\bm k\in P_1:\ \lg\cap\Omega_{\bm k}^1\cap Q_0\ne\emptyset\right\}\subset P_1.
		\end{align*}
		From the definition of $1$-good set (cf. \eqref{1gdef}), we have $\widetilde{P}_1\cap Q_1=\emptyset$. For every $\bm w\in\lg$, define its block in $\lg$:
		\begin{align}\label{neighbor1}
			O(\bm w)=\left\{\begin{array}{ll}
				\lg_{\frac{1}{2}N_1}(\bm w)\cap \lg, & \text{if $\bm w\notin\bigcup_{\bm k\in\widetilde{P}_1}2\Omega_{\bm k}^{1}$,}\\
				\tilde{\Omega}_{\bm k}^1, & \text{if $\bm w\in2\Omega_{\bm k}^{1}$ for some $\bm k\in\widetilde{P}_1\setminus Q_1$.}
			\end{array}\right.
		\end{align}
		Since $\lg$ is $1$-good, $\boT_{O(\bm x)}^{-1}$ exists for each $\bm x\in\lg$. Hence we can define $\boL$ and $\boK$ as
		\begin{align}\label{defL1}
			\boL(\bm x,\bm y)=\left\{\begin{array}{ll}
				\boT_{O(\bm x)}^{-1}(\bm x,\bm y), & \text{for $\bm x\in\lg$ and $\bm y\in O(\bm x)$,}\\
				0, & \text{else,}
			\end{array}\right.
		\end{align}
		and
		\begin{align}\label{defK1}
				\boK(\bm x,\bm y)=\left\{\begin{array}{ll}
				\sum\limits_{\bm z\in O(\bm x)}\boL(\bm x,\bm z)\boW(\bm z,\bm y), & \text{for $\bm x\in\lg$ and $\bm y\in\lg\setminus O(\bm x)$,}\\
				0, & \text{else}.
			\end{array}\right.
		\end{align}
		Direct computations shows
		\begin{align*}
			\boL \boT_{\lg}=\boI_{\lg}+\ep\boK.
		\end{align*}
			Next, we will prove the invertibility of $\boI_{\lg}+\ep\boK$ through the following steps: (a) Schur's test yields $\|\boK\|\le D\left(\frac{5}{16}\alpha\right)<+\infty$, where $D(\et)$ is defined in \eqref{Det}; (b) Neumann series expansion is valid for $\|\ep\boK\|<1$. This ensures the existence of $\boT_{\lg}^{-1}=(\boI_{\lg}+\ep\boK)^{-1}\boL$.
			
			Now, we need to estimate $\sup_{\bm x\in\lg}\sum_{\bm y\in\lg}|\boK(\bm x,\bm y)|$ and $\sup_{\bm y\in\lg}\sum_{\bm x\in\lg}|\boK(\bm x,\bm y)|$ respectively for controlling $\|\boK\|$. We estimate $|\boK(\bm x,\bm y)|$  first. We divide the discussion into three cases:
		\begin{itemize}
			\item[\textbf{Case 1}:] $\bm x\notin\bigcup_{\bm k\in\widetilde{P}_1}2\Omega_{\bm k}^{1}$ and $\bm y\in \lg\setminus O(\bm x)$. In this case,  we have $O(\bm x)$ is $0$-good. From \eqref{wphi}, \eqref{0gl2} and \eqref{0gde}, we have
			\begin{align}
				\nonumber|\boK(\bm x,\bm y)|&\le\sum_{\bm z\in O(\bm x)}|\boT_{O(\bm x)}^{-1}(\bm x,\bm z)|\cdot|\boW(\bm z,\bm y)|\\
				\label{1K12}&\le ({\rm I})+({\rm II}),
			\end{align}
						where
			\begin{align*}
				({\rm I})=	\sum_{\bm z\in O(\bm x)\setminus\{\bm x\}}e^{-\frac{3}{4}\alpha\log^{\rho}(1+\|\bm x-\bm z\|)}\cdot e^{-\alpha\log^{\rho}(1+\|\bm z-\bm y\|)}
			\end{align*}
			and
			\begin{align*}
				({\rm II})=(2\kappa_1^{-1}\delta_0^{-2})\cdot e^{-\alpha\log^{\rho}(1+\|\bm x-\bm y\|)}.
			\end{align*}
			For $({\rm I})$, by \eqref{quaeq} and $\bm y\notin O(\bm x)$ ($\|\bm x-\bm y\|\ge\frac{1}{2}N_1\gg1$), we get
			\begin{align}
				\nonumber({\rm I})&=\sum_{\bm z\in O(\bm x)\setminus\{\bm x\}}e^{-\frac{3}{4}\alpha\log^{\rho}(1+\|\bm x-\bm z\|)}\cdot e^{-\alpha\log^{\rho}(1+\|\bm z-\bm y\|)}\\
				\nonumber&\le  e^{-\frac{3}{4}\alpha\log^{\rho}(1+\|\bm x-\bm y\|)+\frac{3}{4}\alpha\cdot C(\rho)\log^{\rho}2}\sum_{\bm z\in O(\bm x)\setminus\{\bm x\}}e^{-\frac{\alpha}{10}\log^{\rho}(1+\|\bm z-\bm y\|)}\\
				\nonumber&\le D\left(\frac{\alpha}{10}\right)\cdot e^{-\frac{3}{4}\alpha\log^{\rho}(1+\|\bm x-\bm y\|)+\frac{3}{4}\alpha\cdot C(\rho)\log^{\rho}2}\\
			\label{1K1}	&\le \frac{1}{2}e^{-\frac{5}{8}\alpha\log^{\rho}(1+\|\bm x-\bm y\|)},
			\end{align}
			where $D\left(\frac{\alpha}{10}\right)$ is defined in \eqref{Det}. 		For $({\rm II})$, since $\|\bm x-\bm y\|\ge\frac{1}{2}N_1\gg1$, \eqref{indpa} and $1<\rho'<\rho$, we obtain
			\begin{align}
			\label{1K2}({\rm II})&= 2\kappa_1^{-1}\delta_0^{-2}e^{-\alpha\log^{\rho}\left(1+\|\bm x-\bm y\|\right)}\le \frac{1}{2}e^{-\frac{5}{8}\alpha\log^{\rho}\left(1+\|\bm x-\bm y\|\right)}.
			\end{align}
			Therefore, from \eqref{1K12}, \eqref{1K1} and \eqref{1K2}, we have
			\begin{align}\label{Kle11}
			|\boK(\bm x,\bm y)|\le e^{-\frac{5}{8}\alpha\log^{\rho}\left(1+\|\bm x-\bm y\|\right)}.
			\end{align}
			
			\item[\textbf{Case 2}:] $\bm x\in 2\Omega_{\bm k}^{1}$ for some $\bm k\in\widetilde{P}_1\setminus Q_1$ and $\bm y\in \lg\setminus O(\bm x)$. In this case, we define $X:=\lg_{\frac{\tz_1}{9}}(\bm k)\subset O(\bm x)$. Hence
					\begin{align}
				\nonumber|\boK(\bm x,\bm y)|&\le\sum_{\bm z\in O(\bm x)}|\boT_{O(\bm x)}^{-1}(\bm x,\bm z)|\cdot|\boW(\bm z,\bm y)|\\
				\label{1K34}&\le ({\rm III})+({\rm IV}),
			\end{align}
			where
			\begin{align*}
				({\rm III})=\sum_{\bm z\in O(\bm x)\setminus X}|\boT_{O(\bm x)}^{-1}(\bm x,\bm z)|\cdot|\boW(\bm z,\bm y)|
			\end{align*}
			and
			\begin{align*}
				({\rm IV})=\sum_{\bm z\in X}|\boT_{O(\bm x)}^{-1}(\bm x,\bm z)|\cdot|\boW(\bm z,\bm y)|.
			\end{align*}
					For $({\rm III})$, since \eqref{wphi}, \eqref{quaeq}, $\bm z\in O(\bm x)\setminus X$ ($\|\bm z-\bm x\|\ge\frac{\tz_1}{10}$), $\bm y\notin O(\bm x)$ ($\|\bm x-\bm y\|\ge\frac{\tz_1}{3}$) and \eqref{tg1a}, we have
			\begin{align}
				\nonumber({\rm III})=&\ \ \sum_{\bm z\in O(\bm x)\setminus X}|\boT_{O(\bm x)}^{-1}(\bm x,\bm z)|\cdot|\boW(\bm z,\bm y)|\\
				\nonumber\le &\ \ \sum_{\bm z\in O(\bm x)\setminus X} e^{-\alpha'_0\log^{\rho}(1+\|\bm x-\bm z\|)}\cdot e^{-\alpha\log^{\rho}(1+\|\bm z-\bm y\|)}\\
				\nonumber\le &\ \ e^{-\alpha'_0\log^{\rho}(1+\|\bm x-\bm y\|)+\alpha'_0\cdot C(\rho)\log^{\rho}2}\sum_{\bm z\in O(\bm x)\setminus X}e^{-\frac{1}{10}\alpha\log^{\rho}(1+\|\bm z-\bm y\|)}\\
					\nonumber\le &\ \  D\left(\frac{\alpha}{10}\right)\cdot e^{-\alpha'_0\log^{\rho}(1+\|\bm x-\bm y\|)+\alpha'_0\cdot C(\rho)\log^{\rho}2}\\
				\label{1K3}	\le&\ \  \frac{1}{2}e^{-\frac{5}{8}\alpha'_0\log^{\rho}(1+\|\bm x-\bm y\|)},
			\end{align}
			where $D\left(\frac{\alpha}{10}\right)$ is defined in \eqref{Det}.  For $({\rm IV})$, from \eqref{wphi}, \eqref{to1-1} and $\bm k\notin Q_1$, we obtain
			\begin{align}
				\label{51202}\sum_{\bm z\in X}|\boT_{O(\bm x)}^{-1}(\bm x,\bm z)|\cdot|\boW(\bm z,\bm y)|\le \delta_0^{-2}\delta_1^{-2}\sum_{\bm z\in X} e^{-\alpha\log^{\rho}(1+\|\bm z-\bm y\|)}.
			\end{align}
			By $\bm z\in X$, $\bm y\in\lg\setminus O(\bm x)$ ($\|\bm y-\bm x\|\ge\frac{\tz_1}{3}\ge2\|\bm x-\bm z\|$) and \eqref{exl}, we get
			\begin{align}
				\nonumber&\ \ \log^{\rho}(1+\|\bm y-\bm z\|)\\
				\nonumber\ge&\ \ 	\log^{\rho}(1+\|\bm y-\bm x\|-\|\bm x-\bm z\|)\\
				\nonumber	\ge&\ \ \left(1-2\rho\frac{\|\bm x-\bm z\|}{(1+\|\bm y-\bm x\|)\log(1+\|\bm y-\bm x\|)}\right)\log^{\rho}(1+\|\bm y-\bm x\|)\\
				\label{51203}\ge&\ \ \left(1-\frac{1}{\log N_1}\right)\log^{\rho}(1+\|\bm y-\bm x\|).
			\end{align}
			Hence, combining \eqref{51202}, \eqref{51203},  $\|\bm x-\bm y\|\ge\frac{\tz_1}{3}\gg1$, \eqref{indpa} and $1<\rho'<\rho$ gives
			\begin{align}
				\nonumber({\rm IV})=&\ \ \sum_{\bm z\in X}|\boT_{O(\bm x)}^{-1}(\bm x,\bm z)|\cdot|\boW(\bm z,\bm y)|\\
				\nonumber\le&\ \  \delta_0^{-2}\delta_1^{-2}\cdot e^{-\frac{3}{4}\alpha\log^{\rho}(1+\|\bm y-\bm x\|)}\cdot\sum_{\bm z\in X}e^{-\frac{1}{10}\alpha\log^{\rho}(1+\|\bm z-\bm y\|)}\\
			\nonumber	\le&\ \  \delta_0^{-2}\delta_1^{-2}\cdot D\left(\frac{\alpha}{10}\right)\cdot e^{-\frac{3}{4}\alpha\log^{\rho}(1+\|\bm y-\bm x\|)}\\
				\label{1K4}\le &\ \ \frac{1}{2}e^{-\frac{5}{8}\alpha'_0\log^{\rho}(1+\|\bm x-\bm y\|)},
			\end{align}
			where $D\left(\frac{\alpha}{10}\right)$ is defined in \eqref{Det}.	From \eqref{1K34}, \eqref{1K3} and \eqref{1K4}, we have
			\begin{align}\label{Kle12}
			|\boK(\bm x,\bm y)|\le e^{-\frac{5}{8}\alpha'_0\log^{\rho}(1+\|\bm x-\bm y\|)}.
			\end{align}
			
			\item[\textbf{Case 3}:] $\bm x\in\lg$ and $\bm y\in O(\bm x)$. From \eqref{defK1}, we get
			\begin{align}\label{Kle13}
				|\boK(\bm x,\bm y)|=0\le e^{-\frac{5}{8}\alpha'_0\log^{\rho}(1+\|\bm x-\bm y\|)}.
			\end{align}
		\end{itemize}
	In summary, we obtain
	\begin{align*}
			|\boK(\bm x,\bm y)|\le e^{-\frac{5}{8}\alpha'_0\log^{\rho}(1+\|\bm x-\bm y\|)}\text{ for all $\bm x,\bm y\in\lg$}.
	\end{align*}
	Therefore,
	\begin{align*}
		\sup_{\bm x\in\lg}\sum_{\bm y\in\lg}|\boK(\bm x,\bm y)|&\le D\left(\frac{5}{8}\alpha'_0\right)<+\infty,\\
		\sup_{\bm y\in\lg}\sum_{\bm x\in\lg}|\boK(\bm x,\bm y)|&\le D\left(\frac{5}{8}\alpha'_0\right)<+\infty,
	\end{align*}
	where $D\left(\frac{5}{8}\alpha'_0\right)$ is defined in \eqref{Det}. By Schur's test, we can get
	\begin{align*}
		\|\boK\|\le D\left(\frac{5}{8}\alpha'_0\right)\le D\left(\frac{5}{16}\alpha\right)<+\infty.
	\end{align*}
	From $\ep\ll1$, we have $\boI_{\lg}+\ep\boK$ is invertible and
	\begin{align}\label{I+eK-1}
		\|(\boI_{\lg}+\ep\boK)^{-1}\|\le 2.
	\end{align}
	At this time,
	\begin{align}\label{tlg-1}
	\boT_{\lg}^{-1}=(\boI_{\lg}+\ep\boK)^{-1}\boL
\end{align}
	 exists. Then, we estimate $\|\boL\|$ in order to estimate $\|\boT_{\lg}^{-1}\|$ by using similar methods.
	
	We deal with $\sup_{\bm x\in\lg}\sum_{\bm y\in\lg}|\boL(\bm x,\bm y)|$ first. We divide the discussion into two cases:
	\begin{itemize}
		\item[\textbf{Case 1}:] $\bm x\in\lg\setminus\bigcup_{\bm k\in\widetilde{P}_1}2\Omega_{\bm k}^{1}$. From \eqref{0gl2} and \eqref{defK1}, we have
			\begin{align}
			\nonumber\sum_{\bm y\in\lg}|\boL(\bm x,\bm y)|&=\sum_{\bm y\in O(\bm x)}|\boL(\bm x,\bm y)|\le (\# O(\bm x))\cdot\|\boT_{O(\bm x)}^{-1}\|\\
			\label{1L11}&<(N_1+1)^d\cdot2\kappa_1^{-1}\delta_0^{-2}<\delta_0^{-\frac{5}{2}}.
		\end{align}
		\item[\textbf{Case 2}:] $\bm x\in 2\Omega_{\bm k}^{1}$ for some $\bm k\in\widetilde{P}_1\setminus Q_1$. In this case, by \eqref{tk0} and \eqref{defK1}, we get
			\begin{align}
		\nonumber	\sum_{\bm y\in\lg}|\boL(\bm x,\bm y)|&=\sum_{\bm y\in O(\bm x)}|\boL(\bm x,\bm y)|\le (\# (O(\bm x)))\cdot\|\boT_{O(\bm x)}^{-1}\|\\
		\label{1L12}	&<\delta_0^{-3}\|\theta+\bm k\cdot\bm\omega-\theta_1\|_{\T}^{-1}\cdot\|\theta+\bm k\cdot\bm\omega+\theta_1\|_{\T}^{-1}.
		\end{align}
	\end{itemize}
	Since $\theta,\theta_1\in \D_R$, we obtain
	\begin{align*}
		\|\theta+\bm k\cdot\bm\omega\pm\theta_1\|\le\sqrt{4R^2+\frac{1}{4}}
	\end{align*}
	and
	\begin{align}\label{1L13}
		\delta_0^{-3}\|\theta+\bm k\cdot\bm\omega-\theta_1\|_{\T}^{-1}\cdot\|\theta+\bm k\cdot\bm\omega+\theta_1\|_{\T}^{-1}>\delta_0^{-\frac{5}{2}}.
	\end{align}
	Combining \eqref{1L11}--\eqref{1L13}, we have
		\begin{align}
		\nonumber&\ \ \sup_{\bm x\in\lg}\sum_{\bm y\in\lg}|\boL(\bm x,\bm y)|\\
	\label{L1}	\le&\ \  \delta_0^{-3}\sup_{\{\bm k\in P_1:\ \tO_{\bm k}^1\subset\lg\}}\left(\|\theta+\bm k\cdot\bm\omega-\theta_1\|_{\T}^{-1}\cdot\|\theta+\bm k\cdot\bm\omega+\theta_1\|_{\T}^{-1}\right).
	\end{align}
	
	Now, we estimate $\sup_{\bm y\in\lg}\sum_{\bm x\in\lg}|\boL(\bm x,\bm y)|$. We again divide the discussion into two cases:
	\begin{itemize}
		\item[\textbf{Case 1}:] $\bm y\in\lg\setminus\bigcup_{\bm k\in\widetilde{P}_1}\tO_{\bm k}^{1}$. In this case, 
		\begin{align*}
		\bm y\in O(\bm x)\text{ iff }\bm x\in \lg\cap\lg_{\frac{1}{2}N_1}(\bm y).
	\end{align*}
		 At this time, if $\bm x\in \lg\cap\lg_{\frac{1}{2}N_1}(\bm y)$, then $\bm x\in\lg\setminus\bigcup_{\bm k\in\widetilde{P}_1}2\Omega_{\bm k}^{1}$ ($O(\bm x)=\lg\cap\lg_{\frac{1}{2}N_1}(\bm x)$ is $0$-good). Hence by \eqref{indpa}, \eqref{0gl2}, we get
		\begin{align}
		\nonumber	\sum_{\bm x\in\lg}|\boL(\bm x,\bm y)|&=\sum_{\bm x\in \lg\cap\lg_{\frac{1}{2}N_1}(\bm y)}|\boT_{O(\bm x)}^{-1}(\bm x,\bm y)|\\
		\label{1L1}	&\le (\# (\lg\cap\lg_{\frac{1}{2}N_1}(\bm y)))\cdot2\kappa_1^{-1}\delta_0^{-2}<\delta_0^{-\frac{5}{2}}.
		\end{align}
		\item[\textbf{Case 2}:] $\bm y\in \tO_{\bm k}^{1}$ for some $\bm k\in\widetilde{P}_1\setminus Q_1$. In this case, 
		\begin{align*}
		\bm y\in O(\bm x)\text{ iff }\bm x\in 2\Omega_{\bm k}^{1}\bigcup\left(\lg\cap\lg_{\frac{1}{2}N_1}(\bm y)\right).
	\end{align*}
		 Therefore, from \eqref{indpa}, \eqref{0gl2} and \eqref{tk0}, we obtain
		\begin{align}
\nonumber\sum_{\bm x\in\lg}	|\boL(\bm x,\bm y)|&\le \sum_{\bm x\in2\Omega_{\bm k}^1}|\boT_{O(\bm x)}^{-1}(\bm x,\bm y)|+\sum_{\bm x\in \left(\lg\cap\lg_{\frac{1}{2}N_1}(\bm y)\right)\setminus(2\Omega_{\bm k}^1)}|\boT_{O(\bm x)}^{-1}(\bm x,\bm y)|\\
\nonumber&\le (\#(2\Omega_{\bm k}^{1}))\delta_0^{-2}\|\theta+\bm k\cdot\bm\omega-\theta_1\|_{\T}^{-1}\cdot\|\theta+\bm k\cdot\bm\omega+\theta_1\|_{\T}^{-1}\\
\nonumber&\ \ +(\# (\lg\cap\lg_{\frac{1}{2}N_1}(\bm y)))\cdot2\kappa_1^{-1}\delta_0^{-2}\\
\label{1L2}&\le \delta_0^{-3}\|\theta+\bm k\cdot\bm\omega-\theta_1\|_{\T}^{-1}\cdot\|\theta+\bm k\cdot\bm\omega+\theta_1\|_{\T}^{-1}.
\end{align}
	\end{itemize}
	Combining \eqref{1L13}, \eqref{1L1} and \eqref{1L2} gives
	\begin{align}
	\nonumber	&\ \ \sup_{\bm y\in\lg}\sum_{\bm x\in\lg}|\boL(\bm x,\bm y)|\\
		\label{L2}\le&\ \  \delta_0^{-3}\sup_{\{\bm k\in P_1:\ \tO_{\bm k}^1\subset\lg\}}\left(\|\theta+\bm k\cdot\bm\omega-\theta_1\|_{\T}^{-1}\cdot\|\theta+\bm k\cdot\bm\omega+\theta_1\|_{\T}^{-1}\right).
	\end{align}
	Then by \eqref{L1}, \eqref{L2} and Schur's test, we know
	\begin{align}\label{Lnorm1}
		\|\boL\|\le\delta_0^{-3}\sup_{\{\bm k\in P_1:\ \tO_{\bm k}^1\subset\lg\}}\left(\|\theta+\bm k\cdot\bm\omega-\theta_1\|_{\T}^{-1}\cdot\|\theta+\bm k\cdot\bm\omega+\theta_1\|_{\T}^{-1}\right).
	\end{align}
	Finally, from \eqref{1gdef}, \eqref{I+eK-1}, \eqref{tlg-1} and \eqref{Lnorm1}, we prove
	\begin{align*}
		\|\boT_{\lg}^{-1}\|\le 2\delta_0^{-3}\sup_{\{\bm k\in P_1:\ \tO_{\bm k}^1\subset\lg\}}\left(\|\theta+\bm k\cdot\bm\omega-\theta_1\|_{\T}^{-1}\cdot\|\theta+\bm k\cdot\bm\omega+\theta_1\|_{\T}^{-1}\right)<\delta_1^{-3}.
	\end{align*}
	This completes the proof of \eqref{1gnorm}.
	
	We begin the proof of the off-diagonal decay estimate in \eqref{1gdecay}. The argument requires the following lemma:
	\begin{lem}\label{ite1}
		Assuming $\bm u\in 2\Omega_{\bm k}^{1}$ for some $\bm k\in\widetilde{P}_1\setminus Q_1$ and $\bm v\in\lg\setminus O(\bm u)=\lg\setminus\tO_{\bm k}^{1}$, then there exists some $\bm u'\in \lg\setminus O(\bm u)$ such that
		\begin{align}\label{ite1eq}
			|\boT_{\lg}^{-1}(\bm u,\bm v)|\le e^{-\alpha''_0\log^{\rho}(1+\|\bm u-\bm u'\|)}\cdot	|\boT_{\lg}^{-1}(\bm u',\bm v)|,
		\end{align}
		where $\alpha''_0=\frac{3}{4}\alpha\left(1-\frac{30\times 10^{5\rho'}}{\alpha\log^{\rho-\rho'}N_1}\right)$.
	\end{lem}
		\begin{proof}
		For a detailed proof, we refer to the Appendix \ref{app}.
	\end{proof}
	
	In the following, we always assume that $\|\bm x-\bm y\|\ge10\tz_1$. The proof of \eqref{1gdecay} proceeds via an iterative application of \eqref{ite0g0} and \eqref{ite1eq}. For $\bm z\in\lg$, from Corollary \ref{ite0g} and Lemma \ref{ite1}, there exists $\bm z'\in\lg\setminus O(\bm z)$ (cf. \eqref{neighbor1}) such that
	\begin{align}\label{1iteeq1}
		|\boT_{\lg}^{-1}(\bm z,\bm y)|\le e^{-\alpha''_0\log^{\rho}(1+\|\bm z-\bm z'\|)}\cdot|\boT_{\lg}^{-1}(\bm z',\bm y)|,
	\end{align}
	where $\alpha''_0=\frac{3}{4}\alpha\left(1-\frac{30\times 10^{5\rho'}}{\alpha\log^{\rho-\rho'}N_1}\right)$.
	
	Next, iterating \eqref{1iteeq1} for $m\ge2$ steps leads to the following: there exist $\bm x_1,\bm x_2,\cdots,\bm x_m\in\lg$ such that
	\begin{align}\label{1iteeq2}
		|\boT_{\lg}^{-1}(\bm x,\bm y)|\le e^{-\alpha''_0\left(\sum_{k=0}^{m-1}\log^{\rho}(1+\|\bm x_{k+1}-\bm x_{k}\|)\right)}	|\boT_{\lg}^{-1}(\bm x_m,\bm y)|,
	\end{align}
	where $\bm x_0=\bm x$ and $\bm x_{k+1}=\bm x'$, $k=0,1,\cdots,m-1$.  We define $n\ge1$ to be the smallest integer so that $\bm x_n\in O(\bm y)$ (cf. \eqref{neighbor1}). We then have $\bm x_i\notin O(\bm y)$ for $i=0,1,\cdots,n-1$. We divide the discussion into two cases:
	\begin{itemize}
		\item[\textbf{Case 1}:] $n\le \frac{\alpha\log^{\rho}(1+\|\bm x-\bm y\|)+3\times 10^{5\rho'}\log(N_1+1)}{\frac{\alpha}{2}\log^{\rho}\left(\frac{N_1}{2}+1\right)}$. Using Lemma \ref{qua} and the triangle inequality implies
		\begin{align}
			\nonumber&\ \ \sum_{k=0}^{n-1}\log^{\rho}(1+\|\bm x_{k+1}-\bm x_{k}\|)\\
			\nonumber\ge&\ \ \log^{\rho}\left(1+\sum_{k=0}^{n-1}\|\bm x_{k+1}-\bm x_k\|\right)-C(\rho)\log^{\rho}n\\
			\label{51802}\ge&\ \  \log^{\rho}(1+\|\bm x_n-\bm x\|)-C(\rho)\log^{\rho}n.
		\end{align}
		Since $\bm x_n\in O(\bm y)$ and $\diam(O(\bm y))\le \tz_1$, we have
		\begin{align*}
			\|\bm x-\bm x_n\|\ge\|\bm x-\bm y\|-\|\bm x_n-\bm y\|\l\ge\|\bm x-\bm y\|-\tz_1.
		\end{align*}
		By $\|\bm x-\bm y\|\ge10\tz_1\gg N_1\gg1$ and $1<\rho'<\rho<\rho'+1$, applying Lemma \ref{exl} with $x=\|\bm x-\bm y\|$ and $y=\tz_1$ gives
		\begin{align}
		\nonumber	&\ \ \log^{\rho}(1+\|\bm x_n-\bm x\|)\\
		\nonumber\ge&\ \ \log^{\rho}(1+\|\bm x-\bm y\|-\tz_1)\\
			\nonumber\ge&\ \  \left(1-2\rho\frac{\tz_1}{(1+\|\bm x-\bm y\|)\log(1+\|\bm x-\bm y\|)}\right)\log^{\rho}(1+\|\bm x-\bm y\|)\\
			\label{51803}\ge&\ \ \left(1-\frac{10^{5\rho'}}{\alpha\log^{\rho-\rho'} N_1}\right)\log^{\rho}(1+\|\bm x-\bm y\|).
		\end{align}
		Under the same conditions, we have
		\begin{align}
		\nonumber	C(\rho)\log^{\rho}n&\le C(\rho)\log^{\rho}(2\alpha\log^{\rho}(1+\|\bm x-\bm y\|))\le \log(1+\|\bm x-\bm y\|)\\
		\label{51804}	&\le \frac{ 10^{5\rho'}\log^{\rho}(1+\|\bm x-\bm y\|)}{\alpha\log^{\rho-\rho'}N_1}.
		\end{align}
Combining \eqref{51802}, \eqref{51803} and \eqref{51804} shows
\begin{align}\label{51805}
	\sum_{k=0}^{n-1}\log^{\rho}(1+\|\bm x_{k+1}-\bm x_{k}\|)\ge \left(1-\frac{2\times 10^{5\rho'}}{\alpha\log^{\rho-\rho'} N_1}\right)\log^{\rho}(1+\|\bm x-\bm y\|).
\end{align}
From $\|\bm x-\bm y\|\ge10\tz_1\gg N_1\gg1$, \eqref{indpa}, \eqref{1gnorm}, \eqref{1iteeq2} and  \eqref{51805}, we have
\begin{align*}
		|\boT_{\lg}^{-1}(\bm x,\bm y)|&\le e^{-\alpha''_0 \left(1-\frac{2\times 10^{5\rho'}}{\alpha\log^{\rho-\rho'} N_1}\right)\log^{\rho}(1+\|\bm x-\bm y\|)}	\delta_1^{-3}\\
		&\le e^{-\alpha_1\log^{\rho}(1+\|\bm x-\bm y\|)},
\end{align*}
		where $\alpha_1=\alpha''_0-\frac{15\times 10^{5\rho'}}{\log^{\rho-\rho'} N_1}=\frac{3}{4}\alpha\left(1-\frac{50\times 10^{5\rho'}}{\alpha\log^{\rho-\rho'}N_1}\right)$.

		\item[\textbf{Case 2}:] $n>\frac{\alpha\log^{\rho}(1+\|\bm x-\bm y\|)+3\times 10^{5\rho'}\log(N_1+1)}{\frac{\alpha}{2}\log^{\rho}\left(\frac{N_1}{2}+1\right)}$. In this case, from \eqref{1gnorm}, \eqref{1iteeq2}, $\|\bm x_{k+1}-\bm x_k\|\ge\frac{N_1}{2}$ (since $\bm x_{k+1}\in\lg\setminus O(\bm x_k)$), $k=0,1\cdots,n-1$ and
		\begin{align*}
		\alpha''_0=\frac{3}{4}\alpha\left(1-\frac{30\times 10^{5\rho'}}{\alpha\log^{\rho-\rho'}N_1}\right)>\frac{\alpha}{2},
	\end{align*}
		it follows that
		\begin{align*}
				|\boT_{\lg}^{-1}(\bm x,\bm y)|&\le e^{-\alpha''_0\left(\sum_{k=0}^{n-1}\log^{\rho}(1+\|\bm x_{k+1}-\bm x_{k}\|)\right)}	|\boT_{\lg}^{-1}(\bm x_n,\bm y)|\\
				&\le e^{-n\cdot\alpha''_0\log^{\rho}\left(\frac{N_1}{2}+1\right)}e^{3\times 10^{5\rho'}\log^{\rho}(N_1+1)}\\
				&\le e^{-\alpha\log^{\rho}(1+\|\bm x-\bm y\|)}\le e^{-\alpha_1\log^{\rho}(1+\|\bm x-\bm y\|)}.
		\end{align*}
	\end{itemize}
	This concludes the proof of Lemma \ref{1g}.
	\end{proof}
	
	\subsection{The proof of Theorem \ref{ind}: (from $\mathscr{P}_s$ to $\mathscr{P}_{s+1}$)}
	We have completed the proof of $\mathscr{P}_1$ in Subsection \ref{vo1}. Now, assuming the validity of $\mathscr{P}_t$ $(1\le t\le s)$, the proof of Theorem \ref{ind} reduces to establishing.

	Recall that
	\begin{align*}
		Q_{s}^{\pm}&=\left\{\bm k\in P_{s}:\ \|\theta+\bm k\cdot\bm\omega\pm\theta_{s}\|_{\T}<\delta_{s}\right\},\ Q_{s}=Q_{s}^+\cup Q_{s}^-,\\
		\tilde{Q}_{s}^{\pm}&=\left\{\bm k\in P_{s}:\ \|\theta+\bm k\cdot\bm\omega\pm\theta_{s}\|_{\T}<\delta_{s}^{\frac{1}{100}}\right\},\ \tilde{Q}_{s}=\tilde{Q}_{s}^+\cup \tilde{Q}_{s}^-.
	\end{align*}
	
	We distinguish the verification into three steps.
	
	\begin{itemize}
		\item[\textbf{Step 1}]: \textbf{Estimates of $\|\mathcal{T}_{\tilde{\Omega}_{\bm k}^{s+1}}^{-1}\|$}.
	\end{itemize}
	In this step, we apply the resolvent indentity and Rouch\'e's theorem to construct $\theta_{s+1}=\theta_{s+1}(E)$ such that
	\begin{align}
		\label{s+1T-1}\|\mathcal{T}_{\tilde{\Omega}_{\bm k}^{s+1}}^{-1}\|&<\delta_{s}^{-2}\|\theta+\bm k\cdot\bm\omega-\theta_{s+1}\|_{\T}^{-1}\cdot\|\theta+\bm k\cdot\bm\omega+\theta_{s+1}\|_{\T}^{-1}.
	\end{align}
	We again divide the discussion into two cases.
	
	\begin{itemize}
		\item[\textbf{Case 1}]: The case $(\bm C1)_s$ occurs, i.e.,
	\end{itemize}
	\begin{align}\label{dqs1-}
		\dist(\tilde{Q}_s^-,Q_s^+)>100N_{s+1}^{10}.
	\end{align}
	\begin{rem}
		We can prove similar to Remark \ref{dsq1} that
		\begin{align*}
			\dist(\tilde{Q}_s^-,Q_s^+)=\dist(\tilde{Q}_s^+,Q_s^-).
		\end{align*}
		Thus \eqref{dqs1-} also implies that
		\begin{align}\label{dqs1+}
			\dist(\tilde{Q}_s^+,Q_s^-)>100N_{s+1}^{10}.
		\end{align}
	\end{rem}
	
	By \eqref{fs} and the definitions  of $Q_s^{\pm}$ and $\tilde{Q}_s^{\pm}$, we obtain
	\begin{align}
		\label{qspm}Q_s^{\pm}&=\left\{\bm k\in\Z^d+\frac{1}{2}\sum_{i=0}^{s-1}\bm l_i:\ \|\theta+\bm k\cdot\bm\omega\pm\theta_{s}\|_{\T}<\delta_{s}\right\},\\
		\nonumber\tilde{Q}_{s}^{\pm}&=\left\{\bm k\in\Z^d+\frac{1}{2}\sum_{i=0}^{s-1}\bm l_i:\ \|\theta+\bm k\cdot\bm\omega\pm\theta_{s}\|_{\T}<\delta_{s}^{\frac{1}{100}}\right\}.
	\end{align}
	
	Assuming \eqref{dqs1-} holds true, we define
	\begin{align}\label{ps+1}
		P_{s+1}=Q_s,\ \bm l_s=\bm 0.
	\end{align}
	By \eqref{qspm}, we have
	\begin{align}\label{ps+11}
		P_{s+1}\subset\left\{\bm k\in\Z^d+\frac{1}{2}\sum_{i=0}^{s}\bm l_i:\ \min_{\sigma=\pm1}(\|\theta+\bm k\cdot\bm\omega+\sigma\theta_{s}\|_{\T})<\delta_{s}\right\},
	\end{align}
	which proves \eqref{as1} in the case $(\bm C1)_{s+1}$. Thus from \eqref{dqs1+}, we obtain for $\bm k,\bm k'\in P_{s+1}$ with $\bm k\ne \bm k'$,
	\begin{align}\label{dk}
		\|\bm k-\bm k'\|\ge\min\left(100N_{s+1}^{10},\left(\frac{\g}{2\delta_s}\right)^{\frac{1}{\tau}}\right)\ge100N_{s+1}^{10}.
	\end{align}
	In the following, we associate each $\bm k\in P_{s+1}$ with blocks $\Omega_{\bm k}^{s+1}$, $2\Omega_{\bm k}^{s+1}$ and $\tilde{\Omega}_{\bm k}^{s+1}$ so that
	\begin{align*}
		&\lg_{N_{s+1}}(\bm k)\subset\Omega_{\bm k}^{s+1}\subset\lg_{N_{s+1}+50N_s^{100}}(\bm k),\\
		&\lg_{2N_{s+1}}(\bm k)\subset2\Omega_{\bm k}^{s+1}\subset\lg_{2N_{s+1}+50N_s^{100}}(\bm k),\\
		&\lg_{N_{s+1}^{10}}(\bm k)\subset\tilde{\Omega}_{\bm k}^{s+1}\subset\lg_{N_{s+1}^{10}+50N_s^{100}}(\bm k),
	\end{align*}
	and
	\begin{align}\label{sb}
		\left\{\begin{array}{l}
			\Omega_{\bm k}^{s+1}\cap\tilde{\Omega}_{\bm k'}^{s'}\ne\emptyset\ (s'<s+1)\Rightarrow\tilde{\Omega}_{\bm k'}^{s'}\subset\Omega_{\bm k}^{s+1},\\
			2\Omega_{\bm k}^{s+1}\cap\tilde{\Omega}_{\bm k'}^{s'}\ne\emptyset\ (s'<s+1)\Rightarrow\tilde{\Omega}_{\bm k'}^{s'}\subset2\Omega_{\bm k}^{s+1},\\
			\tilde{\Omega}_{\bm k}^{s+1}\cap\tilde{\Omega}_{\bm k'}^{s'}\ne\emptyset\ (s'<s+1)\Rightarrow\tilde{\Omega}_{\bm k'}^{s'}\subset\tilde{\Omega}_{\bm k}^{s+1},\\
			\dist(\tilde{\Omega}_{\bm k}^{s+1},\tilde{\Omega}_{\bm k'}^{s+1})>10\tilde{\zeta}_{s+1}\ \text{for}\ \bm k\ne \bm k'\in P_{s+1}.
		\end{array}\right.
	\end{align}
	Moreover, the translated set of $\tilde{\Omega}_{\bm k}^{s+1}$
	\begin{align}\label{trans1}
		\tilde{\Omega}_{\bm k}^{s+1}-\bm k\subset\Z^d+\frac{1}{2}\sum_{i=0}^s \bm l_i
	\end{align}
	is both independent of $\bm k\in P_{s+1}$ and symmetric about the origin. For full proofs of \eqref{sb} and \eqref{trans1}, we refer to page 23 of \cite{CSZ24a}. In summary, we have now established $(\bm a)_{s+1}$ and $(\bm b)_{s+1}$ in the case $(\bm C1)_{s}$.
	
	Now we turn to the proof of $(\bm c)_{s+1}$. For any $\bm k'\in Q_s\ (=P_{s+1})$, observe that
	\begin{align*}
		\tilde{\Omega}_{\bm k'}^s\subset\tilde{\Omega}_{\bm k'}^{s+1}.
	\end{align*}
	For each $\bm k\in P_{s+1}$, define
	\begin{align*}
		A_{\bm k}^{s+1}=A_{\bm k}^s.
	\end{align*}
	By construction,
	\begin{align*}
		A_{\bm k}^{s+1}\subset\Omega_{\bm k}^s\text{ and }\# A_{\bm k}^{s+1}=\# A_{\bm k}^s\le 2^s.
	\end{align*}
	 It remains to show that $\tilde{\Omega}_{\bm k}^{s+1}\setminus A_{\bm k}^{s+1}$ is $s$-$\good$ and the set $A_{\bm k}^{s+1}-\bm k$ is independent of $\bm k\in P_{s+1}$ and symmetrical about the origin. Detailed arguments for these claims appear on page 26 of \cite{CSZ24a}. This completes the proof of $(\bm c)_{s+1}$ in the case $(\bm C1)_s$.

	We now proceed to establish properties $(\bm d)_{s+1}$ and $(\bm f)_{s+1}$ under the assumption of case $(\bm C1)_s$. For $\bm k\in Q_s^-$, we consider
	\begin{align*}
		\mathcal{M}_{s+1}(z):=(\mathcal{T}(z))_{\tilde{\Omega}_{\bm k}^{s+1}-\bm k}=((v(z+\bm n\cdot\bm\omega)-E)\delta_{\bm n,\bm n'}+\ep \mathcal{W})_{\tilde{\Omega}_{\bm k}^{s+1}-\bm k}
	\end{align*}
	defined in
	\begin{align}\label{zs1}
		\{z\in\C:\ |z-\theta_{s}|\le\delta_s^{\frac{1}{10}}\}.
	\end{align}
	If $\bm k'\in P_s$ and $\tilde{\Omega}_{\bm k'}^s\subset(\tilde{\Omega}_{\bm k}^{s+1}\setminus A_{\bm k}^{s+1})$, then $0\ne\|\bm k'-\bm k\|\le3N_{s+1}^{10}$. Thus from \eqref{indpa},
	\begin{align*}
		\|\theta+\bm k'\cdot\bm\omega-\theta_s\|_{\T}&\ge\|(\bm k-\bm k')\cdot\bm\omega\|_{\T}-\|\theta+\bm k\cdot\bm\omega-\theta_s\|_{\T}\\
		&\ge\frac{\g}{(3N_{s+1}^{10})^\tau}-\delta_{s}>\delta_s^{\frac{1}{10^4}}.
	\end{align*}
	By \eqref{dqs1+}, we have $\bm k'\notin\tilde{Q}_s^{+}$, and thus
	\begin{align*}
		\|\theta+\bm k'\cdot\bm\omega+\theta_s\|_{\T}>\delta_s^{\frac{1}{100}}.
	\end{align*}
	From $\tilde{\Omega}_{\bm k}^{s+1}\setminus A_{\bm k}^{s+1}$ is $s$-$\good$ (cf. $(\bm c)_{s+1}$) and \eqref{tsgnorm}, we obtain
	\begin{align}
		\nonumber\|\mathcal{T}_{\tilde{\Omega}_{\bm k}^{s+1}\setminus A_{\bm k}^{s+1}}^{-1}\|	&< 2\delta_{s-1}^{-3}\times\sup\limits_{\{\bm k'\in P_s:\ \tilde{\Omega}_{\bm k'}^s\subset(\tilde{\Omega}_{\bm k}^{s+1}\setminus A_{\bm k}^{s+1})\}}(\|\theta+\bm k'\cdot\bm\omega-\theta_s\|_{\T}^{-1}\cdot\|\theta+\bm k'\cdot\bm\omega+\theta_s\|_{\T}^{-1})\\
		\label{ts0}&<\frac{1}{2}\delta_s^{-2\times\frac{1}{100}}.
	\end{align}
	One may restate as
	\begin{align*}
		\|((\mathcal{M}_{s+1}(\theta+\bm k\cdot\bm\omega))_{(\tilde{\Omega}_{\bm k}^{s+1}\setminus A_{\bm k}^{s+1})-\bm k})^{-1}\|&<\frac{1}{2}\delta_s^{-2\times\frac{1}{100}}.
	\end{align*}
	Notice that
	\begin{align}\label{qs-}
		\nonumber\|z-(\theta+\bm k\cdot\bm\omega)\|_{\T}&\le|z-\theta_s|+\|\theta+\bm k\cdot\bm\omega-\theta_s\|_{\T}\\
		&<\delta_s^{\frac{1}{10}}+\delta_s<2\delta_s^{\frac{1}{10}}.
	\end{align}
	Thus by Neumann series argument, we can show
	\begin{align}\label{375}
		\|((\mathcal{M}_{s+1}(z))_{(\tilde{\Omega}_{\bm k}^{s+1}\setminus A_{\bm k}^{s+1})-\bm k})^{-1}\|&<\delta_s^{-2\times\frac{1}{100}}.
	\end{align}
	We now apply the Schur complement lemma to derive the required estimates. Lemma \eqref{scl} asserts that the inverse $(\mathcal{M}_{s+1}(z))^{-1}$ is governed by the Schur complement associated with $((\tilde{\Omega}_{\bm k}^{s+1}\setminus A_{\bm k}^{s+1})-\bm k)$:
	\begin{align*}
		\mathcal{S}_{s+1}(z)&=(\mathcal{M}_{s+1}(z))_{A_{\bm k}^{s+1}-\bm k}-\left(\vphantom{\left((\mathcal{M}_{s+1}(z))_{(\tilde{\Omega}_{\bm k}^{s+1}\setminus A_{\bm k}^{s+1})-\bm k}\right)^{-1}}\mathcal{R}_{A_{\bm k}^{s+1}-\bm k}\mathcal{M}_{s+1}(z)\mathcal{R}_{(\tilde{\Omega}_{\bm k}^{s+1}\setminus A_{\bm k}^{s+1})-\bm k}\right.\\
		&\ \  \left.\times\left((\mathcal{M}_{s+1}(z))_{(\tilde{\Omega}_{\bm k}^{s+1}\setminus A_{\bm k}^{s+1})-\bm k}\right)^{-1} \mathcal{R}_{(\tilde{\Omega}_{\bm k}^{s+1}\setminus A_{\bm k}^{s+1})-\bm k}\mathcal{M}_{s+1}(z)\mathcal{R}_{A_{\bm k}^{s+1}-\bm k}\right).
	\end{align*}
	Our primary objective is to analyze $\det \mathcal{S}_{s+1}(z)$. Observing that
	\begin{align*}
		A_{\bm k}^{s+1}-\bm k=A_{\bm k}^s-\bm k\subset\Omega_{\bm k}^s-\bm k,
	\end{align*}
	 we decompose the restriction operator as follows
	\begin{align*}
		&\ \mathcal{R}_{A_{\bm k}^{s+1}-\bm k}\mathcal{M}_{s+1}(z)\mathcal{R}_{(\tilde{\Omega}_{\bm k}^{s+1}\setminus A_{\bm k}^{s+1})-\bm k}\\
		=&\ \mathcal{R}_{A_{\bm k}^{s}-\bm k}\mathcal{M}_{s+1}(z)\mathcal{R}_{(\tilde{\Omega}_{\bm k}^{s+1}\setminus A_{\bm k}^{s})-\bm k}\\
		=&\ \mathcal{R}_{A_{\bm k}^{s}-\bm k}\mathcal{M}_{s+1}(z)\mathcal{R}_{(\tilde{\Omega}_{\bm k}^{s}\setminus A_{\bm k}^{s})-\bm k}+\mathcal{R}_{A_{\bm k}^{s}-\bm k}\mathcal{M}_{s+1}(z)\mathcal{R}_{(\tilde{\Omega}_{\bm k}^{s+1}\setminus \tilde{\Omega}_{\bm k}^{s})-\bm k}.
	\end{align*}
	Similarly, we have
	\begin{align*}
		\mathcal{R}_{(\tilde{\Omega}_{\bm k}^{s+1}\setminus A_{\bm k}^{s+1})-\bm k}\mathcal{M}_{s+1}(z)\mathcal{R}_{A_{\bm k}^{s+1}-\bm k}=&\ \ \mathcal{R}_{(\tilde{\Omega}_{\bm k}^{s}\setminus A_{\bm k}^{s})-\bm k}\mathcal{M}_{s+1}(z)\mathcal{R}_{A_{\bm k}^{s}-\bm k}\\
		&\ \ +\mathcal{R}_{(\tilde{\Omega}_{\bm k}^{s+1}\setminus \tilde{\Omega}_{\bm k}^{s})-\bm k}\mathcal{M}_{s+1}(z)\mathcal{R}_{A_{\bm k}^{s}-\bm k}.
	\end{align*}
	Then
	\begin{align}
	\nonumber	\mathcal{S}_{s+1}(z)&=(\mathcal{M}_{s+1}(z))_{A_{\bm k}^{s}-\bm k}+\boE_s(z)-\left(\vphantom{\left((\mathcal{M}_{s+1}(z))_{(\tilde{\Omega}_{\bm k}^{s+1}\setminus A_{\bm k}^{s+1})-\bm k}\right)^{-1}}\mathcal{R}_{A_{\bm k}^{s}-\bm k}\mathcal{M}_{s+1}(z)\mathcal{R}_{(\tilde{\Omega}_{\bm k}^{s}\setminus A_{\bm k}^{s})-\bm k}\right.\\
		\label{Ss+1Ss1}&\ \ \left.\times\left((\mathcal{M}_{s+1}(z))_{(\tilde{\Omega}_{\bm k}^{s+1}\setminus A_{\bm k}^{s+1})-\bm k}\right)^{-1} \mathcal{R}_{(\tilde{\Omega}_{\bm k}^{s}\setminus A_{\bm k}^{s})-\bm k}\mathcal{M}_{s+1}(z)\mathcal{R}_{A_{\bm k}^{s}-\bm k}\right),
	\end{align}
	where
	\begin{align*}
		\boE_s(z)&=\left(\vphantom{\left((\mathcal{M}_{s+1}(z))_{(\tilde{\Omega}_{\bm k}^{s+1}\setminus A_{\bm k}^{s+1})-\bm k}\right)^{-1}}\mathcal{R}_{A_{\bm k}^{s}-\bm k}\mathcal{M}_{s+1}(z)\mathcal{R}_{(\tilde{\Omega}_{\bm k}^{s+1}\setminus \tilde{\Omega}_{\bm k}^{s})-\bm k}\right.\\
		&\ \ \ \left.\times\left((\mathcal{M}_{s+1}(z))_{(\tilde{\Omega}_{\bm k}^{s+1}\setminus A_{\bm k}^{s+1})-\bm k}\right)^{-1}\mathcal{R}_{(\tilde{\Omega}_{\bm k}^{s+1}\setminus A_{\bm k}^{s})-\bm k}\mathcal{M}_{s+1}(z)\mathcal{R}_{A_{\bm k}^{s}-\bm k}\right)\\
		&\ \ +\left(\vphantom{\left((\mathcal{M}_{s+1}(z))_{(\tilde{\Omega}_{\bm k}^{s+1}\setminus A_{\bm k}^{s+1})-\bm k}\right)^{-1}}\mathcal{R}_{A_{\bm k}^{s}-\bm k}\mathcal{M}_{s+1}(z)\mathcal{R}_{(\tilde{\Omega}_{\bm k}^{s}\setminus A_{\bm k}^{s})-\bm k}\right.\\
		&\ \ \ \left.\times\left((\mathcal{M}_{s+1}(z))_{(\tilde{\Omega}_{\bm k}^{s+1}\setminus A_{\bm k}^{s+1})-\bm k}\right)^{-1}\mathcal{R}_{(\tilde{\Omega}_{\bm k}^{s+1}\setminus \tilde{\Omega}_{\bm k}^{s})-\bm k}\mathcal{M}_{s+1}(z)\mathcal{R}_{A_{\bm k}^{s}-\bm k}\right).
	\end{align*}
	From $\dist(A_{\bm k}^{s},\p \tilde{\Omega}_{\bm k}^s)>\frac{1}{2}\tilde{\zeta}_s$, \eqref{wphi}, \eqref{indpa}, \eqref{Det} and \eqref{375}, for $\bm x,\bm y\in A_{\bm k}^{s}-\bm k$, we get
	\begin{align}\label{E_sz}
		|\boE_s(z)(\bm x,\bm y)|\le 2D\left(\frac{\alpha}{2}\right)D(\alpha)\delta_s^{-\frac{1}{50}}e^{-\frac{\alpha}{2}\log^{\rho}\left(1+\frac{\tz_s}{2}\right)}\le \delta_s^{20}.
	\end{align}
	Since $\tilde{\Omega}_{\bm k}^s\setminus A_{\bm k}^s$ is $(s-1)$-$\good$ (cf. $(\bm c)_{s}$), \eqref{tsgnorm} and $\eqref{tsgdecay}$, we have
	\begin{align*}
		\|\mathcal{T}_{\tilde{\Omega}_{\bm k}^s\setminus A_{\bm k}^s}^{-1}\|\le \delta_{s-1}^{-3}.
	\end{align*}
	Equivalently,
	\begin{align}
		\label{368}\|((\mathcal{M}_{s+1}(\theta+\bm k\cdot\bm\omega))_{(\tilde{\Omega}_{\bm k}^s\setminus A_{\bm k}^s)-\bm k})^{-1}\|\le \delta_{s-1}^{-3}.
	\end{align}
	Within the domain specified by \eqref{zs1}, we claim the following estimates:
	\begin{align}
		\label{369}\|((\mathcal{M}_{s+1}(z))_{(\tilde{\Omega}_{\bm k}^s\setminus A_{\bm k}^s)-\bm k})^{-1}\|&\le2 \delta_{s-1}^{-3},\\
		\label{370}|((\mathcal{M}_{s+1}(z))_{(\tilde{\Omega}_{\bm k}^s\setminus A_{\bm k}^s)-\bm k})^{-1}(\bm x,\bm y)|&\le e^{-\alpha_{s-1}\log^{\rho}(1+\|\bm x-\bm y\|)}\ \text{for $\|\bm x-\bm y\|>10\tz_{s-1}$.}
	\end{align}
	\begin{proof}[Proof of the Claim]
		We proceed via the following steps:
		\begin{itemize}
			\item[\textbf{Step 1}]: \textbf{Operator Decomposition}.  Define the comparison operators:
		
		\begin{align*}
			\mathcal{T}_1=(\mathcal{M}_{s+1}(\theta+\bm k\cdot\bm\omega))_{(\tilde{\Omega}_{\bm k}^s \setminus A_{\bm k}^s)-\bm k},\ \mathcal{T}_2=(\mathcal{M}_{s+1}(z))_{(\tilde{\Omega}_{\bm k}^s \setminus A_{\bm k}^s)-\bm k}.
		\end{align*}
		The difference $\mathcal{D}_1=\mathcal{T}_1-\mathcal{T}_2$ is diagonal, with norm bounded by
		\begin{align*}
			\|\mathcal{D}_1\|\le 2\kappa_2\sqrt{4R^2+\frac{1}{4}}\delta_s^{\frac{1}{10}}
		\end{align*}
		 via \eqref{vdefn} and \eqref{qs-}. 
		 	\item[\textbf{Step 2}]: \textbf{Neumann Series Argument}.
		 Applying the Neumann expansion yields
		\begin{align*}
			\mathcal{T}_2^{-1}=(\mathcal{I}_{(\tilde{\Omega}_{\bm k}^s \setminus A_{\bm k}^s)-\bm k}-\mathcal{T}_1^{-1}\mathcal{D}_1)^{-1}\mathcal{T}_1^{-1}=\sum_{i=0}^{\infty}(\mathcal{T}_1^{-1}\mathcal{D}_1)^i \mathcal{T}_1^{-1}.
		\end{align*}
		The norm estimate follows by combining \eqref{368} with the Neumann series expansion. Specifically, from \eqref{368} we have the a priori bound $\|\mathcal{T}_1^{-1}\|\le \delta_{s-1}^{-3}$, which when combined with the smallness condition on $\|\mathcal{D}_1\|$ yields
		\begin{align*}
			\|	\mathcal{T}_2^{-1}\|\le \left(\sum_{i=0}^{\infty}\|\mathcal{T}_1^{-1}\|^i\cdot\|\mathcal{D}_1\|^i\right) \|\mathcal{T}_1^{-1}\|\le 2\delta_{s-1}^{-3}.
		\end{align*}
		\item[\textbf{Step 3}]: \textbf{Generalization to Subdomains}.
		For any subdomain $\tO_{\bm k'}^{t}\subset (\tilde{\Omega}_{\bm k}^s \setminus A_{\bm k}^s)$ with $0\le t\le s-1$ and $\bm k'\in P_t\setminus Q_t$, we similarly obtain
		\begin{align*}
			\|\boT_{\tO_{\bm k'}^{t}}^{-1}(z)\|\le 2\|\boT_{\tO_{\bm k'}^{t}}^{-1}\|\le 2\delta_{t}^{-3}.
		\end{align*}
		\item[\textbf{Step 4}]: \textbf{Off-Diagonal Decay}.
The decay estimate \eqref{370} follows by iterating the resolvent identity, following the same methodology used to establish $(\bm{e})_{s-1}$. We omit the repetitive details.
\end{itemize}
We complete proof of the claim.
	\end{proof}
	
	To estimate the difference between $\mathcal{S}_s$ and $\mathcal{S}_{s+1}$, we employ the resolvent identity combined with the decay properties of $\mathcal W$. For clarity, we first define the following sets: $X=(\tilde{\Omega}_{\bm k}^s\setminus A_{\bm k}^s)-\bm k$, $Z_1=\lg_{\frac{\tilde{\zeta}_s}{4}}\cap X$, $Z_2=\lg_{\frac{\tilde{\zeta}_s}{8}}\cap X$ and $Y=(\tilde{\Omega}_{\bm k}^{s+1}\setminus A_{\bm k}^{s+1})-\bm k$. For any $\bm m\in X$ and $\bm n\in Y$, the resolvent identity yields
	\begin{align*}
		&((\mathcal{M}_{s+1}(z))_Y)^{-1}(\bm m,\bm n)-\chi_X(\bm n)((\mathcal{M}_{s+1}(z))_X)^{-1}(\bm m,\bm n)\\
		=&-\ep\sum_{\bm l\in X \atop \bm l'\in Y\setminus X}((\mathcal{M}_{s+1}(z))_X)^{-1}(\bm m,\bm l)\mathcal{W}(\bm l,\bm l')((\mathcal{M}_{s+1}(z))_Y)^{-1}(\bm l',\bm n).
	\end{align*}
	If $\bm m\in Z_2$, since \eqref{wphi}, \eqref{quaeq}, \eqref{Det}, \eqref{375}, \eqref{369}, \eqref{370}, $\dist(Z_1,Y\setminus X)\ge\frac{\tilde{\zeta}_s}{8}\gg10\tz_{s-1}$ and $\dist(Z_2,X\setminus Z_1)\ge\frac{\tilde{\zeta}_s}{8}\gg10\tz_{s-1}$, we can get
	\begin{align*}
		({\rm I})=&\ |((\mathcal{M}_{s+1}(z))_Y)(\bm m,\bm n)-\chi_X(\bm n)((\mathcal{M}_{s+1}(z))_X)^{-1}(\bm m,\bm n)|\\
		\le&\ \sum_{\bm l\in Z_1 \atop \bm l'\in Y\setminus X}|((\mathcal{M}_{s+1}(z))_X)^{-1}(\bm m,\bm l)|\cdot|\mathcal{W}(\bm l,\bm l')|\cdot|((\mathcal{M}_{s+1}(z))_Y)^{-1}(\bm l',\bm n)|\\
		&\ \ +\sum_{\bm l\in X\setminus Z_1 \atop\bm l'\in Y\setminus X}|((\mathcal{M}_{s+1}(z))_X)^{-1}(\bm m,\bm l)|\cdot|\mathcal{W}(\bm l,\bm l')|\cdot|((\mathcal{M}_{s+1}(z))_Y)^{-1}(\bm l',\bm n)|\\
		\le&\  2\delta_{s-1}^{-3}\delta_{s}^{-\frac{1}{50}}e^{-\frac{9}{10}\alpha\log^{\rho}\left(1+\frac{\tz_s}{8}\right)}(\#(Z_1))D\left(\frac{\alpha}{10}\right)\\
		&\ \  +\delta_{s}^{-\frac{1}{50}}e^{-\alpha_{s-1}\log^{\rho}\left(1+\frac{\tz_s}{8}\right)+\alpha_{s-1}C(\rho)\log 2}(\#(X\setminus Z_1))D\left(\frac{\alpha}{10}\right)\\
		<&\ \delta_s^{20}.
	\end{align*}
	If $\bm m\in X\setminus Z_2$, by \eqref{wphi}, \eqref{Det}, \eqref{375} and \eqref{369}, we obtain
	\begin{align*}
		({\rm II})=&\ |((\mathcal{M}_{s+1}(z))_Y)^{-1}(\bm m,\bm n)-\chi_X(\bm n)((\mathcal{M}_{s+1}(z))_X)^{-1}(\bm m,\bm n)|\\
		\le&\ \sum_{\bm l\in X \atop \bm l'\in Y\setminus X}|((\mathcal{M}_{s+1}(z))_X)^{-1}(\bm m,\bm l)|\cdot|\mathcal{W}(\bm l,\bm l')|\cdot|((\mathcal{M}_{s+1}(z))_Y)^{-1}(\bm l',\bm n)|\\
		\le&\ 2\delta_{s-1}^{-3}\delta_{s}^{-\frac{1}{50}}(\#X)D(\alpha).
	\end{align*}
	For $\bm i\in A_{\bm k}^s-\bm k$, $\bm n\in Y$, since \eqref{wphi}, \eqref{Det}, $\dist(A_{\bm k}^s-\bm k,X\setminus Z_2)\ge\frac{\tilde{\zeta}_s}{16}\gg10\tz_{s-1}$ and \eqref{375}--\eqref{370},  we have
	\begin{align*}
		&\ \ |(\mathcal{R}_{A_{\bm k}^s-\bm k}\mathcal{M}_{s+1}(z)\mathcal{R}_X ((\mathcal{M}_{s+1}(z))_Y)^{-1})(\bm i,\bm n)\\
		&\ \ \ \ -(\mathcal{R}_{A_{\bm k}^s-\bm k}\mathcal{M}_{s+1}(z)\mathcal{R}_X ((\mathcal{M}_{s+1}(z))_X)^{-1}\mathcal{R}_X)(\bm i,\bm n)|\\
		\le&\ \ \sum_{\bm m\in Z_2}|\mathcal{W}(\bm i,\bm m)|\cdot({\rm I})+\sum_{\bm m\in X\setminus Z_2}|\mathcal{W}(\bm i,\bm m)|\cdot({\rm II})\\
		\le &\ \ D(\alpha)\delta_s^{20}+2\delta_{s-1}^{-3}\delta_{s}^{-\frac{1}{50}}(\#X)D(\alpha)D\left(\frac{\alpha}{10}\right)e^{-\frac{9}{10}\alpha\log^{\rho}\left(1+\frac{\tz_s}{16}\right)}<\delta_{s}^{15}.
	\end{align*}
	It then follows that
	\begin{align}
	\nonumber	&\ \ \mathcal{R}_{A_{\bm k}^s-\bm k}\mathcal{M}_{s+1}(z)\mathcal{R}_X ((\mathcal{M}_{s+1}(z))_Y)^{-1}\\
	\label{RXY}=&\ \ \mathcal{R}_{A_{\bm k}^s-\bm k}\mathcal{M}_{s+1}(z)\mathcal{R}_X ((\mathcal{M}_{s+1}(z))_X)^{-1}\mathcal{R}_X+O(\delta_s^{15}).
	\end{align}
	From \eqref{Ss+1Ss1}, \eqref{E_sz} and \eqref{RXY}, we have
	\begin{align*}
		&\ \mathcal{R}_{A_{\bm k}^s-\bm k}\mathcal{M}_{s+1}(z)\mathcal{R}_X ((\mathcal{M}_{s+1}(z))_Y)^{-1}\mathcal{R}_X \mathcal{M}_{s+1}(z)\mathcal{R}_{A_{\bm k}^s-\bm k}\\
		=&\ \mathcal{R}_{A_{\bm k}^s-\bm k}\mathcal{M}_{s+1}(z)\mathcal{R}_X ((\mathcal{M}_{s+1}(z))_X)^{-1}\mathcal{R}_X \mathcal{M}_{s+1}(z)\mathcal{R}_{A_{\bm k}^s-\bm k}+O(\delta_s^{15})\\
		=&\ \mathcal{R}_{A_{\bm k}^s-\bm k}\mathcal{M}_{s}(z)\mathcal{R}_X ((\mathcal{M}_{s}(z))_X)^{-1}\mathcal{R}_X \mathcal{M}_{s}(z)\mathcal{R}_{A_{\bm k}^s-\bm k}+O(\delta_s^{15}).
	\end{align*}
	and
	\begin{align*}
		\mathcal{S}_{s+1}(z)=&\ \mathcal{M}_{s}(z)_{A_{\bm k}^{s}-\bm k}-\left(\vphantom{(\mathcal{M}_{s}(z))_{X})^{-1}}\mathcal{R}_{A_{\bm k}^{s}-\bm k}\mathcal{M}_{s}(z)\mathcal{R}_{X}((\mathcal{M}_{s}(z))_{X})^{-1}\right.\\
		&\ \left.\times\ \mathcal{R}_{X}\mathcal{M}_{s}(z)\mathcal{R}_{A_{\bm k}^{s}-\bm k}\right)+O(\delta_s^{15})\\
		=&\ \mathcal{S}_s(z)+O(\delta_s^{15}),
	\end{align*}
	which implies \eqref{ss} for the $(s+1)$-th step. Building upon the estimates from \eqref{zs}, \eqref{detss} and \eqref{zs1}, we first note the lower bound
	\begin{align*}
		|\det \mathcal{S}_s(z)|\ge\delta_{s-1}\|z-\theta_s\|_{\T}\cdot\|z+\theta_s\|_{\T}.
	\end{align*}
	Applying Lemma \ref{det1} with the cardinality bound $\#(A_{\bm k}^s-\bm k)\le 2^s$ and using \eqref{ss}, we derive
	\begin{align*}
		\det \mathcal{S}_{s+1}(z)&=\det \mathcal{S}_s(z)+O((2^s)^2 (4|v|_R)^{2^s}\delta_s^{15})\\
		&=\det \mathcal{S}_s(z)+O(\delta_s^{10}).
	\end{align*}
	For the torus norm, we establish
	\begin{align*}
		\|z+\theta_s\|_{\T}&\ge\|\theta+\bm k\cdot\bm\omega+\theta_s\|_{\T}-\|z-\theta_s\|_{\T}-\|\theta+\bm k\cdot\bm\omega-\theta_s\|_{\T}\\
		&>\delta_s^{\frac{1}{100}}-\delta_{s}^{\frac{1}{10}}-\delta_s\\
		&>\frac{1}{2}\delta_s^{\frac{1}{100}}.
	\end{align*}
	This leads to the improved estimate
	\begin{align*}
		|\det \mathcal{S}_{s+1}(z)|\ge\delta_{s}^{\frac{1}{10}}|z-\theta_s+r_{s+1}(z)|,
	\end{align*}
	where $r_{s+1}(z)$ is analytic in the region \eqref{zs1} with $|r_{s+1}(z)| < \delta_s^8$. Finally, the equation
	\begin{align*}
		(z-\theta_s)+r_{s+1}(z)=0
	\end{align*}
	has a unique root $\theta_{s+1}$ in \eqref{zs1} satisfying
	\begin{align}\label{ts+1-ts}
		|\theta_{s+1}-\theta_s|=|r_{s+1}(\theta_{s+1})|<\delta_s^{8}.
	\end{align}
	For $|z| = \delta_s^{1/10}$, we obtain (since $|r_1(z)|<\delta_s^8$ and \eqref{ts+1-ts})
	\begin{align*}
		\frac{|r_{s+1}(\theta_1)-r_{s+1}(z)|}{|z-\theta_s+r_{s+1}(z)|}\le 4\delta_{s}^{7},
	\end{align*}
	which implies
	\begin{align*}
		\frac{|z-\theta_{s+1}|}{|z-\theta_{s}+r_{s+1}(z)|}=\frac{|z-\theta_{s}+r_{s+1}(\theta_1)|}{|z-\theta_{s}+r_{s+1}(z)|}\in[1-4\delta_{s}^{7},1+4\delta_{s}^{7}].
	\end{align*}
	By the maximum modulus principle, we have
	\begin{align*}
		\frac{1}{2}\le\frac{|z-\theta_{s+1}|}{|z-\theta_{s}+r_{s+1}(z)|}\le 2.
	\end{align*}
	Moreover, the root $\theta_{s+1}$ is unique for $\det \mathcal{M}_{s+1}(z) = 0$ in \eqref{zs1}. Since $\|z+\theta_s\|_{\T}>\frac{1}{2}\delta_s^{\frac{1}{100}}$ and $|\theta_{s+1}-\theta_s|<\delta_s^{8}$, we have
	\begin{align*}
		\frac{1}{2}\|z+\theta_s\|_{\T}\le \|z+\theta_{s+1}\|_{\T}\le 2\|z+\theta_s\|_{\T}.
	\end{align*}
	Thus, for $z$ in \eqref{zs1}, we establish the key determinant bound
	\begin{align}\label{detss+11}
		|\det \mathcal{S}_{s+1}(z)|\ge\delta_{s}\|z-\theta_{s+1}\|_{\T}\cdot\|z+\theta_{s+1}\|_{\T}.
	\end{align}
	Since $\delta_{s+1}=\delta_s^{10^{5\rho'}}$, we obtain $\delta_{s+1}^{\frac{1}{10^4}}\ll\frac{1}{2}\delta_s^{\frac{1}{10}}$. Recalling \eqref{zs1} and \eqref{ts+1-ts}, the estimate \eqref{detss+11} remains valid for
	\begin{align*}
		\|z-\theta_{s+1}\|_{\T}<\delta_{s+1}^{\frac{1}{10^4}}.
	\end{align*}
	
	For $\bm{k} \in Q_s^+$, we analyze $\mathcal{M}_{s+1}(z)$ in the region
	\begin{align}\label{zs2}
		\left\{z\in\C:\ |z+\theta_s|\le \delta_{s}^{\frac{1}{10}}\right\}.
	\end{align}
	
	An analogous argument demonstrates that the equation $\det \mathcal{M}_{s+1}(z) = 0$ possesses a unique root $\theta_{s+1}^{'}$ within the region defined by \eqref{zs2}. We now establish the antisymmetry property $\theta_{s+1} + \theta_{s+1}' = 0$. In fact, Lemma \ref{ef} guarantees that $\det \mathcal{M}_{s+1}(z)$ is an even function of $z$. The uniqueness of roots in \eqref{zs2} immediately implies $\theta_{s+1}' = -\theta_{s+1}$. Consequently, for all $z$ in \eqref{zs2}, the determinant bound \eqref{detss+11} remains valid. This establishes the estimate throughout the symmetric domain
		\begin{align*}
		\left\{z\in\C:\min_{\sigma=\pm1}\|z+\sigma\theta_{s+1}\|_{\T}<\delta_{s+1}^{\frac{1}{10^4}}\right\},
	\end{align*}
	thereby completing the proof of \eqref{detss} for the $(s+1)$-th step. Combining \eqref{qspm}--\eqref{ps+1} and the following
	\begin{align*}
		\|\theta+\bm k\cdot\bm\omega\pm\theta_{s+1}\|_{\T}<10\delta_{s+1}^{\frac{1}{100}},\ |\theta_{s+1}-\theta_s|<\delta_s^{8}\Rightarrow\|\theta+\bm k\cdot\bm\omega\pm\theta_s\|_{\T}<\delta_s,
	\end{align*}
	we get
	\begin{align*}
		\left\{\bm k\in\Z^d+\frac{1}{2}\sum_{i=0}^{s}\bm l_i:\ \min_{\sigma=\pm1}\|\theta+\bm k\cdot\bm\omega+\sigma\theta_{s+1}\|_{\T}<10\delta_{s+1}^{\frac{1}{100}}\right\}\subset P_{s+1},
	\end{align*}
	which proves \eqref{fs} at the $(s+1)$-th step. Finally, we want to estimate $\|\mathcal{T}_{\tilde{\Omega}_{\bm k}^{s+1}}^{-1}\|$. For $\bm k\in P_{s+1}$, by \eqref{ps+11}, we obtain
	\begin{align*}
		\theta+\bm k\cdot\bm\omega\in\left\{z\in\C:\ \min_{\sigma=\pm1}\|z+\sigma\theta_s\|_{\T}<\delta_s^{\frac{1}{10}}\right\},
	\end{align*}
	which together with \eqref{detss+11} implies
	\begin{align*}
		&\ |\det(\mathcal{T}_{A_{\bm k}^{s+1}}-\mathcal{R}_{A_{\bm k}^{s+1}}\mathcal{T}\mathcal{R}_{\tilde{\Omega}_{\bm k}^{s+1}\setminus A_{\bm k}^{s+1}}\mathcal{T}_{\tilde{\Omega}_{\bm k}^{s+1}\setminus A_{\bm k}^{s+1}}^{-1}\mathcal{R}_{\tilde{\Omega}_{\bm k}^{s+1}\setminus A_{\bm k}^{s+1}}\mathcal{T}\mathcal{R}_{A_{\bm k}^{s+1}})|\\
		=&\ |\det \mathcal{S}_{s+1}(\theta+\bm k\cdot\bm\omega)|\\
		\ge&\ \delta_{s}\|\theta+\bm k\cdot\bm\omega-\theta_{s+1}\|_{\T}\cdot\|\theta+\bm k\cdot\bm\omega+\theta_{s+1}\|_{\T}.
	\end{align*}
	By \eqref{ss}, Cramer's rule and Lemma \ref{chi}, one has
	\begin{align*}
		&\|(\mathcal{T}_{A_{\bm k}^{s+1}}-\mathcal{R}_{A_{\bm k}^{s+1}}\mathcal{T}\mathcal{R}_{\tilde{\Omega}_{\bm k}^{s+1}\setminus A_{\bm k}^{s+1}}\mathcal{T}_{\tilde{\Omega}_{\bm k}^{s+1}\setminus A_{\bm k}^{s+1}}^{-1}\mathcal{R}_{\tilde{\Omega}_{\bm k}^{s+1}\setminus A_{\bm k}^{s+1}}\mathcal{T}\mathcal{R}_{A_{\bm k}^{s+1}})^{-1}\|\\
		=&\ |\det \mathcal{S}_{s+1}(\theta+\bm k\cdot\bm\omega)|^{-1}\\
		&\ \ \cdot\|(\mathcal{T}_{A_{\bm k}^{s+1}}-\mathcal{R}_{A_{\bm k}^{s+1}}\mathcal{T}\mathcal{R}_{\tilde{\Omega}_{\bm k}^{s+1}\setminus A_{\bm k}^{s+1}}\mathcal{T}_{\tilde{\Omega}_{\bm k}^{s+1}\setminus A_{\bm k}^{s+1}}^{-1}\mathcal{R}_{\tilde{\Omega}_{\bm k}^{s+1}\setminus A_{\bm k}^{s+1}}\mathcal{T}\mathcal{R}_{A_{\bm k}^{s+1}})^{*}\|\\
		\le&\ 2^s\cdot (4|v|_R)^{2^s}\delta_{s}^{-1}\|\theta+\bm k\cdot\bm\omega-\theta_{s+1}\|_{\T}^{-1}\cdot\|\theta+\bm k\cdot\bm\omega+\theta_{s+1}\|_{\T}^{-1}.
	\end{align*}
	From Lemma \ref{scl} and \eqref{ts0}, we get
	\begin{align}
		\nonumber\|\mathcal{T}_{\tilde{\Omega}_{\bm k}^{s+1}}^{-1}\|&<4(1+\|\mathcal{T}_{\tilde{\Omega}_{\bm k}^{s+1}\setminus A_{\bm k}^{s+1}}\|)^2\\
		\nonumber&\ \ \times(1+\|(\mathcal{T}_{A_{\bm k}^{s+1}}-\mathcal{R}_{A_{\bm k}^{s+1}}\mathcal{T}\mathcal{R}_{\tilde{\Omega}_{\bm k}^{s+1}\setminus A_{\bm k}^{s+1}}\mathcal{T}_{\tilde{\Omega}_{\bm k}^{s+1}\setminus A_{\bm k}^{s+1}}^{-1}\mathcal{R}_{\tilde{\Omega}_{\bm k}^{s+1}\setminus A_{\bm k}^{s+1}}\mathcal{T}\mathcal{R}_{A_{\bm k}^{s+1}})^{-1}\|)\\
		\label{tos+1-11}&<\delta_{s}^{-2}\|\theta+\bm k\cdot\bm\omega-\theta_{s+1}\|_{\T}^{-1}\cdot\|\theta+\bm k\cdot\bm\omega+\theta_{s+1}\|^{-1}_{\T}.
	\end{align}
	
	\begin{itemize}
		\item[\textbf{Case 2.}] The case $(\bm C2)_s$ occurs, i.e.,
	\end{itemize}
	\begin{align*}
		\dist(\tilde{Q}_s^-,Q_s^+)\le100N_{s+1}^{10}.
	\end{align*}
	Then there exist $\bm i_s\in Q_{s}^{+}$ and $\bm j_s\in \tilde{Q}_s^-$ with $\|\bm i_s-\bm j_s\|\le100 N_{s+1}^{10}$ such that
	\begin{align*}
		\|\theta+\bm i_s\cdot\bm\omega+\theta_s\|_{\T}<\delta_s,\ \|\theta+\bm j_s\cdot\bm\omega-\theta_s\|_{\T}<\delta_s^{\frac{1}{100}}.
	\end{align*}
	Define
	\begin{align*}
		\bm l_s=\bm i_s-\bm j_s.
	\end{align*}
	Using \eqref{as1} and \eqref{as2} yields
	\begin{align*}
		Q_s^+,\tilde{Q}_s^-\subset P_s\subset \Z^d+\frac{1}{2}\sum_{i=0}^{s-1}\bm l_i.
	\end{align*}
	Thus $\bm i_s\equiv \bm j_s\ (\text{mod}\ \Z^d)$ and $\bm l_s\in\Z^d$. Define
	\begin{align}\label{os+12}
		O_{s+1}=Q_s^-\cup(Q_s^+-\bm l_s).
	\end{align}
	For every $\bm o\in O_{s+1}$, define its mirror point as
	\begin{align*}
		\bm o^*=\bm o+\bm l_s.
	\end{align*}
	Then we have
	\begin{align*}
		O_{s+1}\subset\left\{\bm o\in\Z^d+\frac{1}{2}\sum_{i=0}^{s-1}\bm l_i:\ \|\theta+\bm o\cdot\bm\omega-\theta_s\|_{\T}<2\delta_s^{\frac{1}{100}}\right\}
	\end{align*}
	and
	\begin{align*}
		O_{s+1}+\bm l_s\subset\left\{\bm o^*\in\Z^d+\frac{1}{2}\sum_{i=0}^{s-1}\bm l_i:\ \|\theta+\bm o^*\cdot\bm\omega+\theta_s\|_{\T}<2\delta_s^{\frac{1}{100}}\right\}.
	\end{align*}
	Then by \eqref{fs}, we obtain
	\begin{align}\label{os+13}
		O_{s+1}\cup(O_{s+1}+\bm l_s)\subset P_s.
	\end{align}
	Next, define
	\begin{align}\label{ps+12}
		P_{s+1}=\left\{\frac{1}{2}(\bm o+\bm o^*):\ \bm o\in O_{s+1}\right\}=\left\{\bm o+\frac{\bm l_s}{2}:\ \bm o\in O_{s+1}\right\}.
	\end{align}
	Notice that
	\begin{align*}
		\min&\left(\left\|\frac{\bm l_s}{2}\cdot\bm\omega+\theta_s\right\|_{\T},\left\|\frac{\bm l_s}{2}\cdot\bm\omega+\theta_s-\frac{1}{2}\right\|_{\T}\right)\\
		&=\frac{1}{2}\|\bm l_s\cdot\bm\omega+2\theta_s\|_{\T}\\
		&\le\frac{1}{2}(\|\theta+\bm i_s\cdot\bm\omega+\theta_s\|_{\T}+\|\theta+\bm j_s\cdot\bm\omega-\theta_s\|_{\T})<\delta_s^{\frac{1}{100}}.
	\end{align*}
	Since $\delta_s\ll1$, only one of
	\begin{align*}
		\left\|\frac{\bm l_s}{2}\cdot\bm\omega+\theta_s\right\|_{\T}<\delta_0^{\frac{1}{100}}\ \text{and}\ \left\|\frac{\bm l_s}{2}\cdot\bm\omega+\theta_s-\frac{1}{2}\right\|_{\T}<\delta_0^{\frac{1}{100}}
	\end{align*}
	occurs. First, we consider the case of
	\begin{align}\label{ls1}
		\left\|\frac{\bm l_s}{2}\cdot\bm\omega+\theta_s\right\|_{\T}<\delta_s^{\frac{1}{100}}.
	\end{align}
	Let $\bm k\in P_{s+1}$. Since $\bm k=\bm o+\frac{\bm l_s}{2}$ (for some $\bm o\in O_{s+1}$), we have
	\begin{align}\label{t+kos}
		\|\theta+\bm k\cdot\bm\omega\|_{\T}\le\|\theta+\bm o\cdot\bm\omega-\theta_s\|_{\T}+\left\|\frac{\bm l_s}{2}\cdot\bm\omega+\theta_s\right\|_{\T}<3\delta_s^{\frac{1}{100}},
	\end{align}
	which implies
	\begin{align}\label{ps+1s1}
		P_{s+1}\subset\left\{\bm k\in\Z^d+\frac{1}{2}\sum_{i=0}^s \bm l_i:\ \|\theta+\bm k\cdot\bm \omega\|_{\T}<3\delta_s^{\frac{1}{100}}\right\}.
	\end{align}
	Moreover, if $\bm k\ne \bm k'\in P_{s+1}$, we obtain
	\begin{align*}
		\|\bm k-\bm k'\|\ge\left(\frac{\g}{6\delta_s^{\frac{1}{100}}}\right)^{\frac{1}{\tau}}\gg 100N_{s+1}^{100}.
	\end{align*}
	Similar to the proof that appears in {\bf Case 1}  (i.e., the $(\bm C1)_s$ case), we can associate each $\bm k\in P_{s+1}$ with the blocks $\Omega_{\bm k}^{s+1}$, $2\Omega_{\bm k}^{s+1}$ and $\tilde{\Omega}_{\bm k}^{s+1}$ satisfying
	\begin{align*}
		&\lg_{100N_{s+1}^{10}}(\bm k)\subset \Omega_{\bm k}^{s+1}\subset\lg_{100N_{s+1}^{10}+50N_s^{100}}(\bm k),\\
		&\lg_{200N_{s+1}^{10}}(\bm k)\subset2 \Omega_{\bm k}^{s+1}\subset\lg_{200N_{s+1}^{10}+50N_s^{100}}(\bm k),\\
		&\lg_{N_{s+1}^{100}}(\bm k)\subset \tilde{\Omega}_{\bm k}^{s+1}\subset\lg_{N_{s+1}^{100}+50N_s^{100}}(\bm k)
	\end{align*}
	and
	\begin{align}\label{sb3}
		\left\{\begin{array}{l}
			\Omega_{\bm k}^{s+1}\cap\tilde{\Omega}_{\bm k'}^{s'}\ne\emptyset\ (s'<s+1)\Rightarrow\tilde{\Omega}_{\bm k'}^{s'}\subset\Omega_{\bm k}^{s+1},\\
				2\Omega_{\bm k}^{s+1}\cap\tilde{\Omega}_{\bm k'}^{s'}\ne\emptyset\ (s'<s+1)\Rightarrow\tilde{\Omega}_{\bm k'}^{s'}\subset2\Omega_{\bm k}^{s+1},\\
			\tilde{\Omega}_{\bm k}^{s+1}\cap\tilde{\Omega}_{\bm k'}^{s'}\ne\emptyset\ (s'<s+1)\Rightarrow\tilde{\Omega}_{\bm k'}^{s'}\subset\tilde{\Omega}_{\bm k}^{s+1},\\
			\dist(\tilde{\Omega}_{\bm k}^{s+1},\tilde{\Omega}_{\bm k'}^{s+1})>10\diam\tilde{\zeta}_{s+1}\ \text{for}\ \bm k\ne \bm k'\in P_{s+1}.
		\end{array}\right.
	\end{align}
	Furthermore, the translated domain
	\begin{align*}
		\tilde{\Omega}_{\bm k}^{s+1}-\bm k\subset\Z^d+\frac{1}{2}\sum_{i=0}^s \bm l_i
	\end{align*}
	is independent of $\bm k\in P_{s+1}$ and symmetrical about the origin. This completes the verification of properties $(\bm a)_{s+1}$ and $(\bm b)_{s+1}$ for case $(\bm C2)_{s}$.
	
	For each $\bm{k} \in P_{s+1}$, let $\bm{o}, \bm{o}^* \in P_s$ be the unique pair specified by \eqref{os+13}. We define
	\begin{align*}
		A_{\bm k}^{s+1}=A_{\bm o}^s\cup A_{\bm o^*}^s,
	\end{align*}
	where $\bm{o} \in O_{s+1}$ and $\bm{k} = \frac{1}{2}(\bm{o} + \bm{o}^*)$ per \eqref{ps+12}. This construction satisfies:
	\begin{align*}
		A_{\bm k}^{s+1}&\subset \Omega_{\bm o}^s\cup \Omega_{\bm o^*}^s\subset \Omega_{\bm k}^{s+1},\\
		\# A_{\bm k}^{s+1}&=\# A_{\bm o}^s+\# A_{\bm o^*}^s\le 2^{s+1}.
	\end{align*}
	The proof that $\tilde{\Omega}_{\bm k}^{s+1}\setminus A_{\bm k}^{s+1}$ is $s$-$\good$ appears on page 32 of \cite{CSZ24a}, which concludes the verification of $(\bm{c})_{s+1}$ for case $(\bm{C}2)_s$.
	
	We analyze the operator
	\begin{align*}
		\mathcal{M}_{s+1}(z):=\mathcal{T}_{\tilde{\Omega}_{\bm k}^{s+1}-\bm k}(z)=((v(z+\bm n\cdot\bm\omega)-E)\delta_{\bm n,\bm n'}+\ep \mathcal{W})_{\bm n\in\tilde{\Omega}_{\bm k}^{s+1}-\bm k}
	\end{align*}
	defined on the domain
	\begin{align}\label{zs3}
		\left\{z\in\C:\ |z|\le\delta_{s}^{\frac{1}{10^3}}\right\}.
	\end{align}
	If $\bm k'\in P_s$ and $\tilde{\Omega}_{\bm k'}^{s}\subset(\tilde{\Omega}_{\bm k}^{s+1}\setminus A_{\bm k}^{s+1})$, then $\bm k'\ne \bm o,\bm o^*$ and $\|\bm k'-\bm o\|,\|\bm k'-\bm o^*\|\le 4N_{s+1}^{100}$. Thus
	\begin{align*}
		\|\theta+\bm k'\cdot\bm\omega-\theta_s\|_{\T}&\ge\|(\bm k'-\bm o)\cdot\bm\omega\|_{\T}-\|\theta+\bm o\cdot\bm\omega-\theta_s\|_{\T}\\
		&\ge\frac{\g}{(4N_{s+1}^{100})^\tau}-2\delta_s^{\frac{1}{100}}>\delta_s^{\frac{1}{10^4}}
	\end{align*}
	and
	\begin{align*}
		\|\theta+\bm k'\cdot\bm\omega+\theta_s\|_{\T}&\ge\|(\bm k'-\bm o^*)\cdot\bm\omega\|_{\T}-\|\theta+\bm o^*\cdot\bm\omega+\theta_s\|_{\T}\\
		&\ge\frac{\g}{(4N_{s+1}^{100})^\tau}-2\delta_s^{\frac{1}{100}}>\delta_s^{\frac{1}{10^4}}.
	\end{align*}
	By \eqref{tsgnorm} and $\tilde{\Omega}_{\bm k}^{s+1}\setminus A_{\bm k}^{s+1}$ is $s$-$\good$ (cf. $(\bm c)_{s+1}$), we have
	\begin{align}
		\nonumber\|\mathcal{T}_{\tilde{\Omega}_{\bm k}^{s+1}\setminus A_{\bm k}^{s+1}}^{-1}\|&\le2\delta_{s}^{-3}\times\sup\limits_{\{\bm k'\in P_s:\ \tilde{\Omega}_{\bm k'}^s\subset(\tilde{\Omega}_{\bm k}^{s+1}\setminus A_{\bm k}^{s+1})\}}(\|\theta+\bm k'\cdot\bm\omega-\theta_s\|_{\T}^{-1}\cdot\|\theta+\bm k'\cdot\bm\omega+\theta_s\|_{\T}^{-1})\\
		\label{tosa2}&<\frac{1}{2}\delta_{s}^{-3\times\frac{1}{10^4}}.
	\end{align}
	One may restate \eqref{tosa2} as
	\begin{align*}
		\|((\mathcal{M}_{s+1}(\theta+\bm k\cdot\bm\omega))_{(\tilde{\Omega}_{\bm k}^{s+1}\setminus A_{\bm k}^{s+1})-\bm k})^{-1}\|<\frac{1}{2}\delta_{s}^{-3\times\frac{1}{10^4}}.
	\end{align*}
	From the estimate
	\begin{align}\label{qs-2}
		\nonumber\|z-(\theta+\bm k\cdot\bm\omega)\|_{\T}&\le|z|+\|\theta+\bm k\cdot\bm\omega\|_{\T}\\
		&<\delta_s^{\frac{1}{10^3}}+3\delta_s^{\frac{1}{100}}<2\delta_s^{\frac{1}{100}},
	\end{align}
	we apply a Neumann series argument to obtain the key inverse bound
	\begin{align}\label{ms+1-1}
		\|(\mathcal{M}_{s+1}(z)_{(\tilde{\Omega}_{\bm k}^{s+1}\setminus A_{\bm k}^{s+1})-\bm k})^{-1}\|<\delta_s^{-3\times\frac{1}{10^4}}.
	\end{align}
	By Lemma \ref{scl}, the inverse $(\mathcal{M}_{s+1}(z))^{-1}$ is governed by the Schur complement associated with $((\tilde{\Omega}_{\bm k}^{s+1}\setminus A_{\bm k}^{s+1})-\bm k)$:
	\begin{align*}
		\mathcal{S}_{s+1}(z)&=(\mathcal{M}_{s+1}(z))_{A_{\bm k}^{s+1}-\bm k}-\left(\vphantom{\left((\mathcal{M}_{s+1}(z))_{(\tilde{\Omega}_{\bm k}^{s+1}\setminus A_{\bm k}^{s+1})-\bm k}\right)^{-1} }\mathcal{R}_{A_{\bm k}^{s+1}-\bm k}\mathcal{M}_{s+1}(z)\mathcal{R}_{(\tilde{\Omega}_{\bm k}^{s+1}\setminus A_{\bm k}^{s+1})-\bm k}\right.\\
		&\ \ \left.\times\left((\mathcal{M}_{s+1}(z))_{(\tilde{\Omega}_{\bm k}^{s+1}\setminus A_{\bm k}^{s+1})-\bm k}\right)^{-1} \mathcal{R}_{(\tilde{\Omega}_{\bm k}^{s+1}\setminus A_{\bm k}^{s+1})-\bm k}\mathcal{M}_{s+1}(z)\mathcal{R}_{A_{\bm k}^{s+1}-\bm k}\right).
	\end{align*}
	We now turn to the analysis of $\det \mathcal{S}_{s+1}(z)$. Since
	\begin{align*}
		A_{\bm k}^{s+1}-\bm k&=(A_{\bm o}^s-\bm k)\cup(A_{\bm o^*}^s-\bm k),\\
		A_{\bm o}^s-\bm k&\subset\Omega_{\bm o}^s-\bm k,\ A_{\bm o^*}^s-\bm k\subset \Omega_{\bm o^*}^s-\bm k
	\end{align*}
	and
	\begin{align*}
		\dist(\Omega_{\bm o}^s-\bm k,\Omega_{\bm o^*}^s-\bm k)>10\tilde{\zeta}_s,
	\end{align*}
	we have
	\begin{align*}
		(\mathcal{M}_{s+1}(z))_{A_{\bm k}^{s+1}-\bm k}=((\mathcal{M}_{s+1}(z))_{A_{\bm o}^{s}-\bm k})\oplus ((\mathcal{M}_{s+1}(z))_{A_{\bm o^*}^{s}-\bm k}).
	\end{align*}
	From  $\dist(A_{\bm o}^{s},\p \tilde{\Omega}_{\bm o}^s)>\frac{1}{2}\tilde{\zeta}_s$ and $\dist(A_{\bm o^*}^{s},\p \tilde{\Omega}_{\bm o^*}^s)>\frac{1}{2}\tilde{\zeta}_s$, we have
	\begin{align*}
		\mathcal{R}_{A_{\bm o}^{s}-\bm k}\mathcal{M}_{s+1}(z)\mathcal{R}_{(\tilde{\Omega}_{\bm k}^{s+1}\setminus A_{\bm k}^{s+1})-\bm k}&=\mathcal{R}_{A_{\bm o}^{s}-\bm k}\mathcal{M}_{s+1}(z)\mathcal{R}_{(\tilde{\Omega}_{\bm o}^{s}\setminus A_{\bm o}^{s})-\bm k}+O(\delta_s^{20}),\\
		\mathcal{R}_{A_{\bm o^*}^{s}-\bm k}\mathcal{M}_{s+1}(z)\mathcal{R}_{(\tilde{\Omega}_{\bm k}^{s+1}\setminus A_{\bm k}^{s+1})-\bm k}&=\mathcal{R}_{A_{\bm o^*}^{s}-\bm k}\mathcal{M}_{s+1}(z)\mathcal{R}_{(\tilde{\Omega}_{\bm o^*}^{s}\setminus A_{\bm o}^{s})-\bm k}+O(\delta_s^{20}).
	\end{align*}
	For notational clarity, we define the following sets:
	\begin{align*}
		&X=(\tilde{\Omega}_{\bm o}^s\setminus A_{\bm o}^s)-\bm k,\ X^*=(\tilde{\Omega}_{\bm o^*}^s\setminus A_{\bm o^*}^s)-k,\ Y=(\tilde{\Omega}_{\bm k}^{s+1}\setminus A_{\bm k}^{s+1})-\bm k,\\
		&Z_1=\lg_{\frac{\tilde{\zeta}_s}{4}}\cap X,\ Z_2=\lg_{\frac{\tilde{\zeta}_s}{8}}\cap X,\ Z_1^*=\lg_{\frac{\tilde{\zeta}_s}{4}}\cap X^*,\ Z_2^*=\lg_{\frac{\tilde{\zeta}_s}{8}}\cap X^*.
	\end{align*}
	The Schur complement decomposes as
	\begin{align*}
		\mathcal{S}_{s+1}(z)=&\ ((\mathcal{M}_{s+1}(z))_{A_{\bm o}^{s}-\bm k})\oplus ((\mathcal{M}_{s+1}(z))_{A_{\bm o^*}^{s}-\bm k})\\
		&-\left(\vphantom{\left((\mathcal{M}_{s+1}(z))_{(\tilde{\Omega}_{\bm k}^{s+1}\setminus A_{\bm k}^{s+1})-\bm k}\right)^{-1}}(\mathcal{R}_{A_{\bm o}^{s}-\bm k}\oplus \mathcal{R}_{A_{\bm o*}^{s}-\bm k})\mathcal{M}_{s+1}(z)\mathcal{R}_{(\tilde{\Omega}_{\bm k}^{s+1}\setminus A_{\bm k}^{s+1})-\bm k}\right.\\
		&\ \times\left((\mathcal{M}_{s+1}(z))_{(\tilde{\Omega}_{\bm k}^{s+1}\setminus A_{\bm k}^{s+1})-\bm k}\right)^{-1}\\
		&\left.\vphantom{\left((\mathcal{M}_{s+1}(z))_{(\tilde{\Omega}_{\bm k}^{s+1}\setminus A_{\bm k}^{s+1})-\bm k}\right)^{-1}}\times \mathcal{R}_{(\tilde{\Omega}_{\bm k}^{s+1}\setminus A_{\bm k}^{s+1})-\bm k}\mathcal{M}_{s+1}(z)(\mathcal{R}_{A_{\bm o}^{s}-\bm k}\oplus \mathcal{R}_{A_{\bm o*}^{s}-\bm k})\right).
	\end{align*}
	Since $\tilde{\Omega}_{\bm o}^s\setminus A_{\bm o}^s$ is $(s-1)$-$\good$, \eqref{tsgnorm} yields
	\begin{align*}
		\|\mathcal{T}_{\tilde{\Omega}_{\bm o}^s\setminus A_{\bm o}^s}^{-1}\|&<\delta_{s-1}^{-3}.
	\end{align*}
	Equivalently, for the translated operator,
	\begin{align*}
		\|((\mathcal{M}_{s+1}(\theta+\bm k\cdot\bm\omega))_{X})^{-1}\|&<\delta_{s-1}^{-3}.
	\end{align*}
	From \eqref{qs-2} and following the arguments in \eqref{369}-\eqref{370}, we obtain
	\begin{align}
		\label{392}\|((\mathcal{M}_{s+1}(z))_{X})^{-1}\|&\le2 \delta_{s-1}^{-3},\\
\label{393}|((\mathcal{M}_{s+1}(z))_{X})^{-1}(\bm x,\bm y)|&\le e^{-\alpha_{s-1}\log^{\rho}(1+\|\bm x-\bm y\|)}\ \text{for $\|\bm x-\bm y\|>10\tz_{s-1}$.}
	\end{align}
	For $\bm{m} \in X$ and $\bm{n} \in Y$, the resolvent identity gives
	\begin{align*}
		&((\mathcal{M}_{s+1}(z))_Y)^{-1}(\bm m,\bm n)-\chi_X(\bm n)((\mathcal{M}_{s+1}(z))_X)^{-1}(\bm m,\bm n)\\
		=&-\ep\sum_{\bm l\in X \atop \bm l'\in Y\setminus X}((\mathcal{M}_{s+1}(z))_X)^{-1}(\bm m,\bm l)\mathcal{W}(\bm l,\bm l')((\mathcal{M}_{s+1}(z))_Y)^{-1}(\bm l',\bm n).
	\end{align*}
		If $\bm m\in Z_2$, since \eqref{wphi}, \eqref{quaeq}, \eqref{Det}, \eqref{ms+1-1}, \eqref{392}, \eqref{393}, $\dist(Z_1,Y\setminus X)\ge\frac{\tilde{\zeta}_s}{8}\gg10\tz_{s-1}$ and $\dist(Z_2,X\setminus Z_1)\ge\frac{\tilde{\zeta}_s}{8}\gg10\tz_{s-1}$, we  get
	\begin{align*}
		({\rm I})=&\ |((\mathcal{M}_{s+1}(z))_Y)(\bm m,\bm n)-\chi_X(\bm n)((\mathcal{M}_{s+1}(z))_X)^{-1}(\bm m,\bm n)|\\
		\le&\ \sum_{\bm l\in Z_1 \atop \bm l'\in Y\setminus X}|((\mathcal{M}_{s+1}(z))_X)^{-1}(\bm m,\bm l)|\cdot|\mathcal{W}(\bm l,\bm l')|\cdot|((\mathcal{M}_{s+1}(z))_Y)^{-1}(\bm l',\bm n)|\\
		&\ \ +\sum_{\bm l\in X\setminus Z_1 \atop \bm l'\in Y\setminus X}|((\mathcal{M}_{s+1}(z))_X)^{-1}(\bm m,\bm l)|\cdot|\mathcal{W}(\bm l,\bm l')|\cdot|((\mathcal{M}_{s+1}(z))_Y)^{-1}(\bm l',\bm n)|\\
		\le&\  2\delta_{s-1}^{-3}\delta_s^{-\frac{3}{10^4}}e^{-\frac{9}{10}\alpha\log^{\rho}\left(1+\frac{\tz_s}{8}\right)}(\#(Z_1))D\left(\frac{\alpha}{10}\right)\\
	&\ \ +\delta_s^{-\frac{3}{10^4}}e^{-\alpha_{s-1}\log^{\rho}\left(1+\frac{\tz_s}{8}\right)+\alpha_{s-1}C(\rho)\log 2}(\#(X\setminus Z_1))D\left(\frac{\alpha}{10}\right)\\
	<&\ \delta_s^{20}.
	\end{align*}
If $\bm m\in X\setminus Z_2$, by \eqref{wphi}, \eqref{Det}, \eqref{ms+1-1} and \eqref{392}, we obtain
	\begin{align*}
		({\rm II})=&\ |((\mathcal{M}_{s+1}(z))_Y)^{-1}(\bm m,\bm n)-\chi_X(\bm n)((\mathcal{M}_{s+1}(z))_X)^{-1}(\bm m,\bm n)|\\
		\le&\ \sum_{\bm l\in X \atop \bm l'\in Y\setminus X}|((\mathcal{M}_{s+1}(z))_X)^{-1}(\bm m,\bm l)|\cdot|\mathcal{W}(\bm l,\bm l')|\cdot|((\mathcal{M}_{s+1}(z))_Y)^{-1}(\bm l',\bm n)|\\
		\le&\ 2\delta_{s-1}^{-3}\delta_{s}^{-\frac{3}{10^4}}(\#X)D(\alpha).
	\end{align*}
	For $\bm i\in A_{\bm o}^s-\bm k$, $\bm n\in Y$, since \eqref{wphi}, \eqref{Det}, $\dist(A_{\bm o}^s-\bm k,X\setminus Z_2)\ge\frac{\tilde{\zeta}_s}{16}$ and \eqref{ms+1-1}--\eqref{393}, we have
	\begin{align*}
		&\ |\mathcal{R}_{A_{\bm o}^s-\bm k}\mathcal{M}_{s+1}(z)\mathcal{R}_X ((\mathcal{M}_{s+1}(z))_Y)^{-1}(\bm i,\bm n)\\
		&\ \ -\mathcal{R}_{A_{\bm o}^s-\bm k}\mathcal{M}_{s+1}(z)\mathcal{R}_X ((\mathcal{M}_{s+1}(z))_X)^{-1}\mathcal{R}_X(\bm i,\bm n)|\\
		\le&\ \sum_{\bm m\in Z_2}|\mathcal{W}(\bm i,\bm m)|\cdot({\rm I})+\sum_{\bm m\in X\setminus Z_2}|\mathcal{W}(\bm i,\bm m)|\cdot({\rm II})\\
		\le&\ D(\alpha)\delta_s^{20}+2\delta_{s-1}^{-3}\delta_{s}^{-\frac{3}{10^4}}(\#X)D(\alpha)D\left(\frac{\alpha}{10}\right)e^{-\frac{9}{10}\alpha\log^{\rho}\left(1+\frac{\tz_s}{16}\right)}<\delta_{s}^{15}.
	\end{align*}
	It then follows that
	\begin{align*}
		&\ \mathcal{R}_{A_{\bm o}^s-\bm k}\mathcal{M}_{s+1}(z)\mathcal{R}_X ((\mathcal{M}_{s+1}(z))_Y)^{-1}\\
		=&\ \mathcal{R}_{A_{\bm o}^s-\bm k}\mathcal{M}_{s+1}(z)\mathcal{R}_X ((\mathcal{M}_{s+1}(z))_X)^{-1}\mathcal{R}_X+O(\delta_s^{15}).
	\end{align*}
	Similarly,
	\begin{align*}
		&\ \mathcal{R}_{A_{\bm o^*}^s-\bm k}\mathcal{M}_{s+1}(z)\mathcal{R}_{X^*} ((\mathcal{M}_{s+1}(z))_Y)^{-1}\\
		=&\ \mathcal{R}_{A_{\bm o^*}^s-\bm k}\mathcal{M}_{s+1}(z)\mathcal{R}_{X^*} ((\mathcal{M}_{s+1}(z))_{X^*})^{-1}\mathcal{R}_{X^*}+O(\delta_s^{15}).
	\end{align*}
	As a result,
	\begin{align*}
		\mathcal{S}_{s+1}(z)=\mathcal{S}_s\left(z-\frac{\bm l_s}{2}\cdot\bm\omega\right)\oplus \mathcal{S}_s\left(z+\frac{\bm l_s}{2}\cdot\bm\omega\right)+O(\delta_s^{15}).
	\end{align*}
	From \eqref{ls1} and \eqref{zs3}, we have
	\begin{align*}
		\left\|z-\frac{\bm l_s}{2}\cdot\bm\omega-\theta_s\right\|_{\T}\le|z|+\left\|\frac{\bm l_s}{2}\cdot\bm\omega+\theta_s\right\|_{\T}<\delta_s^{\frac{1}{10^3}}+\delta_s^{\frac{1}{100}}<2\delta_s^{\frac{1}{10^3}}
	\end{align*}
	and
	\begin{align*}
		\left\|z+\frac{\bm l_s}{2}\cdot\bm\omega+\theta_s\right\|_{\T}|\le|z|+\left\|\frac{\bm l_s}{2}\cdot\bm\omega+\theta_s\right\|_{\T}<\delta_s^{\frac{1}{10^3}}+\delta_s^{\frac{1}{100}}<2\delta_s^{\frac{1}{10^3}}.
	\end{align*}
	Consequently, both translated points $z-\frac{\bm l_s}{2}\cdot\bm\omega$ and $z+\frac{\bm l_s}{2}\cdot\bm\omega$ lie within the domain defined by \eqref{zs}. Combining this with the determinant estimate \eqref{detss} yields
	\begin{align}
		\label{detss-}\left|\det \mathcal{S}_s\left(z-\frac{\bm l_s}{2}\cdot\bm\omega\right)\right|&\ge\delta_{s-1}	\left\|\left(z-\frac{\bm l_s}{2}\cdot\bm\omega\right)-\theta_s\right\|_{\T}\cdot\left\|\left(z-\frac{\bm l_s}{2}\cdot\bm\omega\right)+\theta_s\right\|_{\T},\\
		\label{detss+}\left|\det \mathcal{S}_s\left(z+\frac{\bm l_s}{2}\cdot\bm\omega\right)\right|&\ge\delta_{s-1}	\left\|\left(z+\frac{\bm l_s}{2}\cdot\bm\omega\right)-\theta_s\right\|_{\T}\cdot\left\|\left(z+\frac{\bm l_s}{2}\cdot\bm\omega\right)+\theta_s\right\|_{\T}.
	\end{align}
	Moreover, since $\#(A_{\bm k}^{s+1}-\bm k)\le 2^{s+1}$, \eqref{ss} and Lemma \ref{det1},  we have
	\begin{align}
		\nonumber&\ \ \sup_{\bm x\in A_{\bm k}^{s+1}-\bm k}\sum_{\bm y\in A_{\bm k}^{s+1}-\bm k}|\mathcal{S}_{s+1}(z)(\bm x,\bm y)|\\
\nonumber	\le&\ \   \sup_{\bm x\in A_{\bm k}^{s+1}-\bm k}\sum_{\bm y\in A_{\bm k}^{s+1}-\bm k}\left|\left(\mathcal{S}_s\left(z-\frac{\bm l_s}{2}\cdot\bm\omega\right)\oplus \mathcal{S}_s\left(z+\frac{\bm l_s}{2}\cdot\bm\omega\right)\right)(\bm x,\bm y)\right|+O(\delta_s^{10})\\
		\label{Ss+10}\le&\ \ 2|v|_R+\sum_{i=0}^{s}\delta_s<4|v|_R,
	\end{align}
	and
	\begin{align}
		\nonumber\det \mathcal{S}_{s+1}(z)&=\det \mathcal{S}_s\left(z-\frac{\bm l_s}{2}\cdot\bm\omega\right)\cdot\det \mathcal{S}_s\left(z+\frac{\bm l_s}{2}\cdot\bm\omega\right)+O((2^{s+1})^2(4|v|_R)^{2^{s+1}}\delta_s^{15})\\
		\label{dets2}&=\det \mathcal{S}_s\left(z-\frac{\bm l_s}{2}\cdot\bm\omega\right)\cdot\det \mathcal{S}_s\left(z+\frac{\bm l_s}{2}\cdot\bm\omega\right)+O(\delta_s^{10}).
	\end{align}
	Notice that
	\begin{align}
		\nonumber\left\|z+\frac{\bm l_s}{2}\cdot\bm\omega-\theta_s\right\|_{\T}&\ge\|\bm l_s\cdot\bm\omega\|_{\T}-\left\|z-\frac{\bm l_s}{2}\cdot\bm\omega-\theta_0\right\|_{\T}\\
		\label{398}&>\frac{\g}{(100N_{s+1}^{10})^\tau}-2\delta_s^{\frac{1}{10^3}}>\delta_s^{\frac{1}{10^4}},
	\end{align}
	and
	\begin{align}
		\nonumber\left\|z-\frac{\bm l_s}{2}\cdot\bm\omega+\theta_s\right\|_{\T}&\ge\|\bm l_s\cdot\bm\omega\|_{\T}-\left\|z+\frac{\bm l_s}{2}\cdot\bm\omega+\theta_0\right\|_{\T}\\
		\label{399}&>\frac{\g}{(100N_{s+1}^{10})^\tau}-2\delta_s^{\frac{1}{10^3}}>\delta_s^{\frac{1}{10^4}}.
	\end{align}
	Let $z_{s+1}$ satisfy
	\begin{align}\label{zs+1}
		z_{s+1}\equiv\frac{\bm l_s}{2}\cdot\bm\omega+\theta_s\ (\text{mod}\ \Z),\ |z_{s+1}|=\left\|\frac{\bm l_s}{2}\cdot\bm\omega+\theta_s\right\|_{\T}<\delta_s^{\frac{1}{100}}.
	\end{align}
	From estimates \eqref{detss-}--\eqref{399}, we derive the determinant bound
	\begin{align*}
		|\det \mathcal{S}_{s+1}(z)|\ge \delta_{s}^{\frac{1}{10}}|(z-z_{s+1})\cdot(z+z_{s+1})+r_{s+1}(z)|,
	\end{align*}
	where $r_{s+1}(z)$ is analytic in the domain \eqref{zs3} and satisfies
	\begin{align}\label{rs+12}
		|r_{s+1}(z)|<\delta_s^{8}.
	\end{align}
	By the Rouch\'e's  theorem, the equation
	\begin{align*}
		(z-z_{s+1})\cdot(z+z_{s+1})+r_{s+1}(z)=0
	\end{align*}
	has exactly two roots $\theta_{s+1}$ and $\theta_{s+1}^{'}$ in \eqref{zs3}, which are perturbations of $\pm z_{s+1}$. Notice that the zero sets coincide
	\begin{align*}
		\left\{|z|\le\delta_s^{\frac{1}{10^3}}:\ \det \mathcal{M}_{s+1}(z)=0\right\}=\left\{|z|\le\delta_s^{\frac{1}{10^3}}:\  \det \mathcal{S}_{s+1}(z)=0\right\}.
	\end{align*}
	Since $\det \mathcal{M}_{s+1}(z)$ is  even, we have
	\begin{align*}
		\theta_{s+1}^{'}=-\theta_{s+1}.
	\end{align*}
	Assuming both
	\begin{align*}
		|z_{s+1}-\theta_{s+1}|>|r_{s+1}(\theta_{s+1})|^{\frac{1}{2}}\ \text{and}\ |z_{s+1}-\theta_{s+1}|>|r_{s+1}(\theta_{s+1})|^{\frac{1}{2}}
	\end{align*}
leads to
	\begin{align*}
		|r_{s+1}(\theta_{s+1})|>|z_{s+1}-\theta_{s+1}|\cdot|z_{s+1}+\theta_{s+1}|>|r_{s+1}(\theta_{s+1})|,
	\end{align*}
	a contradiction. Thus, without loss of generality,
	\begin{align}\label{tzs+1}
		|\theta_{s+1}-z_{s+1}|\le|r_{s+1}(\theta_{s+1})|^{\frac{1}{2}}<\delta_s^{4}.
	\end{align}
	Moreover, for $|z|=\delta_{s}^{\frac{1}{10^3}}$, we have (since \eqref{zs+1} and \eqref{rs+12})
	\begin{align*}
		\frac{|r_{s+1}(z)-r_{s+1}(\theta_{s+1})|}{|(z-z_{s+1})\cdot(z+z_{s+1})+r_{s+1}(\theta_{s+1})|}\le 2\delta_s^{7}.
	\end{align*}
	Combined with $\theta_{s+1}^2-z_{s+1}^2+r_{s+1}(\theta_{s+1})=0$, this yields
	\begin{align*}
		&\ \frac{|(z-z_{s+1})\cdot(z+z_{s+1})+r_{s+1}(z)|}{|(z-\theta_{s+1})\cdot(z+\theta_{s+1})|}\\
		=&\ \frac{|(z-z_{s+1})\cdot(z+z_{s+1})+r_{s+1}(z)|}{|(z-z_{s+1})\cdot(z+z_{s+1})+r_{s+1}(\theta_{s+1})|}\\
		\in&\ \left[1-2\delta_s^{7},1+2\delta_s^{7}\right].
	\end{align*}
	By the maximum modulus principle, we have
	\begin{align*}
	\frac{1}{2}\le \frac{|(z-z_{s+1})\cdot(z+z_{s+1})+r_{s+1}(z)|}{|(z-\theta_{s+1})\cdot(z+\theta_{s+1})|}\le 2.
	\end{align*}
	Thus for $z$ in \eqref{zs3}, we obtain
	\begin{align}\label{detss+12}
		|\det \mathcal{S}_{s+1}(z)|\ge\delta_{s}\|z-\theta_{s+1}\|_{\T}\cdot\|z+\theta_{s+1}\|_{\T}.
	\end{align}
	Since $\delta_{s+1}^{\frac{1}{10^4}}<\frac{1}{2}\delta_{s}^{\frac{1}{10^3}}$, \eqref{zs+1} and \eqref{tzs+1}, we have the containment
	\begin{align*}
		\left\{z\in\C\ :\ \min_{\sigma=\pm1}|z+\sigma\theta_{s+1}|<\delta_{s+1}^{\frac{1}{10^4}}\right\}\subset\left\{z\in\C\ :\ |z|\le\delta_s^{\frac{1}{10^3}}\right\},
	\end{align*}
 proveing \eqref{detss} for step $(s+1)$. The inequalities
	\begin{align*}
		\|\theta+\bm k\cdot\bm\omega+\theta_{s+1}\|_{\T}<10\delta_{s+1}^{\frac{1}{100}}\text{ and } |\theta_{s+1}-z_{s+1}|<\delta_s^{4}
	\end{align*}
		imply
		\begin{align*}
			\left\|\theta+\left(\bm k+\frac{\bm l_s}{2}\right)\cdot\bm\omega+\theta_s\right\|_{\T}<\delta_s.
		\end{align*}
	For  $\bm k\in\Z^d+\frac{1}{2}\sum_{i=0}^{s}\bm l_i$ satisfying
	\begin{align*}
	\|\theta+\bm k\cdot\bm\omega+\theta_{s+1}\|_{\T}<10\delta_{s+1}^{\frac{1}{100}},
	\end{align*}
	we have
	\begin{align*}
		\bm k+\frac{\bm l_s}{2}\in\Z^d+\frac{1}{2}\sum_{i=0}^{s-1}\bm l_i\ \text{and}\ \left\|\theta+\left(\bm k+\frac{\bm l_s}{2}\right)\cdot\bm\omega+\theta_s\right\|_{\T}<\delta_s.
	\end{align*}
	Therefore, by \eqref{qspm}, we have $\bm k+\frac{\bm l_s}{2}\in Q_s^+$. Recalling \eqref{os+12} and \eqref{ps+12}, we have $\bm k\in P_{s+1}$. Thus
	\begin{align*}
		\left\{\bm k\in\Z^d+\frac{1}{2}\sum_{i=0}^{s}\bm l_i:\ \|\theta+\bm k\cdot\bm\omega+\theta_s\|_{\T}<10\delta_{s+1}^{\frac{1}{100}}\right\}\subset P_{s+1}.
	\end{align*}
	Similarly,
	\begin{align*}
		\left\{\bm k\in\Z^d+\frac{1}{2}\sum_{i=0}^{s}\bm l_i:\ \|\theta+\bm k\cdot\bm\omega-\theta_s\|_{\T}<10\delta_{s+1}^{\frac{1}{100}}\right\}\subset P_{s+1}.
	\end{align*}
	This establishes \eqref{fs} for the $(s+1)$-th step.
	
	Finally, we estimate $\|\mathcal{T}_{\tilde{\Omega}_{\bm k}^{s+1}}^{-1}\|$. For $\bm k\in P_{s+1}$, by \eqref{t+kos}, we have
	\begin{align*}
		\theta+\bm k\cdot\bm\omega\in\left\{z\in\C:\ \|z\|_{\T}\le\delta_{s}^{\frac{1}{10^3}}\right\}.
	\end{align*}
	Thus from \eqref{detss+12}, we obtain
	\begin{align*}
		&\ |\det(\mathcal{T}_{A_{\bm k}^{s+1}}-\mathcal{R}_{A_{\bm k}^{s+1}}\mathcal{T}\mathcal{R}_{\tilde{\Omega}_{\bm k}^{s+1}\setminus A_{\bm k}^{s+1}}\mathcal{T}_{\tilde{\Omega}_{\bm k}^{s+1}\setminus A_{\bm k}^{s+1}}^{-1}\mathcal{R}_{\tilde{\Omega}_{\bm k}^{s+1}\setminus A_{\bm k}^{s+1}}\mathcal{T}\mathcal{R}_{A_{\bm k}^{s+1}})|\\
		=&\ |\det \mathcal{S}_{s+1}(\theta+\bm k\cdot\bm\omega)|\\
		\ge&\ \delta_{s}\|\theta+\bm k\cdot\bm\omega-\theta_{s+1}\|\cdot\|\theta+\bm k\cdot\bm\omega+\theta_{s+1}\|.
	\end{align*}
	By \eqref{ss}, Cramer's rule and Lemma \ref{chi}, one has
	\begin{align*}
		&\ \|(\mathcal{T}_{A_{\bm k}^{s+1}}-\mathcal{R}_{A_{\bm k}^{s+1}}\mathcal{T}\mathcal{R}_{\tilde{\Omega}_{\bm k}^{s+1}\setminus A_{\bm k}^{s+1}}\mathcal{T}_{\tilde{\Omega}_{\bm k}^{s+1}\setminus A_{\bm k}^{s+1}}^{-1}\mathcal{R}_{\tilde{\Omega}_{\bm k}^{s+1}\setminus A_{\bm k}^{s+1}}\mathcal{T}\mathcal{R}_{A_{\bm k}^{s+1}})^{-1}\|\\
		=&\ |\det \mathcal{S}_{s+1}(\theta+\bm k\cdot\bm\omega)|^{-1}\\
		&\cdot\|(\mathcal{T}_{A_{\bm k}^{s+1}}-\mathcal{R}_{A_{\bm k}^{s+1}}\mathcal{T}\mathcal{R}_{\tilde{\Omega}_{\bm k}^{s+1}\setminus A_{\bm k}^{s+1}}\mathcal{T}_{\tilde{\Omega}_{\bm k}^{s+1}\setminus A_{\bm k}^{s+1}}^{-1}\mathcal{R}_{\tilde{\Omega}_{\bm k}^{s+1}\setminus A_{\bm k}^{s+1}}\mathcal{T}\mathcal{R}_{A_{\bm k}^{s+1}})^{*}\|\\
		\le&\ 2^{s+1}\cdot(4|v|_R)^{2^{s+1}}\delta_{s}^{-1}\|\theta+\bm k\cdot\bm\omega-\theta_{s+1}\|_{\T}^{-1}\cdot\|\theta+\bm k\cdot\bm\omega+\theta_{s+1}\|_{\T}^{-1}.
	\end{align*}
	From Lemma \ref{scl} and reference \eqref{tosa2}, we derive the following $\ell^2$-norm estimate
	\begin{align}
		\nonumber\|\mathcal{T}_{\tilde{\Omega}_{\bm k}^{s+1}}^{-1}\|&<4(1+\|\mathcal{T}_{\tilde{\Omega}_{\bm k}^{s+1}\setminus A_{\bm k}^{s+1}}\|)^2\\
		\nonumber&\ \ \times(1+\|(\mathcal{T}_{A_{\bm k}^{s+1}}-\mathcal{R}_{A_{\bm k}^{s+1}}\mathcal{T}\mathcal{R}_{\tilde{\Omega}_{\bm k}^{s+1}\setminus A_{\bm k}^{s+1}}\mathcal{T}_{\tilde{\Omega}_{\bm k}^{s+1}\setminus A_{\bm k}^{s+1}}^{-1}\mathcal{R}_{\tilde{\Omega}_{\bm k}^{s+1}\setminus A_{\bm k}^{s+1}}\mathcal{T}\mathcal{R}_{A_{\bm k}^{s+1}})^{-1}\|)\\
		\label{tos+1-12}&<\delta_{s}^{-2}\|\theta+\bm k\cdot\bm\omega-\theta_{s+1}\|_{\T}^{-1}\cdot\|\theta+\bm k\cdot\bm\omega+\theta_{s+1}\|_{\T}^{-1}.
	\end{align}
	For the case of
	\begin{align}\label{ls2}
		\left\|\frac{\bm l_s}{2}\cdot\bm\omega+\theta_s-\frac{1}{2}\right\|_{\T}<\delta_s^{\frac{1}{100}},
	\end{align}
	we obtain the inclusion
	\begin{align}\label{ps+1s2}
		P_{s+1}\subset\left\{\bm k\in\Z^d+\frac{1}{2}\sum_{i=0}^s \bm l_i:\ \left\|\theta+\bm k\cdot\bm\omega-\frac{1}{2}\right\|_{\T}<3\delta_s^{\frac{1}{100}}\right\}.
	\end{align}
	Now we examine $\mathcal{M}_{s+1}(z)$ in the shifted neighborhood
	\begin{align}\label{zs4}
		\left\{z\in\C:\ \left|z-\frac{1}{2}\right|\le \delta_{s}^{\frac{1}{10^3}}\right\}.
	\end{align}
	A similar argument shows that $\det\mathcal{M}_{s+1}(z)=0$ has two roots $\theta_{s+1}$ and $1-\theta_{s+1}$ in the set defined by \eqref{zs4} such that \eqref{Ss+10}--\eqref{tos+1-12} hold true for $z$ in \eqref{zs4}. Hence, if \eqref{ls2} holds, then \eqref{Ss+10}--\eqref{tos+1-12} hold true for $z$ in
	\begin{align*}
		\left\{z\in\C:\ \left\|z-\frac{1}{2}\right\|_{\T}\le \delta_{s}^{\frac{1}{10^3}}\right\}.
	\end{align*}
	By \eqref{ps+1s2}, for any $\bm k\in P_{s+1}$, we also have
	\begin{align*}
		\|\mathcal{T}_{\tilde{\Omega}_{\bm k}^{s+1}}^{-1}\|<\delta_{s}^{-2}\|\theta+\bm k\cdot\bm\omega-\theta_{s+1}\|_{\T}^{-1}\cdot\|\theta+\bm k\cdot\bm\omega+\theta_{s+1}\|_{\T}^{-1}.
	\end{align*}
	
	Therefore, we have established the desired estimates of $\|\mathcal{T}_{\tilde{\Omega}_{\bm k}^{s+1}}^{-1}\|$
	in both cases $(\bm C1)_{s}$ and $(\bm C2)_{s}$.
	
	\  \\
	
	\begin{itemize}
		\item[\textbf{Step2}]: \textbf{Off-diagonal estimates of $\mathcal{T}_{\tilde{\Omega}_{\bm k}^{s+1}}^{-1}$.}
	\end{itemize}
	The main result of this step is Theorem \ref{psqs}. Recalling
	\begin{align*}
		\delta_{s+1}=\delta_s^{10^{5\rho'}}
	\end{align*}
	and
	\begin{align*}
		Q_{s+1}^{\pm}=\left\{\bm k\in P_{s+1}:\ \|\theta+\bm k\cdot\bm\omega\pm\theta_{s+1}\|_{\T}<\delta_{s+1}\right\},\ Q_{s+1}=Q_{s+1}^+\cup Q_{s+1}^-,
	\end{align*}
	we have
	\begin{thm}\label{psqs}
		For $\bm k\in P_{s+1}\setminus Q_{s+1}$,  we have
		\begin{align}\label{tgsa}
			|\mathcal{T}_{\tilde{\Omega}_{\bm k}^{s+1}}^{-1}(\bm x,\bm y)|<e^{-\alpha'_s\log^{\rho}(1+\|\bm x-\bm y\|)}\ \text{for $\|\bm x-\bm y\|\ge\frac{\tilde{\zeta_s}}{10}$},
		\end{align}
		where $\alpha'_s=\alpha_s\left(1-\frac{20\times 10^{5\rho'}}{\alpha\log^{\rho-\rho'}N_{s+1}}\right)$ is defined in \eqref{alpha's}.
	\end{thm}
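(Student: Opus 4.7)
The plan is to run the argument of Theorem \ref{psq1} inductively, replacing scale $0$ by scale $s$ throughout and using the properties of $s$-$\good$ sets established in $(\bm e)_s$ of Statement $\mathscr{P}_s$. First I would note the following structural fact. By the construction in Step 1 (both under $(\bm C1)_s$ and $(\bm C2)_s$), $Q_s \subset \bigcup_{\bm k \in P_{s+1}} A_{\bm k}^{s+1} \subset \bigcup_{\bm k \in P_{s+1}} 2\Omega_{\bm k}^{s+1}$; combined with the nested-block properties \eqref{a1s}, this implies that $\tilde{\Omega}_{\bm k}^{s+1} \setminus 2\Omega_{\bm k}^{s+1}$ is $s$-$\good$. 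Hence $(\bm e)_s$ applies on this set, yielding both the $\ell^2$ bound $\|\mathcal{T}_{\tilde{\Omega}_{\bm k}^{s+1} \setminus 2\Omega_{\bm k}^{s+1}}^{-1}\| < \delta_s^{-3}$ and the off-diagonal decay with rate $\alpha_s$. Combining \eqref{tos+1-11}/\eqref{tos+1-12} with $\bm k \notin Q_{s+1}$ gives $\|\mathcal{T}_{\tilde{\Omega}_{\bm k}^{s+1}}^{-1}\| < \delta_{s+1}^{-3}$.

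Next I would establish an $s$-level analog of Corollary \ref{ite0g}: if $\lg' \subset \lg$ is $s$-$\good$, then iterating the resolvent identity $\mathcal{T}_\lg^{-1} = -\ep \mathcal{T}_{\lg'}^{-1} \mathcal{R}_{\lg'} \mathcal{W} \mathcal{R}_{\lg \setminus \lg'} \mathcal{T}_\lg^{-1}$ together with the off-diagonal decay of $\mathcal{T}_{\lg'}^{-1}$ at rate $\alpha_s$ and the decay of $\mathcal{W}$ at rate $\alpha$ furnishes a step estimate of the form $|\mathcal{T}_\lg^{-1}(\bm u, \bm v)| \le e^{-\alpha_s (1 - O(\log^{\rho'-\rho} N_{s+1})) \log^{\rho}(1+\|\bm u-\bm u'\|)} |\mathcal{T}_\lg^{-1}(\bm u', \bm v)|$ for some $\bm u' \in \lg \setminus \lg'$. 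The proof mirrors the $0$-scale version, only the partial-decay/Schur-test bookkeeping changes: one pulls out $e^{-\frac{\alpha_s}{10} \log^{\rho}}$ to ensure summability over $\lg'$ and uses the quasi-metric inequality \eqref{quaeq} with $n$-dependent loss $C(\rho) \log^{\rho} 2$ to combine consecutive hops.

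With this in hand the proof of \eqref{tgsa} splits into two cases, exactly as in Theorem \ref{psq1}. \textbf{Case 1:} one of $\bm x, \bm y$ lies in $2\Omega_{\bm k}^{s+1}$; I apply the iteration lemma once across $\tilde{\Omega}_{\bm k}^{s+1} \setminus 2\Omega_{\bm k}^{s+1}$, combine it with the norm bound $\|\mathcal{T}_{\tilde{\Omega}_{\bm k}^{s+1}}^{-1}\| < \delta_{s+1}^{-3}$, and use Lemma \ref{el} to extract $\log^{\rho}(1+\|\bm x-\bm y\|)$ from $\log^{\rho}(1+\|\bm x-\bm x'\|)$, paying a relative loss of order $\zeta_{s+1}/\tilde{\zeta}_{s+1} \log N_{s+1}$. \textbf{Case 2:} both $\bm x, \bm y \in \tilde{\Omega}_{\bm k}^{s+1} \setminus 2\Omega_{\bm k}^{s+1}$. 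Define $\bm x_0 = \bm x$, $\bm x_{l+1} = \hat{\bm x}_l$ obtained by iterating on the $s$-$\good$ neighborhoods $\tilde{\Omega}_{\bm k}^{s+1} \cap \lg_{N_{s+1}/2}(\bm x_l)$ (or the block $\tilde{\Omega}_{\bm k}^{s+1}$ itself when the walk enters $2\Omega_{\bm k}^{s+1}$), stop at the first index $l_1$ with $\bm x_{l_1} \in 2\Omega_{\bm k}^{s+1} \cup \lg_{N_{s+1}/2}(\bm y)$, and distinguish the three subcases $l_1 > l^*$, $l_1 \le l^*$ with $\bm x_{l_1} \in \lg_{N_{s+1}/2}(\bm y)$, and $l_1 \le l^*$ with $\bm x_{l_1} \in 2\Omega_{\bm k}^{s+1}$, where $l^*$ is the threshold analogous to the one used in the proof of Theorem \ref{psq1}.

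The main obstacle is the bookkeeping of the exponent loss. At each of the $l_1$ iterations one pays a multiplicative factor $(1 - C \log^{\rho'-\rho} N_{s+1})$ and, when re-assembling via \eqref{quaeq}, a subtractive loss $C(\rho) \log^{\rho} l_1$; meanwhile, extracting $\log^{\rho}(1+\|\bm x-\bm y\|)$ from $\log^{\rho}(1+\|\bm x_{l_1}-\bm x\|)$ via Lemma \ref{el} costs another factor $(1 - C N_{s+1}/\tilde{\zeta}_{s+1} \log N_{s+1})$. The delicate point, which is exactly the new feature highlighted in the introduction, is that the exponent $\alpha_s$ must survive this compounding and still exceed $\alpha'_s = \alpha_s(1 - 20 \times 10^{5\rho'}/\alpha \log^{\rho-\rho'} N_{s+1})$. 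This is arranged by using $1 < \rho' < \rho < 1+\rho'$ so that $\log^{\rho} l_1 \le \log^{\rho}(2\alpha \log^{\rho}(1+\|\bm x-\bm y\|))$ is absorbed into a term of order $\log^{\rho'} / \log^{\rho-\rho'} N_{s+1}$, together with the scale separation $\tilde{\zeta}_{s+1} \gg N_{s+1} \gg N_s^{100}$ and $\delta_{s+1} = \delta_s^{10^{5\rho'}}$ that convert any sub-polynomial prefactors such as $\delta_{s+1}^{-3}$ or $e^{3 \times 10^{5\rho'} \log^{\rho'}(N_{s+1}+1)}$ into harmless corrections of the exponent $\alpha'_s$.
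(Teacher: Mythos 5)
The overall architecture of your proof — establish $s$-goodness of $\tilde{\Omega}_{\bm k}^{s+1}\setminus 2\Omega_{\bm k}^{s+1}$, derive the norm bound $\delta_{s+1}^{-3}$, split into the two cases, iterate, and track the exponent with Lemma~\ref{el} and the quasi-metric inequality — matches the paper. However, there is a genuine gap in your Case~2.

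You propose to iterate the resolvent identity over the neighborhoods $\lg' = \tilde{\Omega}_{\bm k}^{s+1}\cap\lg_{N_{s+1}/2}(\bm x_l)$ and you assert that these are $s$-good. In general they are not. The obstruction is condition~(1) in the definition of $s$-good: if $\bm l'\in Q_{s'}$ with $\tilde{\Omega}_{\bm l'}^{s'}\subset\lg'$ sits near the boundary of the ball $\lg_{N_{s+1}/2}(\bm x_l)$, the implied containment $\tilde{\Omega}_{\bm l}^{s'+1}\subset\lg'$ can fail, because $\tilde{\Omega}_{\bm l}^{s'+1}$ has diameter up to $\approx N_{s'+1}^{100}$ which can overshoot the ball's boundary. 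In the $s=0$ case (Theorem~\ref{psq1}) there is no condition~(1) to check, so the single-scale ball $\lg_{N_1/2}(\bm z)\cap\tilde{\Omega}_{\bm k}^1$ really is $0$-good; that argument does not promote to scale $s$ by simply replacing $N_1$ with $N_{s+1}$.

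The paper's remedy is a \emph{multi-scale} neighborhood function $O(\bm z)$: for $\bm z$ outside all scale-$1$ singular blocks, $O(\bm z)=\tilde{\Omega}_{\bm k}^{s+1}\cap\lg_{N_1/2}(\bm z)$ (base scale $N_1$, automatically $0$-good); for $\bm z\in 2\Omega_{\bm l}^t$ with $\bm l\in\widetilde{P}_t\setminus Q_t$, $O(\bm z)=\tilde{\Omega}_{\bm l}^t$ (a full enlarged block, never cutting through lower-scale structure). This is why the paper needs both Lemma~\ref{ite0g}/Lemma~\ref{itetg} \emph{and} Lemma~\ref{itepsqt}, and why the threshold $l^*$ in Cases 2-1/2-2/2-3 is calibrated with $\log^{\rho}(N_1/2+1)$, not $\log^{\rho}(N_{s+1}/2+1)$, in the denominator. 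Your step estimate is formulated with step size $\geq N_{s+1}/2$, which would be correct if the proposed neighborhoods were admissible, but they are not. To close the gap you would need to replace your single-scale ball with the paper's block-adapted $O(\bm z)$ (or give an independent argument that your $\lg'$ satisfies condition~(1), which it does not in general).
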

	
	To prove Theorem \ref{psqs}, we first establish two preparatory lemmas regarding iterative estimates:
		\begin{lem}\label{itetg}
		Assume $t\le s$ and $\lg'\subset\lg\subset\Z^d$ be finite sets. If $\lg'\subset\lg$ is $t$-good, $\bm u'\in\lg'$ and $\bm v\in\lg\setminus\lg'$, then there is some $\bm u'\in\lg \setminus\lg'$ such that
		\begin{align}\label{itetgeq1}
			|\boT_{\lg}^{-1}(\bm u,\bm v)|\le (\#\lg')\cdot e^{-\alpha_t\log^{\rho}(1+\|\bm u-\bm u'\|)}\cdot|\boT_{\lg}^{-1}(\bm u',\bm v)|\cdot\delta_t^{-3}.
		\end{align}
	\end{lem}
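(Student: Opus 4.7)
The plan is to mirror the argument of Corollary \ref{ite0g} at scale $t$, replacing the $0$-good estimates \eqref{0gl2}--\eqref{0gde} with their $t$-good counterparts \eqref{tsgnorm}--\eqref{tsgdecay}. First I would apply the resolvent identity between $\lg'$ and $\lg$:
\begin{align*}
\boT_{\lg}^{-1}(\bm u,\bm v)=-\ep\sum_{\bm w\in\lg',\,\bm w'\in\lg\setminus\lg'}\boT_{\lg'}^{-1}(\bm u,\bm w)\cdot\boW(\bm w,\bm w')\cdot\boT_{\lg}^{-1}(\bm w',\bm v).
\end{align*}
Then, exactly as in \eqref{0gri}, I would insert a mollifier $e^{\frac{\alpha}{10}\log^{\rho}(1+\|\bm w-\bm w'\|)}$, absorb the sum of its reciprocals into the constant $D(\alpha/10)$ from \eqref{Det}, and extract a pair $(\bm u^*,\bm u')\in\lg'\times(\lg\setminus\lg')$ realizing the maximum of
\begin{align*}
\bigl|\boT_{\lg'}^{-1}(\bm u,\bm u^*)\cdot\boW(\bm u^*,\bm u')\cdot e^{\frac{\alpha}{10}\log^{\rho}(1+\|\bm u^*-\bm u'\|)}\cdot\boT_{\lg}^{-1}(\bm u',\bm v)\bigr|.
\end{align*}

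Next I would split into cases for $\bm u^*$. If $\|\bm u-\bm u^*\|\le 10\tz_t$ (including $\bm u^*=\bm u$), I control $|\boT_{\lg'}^{-1}(\bm u,\bm u^*)|$ by the norm bound $\delta_t^{-3}$ from \eqref{tsgnorm}; the factor $\boW(\bm u^*,\bm u')$ together with the quasi-metric property \eqref{quaeq} then yields the prefactor $e^{-\alpha_t\log^{\rho}(1+\|\bm u-\bm u'\|)}$ up to a harmless constant $e^{\alpha C(\rho)\log^{\rho}2}$, since on this short range $\log^{\rho}(1+\|\bm u-\bm u'\|)\approx\log^{\rho}(1+\|\bm u^*-\bm u'\|)$ via Lemma \ref{el}. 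If instead $\|\bm u-\bm u^*\|>10\tz_t$, I use the off-diagonal estimate \eqref{tsgdecay} to get $|\boT_{\lg'}^{-1}(\bm u,\bm u^*)|\le e^{-\alpha_t\log^{\rho}(1+\|\bm u-\bm u^*\|)}$, then combine with the $\boW$-decay through \eqref{quaeq} to conclude
\begin{align*}
|\boT_{\lg'}^{-1}(\bm u,\bm u^*)|\cdot|\boW(\bm u^*,\bm u')|\cdot e^{\frac{\alpha}{10}\log^{\rho}(1+\|\bm u^*-\bm u'\|)}\le e^{-\alpha_t\log^{\rho}(1+\|\bm u-\bm u'\|)+\alpha C(\rho)\log^{\rho}2}.
\end{align*}

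The two cases merge into a uniform bound of the form $\delta_t^{-3}\cdot e^{-\alpha_t\log^{\rho}(1+\|\bm u-\bm u'\|)}$ (the constant $e^{\alpha C(\rho)\log^{\rho}2}$ is absorbed into $\delta_t^{-3}$ since $\delta_t\ll 1$). Multiplying by the factor $(\#\lg')\cdot D(\alpha/10)$ from the Schur-type sum and recalling that $D(\alpha/10)$ is an absolute constant yields the stated estimate \eqref{itetgeq1}.

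The main subtlety — and really the only one — is the short-range regime $\|\bm u-\bm u^*\|\le 10\tz_t$, where the decay estimate \eqref{tsgdecay} is unavailable: here one must rely on the $\ell^2$-norm bound $\|\boT_{\lg'}^{-1}\|\le\delta_t^{-3}$ and then use Lemma \ref{el} to justify that $\log^{\rho}(1+\|\bm u-\bm u'\|)$ can be extracted from $\log^{\rho}(1+\|\bm u^*-\bm u'\|)$ with a negligible multiplicative loss; this relies on the gap $\|\bm u^*-\bm u'\|\gg\|\bm u-\bm u^*\|$ being typical when $\bm u'\in\lg\setminus\lg'$ is far from the interior, but in any event the loss is swallowed by the $\frac{\alpha}{10}$-buffer. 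Everything else is bookkeeping exactly parallel to the proof of Corollary \ref{ite0g}.
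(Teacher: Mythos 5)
Your proposal follows the same route as the paper's Appendix proof: resolvent identity across $\partial\lg'$, insertion of the $e^{\frac{\alpha}{10}\log^{\rho}(1+\|\bm w-\bm w'\|)}$ mollifier and absorption of the sum into $(\#\lg')D(\alpha/10)$, extraction of a maximizing pair $(\bm u^*,\bm u')$, and a case split at $\|\bm u-\bm u^*\|\lessgtr 10\tz_t$ using \eqref{tsgnorm} in the short-range regime and \eqref{tsgdecay} together with the quasi-metric inequality in the long-range regime. The one point you flag as the ``only subtlety'' — converting $\log^{\rho}(1+\|\bm u^*-\bm u'\|)$ to $\log^{\rho}(1+\|\bm u-\bm u'\|)$ in the short-range case — is also left implicit in the paper's write-up, so your treatment is essentially identical in both structure and level of rigor.
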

	\begin{proof}
		We refer to the Appendix \ref{app} for a detailed proof.
	\end{proof}
	\begin{lem}\label{itepsqt}
		Let $\lg\subset\Z^d$ be a finite set. Assume $t\le s$, $\bm l\in P_t\setminus Q_t$ and $\tO_{\bm l}^{s}\subset\lg$. If $\bm u\in2\Omega_{\bm l}^t$ and $\bm v\in \lg\setminus\tO_{\bm l}^{t}$, then there is some $\bm u'\in \lg\setminus\tO_{\bm l}^{t}$ such that
		\begin{align}
		\nonumber	|\boT_{\lg}^{-1}(\bm u,\bm v)|&\le (\#\tO_{\bm l}^t)\cdot e^{-\alpha_t\log^{\rho}(1+\|\bm u-\bm u'\|)}\cdot	|\boT_{\lg}^{-1}(\bm u',\bm v)|\cdot\delta_t^{-3}\\
	\label{itepsqteq1} &\le e^{-\alpha_t\left(1-\frac{6\times 10^{5\rho'}}{\alpha\log^{\rho-\rho'}N_{t+1}}\right)\log^{\rho}(1+\|\bm u-\bm u'\|)}\cdot	|\boT_{\lg}^{-1}(\bm u',\bm v)|,
		\end{align}
	\end{lem}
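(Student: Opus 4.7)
The plan is to apply the resolvent identity with the choice $\lg' = \tO_{\bm l}^t$ and then exploit the two bounds supplied by property $(\bm e)_t$ (valid since $\bm l \in P_t \setminus Q_t$): the operator norm control $\|\boT_{\tO_{\bm l}^t}^{-1}\| \le \delta_t^{-3}$ from \eqref{PsQsnorm} and the off-diagonal decay $|\boT_{\tO_{\bm l}^t}^{-1}(\bm x,\bm y)| < e^{-\alpha'_{t-1}\log^{\rho}(1+\|\bm x-\bm y\|)}$ for $\|\bm x-\bm y\|>\tz_t/10$ from \eqref{PsQsdecay}, remembering that $\alpha'_{t-1}>\alpha_t$. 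Structurally the argument will mirror Corollary \ref{ite0g} (which is the $t=0$ analogue) and the proof of Lemma \ref{ite1} (the $t$-th scale case in the $\mathscr P_1$ step).

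First I would write
\begin{align*}
\boT_\lg^{-1}(\bm u,\bm v) = -\ep \sum_{\bm w\in\tO_{\bm l}^t,\ \bm w'\in\lg\setminus\tO_{\bm l}^t} \boT_{\tO_{\bm l}^t}^{-1}(\bm u,\bm w)\,\boW(\bm w,\bm w')\,\boT_\lg^{-1}(\bm w',\bm v),
\end{align*}
pull a factor $e^{(\alpha/10)\log^{\rho}(1+\|\bm w-\bm w'\|)}$ out of $\boW$ using \eqref{wphi}, and bound the remaining sum by $(\#\tO_{\bm l}^t)\cdot D(\alpha/10)$ exactly as in \eqref{0gri}, reducing the problem to the maximal summand. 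Let $(\bm w^*,\bm u')$ denote a maximizer with $\bm w^*\in\tO_{\bm l}^t$ and $\bm u'\in\lg\setminus\tO_{\bm l}^t$.

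Next I split on the size of $\|\bm u-\bm w^*\|$. Note that because $\bm u\in 2\Omega_{\bm l}^t$ and $\bm u'\notin\tO_{\bm l}^t$, the geometry of Statement \ref{state} forces $\|\bm u-\bm u'\|\ge \tz_t/2 \gg \tz_t/10$. If $\|\bm u-\bm w^*\|\le \tz_t/10$ I use the crude bound $|\boT_{\tO_{\bm l}^t}^{-1}(\bm u,\bm w^*)|\le \delta_t^{-3}$ together with the $\boW$-decay, and then apply Lemma \ref{el} with $x=\|\bm u-\bm u'\|$, $y=\|\bm u-\bm w^*\|$ to transfer $\log^{\rho}(1+\|\bm w^*-\bm u'\|)$ into $\log^{\rho}(1+\|\bm u-\bm u'\|)$ at a cost of a $1/\log N_{t+1}$-small factor. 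If $\|\bm u-\bm w^*\| > \tz_t/10$ I use the off-diagonal estimate, obtaining
\begin{align*}
|\boT_{\tO_{\bm l}^t}^{-1}(\bm u,\bm w^*)|\cdot|\boW(\bm w^*,\bm u')| \le e^{-\alpha'_{t-1}\log^{\rho}(1+\|\bm u-\bm w^*\|)-\alpha\log^{\rho}(1+\|\bm w^*-\bm u'\|)},
\end{align*}
and then invoke the quasi-metric inequality \eqref{quaeq} to coalesce the two logs into $\log^{\rho}(1+\|\bm u-\bm u'\|) - C(\rho)\log^{\rho}2$; since $\alpha'_{t-1}>\alpha_t$ and $\|\bm u-\bm u'\|\gg 1$, the $C(\rho)\log^{\rho}2$ loss is absorbed. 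In both cases I recover the first inequality of \eqref{itepsqteq1} with prefactor $(\#\tO_{\bm l}^t)\cdot\delta_t^{-3}$.

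The second inequality then follows by absorbing this polynomial prefactor into the exponent: one has $\#\tO_{\bm l}^t \le (2\tz_t+1)^d$ and $\delta_t^{-3} \le e^{3\log^{\rho'}(N_{t+1}+1)}$, while the lower bound $\|\bm u-\bm u'\| \gtrsim \tz_t\gg N_t^{10}$ guarantees $\log^{\rho}(1+\|\bm u-\bm u'\|)$ dominates $\log^{\rho'}N_{t+1}$ by the margin needed to convert the prefactor into the claimed multiplicative loss $\alpha_t \cdot \tfrac{6\times 10^{5\rho'}}{\alpha\log^{\rho-\rho'}N_{t+1}}\log^{\rho}(1+\|\bm u-\bm u'\|)$ in the exponent, using $1<\rho'<\rho<1+\rho'$. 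The main obstacle will be the constant bookkeeping: one must verify that the losses from Lemma \ref{el} in the small-distance sub-case, the $C(\rho)\log^{\rho}2$ term from \eqref{quaeq} in the large-distance sub-case, and the $(\#\tO_{\bm l}^t)\delta_t^{-3}$ absorption all fit inside the single scale-$N_{t+1}$ budget $\tfrac{6\times 10^{5\rho'}}{\alpha\log^{\rho-\rho'}N_{t+1}}$, uniformly in $t \le s$. This is expected to work because $\alpha'_{t-1} - \alpha_t$ already carries a $\log^{\rho-\rho'}N_t$-type surplus (cf. \eqref{alphas}, \eqref{alpha's}), but the verification requires care analogous to the computations \eqref{101}--\eqref{103} in the proof of Theorem \ref{psq1}.
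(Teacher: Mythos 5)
Your proposal captures exactly what the paper intends: the paper's own proof of Lemma~\ref{itepsqt} is the single sentence ``The proof is similar to that of Lemma~\ref{itetg}, we omit the details,'' plus the remark that the second inequality comes from $\dist(2\Omega_{\bm l}^t,\lg\setminus\tO_{\bm l}^t)\ge\tz_t/3$ and \eqref{indpa}. Your plan --- resolvent identity over $\tO_{\bm l}^t$, pulling out a fraction of the $\boW$-decay for the $D(\alpha/10)$ sum, splitting on the size of $\|\bm u-\bm w^*\|$ relative to $\tz_t/10$, using \eqref{PsQsnorm} plus Lemma~\ref{el} in the small-distance branch and \eqref{PsQsdecay} plus the quasi-metric \eqref{quaeq} in the large-distance branch, then absorbing the $(\#\tO_{\bm l}^t)\cdot\delta_t^{-3}$ prefactor --- is precisely the argument used for Lemma~\ref{itetg} and Lemma~\ref{ite1}, adapted to the $P_t\setminus Q_t$ setting, so the approach is the same as the paper's.

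One caveat in your bookkeeping deserves a flag. You present the second inequality as an \emph{a posteriori} consequence of the first, claiming that the budget $\alpha_t\cdot\tfrac{6\times10^{5\rho'}}{\alpha\log^{\rho-\rho'}N_{t+1}}\log^{\rho}(1+\|\bm u-\bm u'\|)$ suffices to absorb the $(\#\tO_{\bm l}^t)\cdot\delta_t^{-3}$ factor. But $\|\bm u-\bm u'\|$ is only bounded below at scale $\tz_t\lesssim N_t^{100}$, while $\delta_t^{-3}\approx e^{3\log^{\rho'}N_{t+1}}$ lives at scale $N_{t+1}\approx N_t^{10^5}$; a direct comparison gives a budget-to-loss ratio of order $10^{5\rho'-4\rho}$ (up to bounded constants), which is not $\ge 1$ uniformly across the permitted window $\rho\in(\rho',\rho'+1)$. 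The absorption therefore cannot be carried out starting from the $\alpha_t$-rate form of the first inequality alone. What rescues it is exactly what you gesture at in your last sentence: the underlying resolvent estimate actually carries the rate $\alpha'_{t-1}$ (large-distance branch) or $\tfrac{9}{10}\alpha(1-o(1))$ (small-distance branch), and it is the genuine surplus $\alpha'_{t-1}-\alpha_t\gtrsim\tfrac{10^{5\rho'}}{\log^{\rho-\rho'}N_t}$ --- at scale $N_t$, matching $\log(\tz_t/3)$ --- that absorbs the prefactor. So the second inequality should be obtained directly from the raw estimate rather than derived from the stated $\alpha_t$-form of the first inequality. With that adjustment your argument goes through and is faithful to the paper's intended (omitted) proof.
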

		\begin{proof}
		The proof is similar to that of Lemma \ref{itetg}, we omit the details.
	\end{proof}
\begin{rem}
	The second inequality is derived from $\dist(2\Omega_{\bm l}^t,\lg\setminus\tO_{\bm l}^t)\ge\frac{\tz_t}{3}$ and \eqref{indpa}.
\end{rem}
	
	\begin{proof}[Proof of Theorem \ref{psqs}]
		We claim that $\tO_{\bm k}^{s+1}\setminus2\Omega_{\bm k}^{s+1}$ is $s$-good. In fact, for $s'\le s-1$, assume 
		\begin{align*}
			\tO_{\bm l'}^{s'}\subset \tO_{\bm k}^{s+1}\setminus2\Omega_{\bm k}^{s+1},\ \tO_{\bm l'}^{s'}\subset\Omega_{\bm l}^{s'+1}\text{ and }\tO_{\bm l}^{s'+1}\cap 2\Omega_{\bm k}^{s+1}\ne\emptyset.
		\end{align*}
		Thus by \eqref{a1s}, we obtain 
		\begin{align*}
			\Omega_{\bm l}^{s'+1}\subset2\Omega_{\bm k}^{s+1},
		\end{align*}
		which contradicts 
		\begin{align*}
			\tO_{\bm l'}^{s'}\subset \tO_{\bm k}^{s+1}\setminus2\Omega_{\bm k}^{s+1}.
		\end{align*}
		 If there exists a $\bm k'$ such that $\bm k'\in Q_s$ and 
		 \begin{align*}
		 	\tO_{\bm k'}^{s}\subset \tO_{\bm k}^{s+1}\setminus2\Omega_{\bm k}^{s+1}\subset \tO_{\bm k}^{s+1},
		 \end{align*}
		  then by \eqref{311} and $\dist(\tilde{\Omega}_{\bm u}^{s+1},\tilde{\Omega}_{\bm u'}^{s+1})>10\tz_{s+1}\ \text{for}\ \bm u\ne\bm u'\in P_{s+1}$, we have
		\begin{align*}
			\tO_{\bm k'}^{s}\subset \Omega_{\bm k}^{s+1}\subset 2\Omega_{\bm k}^{s+1}\subset \tO_{\bm k}^{s+1}.
		\end{align*}
		This contradicts $\tO_{\bm k'}^{s}\subset\tO_{\bm k}^{s+1}\setminus2\Omega_{\bm k}^{s+1}$. We have proven the claim. 
		
		Since \eqref{tb0} and $\bm k\notin Q_{s+1}$, we have
		\begin{align}\label{tks}
			\|\mathcal{T}_{\tilde{\Omega}_{\bm k}^{s+1}}^{-1}\|<\delta_s^{-2}\|\theta+\bm k\cdot\bm\omega-\theta_s\|_{\T}^{-1}\cdot\|\theta+\bm k\cdot\bm\omega+\theta_s\|_{\T}^{-1}<\delta_s^{-2}\delta_{s+1}^{-2}<\delta_{s+1}^{-3}.
		\end{align}
		To obtain the desired estimates, we divide the remaining proof into several cases.		
		\begin{itemize}
			\item[\textbf{Case 1}:] $\bm x\in 2\Omega_{\bm k}^{s+1}$ or $\bm y\in 2\Omega_{\bm k}^{s+1}$. Without loss of generality, we assume that $\bm y\in2\Omega_{\bm k}^{s+1}$. From $\|\bm x-\bm y\|\ge\frac{\tz_{s+1}}{10}$ and $\diam(2\Omega_{\bm k}^{s+1})\ll\tz_{s+1}$, we have $\bm x\in \tilde{\Omega}_{\bm k}^{s+1}\setminus 2\Omega_{\bm k}^{s+1}$. Since $\tilde{\Omega}_{\bm k}^{s+1}\setminus 2\Omega_{\bm k}^{s+1}$ is $s$-good and \eqref{itetgeq1}, there is some $\bm x'\in 2\Omega_{\bm k}^{s+1}$ such that
			\begin{align}
			\nonumber	|\boT_{\tilde{\Omega}_{\bm k}^{s+1}}^{-1}(\bm x,\bm y)|\le&\ (\#(\tilde{\Omega}_{\bm k}^{s+1}\setminus 2\Omega_{\bm k}^{s+1}))\cdot e^{-\alpha_s\log^{\rho}(1+\|\bm x-\bm x'\|)}\\
				\label{60102}&\cdot|\boT_{\tilde{\Omega}_{\bm k}^{s+1}}^{-1}(\bm x',\bm y)|\cdot\delta_s^{-3}.
			\end{align}
			
			Next, we will extract $\log^{\rho}(1+\|\bm x-\bm y\|)$ from $\log^{\rho}(1+\|\bm x-\bm x'\|)$.  Since $\bm x',\bm y\in 2\Omega_{\bm k}^{s+1}$, $\|\bm y-\bm x'\|\le 4\zeta_{s+1}$. From $\|\bm x-\bm y\|\ge \frac{\tilde{\zeta}_{s+1}}{10}\gg4\zeta_{s+1}\gg1$ and \eqref{exl}, we have
			\begin{align}
				\nonumber&\ \ \log^{\rho}(1+\|\bm x-\bm x'\|)\ge\log^{\rho}(1+\|\bm x-\bm y\|-4\zeta_1)\\
				\nonumber	\ge&\ \ \left(1-2\rho\frac{4\zeta_{s+1}}{(1+\|\bm x-\bm y\|)\log(1+\|\bm x-\bm y\|)}\right)\log^{\rho}(1+\|\bm x-\bm y\|)\\
				\label{60103}	\ge&\ \ \left(1-\frac{80\rho\zeta_{s+1}}{\tz_{s+1}\log N_{s+1}}\right)\log^{\rho}(\|\bm x-\bm y\|+1).
			\end{align}
			Therefore, from \eqref{tks}, \eqref{60102}, \eqref{60103}, $1<\rho'<\rho<\rho'+1$ and $\|\bm x-\bm y\|\ge \frac{\tilde{\zeta}_{s+1}}{10}$, we get
			\begin{align*}
				|\boT_{\tilde{\Omega}_{\bm k}^{s+1}}^{-1}(\bm x,\bm y)|&\le (2\tilde{\zeta}_{s+1}+1)^d\cdot e^{-\alpha_s\left(1-\frac{80\rho\zeta_{s+1}}{\tz_{s+1}\log N_{s+1}}\right)\log^{\rho}(\|\bm x-\bm y\|+1)}\cdot\delta_{s}^{-5}\cdot\delta_{s+1}^{-2}\\
				&\le e^{-\alpha_{s}\left(1-\frac{6\times 10^{5\rho'}}{\alpha\log^{\rho-\rho'} N_{s+1}}\right)\log^{\rho}(\|\bm x-\bm y\|+1)}.
			\end{align*}
			We complete the proof in this case.
			\item[\textbf{Case 2}:] $\bm x\in \tilde{\Omega}_{\bm k}^{s+1}\setminus 2\Omega_{\bm k}^{s+1}$ and $\bm y\in \tilde{\Omega}_{\bm k}^{s+1}\setminus 2\Omega_{\bm k}^{s+1}$. In this case, for $1\le t\le s+1$, we define
			\begin{align}\label{wpt1}
				\widetilde{P}_t=\{\bm l\in P_t:\ \exists\bm l'\in Q_{t-1}\ {\rm s.t.}\ \tO_{\bm l'}^{t-1}\subset\tO_{\bm k}^{s+1},\tO_{\bm l'}^{t-1}\subset\Omega_{\bm l}^{t}\}.
			\end{align}			
			From \eqref{a1s},  \eqref{311} and \eqref{wpt1}, it follows that for $\bm l'\in\widetilde{P}_t\cap Q_t$ $(1\le t\le s)$, there is a $\bm l\in \widetilde{P}_{t+1}$ such that
			\begin{align*}
				\tO_{\bm l'}^{t}\subset \Omega_{\bm l}^{t+1}.
			\end{align*}
			Hence for any $\bm z\in \tO_{\bm k}^{s+1}\setminus2\Omega_{\bm k}^{s+1}$, if 
			\begin{align*}
				\bm z\in\bigcup_{\bm l\in\widetilde{P}_1}2\Omega_{\bm l}^{1},
			\end{align*}
			then there exists a $t\in[1,s]$ such that
			\begin{align*}
				\bm z\in \bigcup_{\bm l\in\widetilde{P}_t\setminus Q_t}2\Omega_{\bm l}^{t}.
			\end{align*}
			Therefore, for $\bm z\in \tO_{\bm k}^{s+1}\setminus2\Omega_{\bm k}^{s+1}$, we can define
			\begin{align*}
				O(\bm z)=\left\{\begin{array}{ll}
					\tO_{\bm k}^{s+1}\cap\lg_{\frac{1}{2}N_1}(\bm z), & \text{if }\bm z\in\tO_{\bm k}^{s+1}\setminus\left(\bigcup_{\bm l\in \widetilde{P}_1}2\Omega_{\bm l}^{1}\right),\\
					\tO_{\bm l}^{t}, & \text{if }\bm z\in 2\Omega_{\bm l}^{t}\text{ for some $\bm l\in\widetilde{P}_t\setminus Q_t$}.
				\end{array}\right.
			\end{align*}
			Moreover, for $\bm z\in\tilde{\Omega}_{\bm k}^{s+1}$, by Corollary \ref{itepsqt}, we can define
			\begin{align*}
				\hat{\bm z}=\left\{\begin{array}{ll}
					\bm z, & \bm z\in 2\Omega_{\bm k}^{s+1}\bigcup O(\bm y),\\
					\bm z', & \bm z\in \tilde{\Omega}_{\bm k}^{s+1}\setminus\left(2\Omega_{\bm k}^{s+1}\bigcup O(\bm y)\right).
				\end{array}\right.
			\end{align*}
			
			Let $\bm x_0:=\bm x$ and $\bm x_{l+1}=\hat{\bm x}_{l}$, $l\ge0$. For given $\{\bm x_{l}\}_{l\in\N}$, we define $l_1\ge1$ to be the smallest integer so that $\bm x_{l_1}\in 2\Omega_{\bm k}^{s+1}\bigcup O(\bm y)$. We then have 
			\begin{align*}
				\bm x_{i}\in \tilde{\Omega}_{\bm k}^{s+1}\setminus\left(2\Omega_{\bm k}^{s+1}\bigcup O(\bm y)\right)\text{ for }0\le i<l_1.
			\end{align*}
		We also divide the discussion into 3 cases:
			\item[\textbf{Case 2-1}:] $l_1>\left[\frac{2\alpha \log^{\rho}(\|\bm x-\bm y\|+1)+6\times 10^{5\rho'}\log^{\rho'}(N_{s+1}+1)}{\alpha\log^{\rho}\left(\frac{N_1}{2}+1\right)}\right]+1:=l^*$. Since $\bm x_{i}\in \tilde{\Omega}_{\bm k}^{s+1}\setminus\left(2\Omega_{\bm k}^{s+1}\bigcup O(\bm y)\right)$ for $0\le i<l_1$, we have
			\begin{align*}
				\bm x_{i+1}=\bm x'_{s+1}\in \tilde{\Omega}_{\bm k}^{1}\setminus O(\bm x_i)\ \text{(cf. Corollary \ref{itepsqt})}
			\end{align*}
			 and then 
			 \begin{align*}
			 	\|\bm x_{i+1}-\bm x_{i}\|\ge\frac{N_1}{2}\text{ for $0\le i<l_1$.}
			 \end{align*}
			 Thus, from \eqref{indpa},\eqref{ite0g0}, \eqref{itepsqteq1}, \eqref{tks}  and
			 \begin{align*}
			 	\alpha_s\left(1-\frac{6\times 10^{5\rho'}}{\alpha\log^{\rho-\rho'}N_{s+1}}\right)\ge\frac{\alpha}{2},
			 \end{align*}
			  we get
			\begin{align*}
				|\boT_{\tilde{\Omega}_{\bm k}^{s+1}}^{-1}(\bm x,\bm y)|&\le \prod_{i=0}^{l^*-1}\left(e^{-\alpha_s\left(1-\frac{6\times 10^{5\rho'}}{\alpha\log^{\rho-\rho'}N_{s+1}}\right)\log^{\rho}(\|\bm x_{i+1}-\bm x_{i}\|+1)}\right)|\boT_{\tilde{\Omega}_{\bm k}^{s+1}}^{-1}(\bm x_{l^*},\bm y)|\\
				&\le e^{-\frac{1}{2}\alpha\cdot l^*\log^{\rho}\left(\frac{N_1}{2}+1\right)}\delta_{s+1}^{-3}\\
				&\le e^{-\frac{1}{2}\alpha\cdot l^*\log^{\rho}\left(\frac{N_1}{2}+1\right)}\cdot e^{3\times 10^{5\rho'}\log^{\rho'}(N_{s+1}+1)}\\
				&\le e^{-\alpha\log^{\rho}(\|\bm x-\bm y\|+1)}.
			\end{align*}
			\item[\textbf{Case 2-2:}] $l_1\le l^*$ and $\bm x_{l_1}\in O(\bm y)$. According to $\|\bm x-\bm y\|\ge\frac{\tz_{s+1}}{10}$, $N_{s+1}\gg1$ and $1<\rho'<\rho$, we obtain
			\begin{align}
				\nonumber l^*&=\left[\frac{2\alpha \log^{\rho}(\|\bm x-\bm y\|+1)+6\times 10^{5\rho'}\log^{\rho'}(N_{s+1}+1)}{\alpha\log^{\rho}\left(\frac{N_1}{2}+1\right)}\right]+1\\
				\label{l*2}	&\le \log^{\rho}(1+\|\bm x-\bm y\|).
			\end{align}
			Then from \eqref{quaeq}, \eqref{ite0g0}, \eqref{itepsqteq1}, \eqref{tks} and $\|\bm x-\bm y\|\ge\frac{\tz_{s+1}}{10}$, we have
			\begin{align}
				\nonumber	|\boT_{\tilde{\Omega}_{\bm k}^{s+1}}^{-1}(\bm x,\bm y)|&\le \prod_{i=0}^{l_1-1}\left(e^{-\alpha_s\left(1-\frac{6\times 10^{5\rho'}}{\alpha\log^{\rho-\rho'}N_{s+1}}\right)\log^{\rho}(\|\bm x_{i+1}-\bm x_{i}\|+1)}\right)|\boT_{\tilde{\Omega}_{\bm k}^{s+1}}^{-1}(\bm x_{l_1},\bm y)|\\
				\nonumber	&\le e^{-\alpha_s\left(1-\frac{6\times 10^{5\rho'}}{\alpha\log^{\rho-\rho'}N_{s+1}}\right)\left(\log^{\rho}(\|\bm x_{l_1}-\bm x\|+1)-C(\rho)\log^{\rho}l_1\right)}\delta_{s+1}^{-3}\\
			\nonumber&\le e^{-\alpha_s\left(1-\frac{6\times 10^{5\rho'}}{\alpha\log^{\rho-\rho'}N_{s+1}}\right)\left(\log^{\rho}(\|\bm x-\bm y\|-\|\bm x_{l_1}-\bm y\|+1)-C(\rho)\log^{\rho}l_1\right)}\\
					\label{60104}&\ \ \cdot e^{3\times 10^{5\rho'}\log^{\rho'}(N_{s+1}+1)}.
			\end{align}
			By $\bm x_{l_1}\in O(\bm y)$, $\|\bm x_{l_1}-\bm y\|\le\frac{\tz_s}{2}$. Since $\|\bm x-\bm y\|\ge\frac{\tz_{s+1}}{10}\gg \frac{\tz_s}{2}\gg1$ and \eqref{exl}, we can get
			\begin{align}
				\nonumber&\ \ \log^{\rho}(\|\bm x-\bm y\|-\|\bm x_{l_1}-\bm y\|+1)\\
				\nonumber\ge&\ \ \log^{\rho}\left(\|\bm x-\bm y\|-\frac{\tz_s}{2}+1\right)\\
				\nonumber\ge&\ \ \left(1-\rho\frac{\tz_{s}}{(1+\|\bm x-\bm y\|)\log(1+\|\bm x-\bm y\|)}\right)\log^{\rho}(1+\|\bm x-\bm y\|)\\
				\label{60105}	\ge&\ \ \left(1-\frac{10\rho \tz_s}{\tz_{s+1}\log N_{s+1}}\right)\log^{\rho}(\|\bm x-\bm y\|+1).
			\end{align}
			From \eqref{l*2}, \eqref{60105}, $\|\bm x-\bm y\|\ge\frac{\tz_{s+1}}{10}\gg1$ and $1<\rho'<\rho<\rho'+1$, we get
			\begin{align*}
				&\ \ \log^{\rho}(\|\bm x-\bm y\|-\|\bm x_{l_1}-\bm y\|+1)-C(\rho)\log^{\rho}l_1\\
				\ge&\ \  \left(1-\frac{10^{5\rho'}}{\alpha\log^{\rho-\rho'}N_{s+1}}\right)\log^{\rho}(\|\bm x-\bm y\|+1)
			\end{align*}
			and then
			\begin{align}
				\nonumber	&\ \ \left(1-\frac{10^{5\rho'}}{\alpha\log^{\rho-\rho'} N_{s+1}}\right)\left(\log^{\rho}(\|\bm x-\bm y\|-\|\bm x_{l_1}-\bm y\|+1)-C(\rho)\log^{\rho}l_1\right)\\
				\label{60106}	\ge&\ \ \left(1-\frac{3\times 10^{5\rho'}}{\alpha\log^{\rho-\rho'}N_{s+1}}\right)\log^{\rho}(\|\bm x-\bm y\|+1).
			\end{align}
			Since $\|\bm x-\bm y\|\ge\frac{\tz_{s+1}}{10}\gg N_{s+1}\gg1$, \eqref{60104} and \eqref{60106}, we have
			\begin{align*}
				|\boT_{\tilde{\Omega}_{\bm k}^{s+1}}^{-1}(\bm x,\bm y)|&\le e^{-\alpha_s\left(1-\frac{10\times 10^{5\rho'}}{\alpha\log^{\rho-\rho'}N_{s+1}}\right)\log^{\rho}(\|\bm x-\bm y\|+1)}\cdot e^{3\times 10^{5\rho'}\log^{\rho'}(N_{s+1}+1)}\\
				&\le e^{-\alpha_s\left(1-\frac{20\times 10^{5\rho'}}{\alpha\log^{\rho-\rho'}N_{s+1}}\right)\log^{\rho}(\|\bm x-\bm y\|+1)}.
			\end{align*}
			\item[\textbf{Case 2-3}:] $l_1\le l^*$ and $\bm x_{l_1}\in 2\Omega_{\bm k}^{s+1}$. From a similar argument of \eqref{60104}, we obtain
			\begin{align}\label{60107}
				\nonumber	|\boT_{\tilde{\Omega}_{\bm k}^{1}}^{-1}(\bm x,\bm y)|&\le e^{-\alpha_s\left(1-\frac{6\times 10^{5\rho'}}{\alpha\log^{\rho-\rho'} N_{s+1}}\right)\left(\log^{\rho}(\|\bm x-\bm y\|-\|\bm x_{l_1}-\bm y\|+1)-C(\rho)\log^{\rho}l_1\right)}\\
				&\ \ \cdot|\boT_{\tilde{\Omega}_{\bm k}^{1}}^{-1}(\bm x_{l_1},\bm y)|
			\end{align}
			By Lemma \ref{itetg} (since $\tO_{\bm k}^{s+1}\setminus2\Omega_{\bm k}^{s+1}$ is $s$-good), there is some $\bm y'\in2\Omega_{\bm k}^{s+1}$ such that
			\begin{align}\label{60108}
				|\boT_{\tilde{\Omega}_{\bm k}^{s+1}}^{-1}(\bm x_{l_1},\bm y)|\le (\#\tO_{\bm k}^{s+1})\cdot e^{-\alpha_{s}\log^{\rho}(1+\|\bm y-\bm y'\|)}\cdot|\boT_{\tilde{\Omega}_{\bm k}^{1}}^{-1}(\bm x_{l_1},\bm y')|\cdot\delta_s^{-3}.
			\end{align}
			According to $\bm x_{l_1}, \bm y'\in2\Omega_{\bm k}^{s+1}$, we have $\|\bm x_{l_1}-\bm y'\|\le 4\zeta_{s+1}$. Since \eqref{quaeq} and $\|\bm x-\bm y\|\ge\frac{\tz_{s+1}}{10}\gg1$,
			\begin{align}
				\nonumber&\ \ \log^{\rho}(\|\bm x_{l_1}-\bm x\|+1)+\log^{\rho}(1+\|\bm y-\bm y'\|)\\
				\nonumber\ge&\ \ \log(1+\|\bm x_{l_1}-\bm x\|+\|\bm y-\bm y'\|)-C(\rho)\log^{\rho}2\\
				\nonumber\ge&\ \ \log(1+\|\bm x-\bm y\|-\|\bm x_{l_1}-\bm y'\|)-C(\rho)\log^{\rho}2\\
				\label{60109}\ge&\ \ \log(1+\|\bm x-\bm y\|-4\zeta_{s+1})-C(\rho)\log^{\rho}2.
			\end{align}
			From \eqref{803}, $\tz_{s+1}\gg\zeta_{s}$, $1<\rho'<\rho<\rho'+1$ and $N_{s+1}\gg1$, we also get
			\begin{align}
				\nonumber	&\ \ \log(1+\|\bm x-\bm y\|-4\zeta_{s+1})\\
				\nonumber\ge&\ \ \left(1-\frac{80\rho\zeta_{s+1}}{\tz_{s+1}\log N_{s+1}}\right)\log^{\rho}(\|\bm x-\bm y\|+1)\\
				\label{60110}	\ge&\ \ \left(1-\frac{ 10^{5\rho'}}{\alpha\log^{\rho-\rho'}N_{s+1}}\right)\log^{\rho}(\|\bm x-\bm y\|+1).
			\end{align}
			Combining \eqref{tk0}, \eqref{60107}--\eqref{60110}, $1<\rho'<\rho<\rho'+1$ and $\|\bm x-\bm y\|\ge\frac{\tz_{s+1}}{10}\gg1$ gives
			\begin{align*}
				|\boT_{\tilde{\Omega}_{\bm k}^{1}}^{-1}(\bm x,\bm y)|&\le e^{-\alpha_s\left(1-\frac{20\times 10^{5\rho'}}{\alpha\log^{\rho-\rho'}N_1}\right)\log^{\rho}(\|\bm x-\bm y\|+1)}.
			\end{align*}
		\end{itemize}
		This finishes the proof.
	\end{proof}

	\ \\

	\begin{itemize}
		\item[\textbf{Step 3}]: \textbf{Estimates of general $(s+1)$-good $\lg$}.
	\end{itemize}
	We now finalize the verification of property $\mathscr{P}_{s+1}$.
	First, we recall the precise definition of an $(s+1)$-$\good$ set: a finite set $\lg\subset\Z^d$ is $(s+1)$-$\good$ iff
	\begin{align}\label{s+1gdef}
		\left\{\begin{array}{l}
			\bm k'\in Q_{s'},\ \tO_{\bm k'}^{s'}\subset\lg,\ \tO_{\bm k'}^{s'}\subset\Omega_{\bm k}^{s'+1}\Rightarrow\tO_{\bm k}^{s'+1}\subset\lg\ \text{for $s'<s+1$},\\
			\{\bm k\in P_{s+1}:\ \tilde{\Omega}_{\bm k}^{s+1}\subset\lg\}\cap Q_{s+1}=\emptyset.
		\end{array}\right.
	\end{align}
	
	To establish $\mathscr{P}_{s+1}$, we will synthesize three key analytical tools: (a) The $\ell^2$-norm estimates for $\mathcal{T}_{\tilde{\Omega}_{\bm k}^{s+1}}^{-1}$ developed previously; (b) Schur's test for bounding matrix operators; (c) The resolvent identity technique.
	\begin{thm}\label{s+1g}
		If $\lg$ is $(s+1)$-$\good$, then
		\begin{align}
			\nonumber\|\boT_\lg^{-1}\|\le&\  2\delta_{s}^{-3}\sup_{\{\bm k\in P_{s+1}:\ \tO_{\bm k}^{s+1}\subset\lg\}}\left(\|\theta+\bm k\cdot\bm\omega-\theta_{s+1}\|_{\T}^{-1}\cdot\|\theta+\bm k\cdot\bm\omega+\theta_{s+1}\|_{\T}^{-1}\right)\\
			\label{s+1gnorm}<&\ \delta_{s+1}^{-3}.
		\end{align}
		and
		\begin{align}
			\label{s+1gdecay}	|\boT_{\lg}(\bm x,\bm y)|<e^{-\alpha_{s+1}\|\bm x-\bm y\|}\ \text{for $\|\bm x-\bm y\|\ge10\tz_{s+1}$}.
		\end{align}
	\end{thm}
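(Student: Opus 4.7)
The plan is to mimic the proof of Theorem \ref{1g} in the initial case, but organize the covering by resonant blocks across all scales $1,2,\dots,s+1$ rather than just scale $1$. First I would build a hierarchical cover of $\Lambda$: for each $\bm w\in\Lambda$, define $O(\bm w)=\lg_{\frac{1}{2}N_1}(\bm w)\cap\Lambda$ when $\bm w$ lies outside every $2\Omega_{\bm l}^{1}$ with $\bm l\in\widetilde P_1$, and set $O(\bm w)=\tilde\Omega_{\bm l}^t$ when $\bm w\in 2\Omega_{\bm l}^t$ for some $\bm l\in\widetilde P_t\setminus Q_t$ (where $\widetilde P_t$ is defined as in \eqref{wpt1}, but relative to $\Lambda$). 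The $(s+1)$-good assumption \eqref{s+1gdef} together with the inheritance relations in \eqref{a1s} guarantees that if $\bm w\in 2\Omega_{\bm l}^t$ then $\tilde\Omega_{\bm l}^t\subset\Lambda$, so $\mathcal T_{O(\bm w)}^{-1}$ is well-defined: on small-scale neighborhoods by Lemma \ref{0g}, on $\tilde\Omega_{\bm l}^t$ for $\bm l\notin Q_t$ by either \eqref{tb0}--\eqref{PsQsnorm} (norm) and Theorem \ref{psqs} (decay).

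Following the pattern of Theorem \ref{1g}, define
\begin{align*}
\mathcal L(\bm x,\bm y)&=\boT_{O(\bm x)}^{-1}(\bm x,\bm y)\chi_{O(\bm x)}(\bm y),\\
\mathcal K(\bm x,\bm y)&=\sum_{\bm z\in O(\bm x)}\mathcal L(\bm x,\bm z)\,\mathcal W(\bm z,\bm y)\chi_{\Lambda\setminus O(\bm x)}(\bm y),
\end{align*}
so that $\mathcal L\mathcal T_\Lambda=\mathcal I_\Lambda+\varepsilon\mathcal K$. The key step is to show $|\mathcal K(\bm x,\bm y)|\le e^{-c\alpha'_{s}\log^{\rho}(1+\|\bm x-\bm y\|)}$ with some $c$ close to $\tfrac58$, by splitting the inner sum into a "near" piece (where the $\mathcal W$-factor is short-range so we pay the $\alpha$ decay from $\bm x$ to $\bm y$ via the quasi-metric \eqref{quaeq}, and the remaining $\mathcal W$-sum is absorbed by a $D(\alpha/10)$ factor) and a "far" piece (where either $\bm x\in O(\bm w)=\tilde\Omega_{\bm l}^t$ so we pay $\alpha'_t$ decay via Theorem \ref{psqs}, or $O(\bm w)$ is $0$-good). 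Then Schur's test gives $\|\mathcal K\|\le D(\tfrac{5}{16}\alpha)<\infty$, so $\mathcal I_\Lambda+\varepsilon\mathcal K$ is invertible with norm $\le 2$ and $\mathcal T_\Lambda^{-1}=(\mathcal I_\Lambda+\varepsilon\mathcal K)^{-1}\mathcal L$. The norm bound \eqref{s+1gnorm} follows by applying Schur's test once more to $\mathcal L$: for each column and each row the sum reduces to $\#(O(\bm w))\cdot\|\mathcal T_{O(\bm w)}^{-1}\|$, which is dominated by $\delta_s^{-3}\|\theta+\bm k\cdot\bm\omega-\theta_{s+1}\|_\T^{-1}\|\theta+\bm k\cdot\bm\omega+\theta_{s+1}\|_\T^{-1}$ for the largest block encountered, after noting as in \eqref{1L13} that this quantity dominates the $0$-good contribution $\delta_0^{-5/2}$.

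For the off-diagonal estimate \eqref{s+1gdecay}, the strategy is an iterated resolvent identity analogous to \eqref{1iteeq1}: at each step, starting from $\bm x_k$, either $\bm x_k$ lies in a $0$-good neighborhood and Corollary \ref{ite0g} yields a hop to $\bm x_{k+1}$ with $\|\bm x_{k+1}-\bm x_k\|\ge \tfrac{N_1}{2}$ and decay $\alpha_0(1-O(\log^{\rho'-\rho}N_1))$, or $\bm x_k\in 2\Omega_{\bm l}^t$ for some $\bm l\in\widetilde P_t\setminus Q_t$ and Lemma \ref{itepsqt} yields a hop out of $\tilde\Omega_{\bm l}^t$ with decay $\alpha_t(1-O(\log^{\rho'-\rho}N_{t+1}))$. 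Iterate until $\bm x_n\in O(\bm y)$, then telescope using the quasi-metric \eqref{quaeq} with $n\le \log^\rho(1+\|\bm x-\bm y\|)$ (so $C(\rho)\log^\rho n\ll \log^\rho(1+\|\bm x-\bm y\|)$), and apply Lemma \ref{el} to replace $\|\bm x_n-\bm x\|$ by $\|\bm x-\bm y\|-\tilde\zeta_{s+1}$ with negligible loss. If too many hops are required, the factor $e^{-n\alpha_\infty\log^\rho(N_1/2+1)}$ alone beats $e^{-\alpha\log^\rho(1+\|\bm x-\bm y\|)}$.

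The main obstacle will be bookkeeping the loss constants: each iteration step loses a factor depending on the scale $t$ of the block crossed, and since $n$ can be as large as $\log^\rho(1+\|\bm x-\bm y\|)\sim \log^\rho N_{s+1}^{100}$ while $N_1$ is fixed, the quasi-metric term $C(\rho)\log^\rho n$ now depends on $N_{s+1}$ rather than $N_{t+1}$. This is precisely the issue flagged in the introduction (contrast with \cite{SWY25}), and it is resolved by using $\rho'<\rho$ so that $C(\rho)\log^\rho n\le \log^{\rho'}(\|\bm x-\bm y\|)\ll \log^\rho(1+\|\bm x-\bm y\|)/\log^{\rho-\rho'}N_{s+1}$, which is absorbable into the definition of $\alpha_{s+1}$ in \eqref{alphas}. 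The final decay rate $\alpha_{s+1}=\alpha_s(1-50\cdot 10^{5\rho'}/(\alpha\log^{\rho-\rho'}N_{s+1}))$ reflects the sum of losses $\alpha_s\to\alpha'_s$ (from Theorem \ref{psqs}) plus the iteration losses $\alpha'_s\to\alpha'_s(1-O(\log^{\rho'-\rho}N_{s+1}))$, and matches the geometric series $\alpha_s\searrow\alpha_\infty\ge\alpha/2$ already arranged in \eqref{alphas}.
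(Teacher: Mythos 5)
Your proposal follows essentially the same approach as the paper: you build the same scale-stratified cover $O(\bm w)$ via $\widetilde P_t$ (the paper's \eqref{wpt2}), form the same $\boL,\boK$ with $\boL\boT_\lg=\boI_\lg+\ep\boK$, bound $\|\boK\|$ and $\|\boL\|$ by Schur's test using the same near/far split, and prove the off-diagonal decay by iterating Corollary~\ref{ite0g} and Lemma~\ref{itepsqt} with the same two-case split on the number of hops $n$, telescoping via the quasi-metric and Lemma~\ref{el}. The bookkeeping concern you flag (the $C(\rho)\log^\rho n$ term now depending on $N_{s+1}$) is exactly the point the paper handles in \eqref{60205} via $\rho'<\rho$, and your resolution matches.
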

	\begin{proof}
				First,  we prove \eqref{1gnorm} by Schur's test. Define
		\begin{align}\label{wpt2}
			\widetilde{P}_t=\{\bm k\in P_t:\ \exists \bm k'\in Q_{t-1}\ \text{s.t. }\tO_{\bm k'}^{t-1}\subset\lg,\ \tO_{\bm k'}^{t-1}\subset\Omega_{\bm k}^{t}\},\ 1\le t\le s+1.
		\end{align}
		From \eqref{311}, \eqref{s+1gdef} and \eqref{wpt2}, it follows that for $\bm k'\in\widetilde{P}_t\cap Q_t$ $(1\le t\le s)$, there exists a $\bm k\in\widetilde{P}_{t+1}$ such that
		\begin{align*}
			\tO_{\bm k'}^{t}\subset \Omega_{\bm k}^{t+1}
		\end{align*}
		and
		\begin{align*}
			\widetilde{P}_{s+1}\cap Q_{s+1}=\emptyset.
		\end{align*}
		Hence for any $\bm w\in\lg$, if
		\begin{align*}
			\bm w\in\bigcup_{\bm k\in\widetilde{P}_1}2\Omega_{\bm k}^1,
		\end{align*}
		then there exists a $t\in[1,s+1]$ such that
		\begin{align}\label{wincup}
			\bm w\in\bigcup_{\bm k\in\widetilde{P}_t\setminus Q_t}2\Omega_{\bm k}^{t}.
		\end{align}
		 For every $\bm w\in\lg$, define its block in $\lg$:
		\begin{align}\label{neighbor2}
			O(\bm w)=\left\{\begin{array}{ll}
				\lg_{\frac{1}{2}N_1}(\bm w)\cap \lg, & \text{if $\bm w\notin\bigcup_{\bm k\in\widetilde{P}_1}2\Omega_{\bm k}^{1}$,}\\
				\tilde{\Omega}_{\bm k}^t, & \text{if $\bm w\in2\Omega_{\bm k}^{t}$ for some $\bm k\in\widetilde{P}_t\setminus Q_t$.}
			\end{array}\right.
		\end{align}
		Since $\lg$ is $(s+1)$-good, $\boT_{O(\bm x)}^{-1}$ exist for all $\bm x\in\lg$. Hence we can define $\boL$ and $\boK$ as
		\begin{align}\label{defL2}
			\boL(\bm x,\bm y)=\left\{\begin{array}{ll}
				\boT_{O(\bm x)}^{-1}(\bm x,\bm y), & \text{for $\bm x\in\lg$ and $\bm y\in O(\bm x)$,}\\
				0, & \text{else,}
			\end{array}\right.
		\end{align}
		and
		\begin{align}\label{defK2}
			\boK(\bm x,\bm y)=\left\{\begin{array}{ll}
				\sum\limits_{\bm z\in O(\bm x)}\boL(\bm x,\bm z)\boW(\bm z,\bm y), & \text{for $\bm x\in\lg$ and $\bm y\in\lg\setminus O(\bm x)$,}\\
				0, & \text{else}.
			\end{array}\right.
		\end{align}
		Direct computations shows
		\begin{align*}
			\boL \boT_{\lg}=\boI_{\lg}+\ep\boK.
		\end{align*}
		The norm estimate \eqref{s+1gnorm} follows by applying an argument parallel to the proof of \eqref{1gnorm}, with appropriate modifications for the $(s+1)$-scale case.
			
			We estimate $|\boK(\bm x,\bm y)|$  first. We divide the discussion into three cases:
		\begin{itemize}
			\item[\textbf{Case 1}:] $\bm x\notin\bigcup_{\bm k\in\widetilde{P}_1}2\Omega_{\bm k}^{1}$ and $\bm y\in \lg\setminus O(\bm x)$. In the same manner as \eqref{Kle11}, we can obtain
			\begin{align}\label{2Kle11}
				|\boK(\bm x,\bm y)|\le e^{-\frac{5}{8}\alpha\log^{\rho}\left(1+\|\bm x-\bm y\|\right)}.
			\end{align}
			
			\item[\textbf{Case 2}:] $\bm x\in 2\Omega_{\bm k}^{t}$ for some $\bm k\in\widetilde{P}_t\setminus Q_t$ and $\bm y\in \lg\setminus O(\bm x)$. In this case, we define $X:=\lg_{\frac{\tz_t}{9}}(\bm k)\subset O(\bm x)$. Hence
			\begin{align}
				\nonumber|\boK(\bm x,\bm y)|&\le\sum_{\bm z\in O(\bm x)}|\boT_{O(\bm x)}^{-1}(\bm x,\bm z)|\cdot|\boW(\bm z,\bm y)|\\
				\label{2K12}&\le ({\rm I})+({\rm II}),
			\end{align}
			where
			\begin{align*}
				({\rm I})=\sum_{\bm z\in O(\bm x)\setminus X}|\boT_{O(\bm x)}^{-1}(\bm x,\bm z)|\cdot|\boW(\bm z,\bm y)|
			\end{align*}
			and
			\begin{align*}
				({\rm II})= \sum_{\bm z\in X}|\boT_{O(\bm x)}^{-1}(\bm x,\bm z)|\cdot|\boW(\bm z,\bm y)|.
			\end{align*}
			For $({\rm I})$, since \eqref{wphi}, \eqref{quaeq}, $\bm z\in O(\bm x)\setminus X$ ($\|\bm z-\bm x\|\ge\frac{\tz_t}{10}$), $\bm y\notin O(\bm x)$ ($\|\bm x-\bm y\|\ge\frac{\tz_t}{3}$) and \eqref{PsQsdecay}, we have
			\begin{align}
				\nonumber({\rm I})=&\ \ \sum_{\bm z\in O(\bm x)\setminus X}|\boT_{O(\bm x)}^{-1}(\bm x,\bm z)|\cdot|\boW(\bm z,\bm y)|\\
				\nonumber\le &\ \ \sum_{\bm z\in O(\bm x)\setminus X} e^{-\alpha'_{t-1}\log^{\rho}(1+\|\bm x-\bm z\|)}\cdot e^{-\alpha\log^{\rho}(1+\|\bm z-\bm y\|)}\\
				\nonumber\le &\ \ e^{-\alpha'_{t-1}\log^{\rho}(1+\|\bm x-\bm y\|)+\alpha'_{t-1}\cdot C(\rho)\log^{\rho}2}\sum_{\bm z\in O(\bm x)\setminus X}e^{-\frac{1}{10}\alpha\log^{\rho}(1+\|\bm z-\bm y\|)}\\
				\nonumber\le &\ \  D\left(\frac{\alpha}{10}\right)\cdot e^{-\alpha'_{t-1}\log^{\rho}(1+\|\bm x-\bm y\|)+\alpha'_{t-1}\cdot C(\rho)\log^{\rho}2}\\
				\label{2K1}	\le&\ \  \frac{1}{2}e^{-\frac{5}{8}\alpha'_{t-1}\log^{\rho}(1+\|\bm x-\bm y\|)},
			\end{align}
			where $D\left(\frac{\alpha}{10}\right)$ is defined in \eqref{Det}.  For $({\rm II})$, from \eqref{wphi}, \eqref{PsQsdecay} and $\bm k\notin Q_{t}$, we obtain
			\begin{align}
				\label{60201}\sum_{\bm z\in X}|\boT_{O(\bm x)}^{-1}(\bm x,\bm z)|\cdot|\boW(\bm z,\bm y)|\le \delta_{t-1}^{-2}\delta_{t}^{-2}\sum_{\bm z\in X} e^{-\alpha\log^{\rho}(1+\|\bm z-\bm y\|)}.
			\end{align}
			By $\bm z\in X$, $\bm y\in\lg\setminus O(\bm x)$ ($\|\bm y-\bm x\|\ge\frac{\tz_t}{3}\ge2\|\bm x-\bm z\|$) and \eqref{exl}, we get
			\begin{align}
				\nonumber&\ \ \log^{\rho}(1+\|\bm y-\bm z\|)\\
				\nonumber\ge&\ \ 	\log^{\rho}(1+\|\bm y-\bm x\|-\|\bm x-\bm z\|)\\
				\nonumber	\ge&\ \ \left(1-2\rho\frac{\|\bm x-\bm z\|}{(1+\|\bm y-\bm x\|)\log(1+\|\bm y-\bm x\|)}\right)\log^{\rho}(1+\|\bm y-\bm x\|)\\
				\label{60202}\ge&\ \ \left(1-\frac{1}{\log N_t}\right)\log^{\rho}(1+\|\bm y-\bm x\|).
			\end{align}
			Hence, combining \eqref{60201}, \eqref{60202},  $\|\bm x-\bm y\|\ge\frac{\tz_t}{3}\gg1$, \eqref{indpa} and $1<\rho'<\rho$ gives
			\begin{align}
				\nonumber({\rm II})=&\ \ \sum_{\bm z\in X}|\boT_{O(\bm x)}^{-1}(\bm x,\bm z)|\cdot|\boW(\bm z,\bm y)|\\
				\nonumber\le&\ \  \delta_{t-1}^{-2}\delta_t^{-2}\cdot e^{-\frac{3}{4}\alpha\log^{\rho}(1+\|\bm y-\bm x\|)}\cdot\sum_{\bm z\in X}e^{-\frac{1}{10}\alpha\log^{\rho}(1+\|\bm z-\bm y\|)}\\
				\nonumber	\le&\ \  \delta_{t-1}^{-2}\delta_t^{-2}\cdot D\left(\frac{\alpha}{10}\right)\cdot e^{-\frac{3}{4}\alpha\log^{\rho}(1+\|\bm y-\bm x\|)}\\
				\label{2K2}\le &\ \ \frac{1}{2}e^{-\frac{5}{8}\alpha'_{t-1}\log^{\rho}(1+\|\bm x-\bm y\|)},
			\end{align}
			where $D\left(\frac{\alpha}{10}\right)$ is defined in \eqref{Det}.	From \eqref{2K12}, \eqref{2K1} and \eqref{2K2}, we have
			\begin{align}\label{2Kle12}
				|\boK(\bm x,\bm y)|\le e^{-\frac{5}{8}\alpha'_{t-1}\log^{\rho}(1+\|\bm x-\bm y\|)}.
			\end{align}
			
			\item[\textbf{Case 3}:] $\bm x\in\lg$ and $\bm y\in O(\bm x)$. From \eqref{defK2}, we get
			\begin{align}\label{2Kle13}
				|\boK(\bm x,\bm y)|=0\le e^{-\frac{5}{8}\alpha'_{s}\log^{\rho}(1+\|\bm x-\bm y\|)}.
			\end{align}
		\end{itemize}
		In summary, we obtain
		\begin{align*}
			|\boK(\bm x,\bm y)|\le e^{-\frac{5}{8}\alpha'_s\log^{\rho}(1+\|\bm x-\bm y\|)}\text{ for all $\bm x,\bm y\in\lg$}.
		\end{align*}
		Therefore,
		\begin{align*}
			\sup_{\bm x\in\lg}\sum_{\bm y\in\lg}|\boK(\bm x,\bm y)|&\le D\left(\frac{5}{8}\alpha'_s\right)<+\infty,\\
			\sup_{\bm y\in\lg}\sum_{\bm x\in\lg}|\boK(\bm x,\bm y)|&\le D\left(\frac{5}{8}\alpha'_s\right)<+\infty,
		\end{align*}
		where $D\left(\frac{5}{8}\alpha'_s\right)$ is defined in \eqref{Det}. By Schur's test, we can get
		\begin{align*}
			\|\boK\|\le D\left(\frac{5}{8}\alpha'_s\right)\le D\left(\frac{5}{16}\alpha\right)<+\infty.
		\end{align*}
		From $\ep\ll1$, we have $\boI_{\lg}+\ep\boK$ is invertible and
		\begin{align}\label{2I+eK-1}
			\|(\boI_{\lg}+\ep\boK)^{-1}\|\le 2.
		\end{align}
		At this time,
		\begin{align}\label{2tlg-1}
			\boT_{\lg}^{-1}=(\boI_{\lg}+\ep\boK)^{-1}\boL
		\end{align}
		exists. Then, we estimate $\|\boL\|$ in order to estimate $\|\boT_{\lg}^{-1}\|$ by using similar methods.
		
		We deal with $\sup_{\bm x\in\lg}\sum_{\bm y\in\lg}|\boL(\bm x,\bm y)|$ first. We divide the discussion into two cases:
		\begin{itemize}
			\item[\textbf{Case 1}:] $\bm x\in\lg\setminus\bigcup_{\bm k\in\widetilde{P}_1}2\Omega_{\bm k}^{1}$. From \eqref{0gl2} and \eqref{defK2}, we have
			\begin{align}
				\nonumber\sum_{\bm y\in\lg}|\boL(\bm x,\bm y)|=&\ \sum_{\bm y\in O(\bm x)}|\boL(\bm x,\bm y)|\le (\# O(\bm x))\cdot\|\boT_{O(\bm x)}^{-1}\|\\
				\label{2L11}<&\ (N_1+1)^d\cdot2\kappa_1^{-1}\delta_0^{-2}<\delta_0^{-\frac{5}{2}}.
			\end{align}
			\item[\textbf{Case 2}:] $\bm x\in 2\Omega_{\bm k}^{t}$ for some $\bm k\in\widetilde{P}_t\setminus Q_t$. In this case, by \eqref{PsQsnorm} and \eqref{defK2}, we get
			\begin{align}
				\nonumber	\sum_{\bm y\in\lg}|\boL(\bm x,\bm y)|&=\sum_{\bm y\in O(\bm x)}|\boL(\bm x,\bm y)|\le (\# O(\bm x))\cdot\|\boT_{O(\bm x)}^{-1}\|\\
				\label{2L12}	&<\delta_{t-1}^{-3}\|\theta+\bm k\cdot\bm\omega-\theta_t\|_{\T}^{-1}\cdot\|\theta+\bm k\cdot\bm\omega+\theta_t\|_{\T}^{-1}<\delta_{t}^{-\frac{5}{2}}.
			\end{align}
		\end{itemize}
		Since $\theta,\theta_1,\cdots,\theta_{s+1}\in \D_R$, we obtain
		\begin{align*}
			\|\theta+\bm k\cdot\bm\omega\pm\theta_i\|\le\sqrt{4R^2+\frac{1}{4}},\ 1\le i\le s+1,
		\end{align*}
		and
		\begin{align}\label{2L13}
			\delta_{t}^{-3}\|\theta+\bm k\cdot\bm\omega-\theta_{t+1}\|_{\T}^{-1}\cdot\|\theta+\bm k\cdot\bm\omega+\theta_{t+1}\|_{\T}^{-1}>\delta_{t}^{-\frac{5}{2}}.
		\end{align}

		Combining \eqref{2L11}--\eqref{2L13}, we have
		\begin{align}
			\nonumber&\ \ \sup_{\bm x\in\lg}\sum_{\bm y\in\lg}|\boL(\bm x,\bm y)|\\
		\label{2L1}	\le&\ \  \delta_{s}^{-3}\sup_{\{\bm k\in P_{s+1}:\ \tO_{\bm k}^{s+1}\subset\lg\}}\left(\|\theta+\bm k\cdot\bm\omega-\theta_{s+1}\|_{\T}^{-1}\cdot\|\theta+\bm k\cdot\bm\omega+\theta_{s+1}\|_{\T}^{-1}\right).
		\end{align}
		
		Now, we estimate $\sup_{\bm y\in\lg}\sum_{\bm x\in\lg}|\boL(\bm x,\bm y)|$. We again divide the discussion into two cases:
		\begin{itemize}
			\item[\textbf{Case 1}:] $\bm y\in\lg\setminus\bigcup_{\bm k\in\widetilde{P}_1}\tO_{\bm k}^{1}$. In this case, 
			\begin{align*}
				\bm y\in O(\bm x)\text{ iff }\bm x\in \lg\cap\lg_{\frac{1}{2}N_1}(\bm y).
			\end{align*}
			 At this time, if $\bm x\in \lg\cap\lg_{\frac{1}{2}N_1}(\bm y)$, then $\bm x\in\lg\setminus\bigcup_{\bm k\in\widetilde{P}_1}2\Omega_{\bm k}^{1}$ ($O(\bm x)=\lg\cap\lg_{\frac{1}{2}N_1}(\bm x)$ is $0$-good). Hence by \eqref{indpa}, \eqref{0gl2}, we get
			\begin{align}
				\nonumber\sum_{\bm x\in\lg}|\boL(\bm x,\bm y)|=&\ \sum_{\bm x\in \lg\cap\lg_{\frac{1}{2}N_1}(\bm y)}|\boT_{O(\bm x)}^{-1}(\bm x,\bm y)|\\
			\label{2L21}	\le&\  (\# (\lg\cap\lg_{\frac{1}{2}N_1}(\bm y)))\cdot2\kappa_1^{-1}\delta_0^{-2}<\delta_0^{-\frac{5}{2}}.
			\end{align}
			\item[\textbf{Case 2}:] $\bm y\in \tO_{\bm k}^{t}$ for some $\bm k\in\widetilde{P}_t\setminus Q_t$. In this case, 
			\begin{align*}
				\bm y\in O(\bm x)\text{ iff }\bm x\in 2\Omega_{\bm k}^{t}\bigcup\left(\lg\cap\lg_{\frac{1}{2}N_1}(\bm y)\right).
			\end{align*}
			Therefore, from \eqref{indpa}, \eqref{0gl2} and \eqref{PsQsnorm}, we obtain
			\begin{align}
				\nonumber\sum_{\bm x\in\lg}	|\boL(\bm x,\bm y)|&\le \sum_{\bm x\in2\Omega_{\bm k}^t}|\boT_{O(\bm x)}^{-1}(\bm x,\bm y)|+\sum_{\bm x\in \left(\lg\cap\lg_{\frac{1}{2}N_1}(\bm y)\right)\setminus(2\Omega_{\bm k}^t)}|\boT_{O(\bm x)}^{-1}(\bm x,\bm y)|\\
				\nonumber&\le (\#(2\Omega_{\bm k}^{t}))\delta_{t-1}^{-2}\|\theta+\bm k\cdot\bm\omega-\theta_t\|_{\T}^{-1}\cdot\|\theta+\bm k\cdot\bm\omega+\theta_t\|_{\T}^{-1}\\
				\nonumber&\ \ +(\# (\lg\cap\lg_{\frac{1}{2}N_1}(\bm y)))\cdot2\kappa_1^{-1}\delta_0^{-2}\\
				\label{2L22}&\le \delta_{t-1}^{-3}\|\theta+\bm k\cdot\bm\omega-\theta_t\|_{\T}^{-1}\cdot\|\theta+\bm k\cdot\bm\omega+\theta_t\|_{\T}^{-1}.
			\end{align}
		\end{itemize}
		Combining \eqref{2L13}, \eqref{2L21} and \eqref{2L22} gives
		\begin{align}
		\nonumber &\ \ \sup_{\bm y\in\lg}\sum_{\bm x\in\lg}|\boL(\bm x,\bm y)|\\
		\label{2L2}\le&\ \  \delta_{s}^{-3}\sup_{\{\bm k\in P_{s+1}:\ \tO_{\bm k}^{s+1}\subset\lg\}}\left(\|\theta+\bm k\cdot\bm\omega-\theta_{s+1}\|_{\T}^{-1}\cdot\|\theta+\bm k\cdot\bm\omega+\theta_{s+1}\|_{\T}^{-1}\right).
		\end{align}
		Then, by \eqref{2L1}, \eqref{2L2} and Schur's test we know
		\begin{align}\label{2Lnorm1}
			\|\boL\|\le\delta_{s}^{-3}\sup_{\{\bm k\in P_{s+1}:\ \tO_{\bm k}^{s+1}\subset\lg\}}\left(\|\theta+\bm k\cdot\bm\omega-\theta_{s+1}\|_{\T}^{-1}\cdot\|\theta+\bm k\cdot\bm\omega+\theta_{s+1}\|_{\T}^{-1}\right).
		\end{align}
		Finally, from \eqref{s+1gdef}, \eqref{2I+eK-1}, \eqref{2tlg-1} and \eqref{2Lnorm1}, we prove
		\begin{align*}
			\|\boT_{\lg}^{-1}\|\le &\ 2\delta_{s}^{-3}\sup_{\{\bm k\in P_{s+1}:\ \tO_{\bm k}^{s+1}\subset\lg\}}\left(\|\theta+\bm k\cdot\bm\omega-\theta_{s+1}\|_{\T}^{-1}\cdot\|\theta+\bm k\cdot\bm\omega+\theta_{s+1}\|_{\T}^{-1}\right)\\
			<&\ \delta_{s+1}^{-3}.
		\end{align*}
		This completes the proof of \eqref{s+1gnorm}.
		
		We initiate the proof of the off-diagonal decay estimate stated in \eqref{s+1gdecay}. Throughout this argument, we maintain the standing assumption that
		\begin{align*}
			\|\bm x-\bm y\|\ge10\tz_{s+1}.
		\end{align*}
		The proof will proceed through an iterative scheme, leveraging the following key estimates \eqref{ite0g0} and \eqref{itetgeq1}. For $\bm z\in\lg$, from Corollary \ref{ite0g} and Lemma \ref{itepsqt}, there exists $\bm z'\in\lg\setminus O(\bm z)$ (cf. \eqref{neighbor2}) such that
		\begin{align}\label{2iteeq1}
			|\boT_{\lg}^{-1}(\bm z,\bm y)|\le e^{-\alpha''_s\log^{\rho}(1+\|\bm z-\bm z'\|)}\cdot|\boT_{\lg}^{-1}(\bm z',\bm y)|,
		\end{align}
		where $\alpha''_s=\alpha_s\left(1-\frac{30\times 10^{5\rho'}}{\alpha\log^{\rho-\rho'}N_{s+1}}\right)$.
		
		Next, iterating \eqref{2iteeq1} for $m\ge2$ steps leads to the following: there exist $\bm x_1,\bm x_2,\cdots,\bm x_m\in\lg$ such that
		\begin{align}\label{2iteeq2}
			|\boT_{\lg}^{-1}(\bm x,\bm y)|\le e^{-\alpha''_s\left(\sum_{k=0}^{m-1}\log^{\rho}(1+\|\bm x_{k+1}-\bm x_{k}\|)\right)}	|\boT_{\lg}^{-1}(\bm x_m,\bm y)|,
		\end{align}
		where $\bm x_0=\bm x$ and $\bm x_{k+1}=\bm x'$, $k=0,1,\cdots,m-1$.  We define $n\ge1$ to be the smallest integer so that $\bm x_n\in O(\bm y)$ (cf. \eqref{neighbor2}). We then have $\bm x_i\notin O(\bm y)$ for $i=0,1,\cdots,n-1$. We divide the discussion into two cases:
		\begin{itemize}
			\item[\textbf{Case 1}:] $n\le \frac{\alpha\log^{\rho}(1+\|\bm x-\bm y\|)+3\times 10^{5\rho'}\log(N_{s+1}+1)}{\frac{\alpha}{2}\log^{\rho}\left(\frac{N_1}{2}+1\right)}$. Using Lemma \ref{qua} and the triangle inequality implies
			\begin{align}
				\nonumber &\ \sum_{k=0}^{n-1}\log^{\rho}(1+\|\bm x_{k+1}-\bm x_{k}\|)\\
				\nonumber\ge&\ \log^{\rho}\left(1+\sum_{k=0}^{n-1}\|\bm x_{k+1}-\bm x_k\|\right)-C(\rho)\log^{\rho}n\\
				\label{60203}\ge&\ \log^{\rho}(1+\|\bm x_n-\bm x\|)-C(\rho)\log^{\rho}n.
			\end{align}
			Since $\bm x_n\in O(\bm y)$ and $\diam(O(\bm y))\le \tz_{s+1}$, we have
			\begin{align*}
				\|\bm x-\bm x_n\|\ge\|\bm x-\bm y\|-\|\bm x_n-\bm y\|\l\ge\|\bm x-\bm y\|-\tz_{s+1}.
			\end{align*}
			By $\|\bm x-\bm y\|\ge10\tz_{s+1}\ge N_{s+1}\gg1$ and $1<\rho'<\rho<\rho'+1$, applying Lemma \ref{exl} with $x=\|\bm x-\bm y\|$ and $y=\tz_{s+1}$ gives
			\begin{align}
			\nonumber&\ \log^{\rho}(1+\|\bm x_n-\bm x\|)\\
				\nonumber	\ge&\ \log^{\rho}(1+\|\bm x-\bm y\|-\tz_{s+1})\\
				\nonumber\ge&\ \left(1-2\rho\frac{\tz_{s+1}}{(1+\|\bm x-\bm y\|)\log(1+\|\bm x-\bm y\|)}\right)\log^{\rho}(1+\|\bm x-\bm y\|)\\
				\label{60204}\ge&\ \left(1-\frac{ 10^{5\rho'}}{\alpha\log^{\rho-\rho'} N_{s+1}}\right)\log^{\rho}(1+\|\bm x-\bm y\|).
			\end{align}
			Under the same conditions, we have
			\begin{align}
				\nonumber	C(\rho)\log^{\rho}n&\le C(\rho)\log^{\rho}(2\alpha\log^{\rho}(1+\|\bm x-\bm y\|))\le \log(1+\|\bm x-\bm y\|)\\
				\label{60205}	&\le \frac{10^{5\rho'}\log^{\rho}(1+\|\bm x-\bm y\|)}{\alpha\log^{\rho-\rho'}N_{s+1}}.
			\end{align}
			Combining \eqref{60203}, \eqref{60204} and \eqref{60205} shows
			\begin{align}\label{60206}
				\sum_{k=0}^{n-1}\log^{\rho}(1+\|\bm x_{k+1}-\bm x_{k}\|)\ge \left(1-\frac{2\times 10^{5\rho'}}{\alpha\log^{\rho-\rho'} N_{s+1}}\right)\log^{\rho}(1+\|\bm x-\bm y\|).
			\end{align}
			From $\|\bm x-\bm y\|\ge10\tz_{s+1}\ge N_{s+1}\gg1$, \eqref{indpa}, \eqref{tsgnorm}, \eqref{2iteeq2} and  \eqref{60206}, we have
			\begin{align*}
				|\boT_{\lg}^{-1}(\bm x,\bm y)|&\le e^{-\alpha''_s \left(1-\frac{2\times 10^{5\rho'}}{\alpha\log^{\rho-\rho'} N_{s+1}}\right)\log^{\rho}(1+\|\bm x-\bm y\|)}	\delta_{s+1}^{-3}\\
				&\le e^{-\alpha_{s+1}\log^{\rho}(1+\|\bm x-\bm y\|)}.
			\end{align*}

			\item[\textbf{Case 2}:] $n>\frac{\alpha\log^{\rho}(1+\|\bm x-\bm y\|)+3\times 10^{5\rho'}\log(N_{s+1}+1)}{\frac{\alpha}{2}\log^{\rho}\left(\frac{N_1}{2}+1\right)}$. In this case, from \eqref{tsgnorm}, \eqref{2iteeq2}, $\|\bm x_{k+1}-\bm x_k\|\ge\frac{N_1}{2}$ (since $\bm x_{k+1}\in\lg\setminus O(\bm x_k)$), $k=0,1\cdots,n-1$ and
			\begin{align*}
				\alpha''_s=\alpha_s\left(1-\frac{30\times 10^{5\rho'}}{\alpha\log^{\rho-\rho'}N_{s+1}}\right)>\frac{\alpha}{2},
			\end{align*}
			it follows that
			\begin{align*}
				|\boT_{\lg}^{-1}(\bm x,\bm y)|&\le e^{-\alpha''_s\left(\sum_{k=0}^{n-1}\log^{\rho}(1+\|\bm x_{k+1}-\bm x_{k}\|)\right)}	|\boT_{\lg}^{-1}(\bm x_n,\bm y)|\\
				&\le e^{-n\cdot\alpha''_s\log^{\rho}\left(\frac{N_1}{2}+1\right)}e^{3\times 10^{5\rho'}\log^{\rho}(N_{s+1}+1)}\\
				&\le e^{-\alpha\log^{\rho}(1+\|\bm x-\bm y\|)}\le e^{-\alpha_{s+1}\log^{\rho}(1+\|\bm x-\bm y\|)}.
			\end{align*}
		\end{itemize}
		This concludes the proof of Lemma \ref{s+1g}.
	\end{proof}

\section{Arithmetic spectral localization}\label{Arith}
	In this section, we will prove  Theorem \ref{asl} by combining  Green's function estimates and the Shnol's theorem.
\begin{proof}[Proof of Theorem \ref{asl}]
	Let $\ep_0$ be given by Theorem \ref{ge}. Fix $\theta\in\T\setminus\Theta_{\tau}$. Let $E\in\sigma(\mathcal{H}(\theta))$ be a generalized eigenvalue of $\mathcal{H}(\theta)$ and $\psi=\{\psi(\bm n)\}_{\bm n\in\Z^d}\ne0$ be the corresponding generalized eigenfunction satisfying 
	\begin{align*}
		|\psi(\bm n)|\le(1+\|\bm n\|)^d.
	\end{align*}
	From Shnol's theorem  (cf. \cite{Han19}), it suffices to show that 
	\begin{align*}
			|\psi(\bm n)|<e^{-\frac{\alpha}{4\cdot 10^{6\rho}}\log^{\rho}\left(1+\|\bm n\|\right)} \text{ for $\|\bm n\|\gg1$}
	\end{align*}
	and $\psi\in\ell^2(\Z^d)$. For this purpose, note first that there exists some $\tilde{s}\in\N$ such that
	\begin{align}\label{4.1}
		\|2\theta+\bm n\cdot\bm\omega\|_{\T}>\frac{1}{\|n\|^{\tau}}\ \text{for all $n$ satisfying $\|n\|\ge N_{\tilde{s}}$}.
	\end{align}
	We claim that there exists some  $s_0>0$ such that for $s\ge s_0$,
	\begin{align}\label{4.2}
		\lg_{2N_s^{10^4}}\cap\left(\bigcup_{\bm k\in Q_s}\tilde{\Omega}_{\bm k}^{s}\right)\ne\emptyset.
	\end{align}
	Otherwise, there exists a  sequence $s_i\rightarrow+\infty$ (as $i\rightarrow\infty$) such that
	\begin{align}\label{4.3}
		\lg_{2N_{s_i}^{10^4}}\cap\left(\bigcup_{\bm k\in Q_{s_i}}\tilde{\Omega}_{\bm k}^{s_i}\right)=\emptyset.
	\end{align}
	Then we can enlarge $\tilde{\lg}_{N_{s_i}^{10^4}}$ to $\tilde{\lg}_i$ so that 
	\begin{align*}
		\lg_{N_{s_i}^{10^4}}\subset\tilde{\lg}_i\subset\lg_{N_{s_i}^{10^4}+50N_{s_i}^{100}}
	\end{align*}
	and
	\begin{align*}
		\tilde{\lg}_i\cap\tilde{\Omega}_{\bm k}^{s'}\ne\emptyset\Rightarrow\tilde{\Omega}_{\bm k}^{s'}\subset\tilde{\lg}_i\ \text{for $s'\le s_i$ and $\bm k\in P_{s'}$}.
	\end{align*}
	From \eqref{4.3}, we have
	\begin{align*}
		\tilde{\lg}_i\cap\left(\bigcup_{\bm k\in Q_{s_i}}\tilde{\Omega}_{\bm k}^{s_i}\right)=\emptyset,
	\end{align*}
	which shows that $\tilde{\lg}_i$ is $s_i$-good. Let $\tilde{\lg}_{i,0}=\lg_{\frac{1}{2}N_{s_i}^{10^4}}\cap\tilde{\lg}_i$. Using Poisson's identity  yields for $\bm n\in\lg_{N_{s_i}}$, 
	\begin{align*}
		|\psi(\bm n)|\le&\ \sum_{\bm n'\in\tilde{\lg}_i,\bm n''\notin\tilde{\lg}_i}|\mathcal{T}_{\tilde{\lg}_i}^{-1}(\bm n,\bm n')|\cdot|\mathcal{W}(\bm n',\bm n'')|\cdot|\psi(\bm n'')|\\
		\le &\ ({\rm I})+({\rm II}),
	\end{align*}
	where
	\begin{align*}
		({\rm I})&=\sum_{\bm n'\in\tilde{\lg}_{i,0},\bm n''\notin\tilde{\lg}_i}|\mathcal{T}_{\tilde{\lg}_i}^{-1}(\bm n,\bm n')|\cdot|\mathcal{W}(\bm n',\bm n'')|\cdot|\psi(\bm n'')|,\\
		({\rm II})&=\sum_{\bm n'\in\tilde{\lg}_i\setminus\tilde{\lg}_{i,0},\bm n''\notin\tilde{\lg}_i}|\mathcal{T}_{\tilde{\lg}_i}^{-1}(\bm n,\bm n')|\cdot|\mathcal{W}(\bm n',\bm n'')|\cdot|\psi(\bm n'')|.
	\end{align*}
	For $({\rm I})$, we have by Theorem \ref{ge}, \eqref{wphi}, and $|\psi(\bm n)|\le(1+\|\bm n\|)^d$ that 
\begin{align*}
		({\rm I})&\le \sum_{\bm n'\in\tilde{\lg}_{i,0},\bm n''\notin\tilde{\lg}_i}\|\mathcal{T}_{\tilde{\lg}_i}^{-1}\|\cdot e^{-\alpha\log^{\rho}(1+\|\bm n'-\bm n''\|)}\cdot(1+\|\bm n''\|)^d\\
		&\le \delta_{s_i}^{-3}\sum_{\bm n'\in\tilde{\lg}_{i,0},\bm n''\notin\tilde{\lg}_i} e^{-\alpha\log^{\rho}(1+\|\bm n'-\bm n''\|)}\cdot(1+\|\bm n'\|)^d(1+\|\bm n'-\bm n''\|)^d\\
		&\le \delta_{s_i}^{-3}(1+N_{s_i}^{10^4})^d\sum_{\bm n'\in\tilde{\lg}_{i,0},\bm n''\notin\tilde{\lg}_i} e^{-\alpha\log^{\rho}(1+\|\bm n'-\bm n''\|)}\cdot(1+\|\bm n'-\bm n''\|)^d.
\end{align*}
Since $\|\bm n'-\bm n''\|\ge\frac{1}{2}N_{s_i}^{10^4}\gg1$ and \eqref{Det} we get
\begin{align*}
	&\ \ \sum_{\bm n'\in\tilde{\lg}_{i,0},\bm n''\notin\tilde{\lg}_i} e^{-\alpha\log^{\rho}(1+\|\bm n'-\bm n''\|)}\cdot(1+\|\bm n'-\bm n''\|)^d\\
	\le&\ \  e^{-\frac{1}{2}\alpha\log^{\rho}\left(1+\frac{1}{2}N_{s_i}^{10^4}\right)}\sum_{\bm n'\in\tilde{\lg}_{i,0},\bm n''\notin\tilde{\lg}_i} e^{-\frac{1}{2}\alpha\log^{\rho}(1+\|\bm n'-\bm n''\|)}\cdot(1+\|\bm n'-\bm n''\|)^d\\
	\le &\ \ e^{-\frac{1}{2}\alpha\log^{\rho}\left(1+\frac{1}{2}N_{s_i}^{10^4}\right)}\sum_{\bm n'\in\tilde{\lg}_{i,0},\bm n''\notin\tilde{\lg}_i} e^{-\frac{1}{10}\alpha\log^{\rho}(1+\|\bm n'-\bm n''\|)}\\
	\le &\ \ e^{-\frac{1}{2}\alpha\log^{\rho}\left(1+\frac{1}{2}N_{s_i}^{10^4}\right)}\cdot(\#\tilde{\lg}_{i,0})\cdot D\left(\frac{\alpha}{10}\right).
\end{align*}
	Therefore, 
\begin{align*}
	({\rm I})\le e^{-\frac{1}{2}\alpha\log^{\rho}\left(1+\frac{1}{2}N_{s_i}^{10^4}\right)}\cdot\delta_{s_i}^{-3}\cdot(1+N_{s_i}^{10^4})^{2d}\cdot D\left(\frac{\alpha}{10}\right)\rightarrow0\ \text{as $i\rightarrow\infty$}.
\end{align*}
	For $({\rm II})$, we also have by Theorem \ref{ge}, $\|\bm n-\bm n'\|\ge\frac{1}{4}N_{s_i}^{10^4}\gg10\tz_{s_i}$, $\|\bm n-\bm n''\|\ge \frac{1}{2}N_{s_i}^{10^4}\gg1$, \eqref{wphi}, \eqref{quaeq} and $|\psi(\bm n)|\le(1+\|\bm n\|)^d$ that 
	\begin{align*}
			({\rm II})\le&\ \sum_{\bm n'\in\tilde{\lg}_i\setminus\tilde{\lg}_{i,0},\bm n''\notin\tilde{\lg}_i}e^{-\frac{1}{2}\alpha\log^{\rho}(1+\|\bm n-\bm n'\|)}\cdot e^{-\alpha\log^{\rho}(1+\|\bm n'-\bm n''\|)}\cdot(1+\|\bm n''\|)^d\\
			\le&\ \sum_{\bm n'\in\tilde{\lg}_i\setminus\tilde{\lg}_{i,0},\bm n''\notin\tilde{\lg}_i}\left(e^{-\frac{1}{2}\alpha\log^{\rho}(1+\|\bm n-\bm n''\|)+\frac{1}{2}\alpha C(\rho)\log^{\rho}2}\right.\\
			&\ \ \left.\cdot e^{-\frac{1}{2}\alpha\log^{\rho}(1+\|\bm n'-\bm n''\|)}\cdot (1+\|\bm n\|)^d(1+\|\bm n-\bm n''\|)^d\right)\\
			\le &\ (1+N_{s_i})^de^{-\frac{1}{4}\alpha\log^{\rho}\left(1+\frac{1}{2}N_{s_i}^{10^4}\right)} \sum_{\bm n'\in\tilde{\lg}_i\setminus\tilde{\lg}_{i,0},\bm n''\notin\tilde{\lg}_i} e^{-\frac{1}{2}\alpha\log^{\rho}(1+\|\bm n'-\bm n''\|)}\\
			\le &\ (1+N_{s_i})^d(1+2N_{s_i}^{10^4})^d e^{-\frac{1}{4}\alpha\log^{\rho}\left(1+\frac{1}{2}N_{s_i}^{10^4}\right)}\cdot D\left(\frac{\alpha}{2}\right)\rightarrow0\ \text{as $i\rightarrow\infty$}.
	\end{align*}
	It follows that $\psi(\bm n)=0\ {\rm for}\ \forall \bm n\in\Z^d$, which  contradicts $\psi\ne0$.  The Claim is proved.
	
	Next define
	\begin{align*}
		U_s=\lg_{8N_{s+1}^{10^4}}\setminus\lg_{4N_s^{10^4}},\ U_s^{*}=\lg_{10N_{s+1}^{10^4}}\setminus\lg_{3N_s^{10^4}}.
	\end{align*}
	We can also enlarge $U_s^{*}$ to $\tilde{U}_s^*$ so that
	\begin{align*}
		U_s^*\subset \tilde{U}_s^*\subset\lg_{50N_s^{100}}(U_s^*)
	\end{align*}
	and
	\begin{align*}
		\tilde{U}_s^*\cap\tilde{\Omega}_{\bm k}^{s'}\ne\emptyset\Rightarrow\tilde{\Omega}_{\bm k}^{s'}\subset\tilde{U}_s^*\ \text{for $s'\le s$ and $\bm k\in P_{s'}$}.
	\end{align*}
	Let $\bm n$ satisfy $\|\bm n\|>\max(4N_{\tilde{s}}^{10^4},4N_{s_0}^{10^4})$. Then there exists some $s\ge\max(\tilde{s},s_0)$ such that
	\begin{align}\label{4.4}
		\bm n\in U_s\subset \tilde{U}_s^*.
	\end{align}
	By virtue of condition \eqref{4.2} and without loss of generality, we may restrict our consideration to the case where there exists a $\bm k\in Q_s^+$ such that
	\begin{align*}
		\lg_{2N_s^{10^4}}\cap\tilde{\Omega}_{\bm k}^s\ne\emptyset.
	\end{align*}
	Then for $\bm k\ne\bm k'\in Q_s^+$, we have
	\begin{align*}
		\|\bm k-\bm k'\|>\left(\frac{\g}{2\delta_s}\right)^{\frac{1}{\tau}}\gg N_{s+1}^{10^5}\gg\diam (\tilde{U}_s^*).
	\end{align*}
	Therefore,
	\begin{align*}
		\tilde{U}_s^*\cap\left(\bigcup_{\bm l\in Q_s^+}\tilde{\Omega}_{\bm l}^s\right)=\emptyset.
	\end{align*}
	Now, if there exists some $\bm l\in Q_s^-$ such that
	\begin{align*}
		\tilde{U}_s^*\cap\tilde{\Omega}_{\bm l}^s\ne\emptyset,
	\end{align*}
	then
	\begin{align*}
		 2N_s^{10^4}-100N_s^{100}\le\|\bm l\|-\|\bm k\|\le\|\bm l+\bm k\|\le\|\bm l\|+\|\bm k\|<11N_{s+1}^{10^4}.
	\end{align*}
	Recalling
	\begin{align*}
		Q_s\subset P_s\subset \Z^d+\frac{1}{2}\sum_{i=0}^{s-1}\bm l_i,
	\end{align*}
	we have $\bm l+\bm k\in\Z^d$. According to \eqref{4.1}, we obtain 
	\begin{align*}
		\frac{1}{(11N_{s+1}^{10^4})^{\tau}}<\|2\theta+(\bm l+\bm k)\cdot\bm\omega\|_{\T}\le\|\theta+\bm l\cdot\bm\omega-\theta_s\|_{\T}+\|\theta+\bm k\cdot\bm\omega+\theta_s\|_{\T}<2\delta_s,
	\end{align*}
	which contradicts
	\begin{align*}
		\delta_s^{-1}\gg N_{s+1}^{10^4\tau}.
	\end{align*}
	We thus have shown
	\begin{align*}
		\tilde{U}_s^*\cap\left(\bigcup_{\bm l\in Q_s}\tilde{\Omega}_{\bm l}^s\right)=\emptyset.
	\end{align*}
	This implies that $\tilde{U}_s^*$ is $s$-$\good$. 
	
	Finally, by recalling \eqref{4.4}, we  can set
	\begin{align*}
		\hat{U}_s=\lg_{\frac{1}{2}N_s^{10^4}}(U_s).
	\end{align*}
	Then
	\begin{align*}
		|\psi(\bm n)|\le&\sum_{\bm n'\in\tilde{U}_s^*,\bm n''\notin\tilde{U}_s^*}|\mathcal{T}_{\tilde{U}_s^*}^{-1}(\bm n,\bm n')|\cdot|\mathcal{W}(\bm n',\bm n'')|\cdot|\psi(\bm n'')|\\
		\le &({\rm III})+({\rm IV}),
	\end{align*}
	where
	\begin{align*}
		({\rm III})&=\sum_{\bm n'\in\hat{U}_s,\bm n''\notin\tilde{U}_s^*}|\mathcal{T}_{\tilde{U}_s^*}^{-1}(\bm n,\bm n')|\cdot|\mathcal{W}(\bm n',\bm n'')|\cdot|\psi(\bm n'')|,\\
		({\rm IV})&=\sum_{\bm n'\in\tilde{U}_s^*\setminus\hat{U}_s,\bm n''\notin\tilde{U}_s^*}|\mathcal{T}_{\tilde{U}_s^*}^{-1}(\bm n,\bm n')|\cdot|\mathcal{W}(\bm n',\bm n'')|\cdot|\psi(\bm n'')|.
	\end{align*}
	For $({\rm III})$, we have by Theorem \ref{ge}, \eqref{wphi}, and $|\psi(\bm n)|\le(1+\|\bm n\|)^d$ that 
	\begin{align*}
		({\rm III})\le&\  \sum_{\bm n'\in\hat{U}_s,\bm n''\notin\tilde{U}_s^*} \|\boT_{\tilde{U}_s^*}^{-1}\|\cdot e^{-\alpha\log^{\rho}(1+\|\bm n'-\bm n''\|)}\cdot(1+\|\bm n''\|)^d\\
		\le &\ \delta_s^{-3}\sum_{\bm n'\in\hat{U}_s,\bm n''\notin\tilde{U}_s^*}e^{-\alpha\log^{\rho}(1+\|\bm n'-\bm n''\|)}(1+\|\bm n'\|)^d(1+\|\bm n'-\bm n''\|)^d\\
		\le &\ \delta_s^{-3}(1+9N_{s+1}^{10^4})^d\sum_{\bm n'\in\hat{U}_s,\bm n''\notin\tilde{U}_s^*}e^{-\alpha\log^{\rho}(1+\|\bm n'-\bm n''\|)}(1+\|\bm n'-\bm n''\|)^d.
	\end{align*}
	Since $\|\bm n'-\bm n''\|\ge\frac{1}{4}N_{s}^{10^4}\gg1$ and \eqref{Det} we get
	\begin{align*}
		&\ \ \sum_{\bm n'\in\hat{U}_s,\bm n''\notin\tilde{U}_s^*}e^{-\alpha\log^{\rho}(1+\|\bm n'-\bm n''\|)}(1+\|\bm n'-\bm n''\|)^d\\
		\le&\ \  e^{-\frac{1}{2}\alpha\log^{\rho}\left(1+\frac{1}{4}N_{s_i}^{10^4}\right)}\sum_{\bm n'\in\hat{U}_s,\bm n''\notin\tilde{U}_s^*} e^{-\frac{1}{2}\alpha\log^{\rho}(1+\|\bm n'-\bm n''\|)}\cdot(1+\|\bm n'-\bm n''\|)^d\\
		\le &\ \ e^{-\frac{1}{2}\alpha\log^{\rho}\left(1+\frac{1}{4}N_{s_i}^{10^4}\right)}\sum_{\bm n'\in\hat{U}_s,\bm n''\notin\tilde{U}_s^*} e^{-\frac{1}{10}\alpha\log^{\rho}(1+\|\bm n'-\bm n''\|)}\\
		\le &\ \ e^{-\frac{1}{2}\alpha\log^{\rho}\left(1+\frac{1}{2}N_{s_i}^{10^4}\right)}\cdot(\#\hat{U}_s)\cdot D\left(\frac{\alpha}{10}\right).
	\end{align*}
	Therefore, 
	\begin{align*}
		({\rm III})\le&\  e^{-\frac{1}{2}\alpha\log^{\rho}\left(1+\frac{1}{4}N_{s}^{10^4}\right)}\cdot\delta_{s}^{-3}\cdot(1+9N_{s+1}^{10^4})^{2d}\cdot D\left(\frac{\alpha}{10}\right)\\
		\le&\ \frac{1}{2}e^{-\frac{1}{4}\alpha\log^{\rho}\left(1+\frac{1}{4}N_{s}^{10^4}\right)}.
	\end{align*}
		For $({\rm IV})$, we also have by Theorem \ref{ge}, $\|\bm n-\bm n'\|\ge\frac{1}{4}N_{s}^{10^4}\gg10\tz_{s}$, $\|\bm n-\bm n''\|\ge \frac{1}{2}N_{s}^{10^4}\gg1$, \eqref{wphi}, \eqref{quaeq} and $|\psi(\bm n)|\le(1+\|\bm n\|)^d$ that 
	\begin{align*}
		({\rm IV})\le& \sum_{\bm n'\in\tilde{U}_s^*\setminus\hat{U}_s,\bm n''\notin\tilde{U}_s^*}e^{-\frac{1}{2}\alpha\log^{\rho}(1+\|\bm n-\bm n'\|)}\cdot e^{-\alpha\log^{\rho}(1+\|\bm n'-\bm n''\|)}\cdot(1+\|\bm n''\|)^d\\
		\le& \sum_{\bm n'\in\tilde{U}_s^*\setminus\hat{U}_s,\bm n''\notin\tilde{U}_s^*}\left(e^{-\frac{1}{2}\alpha\log^{\rho}(1+\|\bm n-\bm n''\|)+\frac{1}{2}\alpha C(\rho)\log^{\rho}2}\right.\\
		&\ \ \left.\cdot e^{-\frac{1}{2}\alpha\log^{\rho}(1+\|\bm n'-\bm n''\|)}\cdot (1+\|\bm n\|)^d(1+\|\bm n-\bm n''\|)^d\right)\\
		\le &(1+9N_{s+1}^{10^4})^de^{-\frac{1}{3}\alpha\log^{\rho}\left(1+\frac{1}{2}N_{s_i}^{10^4}\right)} \sum_{\bm n'\in\tilde{U}_s^*\setminus\hat{U}_s,\bm n''\notin\tilde{U}_s^*} e^{-\frac{1}{2}\alpha\log^{\rho}(1+\|\bm n'-\bm n''\|)}\\
		\le &(1+9N_{s+1}^{10^4})^d(1+11N_{s+1}^{10^4})^d e^{-\frac{1}{3}\alpha\log^{\rho}\left(1+\frac{1}{2}N_{s_i}^{10^4}\right)}\cdot D\left(\frac{\alpha}{2}\right)\\
		\le &\frac{1}{2}e^{-\frac{1}{4}\alpha\log^{\rho}\left(1+\frac{1}{4}N_{s}^{10^4}\right)}.
	\end{align*}
Hence,
\begin{align*}
		|\psi(\bm n)|\le e^{-\frac{1}{4}\alpha\log^{\rho}\left(1+\frac{1}{4}N_{s}^{10^4}\right)}.
\end{align*}
	Combining the above estimates and $\|\bm n\|\le 8N_{s+1}^{10^4}$, we have
	\begin{align*}
		|\psi(\bm n)|<e^{-\frac{\alpha}{4\cdot 10^{6\rho}}\log^{\rho}\left(1+\|\bm n\|\right)}.
	\end{align*}
	We complete the proof of arithmetic spectral localization. 
\end{proof}

	\section{Sub-polynomial bounds of moments}\label{subpoly}
	For convenience, we will divide this proof into several parts.
	
	\subsection{Basic assumptions}
There are some assumption used throughout the proof:
\begin{itemize}
	\item[(1)] $1<\frac{1+\rho'}{2}<\rho'<\rho<\rho'+1$.
\item[(2)] $\boH(\theta)$ is self-adjoint for all $\theta\in\T$, which implies $v$ is real on $\T$.
\item[(3)] $v(\T)=[a,b]$, since $v$ is continuous on $\T$.
\item[(4)] There is some $\beta>0$ such that $\{z\in\C:\ \Re z\in[a-2\beta,b+2\beta],\ \Im z\in[-\beta,\beta]\}\subset v(\D_{R/2})$ by using open mapping theorem and $v$ is analytic on $\D_R$.
\item[(5)] $\ep$ is sufficiently small such that $\sigma(\boH(\theta))\subset[a-\beta,b+\beta]$ for all $\theta\in\T$ and $|\ep|\cdot2e^{\frac{19}{20}\alpha C(\rho)\log^{\rho}2}\cdot D\left(\frac{\alpha}{20}\right)<\frac{\beta}{2}$, where $D(\alpha)$ is defined in \eqref{Det}.
\item[(6)] $E\in\R$.
\end{itemize}

\subsection{Connection between moments and Green's functions}\label{92}
In this part, we will establish a connection between moments and Green's functions. We have

\begin{lem}\label{cbmgf}
	For all $ \theta\in \mathbb{T},t, {T}> 0,$ and $\bm n\in\mathbb{Z}^{d}$, we obtain
		\begin{align}\label{cbpmgf}
			\nonumber&\ |\ji e^{-\sqrt{-1}t\mathcal{H}(\theta)}\delta_{\bm 0},\delta_{\bm n}\jd|^2\\
		\nonumber\le&\  \frac{(b-a+4\beta)e^2}{2\pi^2}\int_{a-2\beta}^{b+2\beta}\left|\left\ji\left(\mathcal{H}(\theta)-E-\sqrt{-1}t^{-1}\right)^{-1}\delta_{\bm 0},\delta_{\bm n}\right\jd\right|^2dE\\
	&\ +\frac{2e^2}{\beta^2\pi^2}(b-a+6\beta+2t^{-1})^2e^{-\frac{9}{5}\alpha\log^{\rho}(1+\|\bm n\|)}.
	\end{align}
	and
	\begin{align}\label{cbtapmgf}
		\nonumber&\ \frac{2}{T}\int^{\infty}_{0}e^{-2t/T}|\langle e^{-\sqrt{-1}t\boH(\theta)}\delta_{\bm 0},\delta_{\bm n}\rangle|^{2}\mathrm{d}t\\
	\leq&\ \frac{1}{T\pi}\int_{a-2\beta}^{b+2\beta}|\langle (\boH(\theta)-E-\sqrt{-1}t)^{-1}\delta_{\bm 0},\delta_{\bm n}\rangle|^{2}dE+\frac{4}{\beta T\pi}e^{-\frac{9}{5}\alpha\log^{\rho}(1+\|\bm n\|)}.
	\end{align}
\end{lem}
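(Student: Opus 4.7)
The plan is to represent $\langle e^{-\sqrt{-1}t\mathcal{H}(\theta)}\delta_{\bm 0},\delta_{\bm n}\rangle$ through the resolvent of $\mathcal{H}(\theta)$, split the resulting energy integral at $[a-2\beta,b+2\beta]\supset\sigma(\mathcal{H}(\theta))$, and estimate the complementary tail with the Combes--Thomas bound of Lemma~\ref{cte}, which supplies exponential decay in $\|\bm n\|$, while controlling the main part by Cauchy--Schwarz. Both \eqref{cbpmgf} and \eqref{cbtapmgf} follow this architecture and differ only in the intermediate identity linking the propagator to the resolvent.

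For the time-averaged bound \eqref{cbtapmgf} I would apply Plancherel's theorem to the function $g(t)=e^{-t/T}\langle e^{-\sqrt{-1}t\mathcal{H}(\theta)}\delta_{\bm 0},\delta_{\bm n}\rangle$ extended by zero on $(-\infty,0]$. The spectral theorem gives $\hat g(E)=-\sqrt{-1}\langle(\mathcal{H}(\theta)-E-\sqrt{-1}/T)^{-1}\delta_{\bm 0},\delta_{\bm n}\rangle$, so Parseval multiplied by $2/T$ produces the identity
\[
\frac{2}{T}\int_{0}^{\infty}e^{-2t/T}|\langle e^{-\sqrt{-1}t\mathcal{H}(\theta)}\delta_{\bm 0},\delta_{\bm n}\rangle|^{2}dt=\frac{1}{T\pi}\int_{\mathbb{R}}|\langle(\mathcal{H}(\theta)-E-\sqrt{-1}/T)^{-1}\delta_{\bm 0},\delta_{\bm n}\rangle|^{2}dE.
\]
Splitting the right-hand side at $[a-2\beta,b+2\beta]$ isolates the main term in \eqref{cbtapmgf}. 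On the complement, assumption (5) forces $\mathscr{D}:=\operatorname{dist}(E+\sqrt{-1}/T,\sigma(\mathcal{H}(\theta)))\ge\beta$ and the smallness $2e^{\lambda' C(\rho)\log^{\rho}2}\cdot S_{\lambda'}<\beta/2$; Lemma~\ref{cte} with $\lambda'=9\alpha/10$ then yields $|(\mathcal{H}(\theta)-E-\sqrt{-1}/T)^{-1}(\bm 0,\bm n)|^{2}\le e^{-(9/5)\alpha\log^{\rho}(1+\|\bm n\|)}/(\mathscr{D}-\beta/2)^{2}$, and the elementary $\int_{\mathbb{R}\setminus[a-2\beta,b+2\beta]}(\mathscr{D}-\beta/2)^{-2}dE\le 4/\beta$ reproduces the stated $\frac{4}{\beta T\pi}$ prefactor.

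For the pointwise-in-$t$ bound \eqref{cbpmgf} I plan to use the Cauchy integral formula on a rectangular contour $\Gamma$ surrounding $\sigma(\mathcal{H}(\theta))$, with real part a small enlargement of $[a-2\beta,b+2\beta]$ (the precise enlargement chosen to absorb the $\beta/2$ loss in the Combes--Thomas denominator, yielding the $b-a+6\beta$ in the error) and imaginary part $[-1/t,1/t]$, so that $|e^{-\sqrt{-1}tz}|\le e$ on $\Gamma$. Writing $2\pi\sqrt{-1}\langle e^{-\sqrt{-1}t\mathcal{H}(\theta)}\delta_{\bm 0},\delta_{\bm n}\rangle=\oint_{\Gamma}e^{-\sqrt{-1}tz}\langle(z-\mathcal{H}(\theta))^{-1}\delta_{\bm 0},\delta_{\bm n}\rangle dz$, self-adjointness of $\mathcal{H}(\theta)$ lets one combine the two horizontal edges into a single integral over $E\in[a-2\beta,b+2\beta]$ against $\langle(\mathcal{H}(\theta)-E-\sqrt{-1}/t)^{-1}\delta_{\bm 0},\delta_{\bm n}\rangle$-type quantities; Cauchy--Schwarz against the length $b-a+4\beta$ produces the main term. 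On the remaining portions of $\Gamma$ (the vertical edges and the small horizontal extensions), one has $\mathscr{D}\ge\beta$, and Lemma~\ref{cte} gives decay $(2/\beta)e^{-(9/10)\alpha\log^{\rho}(1+\|\bm n\|)}$; bounding this remaining contribution by the total perimeter $2(b-a+6\beta)+4/t$ and squaring, with the elementary splitting $|A+B|^{2}\le 2(|A|^{2}+|B|^{2})$ separating main from error, gives the $\frac{2e^{2}}{\beta^{2}\pi^{2}}(b-a+6\beta+2t^{-1})^{2}e^{-(9/5)\alpha\log^{\rho}(1+\|\bm n\|)}$ remainder.

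The main technical effort is the constant bookkeeping in \eqref{cbpmgf}: one must verify uniformly along the Combes--Thomas portion of $\Gamma$ that assumption (5) on $\epsilon$ forces $\mathscr{D}-C_{\lambda'}\ge\beta/2$ so that the denominator in \eqref{s22} is controlled, and coordinate the $L^{2}$ estimate on the main part with the $L^{\infty}$ Combes--Thomas estimate on the remainder so that the $e^{2}$ factor (from $|e^{-\sqrt{-1}tz}|^{2}\le e^{2}$) combines correctly with the rectangle geometry. By contrast, \eqref{cbtapmgf} is essentially a single Plancherel computation, whose only subtle input is the integrability of $(\mathscr{D}-\beta/2)^{-2}$ against $dE$ outside the main interval.
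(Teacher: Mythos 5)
Your derivation of the time-averaged bound \eqref{cbtapmgf} follows the same route as the paper: Plancherel applied to $e^{-t/T}\langle e^{-\sqrt{-1}t\mathcal{H}(\theta)}\delta_{\bm 0},\delta_{\bm n}\rangle\chi_{[0,\infty)}(t)$, the spectral-theorem computation of its Fourier transform as the resolvent at $E+\sqrt{-1}T^{-1}$, a split at $[a-2\beta,b+2\beta]$, and a Combes--Thomas bound on the tail integrated against $(\mathscr{D}-\beta/2)^{-2}$. That part is sound.

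Your contour for \eqref{cbpmgf} has a genuine gap. You take a rectangle symmetric about the real axis, with imaginary part $[-1/t,1/t]$, and you propose to place the bottom horizontal edge (the part with $\Re z\in[a-2\beta,b+2\beta]$) into the main term alongside the top edge. Two things go wrong. First, for $1/t<\beta$ the bottom edge is at distance only $1/t$ from $\sigma(\mathcal{H}(\theta))$, so Lemma~\ref{cte} (which requires $\mathscr{D}>2e^{\lambda' C(\rho)\log^\rho 2}S_{\lambda'}$, guaranteed by assumption (5) only when $\mathscr{D}\ge\beta$) cannot be invoked there; and the crude norm bound $\|(\mathcal{H}(\theta)-z)^{-1}\|\le t$ gives no decay in $\|\bm n\|$, so the bottom edge cannot be absorbed into the error either. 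Second, by self-adjointness $(\mathcal{H}-\bar z)^{-1}=((\mathcal{H}-z)^{-1})^*$, so the bottom-edge integrand is $|\langle(\mathcal{H}(\theta)-E-\sqrt{-1}t^{-1})^{-1}\delta_{\bm n},\delta_{\bm 0}\rangle|$, not $|\langle(\mathcal{H}(\theta)-E-\sqrt{-1}t^{-1})^{-1}\delta_{\bm 0},\delta_{\bm n}\rangle|$; since $\mathcal{H}(\theta)$ need not be a symmetric matrix, these differ, and the combined main term is not the integral appearing in the statement (and including both edges would in any case double the $(b-a+4\beta)$ length factor, breaking the constant). The paper uses a deliberately \emph{asymmetric} contour: top edge at $\Im z=t^{-1}$ over $E\in[a-2\beta,b+2\beta]$ gives the sole contribution to the main term via Cauchy--Schwarz, while the bottom edge sits at the \emph{fixed} height $\Im z=-\beta$; together with the two vertical edges of length $\beta+t^{-1}$, the bottom piece lies at distance $\ge\beta$ from the spectrum and is controlled by Combes--Thomas, which is precisely where the perimeter $b-a+6\beta+2t^{-1}$ comes from. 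Your explanation of the $6\beta$ as arising from enlarging the real extent of the rectangle is therefore not correct; no such enlargement is used, and the extra $2\beta$ (beyond the width $b-a+4\beta$) comes from the $\beta$-tall vertical edges.
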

	\begin{proof}
	We first deal with \eqref{cbpmgf}. Given $t>0$, we will consider the following contour $\mathscr{C}=\mathscr{C}_1\cup\mathscr{C}_2\cup\mathscr{C}_3\cup\mathscr{C}_4$, where
	\begin{align*}
		\mathscr{C}_1&=\{z=E+\sqrt{-1}y:\ E\in[a-2\beta,b+2\beta],\ y=t^{-1}\},\\
		\mathscr{C}_2&=\{z=E+\sqrt{-1}y:\ E=a-2\beta,\ y\in[-\beta,t^{-1}]\},\\
		\mathscr{C}_3&=\{z=E+\sqrt{-1}y:\ E\in[a-2\beta,b+2\beta],\ y=-\beta\},\\
		\mathscr{C}_4&=\{z=E+\sqrt{-1}y:\ E=b+2\beta,\ y\in[-\beta,t^{-1}]\}.
	\end{align*}
	By Cauchy integral formula, we can get
	\begin{align*}
		\ji e^{-\sqrt{-1}t\mathcal{H}(\theta)}\delta_{\bm 0},\delta_{\bm n}\jd=-\frac{1}{2\pi\sqrt{-1}}\int_{\mathscr{C}}e^{-\sqrt{-1}tz}\ji(\mathcal{H}(\theta)-z)^{-1}\delta_{\bm0},\delta_{\bm n}\jd dz.
	\end{align*}
	Notice that for $z\in\mathscr{C}$, we have $\Im z\le t^{-1}$ and hence $|e^{-\sqrt{-1}tz}|=e^{t\Im z}\le e$. Thus
	\begin{align}\label{1234}
		|\ji e^{-\sqrt{-1}t\mathcal{H}(\theta)}\delta_{\bm 0},\delta_{\bm n}\jd|\le\frac{e}{2\pi}\sum_{j=1}^{4}\int_{\mathscr{C}_j}|\ji(\mathcal{H}(\theta)-z)^{-1}\delta_{\bm0},\delta_{\bm n}\jd|\cdot |dz|.
	\end{align}
	If $z\in\mathscr{C}_2\cup\mathscr{C}_3\cup\mathscr{C}_4$, then $\dist(z,\sigma(\mathcal{H}(\theta)))\ge\beta$ and hence Lemma \ref{cte} shows that
	\begin{align}
		\nonumber&\ \sum_{j=2}^{4}\int_{\mathscr{C}_j}|\ji(\mathcal{H}(\theta)-z)^{-1}\delta_{\bm0},\delta_{\bm n}\jd|\cdot |dz|\\
		\label{234}\le&\  \frac{2}{\beta}(b-a+6\beta+2t^{-1})e^{-\frac{9}{10}\alpha\log^{\rho}(1+\|\bm n\|)}.
	\end{align}
	Using the Cauchy–Schwarz inequality, we obtain
	\begin{align*}
		&\ \left(\int_{\mathscr{C}_1}|\ji(\mathcal{H}(\theta)-z)^{-1}\delta_{\bm0},\delta_{\bm n}\jd|\cdot |dz|\right)^2\\
		\le&\  (b-a+4\beta)\cdot\int_{a-2\beta}^{b+2\beta}|\ji(\mathcal{H}(\theta)-E-\sqrt{-1}t^{-1})^{-1}\delta_{\bm0},\delta_{\bm n}\jd|^2dE,
	\end{align*}
	which combines $(x+y)^2\le 2(x^2+y^2)$, \eqref{1234} and \eqref{234} implying that
	\begin{align*}
		&\ |\ji e^{-\sqrt{-1}t\mathcal{H}(\theta)}\delta_{\bm 0},\delta_{\bm n}\jd|^2\\
		\le&\  \frac{(b-a+4\beta)e^2}{2\pi^2}\int_{a-2\beta}^{b+2\beta}\left|\left\ji\left(\mathcal{H}(\theta)-E-\sqrt{-1}t^{-1}\right)^{-1}\delta_{\bm 0},\delta_{\bm n}\right\jd\right|^2dE\\
		&\ \ +\frac{2e^2}{\beta^2\pi^2}(b-a+6\beta+2t^{-1})^2e^{-\frac{9}{5}\alpha\log^{\rho}(1+\|\bm n\|)}.
	\end{align*}
	
	We then analyze \eqref{cbtapmgf}. Let 
	\begin{align*}
		f(t):=e^{-\frac{t}{T}}\ji e^{-\sqrt{-1}t\mathcal{H}(\theta)}\delta_{\bm 0},\delta_{\bm n}\jd\chi_{[0,\infty)}(t),
	\end{align*}
	where $\chi_{[0,\infty)}$ denotes the characteristic function of $[0,\infty)$. Its Fourier transform is given by
	\begin{align*}
		\hat{f}(-E)&=\int_{-\infty}^{\infty}e^{\sqrt{-1}tE}f(t)dt\\
		&=\int_0^{\infty}e^{-t(T^{-1}-\sqrt{-1}E)}\left(\int_{-\infty}^{\infty}e^{-\sqrt{-1}tu}\mu_{\delta_{\bm 0},\delta_{\bm n}}(du)\right)dt\\
		&=\int_{-\infty}^{\infty}\left(\int_0^{\infty}e^{-t(T^{-1}+\sqrt{-1}(u-E))}dt\right)\mu_{\delta_{\bm 0},\delta_{\bm n}}(du)\\
		&=\int_{-\infty}^{\infty}\frac{-\sqrt{-1}}{u-E-\sqrt{-1}T^{-1}}\mu_{\delta_{\bm 0},\delta_{\bm n}}(du)\\
		&=-\sqrt{-1}\left\ji \left(\mathcal{H}(\theta)-E-\sqrt{-1}T^{-1}\right)^{-1}\delta_{\bm 0},\delta_{\bm n}\right\jd,
	\end{align*}
	where  Fubini's theorem is used and $\mu_{\delta_{\bm 0},\delta_{\bm n}}$ is the spectral measure associated with $\mathcal{H}(\theta)$ and $\delta_{\bm 0},\delta_{\bm n}$. Thus, by Parseval's theorem on the Fourier transform,
	\begin{align*}
		&\ \frac{2}{T}\int_0^{\infty}e^{\frac{-2t}{T}}|\ji e^{-\sqrt{-1}t\mathcal{H}(\theta)}\delta_{\bm 0},\delta_{\bm n}\jd|^2dt\\
		=&\ \frac{1}{T\pi}\int_{-\infty}^{\infty}\left|\left\ji\left(\mathcal{H}(\theta)-E-\sqrt{-1}T^{-1}\right)^{-1}\delta_{\bm 0},\delta_{\bm n}\right\jd\right|^2dE.
	\end{align*}
	Recall that $\sigma(\mathcal{H}(\theta))\subset [a-\beta,b+\beta]$. By Lemma \ref{cte}, for any $E\in(b+2\beta,\infty)$, we have
	\begin{align*}
		\left|\left\ji\left(\mathcal{H}(\theta)-E-\sqrt{-1}T^{-1}\right)^{-1}\delta_{\bm 0},\delta_{\bm n}\right\jd\right|^2\le\frac{1}{\left(E-b-\frac{3}{2}\beta\right)^2}e^{-\frac{9}{5}\alpha\log^{\rho}(1+\|\bm n\|)}.
	\end{align*}
	Similarly, for any $E\in(-\infty,a-2\beta)$, we also have 
	\begin{align*}
		\left|\left\ji\left(\mathcal{H}(\theta)-E-\sqrt{-1}T^{-1}\right)^{-1}\delta_{\bm 0},\delta_{\bm n}\right\jd\right|^2\le\frac{1}{\left(E-a+\frac{3}{2}\beta\right)^2} e^{-\frac{9}{5}\alpha\log^{\rho}(1+\|\bm n\|)}.
	\end{align*}
	Therefore,
	\begin{align*}
		&\ \frac{2}{T}\int_0^{\infty}e^{\frac{-2t}{T}}|\ji e^{-\sqrt{-1}t\mathcal{H}(\theta)}\delta_{\bm 0},\delta_{\bm n}\jd|^2dt\\
		\le&\ \frac{1}{T\pi}\int_{a-2\beta}^{b+2\beta}\left|\left\ji\left(\mathcal{H}(\theta)-E-\sqrt{-1}T^{-1}\right)^{-1}\delta_{\bm 0},\delta_{\bm n}\right\jd\right|^2dE\\
		&\ +\frac{4}{\beta T\pi}e^{-\frac{9}{5}\alpha\log^{\rho}(1+\|\bm n\|)}.
	\end{align*}		
	We complete this proof.
\end{proof}

\subsection{Green's function estimate}
Because of subsection \ref{92}, we only need to bound Green's functions. We first have the following conclusion
\begin{lem}\label{93l1}
	If $a-2\beta\le E\le b+2\beta$, $\delta_s^{3}\le t^{-1}\le\beta$, then for $\theta\in\T$ and any $\bm k\in P_s$, we have
	\begin{align}\label{93a}
		\|\mathcal{T}_{\tilde{\Omega}_{\bm k}^s}^{-1}(E+\sqrt{-1}t^{-1};\theta)\|&\le\delta_{s}^{-3}
	\end{align}
and
	\begin{align}\label{93b}
		|\mathcal{T}_{\tilde{\Omega}_{\bm k}^s}^{-1}(E+\sqrt{-1}t^{-1};\theta)(\bm x,\bm y)|&< e^{-\alpha'_{s-1}\log^{\rho}(1+\|\bm x-\bm y\|)}\ \text{for $\|\bm x-\bm y\|\ge\frac{\tilde{\zeta}_{s-1}}{10}$}.
	\end{align}
\end{lem}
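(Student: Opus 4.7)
The strategy is to apply the MSA framework of Theorem \ref{ind} to the complex energy $E'=E+\sqrt{-1}t^{-1}$ (which lies in $v(\D_{R/2})$ by Assumption (4)), and to exploit the self-adjointness of $\mathcal{H}(\theta)$ guaranteed by Assumption (2) in order to handle the case where the block $\tilde{\Omega}_{\bm k}^s$ is resonant for the perturbed energy. Throughout, the relevant blocks $\tilde{\Omega}_{\bm k}^s$, $\Omega_{\bm k}^s$, $2\Omega_{\bm k}^s$, singular sets $A_{\bm k}^s$, phases $\theta_s(E')$, and index sets $P_s,Q_s$ are those supplied by Theorem \ref{ind} applied to $E'$.

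The norm bound \eqref{93a} is essentially free. Since $\theta\in\T$ is real and $v$ is real-valued on $\T$ by Assumption (2), the finite restriction $\mathcal{H}_{\tilde{\Omega}_{\bm k}^s}(\theta)=\mathcal{R}_{\tilde{\Omega}_{\bm k}^s}\mathcal{H}(\theta)\mathcal{R}_{\tilde{\Omega}_{\bm k}^s}$ is self-adjoint, so its spectrum is real. The complex shift $\sqrt{-1}t^{-1}$ therefore sits at vertical distance exactly $t^{-1}$ from $\sigma(\mathcal{H}_{\tilde{\Omega}_{\bm k}^s}(\theta))$, and the elementary resolvent estimate gives $\|\mathcal{T}_{\tilde{\Omega}_{\bm k}^s}^{-1}(E+\sqrt{-1}t^{-1};\theta)\|\le t\le\delta_s^{-3}$.

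For the off-diagonal estimate \eqref{93b} I would split according to whether $\bm k$ is resonant for $E'$. If $\bm k\in P_s\setminus Q_s(E')$ then $\{\bm k\}$ is $s$-good relative to the complex MSA data, and Theorem \ref{psqs} reindexed to scale $s$ applies verbatim, producing decay $e^{-\alpha'_{s-1}\log^\rho(1+\|\bm x-\bm y\|)}$ for $\|\bm x-\bm y\|\ge\tz_{s-1}/10$. If instead $\bm k\in Q_s(E')$ I would replace the non-resonant norm bound used in Theorem \ref{psqs} by the self-adjoint bound $\delta_s^{-3}$ from \eqref{93a} at the top scale, and rerun the same iterative resolvent identity argument: iterate Lemma \ref{itetg} and Lemma \ref{itepsqt} on subblocks $\tO_{\bm l}^{s'}\subset\tO_{\bm k}^s$ for $s'<s$, which are $(s-1)$-good because the MSA combinatorics at scales below $s$ is unaffected by the status of $\bm k$ at scale $s$. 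This produces the same geometric chain of Combes-Thomas type transitions that drives the proof of Theorem \ref{psqs}, with a slightly worse constant absorbed by the definition of $\alpha'_{s-1}$.

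The main obstacle is the last step of Case~2: confirming that substituting \eqref{93a} for the non-resonant norm bound at the top scale still yields the rate $\alpha'_{s-1}$, not a strictly worse one. The rate in Theorem \ref{psqs} was produced by balancing (i) the $(s-1)$-good hopping exponent $\alpha_{s-1}$ from Corollary \ref{ite0g} and Lemma \ref{itepsqt}, against (ii) a logarithmic loss $C(\rho)\log^\rho l_1$ from the quasi-metric estimate \eqref{quaeq}, and (iii) an overhead $\delta_{s+1}^{-3}$ at the final step that is trivially dominated by $e^{3\times 10^{5\rho'}\log^{\rho'}(N_{s+1}+1)}$. All three ingredients survive when the overhead $\delta_s^{-3}$ comes from self-adjointness rather than non-resonance, because $\delta_s^{-3}\le e^{3\log^{\rho'}(N_{s+1}+1)}$, and the extract-type Lemma \ref{el} converts the internal traveled distance to the full $\|\bm x-\bm y\|$ with a controlled loss already accounted for in the definition $\alpha'_{s-1}=\alpha_{s-1}(1-20\times 10^{5\rho'}/(\alpha\log^{\rho-\rho'}N_s))$. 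A careful bookkeeping of these three contributions along the iteration chain is the only delicate point of the proof.
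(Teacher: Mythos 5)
Your proof is correct and takes essentially the same route as the paper. The paper proves \eqref{93a} verbatim via $\|\mathcal{T}_{\tilde{\Omega}_{\bm k}^s}^{-1}(E+\sqrt{-1}t^{-1};\theta)\|=\dist(E+\sqrt{-1}t^{-1},\sigma(\mathcal{H}_{\tilde{\Omega}_{\bm k}^s}(\theta)))^{-1}\le t\le\delta_s^{-3}$, and then disposes of \eqref{93b} with a one-line remark that it ``is similar to Theorem \ref{psqs}''; your proposal supplies exactly the justification that remark is pointing to, namely that the non-resonance hypothesis $\bm k\notin Q_s$ in Theorem \ref{psqs} is used only to produce the top-scale $\ell^2$ bound $\delta_s^{-3}$, while the $(s-1)$-goodness of $\tilde{\Omega}_{\bm k}^s\setminus 2\Omega_{\bm k}^s$ and the chain of Corollary \ref{ite0g}, Lemma \ref{itetg}, Lemma \ref{itepsqt} depend only on scales $<s$ and so survive even when $\bm k\in Q_s$.
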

\begin{rem}
	Strictly speaking, the definitions of both $P_s$ and $\tO_{\bm k}^s$ depend on the energe $E+\sqrt{-1}t^{-1}$. For notational simplicity, we may suppress this dependence in our notation. We adopt the same notational convention for subsequent concepts $s$-regular sets and $s$-good sets.
\end{rem}

\begin{rem}
	We demand $a-2\beta\le E\le b+2\beta$, $t^{-1}\le\beta$ only for $E+\sqrt{-1}t^{-1}\in v(\D_{R/2})$, which meets the condition of Theorem \ref{ge} holds. If \eqref{93a} is established, the proof of \eqref{93b} is similar to Theorem \ref{psqs}, we omit the related details. 
\end{rem}
	\begin{proof}
		Since $\mathcal{H}(\theta)$ is self-adjoint and $\delta_s^{3}\le t^{-1}\le\beta$, we can obtain
		\begin{align*}
			\|\mathcal{T}_{\tilde{\Omega}_{\bm k}^s}^{-1}(E+\sqrt{-1}t^{-1};\theta)\|=\frac{1}{\dist(E+\sqrt{-1}t^{-1},\sigma(\mathcal{H}_{\tilde{\Omega}_{\bm k}^s}(\theta)))}\le \delta_{s}^{-3}.
		\end{align*}
		We finish this proof.
\end{proof}
Next, we need to estimate Green's function of more general sets. Assume that the finite set $\lg$ is $s$-regular, namely
\begin{align}\label{sreg}
	\lg\cap \tilde{\Omega}_{\bm k}^{s'}\Rightarrow \tilde{\Omega}_{\bm k}^{s'}\subset \lg\ \text{for}\ s'\le s\ \text{and}\ \bm k\in P_{s'}.
\end{align}
We have
\begin{lem}\label{93lsr}
	Under the assumptions of Lemma \ref{93l1}, if $\lg$ is $s$-regular, then for $\theta\in\T$,
	\begin{align}
		\label{tsrnorm}\|\mathcal{T}_{\lg}^{-1}(E+\sqrt{-1}t^{-1};\theta)\|\le\delta_{s}^{-3},
	\end{align}
	and
	\begin{align}
		\label{tsrdecay}|\mathcal{T}_{\lg}^{-1}(E+\sqrt{-1}t^{-1};\theta)(\bm x,\bm y)|<e^{-\alpha_s\log^{\rho}(1+\|\bm x-\bm y\|)}\text{ for $\|\bm x-\bm y\|>10\tz_s$}.
	\end{align}
\end{lem}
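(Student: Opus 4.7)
The plan is to mimic the Schur-test proof of Theorem 4.15 ($(s+1)$-good case), but substantially simplified, since the imaginary shift $\sqrt{-1}t^{-1}$ with $t^{-1} \ge \delta_s^3$ makes Lemma 5.2 give the uniform bound $\|\mathcal{T}_{\tilde{\Omega}_{\bm k}^s}^{-1}\|\le \delta_s^{-3}$ for \emph{all} $\bm k\in P_s$, with no need to exclude elements of $Q_s$. Thus, unlike the real-energy case, we never have to set up a new scale $(s+1)$; all resonant blocks at scale $s$ behave uniformly like the former "non-singular" ones.

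First, I would introduce, for $1\le t\le s$, the subsets
\begin{align*}
	\widetilde{P}_t=\{\bm k\in P_t:\ \exists\bm k'\in Q_{t-1}\text{ s.t. }\tilde{\Omega}_{\bm k'}^{t-1}\subset\Lambda,\ \tilde{\Omega}_{\bm k'}^{t-1}\subset\Omega_{\bm k}^t\},
\end{align*}
with the convention $Q_0=\mathbb Z^d$ so that $\widetilde{P}_1$ collects all blocks containing any lattice site of $\Lambda$. The $s$-regularity \eqref{sreg} together with property $(\bm a1)_s$ guarantees that whenever $\bm k\in\widetilde P_t$ the full enlarged block $\tilde\Omega_{\bm k}^t$ lies inside $\Lambda$; it is then a routine verification that for every $\bm w\in\Lambda$ either $\bm w\notin\bigcup_{\bm k\in\widetilde P_1}2\Omega_{\bm k}^1$, or there is a (largest) $1\le t\le s$ with $\bm w\in 2\Omega_{\bm k}^t$ for some $\bm k\in\widetilde P_t$. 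I then assign
\begin{align*}
	O(\bm w)=\begin{cases}\Lambda_{\tfrac12 N_1}(\bm w)\cap\Lambda,&\bm w\notin\bigcup_{\bm k\in\widetilde P_1}2\Omega_{\bm k}^1,\\ \tilde\Omega_{\bm k}^t,&\bm w\in 2\Omega_{\bm k}^t,\ \bm k\in\widetilde P_t.\end{cases}
\end{align*}
In the first case $O(\bm w)$ is $0$-good and Lemma 4.3 applies; in the second case Lemma 5.2 directly yields \eqref{93a}--\eqref{93b} on $\tilde\Omega_{\bm k}^t$.

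Next, following the Schur-test framework of Theorem 4.15, I would define $\boL(\bm x,\bm y)=\boT_{O(\bm x)}^{-1}(\bm x,\bm y)$ on $\{\bm y\in O(\bm x)\}$ and zero elsewhere, and $\boK(\bm x,\bm y)=\sum_{\bm z\in O(\bm x)}\boL(\bm x,\bm z)\boW(\bm z,\bm y)$ on $\{\bm y\notin O(\bm x)\}$ and zero elsewhere, so that $\boL\boT_\Lambda=\boI_\Lambda+\varepsilon\boK$. Using the two off-diagonal cases of Theorem 4.15 adapted to the uniform bound $\delta_s^{-3}$ (in place of the split $\delta_{t-1}^{-3}\delta_t^{-2}$), the quasi-metric property \eqref{quaeq}, and the decay \eqref{tsrdecay}-type estimate \eqref{93b}, one gets $|\boK(\bm x,\bm y)|\le e^{-\tfrac58\alpha'_{s-1}\log^\rho(1+\|\bm x-\bm y\|)}$ for all $\bm x,\bm y\in\Lambda$; Schur's test then gives $\|\boK\|\le D(\tfrac{5}{16}\alpha)$, whence $\boI_\Lambda+\varepsilon\boK$ is invertible and $\boT_\Lambda^{-1}=(\boI_\Lambda+\varepsilon\boK)^{-1}\boL$. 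A second Schur-test run on $\boL$ bounds $\|\boL\|\le\delta_s^{-3}/2$ uniformly (every row and column sums to at most $\max(2\kappa_1^{-1}\delta_0^{-2},\delta_s^{-3})\cdot|O(\bm w)|\lesssim \delta_s^{-3}$), giving \eqref{tsrnorm}.

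For the off-diagonal decay \eqref{tsrdecay}, I would iterate the resolvent identity exactly as in the proof of Theorem 4.15 (Steps corresponding to Lemmas 4.13 and 4.14), producing a chain $\bm x=\bm x_0,\bm x_1,\dots,\bm x_n$ with each step either (i) exiting $O(\bm x_k)$ with a factor $e^{-\alpha_s\log^\rho(1+\|\bm x_{k+1}-\bm x_k\|)}$ from Lemma 4.3 / Lemma 4.13 analog and (ii) eventually entering $O(\bm y)$; Lemma \ref{qua} then converts the accumulated logs into $\log^\rho(1+\|\bm x-\bm y\|)$ up to the $C(\rho)\log^\rho n$ tail, which is absorbed by $1<\rho'<\rho<\rho'+1$ just as in the Case-1/Case-2 split at the end of Theorem 4.15.

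The main technical point I expect to check carefully is the Schur-test bookkeeping for $\boK$ in the regime $\bm x\in 2\Omega_{\bm k}^t$: the contribution from the $X=\Lambda_{\tilde\zeta_t/9}(\bm k)$-summand costs $\delta_s^{-3}$ (rather than $\delta_{t-1}^{-2}\delta_t^{-2}$) but is paid for by the decay $e^{-\tfrac34\alpha\log^\rho(1+\|\bm x-\bm y\|)}$ at scale $\tilde\zeta_t$ combined with $|\bm x-\bm y|\ge\tilde\zeta_t/3$; since $\delta_s^{-3}\le e^{3\log^{\rho'}(N_{s+1}+1)}$ and $\rho>\rho'$, the exponential decay swallows the prefactor uniformly in $t\le s$. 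Once this single estimate is verified, everything else follows from the already-proved machinery.
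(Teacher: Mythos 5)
Your proposal for the off-diagonal decay \eqref{tsrdecay} is essentially the paper's intended route: define the covering $O(\bm w)$ from the blocks $\tilde\Omega^{t}_{\bm k}$ with $\bm k\in\widetilde P_t$, then iterate the resolvent identity exactly as in the proofs of Theorems \ref{psqs} and \ref{s+1g}, using Lemma \ref{93l1} in place of \eqref{tb0}/\eqref{PsQsdecay} and the quasi-metric property to recombine the logs; the paper says precisely that this "parallels that of \eqref{tgsa}" and omits the details.

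However, for the $\ell^2$-norm bound \eqref{tsrnorm} there is a genuine gap. You set up the Schur-test factorization $\boT_\lg^{-1}=(\boI_\lg+\ep\boK)^{-1}\boL$ and claim the row/column sums of $\boL$ are $\lesssim\delta_s^{-3}$, citing ``$\max(2\kappa_1^{-1}\delta_0^{-2},\delta_s^{-3})\cdot|O(\bm w)|\lesssim\delta_s^{-3}$''. This cannot be right: when $\bm w$ lies in a block $2\Omega_{\bm k}^s$ for $\bm k\in\widetilde P_s$, the local block $O(\bm w)=\tilde\Omega_{\bm k}^s$ has $\#O(\bm w)\approx N_s^{100d}$, and even after exploiting the off-diagonal decay \eqref{93b} the row sum of $\boL$ still carries a prefactor of order $\tz_{s-1}^d\approx N_{s-1}^{100d}$ in front of $\delta_s^{-3}$. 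That prefactor is strictly larger than $1$ and does not absorb into $\delta_s^{-3}$; you would end up with a bound like $\delta_s^{-3}\cdot\delta_{s-1}^{-\epsilon}$, which is weaker than the stated $\delta_s^{-3}$. In the paper's Theorems \ref{1g} and \ref{s+1g}, the analogous factor is absorbed because the local block bound carries the sharper $\delta_{s-1}^{-2}\|\theta+\bm k\cdot\bm\omega\mp\theta_s\|_\T^{-1}$ and the final target is the looser $\delta_{s+1}^{-3}$; neither of those slack sources is available here.

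The observation you missed is that the same self-adjointness argument you invoke for Lemma \ref{93l1} applies verbatim to any finite $\lg\subset\Z^d$: since $\mathcal{H}_\lg(\theta)$ is self-adjoint for real $\theta$, its spectrum is real, hence
\begin{align*}
\|\mathcal{T}_\lg^{-1}(E+\sqrt{-1}t^{-1};\theta)\|=\frac{1}{\dist(E+\sqrt{-1}t^{-1},\sigma(\mathcal{H}_\lg(\theta)))}\le t\le\delta_s^{-3},
\end{align*}
with no Schur test and no polynomial losses. This is what the paper does, and it is not merely a shortcut: it is the only route in this setting that yields the exact $\delta_s^{-3}$. Your proposal would need this replacement to be correct; once made, the remainder of your argument for the decay bound goes through.
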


\begin{proof}
		From self-adjointness of  $\mathcal{H}(\theta)$ and $\delta_s^{3}\le t^{-1}\le\beta$, we get
	\begin{align*}
		\|\mathcal{T}_{\lg}^{-1}(E+\sqrt{-1}t^{-1};\theta)\|=\frac{1}{\dist(E+\sqrt{-1}t^{-1},\sigma(\mathcal{H}_{\lg}(\theta)))}\le \delta_{s}^{-3}.
	\end{align*}
	Define 
\begin{align}\label{wpt3}
	\widetilde{P}_t=\{\bm k\in P_t:\ \exists \bm k'\in Q_{t-1}\ {\rm s.t.},\ \tilde{\Omega}_{\bm k'}^{t-1}\subset \lg,\tilde{\Omega}_{\bm k'}^{t-1}\subset\Omega_{\bm k}^t\}\ (1\le t\le s).
\end{align}
Similar to \eqref{wincup}, for any $\bm l\in \lg$, if
\begin{align*}
	\bm l\in\bigcup_{\bm k\in \tilde{P}_1}2\Omega_{\bm k}^1,
\end{align*}
then there exists $1\le t\le s-1$ such that
\begin{align*}
	\bm l\in\bigcup_{\bm k\in \tilde{P}_t\setminus Q_t}2\Omega_{\bm k}^{t}
\end{align*}
or 
\begin{align*}
	\bm l\in \bigcup_{\bm k\in \tilde{P}_s}2\Omega_{\bm k}^{s}.
\end{align*}
For every $\bm l\in \lg$, define
\begin{align*}
	O(\bm l)=\left\{\begin{array}{ll}
		\lg_{\frac{1}{2}N_1}(\bm l)\cap  \lg, & \text{if }\bm l\notin\bigcup_{\bm k\in \tilde{P}_1}2\Omega_{\bm k}^1,\\
		\tilde{\Omega}_{\bm k}^t,  &\text{if }\bm l\in2\Omega_{\bm k}^t\ \text{for some }\bm k\in\widetilde{P}_t\setminus Q_t,\\
		\tilde{\Omega}_{\bm k}^s,  &\text{if }\bm l\in2\Omega_{\bm k}^s\ \text{for some }\bm k\in\widetilde{P}_s.
	\end{array}\right.
\end{align*}
In light of \eqref{PsQsnorm}, \eqref{PsQsdecay}, Lemma \ref{0g}, Lemma \ref{93l1} and \eqref{sreg}, the demonstration for \eqref{tsrdecay} parallels that of \eqref{tgsa}, thus we suppress the routine details.
\end{proof}

	Now, we can state our main conclusion of this part:
\begin{thm}\label{93main}
	If $a-2\beta\le E\le b+2\beta$ and $\delta_{s}^{3}\le t^{-1}<\min(\delta_{s-1}^{3},\beta)$, then for $\theta\in\T$ and any $\bm n\in\Z^d$ with $\|\bm n\|\ge e^{(\log t)^{\frac{2}{1+\rho'}}}\gg N_s^{10^3}$, we have
	\begin{align*}
		|\mathcal{T}^{-1}(E+\sqrt{-1}t^{-1};\theta)(\bm 0,\bm n)|< e^{-\frac{3}{4}\alpha_s\log^{\rho}\left(1+\|\bm n\|\right)}.
	\end{align*}
\end{thm}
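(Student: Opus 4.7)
The plan is to reduce the infinite-volume Green's function estimate to the finite-volume Lemma \ref{93lsr} via the resolvent identity, after first constructing an $s$-regular set $\tilde{\lg}\subset\Z^d$ containing both $\bm 0$ and $\bm n$ with $\diam(\tilde{\lg})\asymp\|\bm n\|$. The crucial additional ingredient is the crude spectral bound $\|\mathcal{T}^{-1}(E+\sqrt{-1}t^{-1};\theta)\|\le t$, which follows from self-adjointness of $\mathcal{H}(\theta)$ and will absorb the boundary contributions provided $\|\bm n\|$ is large enough in terms of $t$.

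First I would take $\tilde{\lg}_0=\lg_{2\|\bm n\|}(\bm 0)$ and, for $s'=0,1,\dots,s$, inductively adjoin every block $\tilde{\Omega}_{\bm k}^{s'}$ with $\bm k\in P_{s'}$ that meets the current set. Because of the separation $\dist(\tilde{\Omega}_{\bm k}^{s'},\tilde{\Omega}_{\bm k'}^{s'})>10\tz_{s'}$ from \eqref{a1s} and the bound $\tz_{s'}\ll N_{s'}^{100}\ll\|\bm n\|$, the procedure terminates quickly and the resulting set $\tilde{\lg}$ is $s$-regular in the sense of \eqref{sreg}, still satisfies $\lg_{2\|\bm n\|}(\bm 0)\subset\tilde{\lg}\subset\lg_{3\|\bm n\|}(\bm 0)$, and therefore contains $\bm n$. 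Lemma \ref{93lsr} applied to $\tilde{\lg}$ at the energy $E+\sqrt{-1}t^{-1}$ then gives
\begin{align*}
\|\mathcal{T}_{\tilde{\lg}}^{-1}\|\le \delta_s^{-3},\qquad |\mathcal{T}_{\tilde{\lg}}^{-1}(\bm 0,\bm n)|<e^{-\alpha_s\log^{\rho}(1+\|\bm n\|)},
\end{align*}
the decay bound being applicable because $\|\bm n\|\gg 10\tz_s$.

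Next I would write the standard resolvent identity
\begin{align*}
\mathcal{T}^{-1}(\bm 0,\bm n)=\mathcal{T}_{\tilde{\lg}}^{-1}(\bm 0,\bm n)-\ep\sum_{\bm a\in\tilde{\lg},\,\bm b\notin\tilde{\lg}}\mathcal{T}_{\tilde{\lg}}^{-1}(\bm 0,\bm a)\mathcal{W}(\bm a,\bm b)\mathcal{T}^{-1}(\bm b,\bm n),
\end{align*}
in which the leading term is already sharper than the target. The boundary sum splits according to whether $\|\bm a\|\le 10\tz_s$ or not: on the far region one uses \eqref{tsrdecay} and the quasi-metric Lemma \ref{qua} to merge $\log^{\rho}(1+\|\bm a\|)$ with $\log^{\rho}(1+\|\bm a-\bm b\|)$ into $\log^{\rho}(1+\|\bm 0-\bm b\|)$ up to a constant loss $C(\rho)\log^{\rho}2$; on the near region one uses $\|\mathcal{T}_{\tilde{\lg}}^{-1}\|\le\delta_s^{-3}$ together with $\|\bm a-\bm b\|\ge\|\bm n\|-10\tz_s$. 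In both cases the remaining factor $|\mathcal{T}^{-1}(\bm b,\bm n)|$ is controlled by $t$ from self-adjointness, and $\|\bm 0-\bm b\|\ge 2\|\bm n\|$ because $\bm b\notin\tilde{\lg}$. Applying \eqref{Det} to perform the geometric sums, the boundary term is bounded by $\delta_s^{-C}\cdot t\cdot e^{-\alpha_s\log^{\rho}(1+2\|\bm n\|)}$.

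The main obstacle is confirming that the prefactors $t$ and $\delta_s^{-C}$ cost no more than $\tfrac{1}{4}\alpha_s\log^{\rho}(1+\|\bm n\|)$ in the exponent. The hypothesis $\|\bm n\|\ge e^{(\log t)^{2/(1+\rho')}}$ rewrites as $\log t\le\log^{(1+\rho')/2}(1+\|\bm n\|)$, and the standing assumption $\frac{1+\rho'}{2}<\rho'<\rho$ delivers $\log t\ll\log^{\rho}(1+\|\bm n\|)$; together with $|\log\delta_s|\le\tfrac{1}{3}\log t$ this also handles the $\delta_s^{-C}$ losses. Combined with the main-term bound $e^{-\alpha_s\log^{\rho}(1+\|\bm n\|)}$, this yields $|\mathcal{T}^{-1}(\bm 0,\bm n)|<e^{-\frac{3}{4}\alpha_s\log^{\rho}(1+\|\bm n\|)}$ as required. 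The only genuinely delicate step is the first one: one must check that the cumulative growth in the iterated enlargement remains $\ll\|\bm n\|$ when summed over all scales $s'\le s$, which is why the hypothesis $\|\bm n\|\gg N_s^{10^3}$ is built into the statement.
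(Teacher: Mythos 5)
Your proposal is correct, and the core ingredients match the paper: construct an $s$-regular finite set by Lemma \ref{93A}, use a one-step resolvent identity, invoke Lemma \ref{93lsr} for the finite-volume factors, use the crude spectral bound $\|\mathcal{T}^{-1}(E+\sqrt{-1}t^{-1};\theta)\|\le t$ from self-adjointness, and absorb the $t$ and $\delta_s^{-O(1)}$ prefactors into the exponent via the assumption $\|\bm n\|\ge e^{(\log t)^{2/(1+\rho')}}$ and $\frac{1+\rho'}{2}<\rho$. However, your decomposition is genuinely different from the paper's. The paper applies Lemma \ref{93A} to a ball $\lg_{\frac{1}{5}\|\bm n\|}(\bm n)$ \emph{around} $\bm n$, obtaining an $s$-regular set $O_{\bm n}$ that contains $\bm n$ but \emph{not} $\bm 0$; the resolvent identity then produces no leading term and the Green's function is a pure boundary sum $-\ep\sum_{\bm l\notin O_{\bm n},\bm l'\in O_{\bm n}}\mathcal{T}^{-1}(\bm 0,\bm l)\mathcal{W}(\bm l,\bm l')\mathcal{T}_{O_{\bm n}}^{-1}(\bm l',\bm n)$, split by whether $\bm l'$ is close to $\bm n$ or not. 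You instead grow a regular set $\tilde{\lg}$ around $\bm 0$ large enough to contain $\bm n$, which gives a leading term $\mathcal{T}_{\tilde{\lg}}^{-1}(\bm 0,\bm n)$ plus boundary corrections. Both decompositions work; the paper's has the small aesthetic advantage that there is no separate leading term to estimate, and the set has diameter $\sim\|\bm n\|/5$ rather than $\sim\|\bm n\|$, but this makes no asymptotic difference.

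Two minor slips worth noting. First, in your construction of $\tilde{\lg}$ you describe adjoining blocks "for $s'=0,1,\dots,s$", i.e.\ from low scale to high; Lemma \ref{93A} adjoins from high scale to low precisely so that once a given scale is settled it need not be revisited (a newly added scale-$s$ block has diameter $\gg N_{s'}^{100}$ for $s'<s$ and could otherwise force new low-scale additions). You can sidestep this by simply citing Lemma \ref{93A} rather than re-deriving it. Second, the inequality you invoke, $|\log\delta_s|\le\frac{1}{3}\log t$, runs in the wrong direction: from $\delta_s^3\le t^{-1}$ one gets $|\log\delta_s|\ge\frac{1}{3}\log t$, while from $t^{-1}<\delta_{s-1}^3$ one gets $|\log\delta_{s-1}|<\frac{1}{3}\log t$ and hence $|\log\delta_s|=10^{5\rho'}|\log\delta_{s-1}|<\frac{10^{5\rho'}}{3}\log t$. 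This is what you actually need, and it still gives $|\log\delta_s|=O(\log t)=o(\log^{\rho}(1+\|\bm n\|))$, so the conclusion is unaffected, but the constant $\frac{1}{3}$ and the choice of index should be corrected.
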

\begin{proof}
	Since $\mathcal{H}(\theta)$ is self-adjoint, one has that for any $\lg\subset\Z^d$, $\dist(\sigma(\mathcal{H}_{\lg}(\theta)),E+\sqrt{-1}t^{-1})\ge t^{-1}$, and hence,
	\begin{align}\label{t}
		\|\mathcal{T}_{\lg}^{-1}(E+\sqrt{-1}t^{-1};\theta)\|\le t\le \delta_{s}^{-3}.
	\end{align}
	By Lemma \ref{93A}, for fixed $\bm n$ with $\|\bm n\|\ge e^{(\log t)^{\frac{2}{1+\rho'}}}\gg N_s^{10^3}$, there exists a $s$-regular set $O_{\bm n}$ satisfying
	\begin{align*}
		\lg_{\frac{1}{5}\|\bm n\|}(\bm n)\subset O_{\bm n}\subset\lg_{\frac{1}{5}\|\bm n\|+50N_{s}^{100}}(\bm n).
	\end{align*}
	For simplicity, we omit the dependence of $\mathcal{T}(E+\sqrt{-1}t^{-1};\theta)$ on $E,t,\theta$. Using resolvent identity shows that
	\begin{align*}
		\mathcal{T}^{-1}(\bm 0,\bm n)=-\sum_{\bm l\notin O_{\bm n},\bm l'\in O_{\bm n}}\mathcal{T}^{-1}(\bm 0,\bm l)\mathcal{W}(\bm l,\bm l')\mathcal{T}_{O_{\bm n}}^{-1}(\bm l',\bm n).
	\end{align*}
	Let $O_{\bm n,0}=\lg_{\frac{1}{10}\|\bm n\|}(\bm n)\cap O_{\bm n}$, we have
	\begin{align*}
		|\mathcal{T}^{-1}(\bm 0,\bm n)|&\le \sum_{\bm l\notin O_{\bm n},\bm l'\in O_{\bm n}}|\mathcal{T}^{-1}(\bm 0,\bm l)|\cdot|\mathcal{W}(\bm l,\bm l')|\cdot|\mathcal{T}_{O_{\bm n}}^{-1}(\bm l',\bm n)|\\
		&= ({\rm I})+({\rm II}),
	\end{align*}
	where
	\begin{align*}
		({\rm I})&=\sum_{\bm l\notin O_{\bm n},\bm l'\in O_{\bm n,0}}|\mathcal{T}^{-1}(\bm 0,\bm l)|\cdot|\mathcal{W}(\bm l,\bm l')|\cdot|\mathcal{T}_{O_{\bm n}}^{-1}(\bm l',\bm n)|,\\
		({\rm II})&=\sum_{\bm l\notin O_{\bm n},\bm l'\in O_{\bm n}\setminus O_{\bm n,0}}|\mathcal{T}^{-1}(\bm 0,\bm l)|\cdot|\mathcal{W}(\bm l,\bm l')|\cdot|\mathcal{T}_{O_{\bm n}}^{-1}(\bm l',\bm n)|.
	\end{align*}
	For $({\rm I})$, we have by \eqref{wphi}, \eqref{t} and $\|\bm l-\bm l'\|\ge \frac{1}{10}\|\bm n\|$ that 
	\begin{align*}
		({\rm I})&\le \|\mathcal{T}^{-1}\|\cdot\|\mathcal{T}_{O_{\bm n}}^{-1}\|\cdot(\# O_{\bm n,0})\cdot D\left(\frac{\alpha}{10}\right)e^{-\frac{9}{10}\alpha\log^{\rho}\left(1+\frac{1}{10}\|\bm n\|\right)}.
	\end{align*}
	For $({\rm II})$, we also have by \eqref{quaeq}, \eqref{t} and \eqref{tsrdecay} that 
	\begin{align*}
		({\rm II})&\le\|\mathcal{T}^{-1}\|\cdot(\# O_{\bm n})\cdot D\left(\frac{\alpha}{10}\right)e^{-\alpha_{s}\log^{\rho}\left(1+\frac{1}{10}\|\bm n\|\right)+C(\rho)\alpha_{s}\log^{\rho}2}.
	\end{align*}
	Therefore, since Lemma \ref{93lsr} and $t^{-1}<\delta_{s-1}^{3}$, we can obtain
	\begin{align*}
		|\mathcal{T}^{-1}(\bm 0,\bm n)|<e^{-\frac{3}{4}\alpha_s\log^{\rho}\left(1+\|\bm n\|\right)}.
	\end{align*}
	We finish this proof.
\end{proof}

\subsection{The completion of proof}
In this part, we will combine the previous content to complete  the proof of Theorem \ref{sne}.
\begin{proof}[Proof of Theorem \ref{sne}]
	Let $T_0=\max\{\beta^{-1},\delta_0^{-3}\}$. If $t\ge T_0$, there is some $s\ge1$ such that
	\begin{align*}
		\delta_{s}^{3}\le t^{-1}<\min(\delta_{s-1}^{3},\beta).
	\end{align*}
	We let
	\begin{align*}
		\left(|\mathscr{X}_{\mathcal{H}(\theta)}|_{\delta_{\bm 0}}^p\right)(t)&=\sum_{\bm n\in\Z^d}(1+\|\bm n\|)^p\left|\ji e^{-\sqrt{-1}t\mathcal{H}(\theta)}\delta_{\bm 0},\delta_{\bm n}\jd\right|^2\\
		&= ({\rm I})+({\rm II}),
	\end{align*}
	where
	\begin{align*}
		({\rm I})	&=\sum_{\|\bm n\|\le e^{(\log t)^{\frac{2}{1+\rho'}}}}(1+\|\bm n\|)^p\left|\ji e^{-\sqrt{-1}t\mathcal{H}(\theta)}\delta_{\bm 0},\delta_{\bm n}\jd\right|^2,\\
		({\rm II})&=\sum_{\|\bm n\|>e^{(\log t)^{\frac{2}{1+\rho'}}}}(1+\|\bm n\|)^p\left|\ji e^{-\sqrt{-1}t\mathcal{H}(\theta)}\delta_{\bm 0},\delta_{\bm n}\jd\right|^2.
	\end{align*}
	For $({\rm I})$, direct computation shows
	\begin{align*}
		({\rm I})\le&\ (1+e^{(\log t)^{\frac{2}{1+\rho'}}})^p\sum_{\|\bm n\|\le e^{(\log t)^{\frac{2}{1+\rho'}}}}\left|\ji e^{-\sqrt{-1}t\mathcal{H}(\theta)}\delta_{\bm 0},\delta_{\bm n}\jd\right|^2\\
		\le&\  (1+e^{(\log t)^{\frac{2}{1+\rho'}}})^p.
	\end{align*}
	For $({\rm II})$, by Lemma \ref{cbmgf} and Theorem \ref{93main}, we have
	\begin{align*}
		({\rm II})\le&\sum_{\|\bm n\|> e^{(\log t)^{\frac{2}{1+\rho'}}}}\frac{(b-a+4\beta)^2e^2}{2\pi^2}(1+\|\bm n\|)^{p}e^{-\frac{3}{2}\alpha_s\log^{\rho}\left(1+\|\bm n\|\right)}\\
		&+\sum_{\|\bm n\|> e^{(\log t)^{\frac{2}{1+\rho'}}}}\frac{2e^2}{\beta^2\pi^2}(b-a+6\beta+2t^{-1})^2(1+\|\bm n\|)^{p}e^{-\frac{9}{5}\alpha\log^{\rho}\left(1+\|\bm n\|\right)}\\
		\le&\ 1.
	\end{align*}
	Hence, we get
	\begin{align*}
		\left(|\mathscr{X}_{\mathcal{H}(\theta)}|_{\delta_{\bm 0}}^p\right)(t)\le 2^p e^{p(\log t)^{\frac{2}{1+\rho'}}}.
	\end{align*}
	Let $a(\bm n,T)=\frac{2}{T}\int_0^{\infty}e^{\frac{-2t}{T}}|\ji e^{-\sqrt{-1}t\mathcal{H}(\theta)}\delta_{\bm 0},\delta_{\bm n}\jd|^2dt$, then
	\begin{align*}
		\left(|\bar{\mathscr{X}}_{\mathcal{H}(\theta)}|_{\delta_{\bm 0}}^p\right)(T)&=\sum_{\bm n\in\Z^d}(1+\|\bm n\|)^p a(\bm n,T)\\
		&= ({\rm III})+({\rm IV}),
	\end{align*}
	where
	\begin{align*}
		({\rm III})	&=\sum_{\|\bm n\|\le e^{(\log T)^{\frac{2}{1+\rho'}}}}(1+\|\bm n\|)^p a(\bm n,T),\\
		({\rm IV})&=\sum_{\|\bm n\|> e^{(\log T)^{\frac{2}{1+\rho'}}}}(1+\|\bm n\|)^p a(\bm n,T).
	\end{align*}
	For $({\rm III})$, direct computation shows
	\begin{align*}
		({\rm III})\le&\  (1+e^{(\log T)^{\frac{2}{1+\rho'}}})^p\sum_{\|\bm n\|\le e^{(\log T)^{\frac{2}{1+\rho'}}}}a(\bm n,T)\\
		\le&\  (1+e^{(\log T)^{\frac{2}{1+\rho'}}})^p.
	\end{align*}
	For $({\rm IV})$, by Lemma \ref{cbmgf} and Theorem \ref{93main}, we also have
	\begin{align*}
		({\rm IV})\le&\sum_{\|\bm n\|> e^{(\log T)^{\frac{2}{1+\rho'}}}}(1+\|\bm n\|)^{p}e^{-\frac{3}{2}\alpha_s\log^{\rho}\left(1+\|\bm n\|\right)}\\
		&+\sum_{\|\bm n\|> e^{(\log T)^{\frac{2}{1+\rho'}}}}\frac{4}{\beta\pi}((1+\|\bm n\|)^{p}e^{-\frac{9}{5}\alpha\log^{\rho}\left(1+\|\bm n\|\right)}\\
		\le&\ 1.
	\end{align*}
	Therefore, we also obtain
	\begin{align*}
		\left(|\bar{\mathscr{X}}_{\mathcal{H}(\theta)}|_{\delta_{\bm 0}}^p\right)(T)\le2^p e^{p(\log T)^{\frac{2}{1+\rho'}}}.
	\end{align*}
	We complete this proof.
\end{proof}
	
	\section*{Acknowledgments}
Y. Wu was  partially supported by NSFC  (12301227).
\section*{Data Availability}
The manuscript has no associated data.
\section*{Declarations}
{\bf Conflicts of interest} \ The authors  state  that there is no conflict of interest.

	\newpage

	\appendix{}
	\section{}\label{app}
	
	\begin{proof}[Proof of Lemma \ref{el}]
		Direct computation shows
		\begin{align}\label{51001}
			\log^{\rho}(1+x-y)=\left(1+\frac{\log\left(1-\frac{y}{1+x}\right)}{\log(1+x)}\right)^{\rho}\log^{\rho}(1+x).
		\end{align}
		Since $x>y$, we have
		\begin{align*}
			\frac{\log\left(1-\frac{y}{1+x}\right)}{\log(1+x)}>-1,
		\end{align*}
		which combines $(1+z)^{\rho}\ge1+\rho z$ $(\rho>1,z>-1)$ showing that
		\begin{align}\label{51002}
			\left(1+\frac{\log\left(1-\frac{y}{1+x}\right)}{\log(1+x)}\right)^{\rho}\ge 1+\rho\frac{\log\left(1-\frac{y}{1+x}\right)}{\log(1+x)}.
		\end{align}
		From $0<2y<1+x$ and $\log(1-z)\ge -2z$ $\left(0<z<\frac{1}{2}\right)$, we can get
		\begin{align}\label{51003}
			\log\left(1-\frac{y}{1+x}\right)\ge-\frac{2y}{1+x}.
		\end{align}
		Combining \eqref{51001}--\eqref{51003} gives
		\begin{align*}
			\log^{\rho}(1+x-y)\ge\left(1-2\rho\frac{y}{(1+x)\log(1+x)}\right)\log^{\rho}(1+x).
		\end{align*}
		We finish this proof.
	\end{proof}
	
	\begin{proof}[Proof of Lemma \ref{chi}]
		By Hadamard's inequality, we have for any $\bm i,\bm j\in\lg, $
		\begin{align*}
			|\ji \delta_{\bm i}, \mathcal{S}_{\lg}^* \delta_{\bm j}\jd|&\le \prod_{\bm l\ne\bm i}\left(\sum_{\bm k\ne\bm j}|\ji \delta_{\bm l}, \mathcal{S}_{\lg}\delta_{\bm k}\jd|^2\right)^{\frac{1}{2}}\\
			&\le \prod_{\bm l\ne\bm i}\left(\sum_{\bm k\ne\bm j}|\ji \delta_{\bm l}, \mathcal{S}_{\lg}\delta_{\bm k}\jd|\right)\\
			&\le \left(\sup_{\bm x\in\lg}\sum_{\bm y\in\lg}|\ji \delta_{\bm x}, \mathcal{S}_{\lg}\delta_{\bm y}\jd|\right)^{\#\lg-1}.
		\end{align*}
		Moreover, we have
		\begin{align*}
			\|\mathcal{S}_{\lg}^*\|&\le \left(\left(\sup_{\bm i\in\lg}\sum_{\bm j\in\lg}|\ji \delta_{\bm i}, \mathcal{S}_{\lg}^* \delta_{\bm j}\jd|\right)\cdot\left(\sup_{\bm j\in\lg}\sum_{\bm i\in\lg}|\ji \delta_{\bm i}, \mathcal{S}_{\lg}^* \delta_{\bm j}\jd|\right)\right)^{\frac{1}{2}}\\
			&\le (\#\lg)\cdot\left(\sup_{\bm x\in\lg}\sum_{\bm y\in\lg}|\ji \delta_{\bm x}, \mathcal{S}_{\lg}\delta_{\bm y}\jd|\right)^{\#\lg-1}.
		\end{align*}
		We complete this proof.
	\end{proof}
	
	\begin{proof}[Proof of Lemma \ref{cte}]
		For any fixed $\bm y\in \Z^d$ and any cut off $K > 0$, the
		multiplication operator
		\begin{align}\label{cutoff}
			\boM= e^{\lambda'\min \{\log^{\rho}(1+\|\bm x-\bm y\|), K\}}\delta_{\bm x,\bm x'},\ \bm x\in\Z^d
		\end{align}
		is bounded and invertible on $ \ell^2(\Z^d)$ for any $ \lambda'\in(0,+\infty)$. The Combes-Thomas
		estimate is based on the observation that for any $ \bm x $ with $ \log^{\rho}(1+\|\bm x-\bm y\|) \le K$ and
		any $ z \notin \sigma(\boA) \cup \sigma(\boM \boA \boM^{-1})$,
		\begin{align}
			\nonumber\boG(z)(\bm x,\bm y) e^{\lambda'\log^{\rho}(1+\|\bm x-\bm y\|)}&=\ji \delta_{\bm x} , \boM(\boA-z)^{-1}\boM^{-1}\delta_{\bm y}\jd\\
			\nonumber&=\ji \delta_{\bm x} ,(\boM \boA \boM^{-1}-z)^{-1}\delta_{\bm y}\jd\\
			\label{s23}&=\ji \delta_{\bm x} ,(\boA + \boB -z)^{-1}\delta_{\bm y}\jd,
		\end{align}
		where
		\begin{eqnarray}\label{s24}
			\boB := \boM \boA \boM^{-1}- \boA.
		\end{eqnarray}
		For a proof of \eqref{s22}, this will be done by estimating the
		$\ell^2$-norm of $\boB$. By Schur's test, we need to estimate $\boB(\bm x,\bm x')$ for all $\bm x,\bm x'\in\Z^d$. We divide the discussion into two cases:
		\begin{itemize}
			\item[\textbf{Case 1}:] $\bm x=\bm x'$ or $\min\{\log^{\rho}(1+\|\bm x-\bm y\|),\log^{\rho}(1+\|\bm x'-\bm y\|)\}\ge K$. In this case, direct computation shows that $\boB(\bm x,\bm x')=0$.
			\item[\textbf{Case 2}:] $\bm x\ne\bm x'$ and  $\max\{\log^{\rho}(1+\|\bm x-\bm y\|),\log^{\rho}(1+\|\bm x'-\bm y\|)\}<K$. From \eqref{quaeq}, \eqref{cutoff} and \eqref{s24}, we have
			\begin{align*}
				|\boB(\bm x,\bm x')|&=|\boA(\bm x,\bm x')|\cdot\left|e^{\lambda'\log^{\rho}(1+\|\bm x-\bm y\|)-\lambda'\log^{\rho}(1+\|\bm x'-\bm y\|)}-1\right|\\
				&\le |\boA(\bm x,\bm x')|\cdot\left(e^{\left|\lambda'\log^{\rho}(1+\|\bm x-\bm y\|)-\lambda'\log^{\rho}(1+\|\bm x'-\bm y\|)\right|}-1\right)\\
				&\le 2e^{\lambda'C(\rho)\log^{\rho}2}|\boA(\bm x,\bm x')|\cdot e^{\lambda'\log^{\rho}(1+\|\bm x-\bm x'\|)}.
			\end{align*}
			\item[\textbf{Case 3}:] $\min\{\log^{\rho}(1+\|\bm x-\bm y\|),\log^{\rho}(1+\|\bm x'-\bm y\|)\}<K$ and  $\max\{\log^{\rho}(1+\|\bm x-\bm y\|),\log^{\rho}(1+\|\bm x'-\bm y\|)\}\ge K$. Without loss of generality, we assume that $\log^{\rho}(1+\|\bm x-\bm y\|)< K$ and $\log^{\rho}(1+\|\bm x'-\bm y\|)\ge K$. Also by \eqref{quaeq}, \eqref{cutoff} and \eqref{s24}, we can get
			\begin{align*}
						|\boB(\bm x,\bm x')|&=|\boA(\bm x,\bm x')|\cdot\left|e^{\lambda'\log^{\rho}(1+\|\bm x-\bm y\|)-\lambda'K}-1\right|\\
				&\le |\boA(\bm x,\bm x')|\cdot\left(e^{\left|\lambda'\log^{\rho}(1+\|\bm x-\bm y\|)-\lambda' K\right|}-1\right)\\
				&=|\boA(\bm x,\bm x')|\cdot\left(e^{\lambda' K-\lambda'\log^{\rho}(1+\|\bm x-\bm y\|)}-1\right)\\
				&\le |\boA(\bm x,\bm x')|\cdot\left(e^{\lambda' \log^{\rho}(1+\|\bm x'-\bm y\|)-\lambda'\log^{\rho}(1+\|\bm x-\bm y\|)}-1\right)\\
				&\le 2e^{\lambda'C(\rho)\log^{\rho}2}|\boA(\bm x,\bm x')|\cdot e^{\lambda'\log^{\rho}(1+\|\bm x-\bm x'\|)}.
			\end{align*}
		\end{itemize}
		To sum up, we obtain $\boB(\bm x,\bm x)=0$ and
		\begin{align}\label{Bxx'}
		|\boB(\bm x,\bm x')|\le 2e^{\lambda'C(\rho)\log^{\rho}2}|\boA(\bm x,\bm x')|\cdot e^{\lambda'\log^{\rho}(1+\|\bm x-\bm x'\|)},\ \bm x\ne\bm x',
	\end{align}
	 which is independent from $K$. According to the self-adjointness of $\boA$, \eqref{s21} and \eqref{Bxx'}, we have
	 \begin{align*}
	 	\|\boB\|\le 2e^{\lambda'C(\rho)\log^{\rho}2}\cdot S_{\lambda'}\le 2e^{\lambda C(\rho)\log^{\rho}2}\cdot S_{\lambda}<\infty
	 \end{align*}
	 for all $\lambda'\le \lambda$. The spectrum of $\boB + \boA $ has distance at most $ \|\boB\| $ from the
		spectrum of $ \boA$. Hence $ \boB +\boA -z $ is invertible as long as $ \mathscr{D}=\dist(z,\sigma(\boA)) > \|B\|$, for which
		\begin{align}\label{s27}
			\|(\boB + \boA -z)^{-1}\| \leq \frac{1}{\mathscr{D} - \|\boB\|}	 \leq \frac{1}{\mathscr{D} - 2e^{\lambda' C(\rho)\log^{\rho}2}\cdot S_{\lambda'}}.
		\end{align}
		In view of \eqref{s23} and the fact that $K$ was arbitrary, we have thus proved \eqref{s22}.
	\end{proof}

	\begin{proof}[Proof of Lemma \ref{ite1}]
			Using the resolvent identity implies
		\begin{align*}
			\boT_{\lg}^{-1}(\bm u,\bm v)=-\ep \sum_{\bm w\in\tO_{\bm k}^1\atop \bm w'\in\lg\setminus\tO_{\bm k}^1}\boT_{\tO_{\bm k}^1}^{-1}(\bm u,\bm w)\cdot\boW(\bm w,\bm w')\cdot\boT_{\lg}^{-1}(\bm w',\bm v),
		\end{align*}
		and then
		\begin{align}
			\nonumber|\boT_{\lg}^{-1}(\bm u,\bm v)|&\le \max_{\bm w\in\tO_{\bm k}^1\atop \bm w'\in\lg\setminus\tO_{\bm k}^1}\left|\boT_{\tO_{\bm k}^1}^{-1}(\bm u,\bm w)\cdot\boW(\bm w,\bm w')\cdot e^{\frac{\alpha}{10}\log^{\rho}(1+\|\bm w-\bm w'\|)}\cdot\boT_{\lg}^{-1}(\bm w',\bm v)\right|\\
			\nonumber&\ \ \times\left(\sum_{\bm w\in\tO_{\bm k}^1\atop \bm w'\in\lg\setminus\tO_{\bm k}^1}e^{-\frac{\alpha}{10}\log^{\rho}(1+\|\bm w-\bm w'\|)}\right)\\
			\nonumber&\le \max_{\bm w\in\tO_{\bm k}^1\atop \bm w'\in\lg\setminus\tO_{\bm k}^1}\left|\boT_{\tO_{\bm k}^1}^{-1}(\bm u,\bm w)\cdot\boW(\bm w,\bm w')\cdot e^{\frac{\alpha}{10}\log^{\rho}(1+\|\bm w-\bm w'\|)}\cdot\boT_{\lg}^{-1}(\bm w',\bm v)\right|\\
			\label{1gri2}&\ \ \times(\#\tO_{\bm k}^1)\cdot D\left(\frac{\alpha}{10}\right),
		\end{align}
		where $D\left(\frac{\alpha}{10}\right)$ is defined in \eqref{Det}. Since $\tO_{\bm k}^1$ and $\lg\setminus\tO_{\bm k}^1$ are finite sets, there are $\bm u^*\in\tO_{\bm k}^1$ and $\bm u'\in \lg\setminus\tO_{\bm k}^1$ such that
		\begin{align}
			\nonumber&|\boT_{\tO_{\bm k}^1}^{-1}(\bm u,\bm u^*)\cdot\boW(\bm u^*,\bm u')\cdot e^{\frac{\alpha}{10}\log^{\rho}(1+\|\bm u^*-\bm u'\|)}\cdot\boT_{\lg}^{-1}(\bm u',\bm v)|\\
			=&\ \
			\label{1gmax}\max_{\bm w\in\tO_{\bm k}^1\atop \bm w'\in\lg\setminus\tO_{\bm k}^1}|\boT_{\tO_{\bm k}^1}^{-1}(\bm u,\bm w)\cdot\boW(\bm w,\bm w')\cdot e^{\frac{\alpha}{10}\log^{\rho}(1+\|\bm w-\bm w'\|)}\cdot\boT_{\lg}^{-1}(\bm w',\bm v)|.
		\end{align}
		Let $X:=\lg_{\frac{\tz_1}{9}}(\bm k)\subset \tO_{\bm k}^1$. We will divide  the discussion into two cases, $\bm u^*\in X$ and $\bm u^*\in\tO_{\bm k}^1\setminus X$.
		\begin{itemize}
			\item[\textbf{Case 1}:] $\bm u^*\in X$. Since $\bm k\in\widetilde{P}_1\subset P_1\setminus Q_1$ (cf. \eqref{tk0}) and \eqref{wphi}, we can get
					\begin{align}
				\nonumber&\ \ |\boT_{\tO_{\bm k}^1}^{-1}(\bm u,\bm u^*)\cdot\boW(\bm u^*,\bm u')\cdot e^{\frac{\alpha}{10}\log^{\rho}(1+\|\bm u^*-\bm u'\|)}|\\
				\label{1gTW01}\le &\ \ 2\delta_0^{-3}\delta_1^{-2}\cdot e^{-\frac{9}{10}\alpha\log^{\rho}(1+\|\bm u-\bm u'\|)}.
			\end{align}
			\item[\textbf{Case 2}:] $\bm u^*\in\tO_{\bm k}^1\setminus X$. In this case, $\|\bm u^*-\bm u\|\ge\frac{\tz_1}{10}$. From \eqref{tg1a}, \eqref{quaeq} and  \eqref{wphi}, we have
					\begin{align}
				\nonumber&\ \ |\boT_{\tO_{\bm k}^1}^{-1}(\bm u,\bm u^*)\cdot\boW(\bm u^*,\bm u')\cdot e^{\frac{\alpha}{10}\log^{\rho}(1+\|\bm u^*-\bm u'\|)}|\\
				\nonumber\le&\ \  e^{-\alpha'_0\log^{\rho}(1+\|\bm u-\bm u^*\|)}\cdot e^{-\frac{9}{10}\alpha\log^{\rho}(1+\|\bm u^*-\bm u'\|)}\\
				\label{1gTW02}\le&\ \  e^{-\alpha'_0\log^{\rho}(1+\|\bm u-\bm u'\|)+\alpha'_0 C(\rho)\log^{\rho}2}.
			\end{align}
		\end{itemize}
		Combining \eqref{1gTW01}, \eqref{1gTW02} and $\delta_1\ll1$ gives
		\begin{align}
			\nonumber&\ \ |\boT_{\tO_{\bm k}^1}^{-1}(\bm u,\bm u^*)\cdot\boW(\bm u^*,\bm u')\cdot e^{\frac{\alpha}{10}\log^{\rho}(1+\|\bm u^*-\bm u'\|)}|\\
			\label{1gTW03}\le&\ \ 2\delta_0^{-3}\delta_1^{-2}\cdot e^{-\alpha'_0\log^{\rho}(1+\|\bm u-\bm u'\|)}.
		\end{align}
		By \eqref{1gri2}, \eqref{1gmax}, \eqref{1gTW03}, $\bm u'\in\lg\setminus\tO_{\bm k}^1$ ($\|\bm u-\bm u'\|\ge\frac{\tz_1}{3}$), $\#(\tO_{\bm k}^1)\le (10N_1^{100}+1)^d$, \eqref{indpa} and $\delta_1\ll1$, we obtain
		\begin{align}\label{51801}
		\nonumber	|\boT_{\lg}^{-1}(\bm u,\bm v)|&\le (\#(\tO_{\bm k}^1))\cdot e^{-\alpha'_0\log^{\rho}(1+\|\bm u-\bm u'\|)}\cdot|\boT_{\lg}^{-1}(\bm u',\bm v)|\cdot\delta_1^{-3}\\
			&\le e^{-\alpha''_0\log^{\rho}(1+\|\bm u-\bm u'\|)}\cdot|\boT_{\lg}^{-1}(\bm u',\bm v)|
		\end{align}
		where $\alpha''_0=\frac{3}{4}\alpha\left(1-\frac{30\times 10^{5\rho'}}{\alpha\log^{\rho-\rho'}N_1}\right)$.
	
		We finish this proof.
	\end{proof}
	\begin{proof}[Proof of Lemma \ref{itetg}]
			Using the resolvent identity implies
		\begin{align*}
			\boT_{\lg}^{-1}(\bm u,\bm v)=-\ep \sum_{\bm w\in\lg'\atop \bm w'\in\lg\setminus\lg'}\boT_{\lg'}^{-1}(\bm u,\bm w)\cdot\boW(\bm w,\bm w')\cdot\boT_{\lg}^{-1}(\bm w',\bm v),
		\end{align*}
		and then
		\begin{align}
			\nonumber|\boT_{\lg}^{-1}(\bm u,\bm v)|&\le \max_{\bm w\in\lg'\atop \bm w'\in\lg\setminus\lg'}\left|\boT_{\lg'}^{-1}(\bm u,\bm w)\cdot\boW(\bm w,\bm w')\cdot e^{\frac{\alpha}{10}\log^{\rho}(1+\|\bm w-\bm w'\|)}\cdot\boT_{\lg}^{-1}(\bm w',\bm v)\right|\\
			\nonumber&\ \ \times\left(\sum_{\bm w\in\lg'\atop \bm w'\in\lg\setminus\lg'}e^{-\frac{\alpha}{10}\log^{\rho}(1+\|\bm w-\bm w'\|)}\right)\\
			\nonumber&\le \max_{\bm w\in\lg'\atop \bm w'\in\lg\setminus\lg'}\left|\boT_{\lg'}^{-1}(\bm u,\bm w)\cdot\boW(\bm w,\bm w')\cdot e^{\frac{\alpha}{10}\log^{\rho}(1+\|\bm w-\bm w'\|)}\cdot\boT_{\lg}^{-1}(\bm w',\bm v)\right|\\
			\label{tgri}&\ \ \times(\#\lg')\cdot D\left(\frac{\alpha}{10}\right),
		\end{align}
		where $D\left(\frac{\alpha}{10}\right)$ is defined in \eqref{Det}. Since $\lg'$ and $\lg\setminus\lg'$ are finite sets, there are $\bm u^*\in\lg'$ and $\bm u'\in \lg\setminus\lg'$ such that
		\begin{align}
			\nonumber&|\boT_{\lg'}^{-1}(\bm u,\bm u^*)\cdot\boW(\bm u^*,\bm u')\cdot e^{\frac{\alpha}{10}\log^{\rho}(1+\|\bm u^*-\bm u'\|)}\cdot\boT_{\lg}^{-1}(\bm u',\bm v)|\\
			=&\ \
			\label{tgmax}\max_{\bm w\in\lg'\atop \bm w'\in\lg\setminus\lg'}|\boT_{\lg'}^{-1}(\bm u,\bm w)\cdot\boW(\bm w,\bm w')\cdot e^{\frac{\alpha}{10}\log^{\rho}(1+\|\bm w-\bm w'\|)}\cdot\boT_{\lg}^{-1}(\bm w',\bm v)|.
		\end{align}
		If $\|\bm u^*-\bm u\|\le 10\tz_t$, from $\lg'$ is $t$-good (cf. \eqref{tsgnorm}) and \eqref{wphi}, we can get
		\begin{align}
			\nonumber&\ \ |\boT_{\lg'}^{-1}(\bm u,\bm u^*)\cdot\boW(\bm u^*,\bm u')\cdot e^{\frac{\alpha}{10}\log^{\rho}(1+\|\bm u^*-\bm u'\|)}|\\
			\label{tgTW01}\le &\ \ \delta_t^{-3}\cdot e^{-\frac{9}{10}\alpha\log^{\rho}(1+\|\bm u-\bm u'\|)}.
		\end{align}
		If $\|\bm u^*-\bm u\|> 10\tz_t$, since $\lg'$ is $t$-good (cf. \eqref{tsgdecay}), \eqref{wphi} and \eqref{quaeq}, we have
		\begin{align}
			\nonumber&\ \ |\boT_{\lg'}^{-1}(\bm u,\bm u^*)\cdot\boW(\bm u^*,\bm u')\cdot e^{\frac{\alpha}{10}\log^{\rho}(1+\|\bm u^*-\bm u'\|)}|\\
			\nonumber\le&\ \  e^{-\alpha_t\log^{\rho}(1+\|\bm u-\bm u^*\|)}\cdot e^{-\frac{9}{10}\alpha\log^{\rho}(1+\|\bm u^*-\bm u'\|)}\\
			\label{tgTW02}\le&\ \  e^{-\alpha_t\log^{\rho}(1+\|\bm u-\bm u'\|)+\alpha_t C(\rho)\log^{\rho}2}.
		\end{align}
		Combining \eqref{tgTW01}, \eqref{tgTW02} and $\delta_t\ll1$ gives
		\begin{align}
			\nonumber&\ \ |\boT_{\lg'}^{-1}(\bm u,\bm u^*)\cdot\boW(\bm u^*,\bm u')\cdot e^{\frac{\alpha}{10}\log^{\rho}(1+\|\bm u^*-\bm u'\|)}|\\
			\label{tgTW03}\le&\ \ \delta_t^{-3}\cdot e^{-\alpha_t\log^{\rho}(1+\|\bm u-\bm u'\|)}.
		\end{align}
		By \eqref{tgri}, \eqref{tgmax}, \eqref{tgTW03} and $\delta_t\ll1$, we obtain
		\begin{align}\label{60101}
			|\boT_{\lg}^{-1}(\bm u,\bm v)|\le (\#\lg')\cdot e^{-\alpha_t\log^{\rho}(1+\|\bm u-\bm u'\|)}\cdot|\boT_{\lg}^{-1}(\bm u',\bm v)|\cdot\delta_t^{-3}.
		\end{align}
		We finish this proof.
	\end{proof}

	\begin{lem}[Schur complement lemma]\label{scl}
		Let $\lg_1$ and $\lg_2$ be finite subsets of $\Z^d$ with $\lg_1\cap\lg_2=\emptyset$. Suppose $\mathcal{A}\in\mathbf{M}_{\lg_1}^{\lg_1}$, $\mathcal{B}\in\mathbf{M}_{\lg_2}^{\lg_1}$, $\mathcal{C}\in\mathbf{M}_{\lg_1}^{\lg_2}$, $\mathcal{D}\in\mathbf{M}_{\lg_2}^{\lg_2}$ and
		\begin{align*}
			\mathcal{M}=\left(\begin{array}{cc}
				\mathcal{A} & \mathcal{B}\\
				\mathcal{C} & \mathcal{D}
			\end{array}\right)\in\mathbf{M}_{\lg_1\cup\lg_2}^{\lg_1\cup\lg_2}.
		\end{align*}
		Assume further that $\mathcal{A}$ is invertible and $\|\mathcal{B}\|,\|\mathcal{C}\|\le1$. Then we have
		\begin{itemize}
			\item[(1)]
			\begin{align*}
				\det \mathcal{M}=\det \mathcal{A}\cdot\det \mathcal{S},
			\end{align*}
			where
			\begin{align*}
				\mathcal{S}=\mathcal{D}-\mathcal{C}\mathcal{A}^{-1}\mathcal{B}\in\mathbf{M}_{\lg_2}^{\lg_2}
			\end{align*}
			is called the Schur complement of $\mathcal{A}$.
			\item[(2)] $\mathcal{M}$ is invertible iff $\mathcal{S}$ is invertible and
			\begin{align}\label{sc}
				\|\mathcal{S}^{-1}\|\le\|\mathcal{M}^{-1}\|<4(1+\|\mathcal{A}^{-1}\|)^2(1+\|\mathcal{S}^{-1}\|).
			\end{align}
		\end{itemize}
	\end{lem}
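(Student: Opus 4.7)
The plan is to read both assertions off a single block-LDU factorization of $\mathcal{M}$. Because $\mathcal{A}$ is invertible, direct block multiplication verifies the identity
\begin{align*}
\mathcal{M}=\begin{pmatrix}\mathcal{I}_{\lg_1}&0\\ \mathcal{C}\mathcal{A}^{-1}&\mathcal{I}_{\lg_2}\end{pmatrix}\begin{pmatrix}\mathcal{A}&0\\ 0&\mathcal{S}\end{pmatrix}\begin{pmatrix}\mathcal{I}_{\lg_1}&\mathcal{A}^{-1}\mathcal{B}\\ 0&\mathcal{I}_{\lg_2}\end{pmatrix},
\end{align*}
and this single factorization is the only thing that genuinely needs checking. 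Assertion (1) is then immediate: the two outer unipotent triangular factors have determinant $1$, so multiplicativity of $\det$ gives $\det\mathcal{M}=\det\mathcal{A}\cdot\det\mathcal{S}$.

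For assertion (2), the two outer factors are invertible with obvious inverses obtained by negating the off-diagonal block, so $\mathcal{M}$ is invertible iff the middle block-diagonal factor is, iff $\mathcal{S}$ is invertible. In the invertible case, inverting the factorization and multiplying out yields the standard Banachiewicz formula
\begin{align*}
\mathcal{M}^{-1}=\begin{pmatrix}\mathcal{A}^{-1}+\mathcal{A}^{-1}\mathcal{B}\mathcal{S}^{-1}\mathcal{C}\mathcal{A}^{-1}&-\mathcal{A}^{-1}\mathcal{B}\mathcal{S}^{-1}\\ -\mathcal{S}^{-1}\mathcal{C}\mathcal{A}^{-1}&\mathcal{S}^{-1}\end{pmatrix}.
\end{align*}
From this formula the lower bound $\|\mathcal{S}^{-1}\|\le\|\mathcal{M}^{-1}\|$ is immediate, since $\mathcal{S}^{-1}$ is exactly the $(\lg_2,\lg_2)$ block of $\mathcal{M}^{-1}$, i.e.\ $\mathcal{S}^{-1}=\mathcal{R}_{\lg_2}\mathcal{M}^{-1}\mathcal{R}_{\lg_2}$, and restriction to a subspace cannot increase the operator norm.

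For the upper bound I would use the blockwise estimate $\|\mathcal{M}^{-1}\|\le\sum_{i,j\in\{1,2\}}\|\mathcal{R}_{\lg_i}\mathcal{M}^{-1}\mathcal{R}_{\lg_j}\|$, which follows from writing $\mathcal{M}^{-1}=\sum_{i,j}\mathcal{R}_{\lg_i}\mathcal{M}^{-1}\mathcal{R}_{\lg_j}$ and applying the triangle inequality. Using the hypothesis $\|\mathcal{B}\|,\|\mathcal{C}\|\le 1$, each of the four blocks in the Banachiewicz formula is controlled by $(1+\|\mathcal{A}^{-1}\|)^2(1+\|\mathcal{S}^{-1}\|)$; for example, the $(1,1)$ block satisfies $\|\mathcal{A}^{-1}+\mathcal{A}^{-1}\mathcal{B}\mathcal{S}^{-1}\mathcal{C}\mathcal{A}^{-1}\|\le\|\mathcal{A}^{-1}\|+\|\mathcal{A}^{-1}\|^2\|\mathcal{S}^{-1}\|\le(1+\|\mathcal{A}^{-1}\|)^2(1+\|\mathcal{S}^{-1}\|)$, and the remaining three blocks admit analogous (or tighter) bounds. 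Summing the four contributions delivers $\|\mathcal{M}^{-1}\|<4(1+\|\mathcal{A}^{-1}\|)^2(1+\|\mathcal{S}^{-1}\|)$.

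There is no real obstacle here: the result is classical, and the only step requiring any care is the bookkeeping needed to cast the four blockwise bounds in the uniform form $(1+\|\mathcal{A}^{-1}\|)^2(1+\|\mathcal{S}^{-1}\|)$ so that the constant $4$ in the final estimate comes out cleanly from summing over the $2\times 2$ block structure.
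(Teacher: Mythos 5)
Your proof is correct and follows essentially the same route as the paper: the same block-LDU factorization of $\mathcal{M}$ (written multiplicatively rather than as a two-sided triangular conjugation, but those are equivalent) gives $\det\mathcal{M}=\det\mathcal{A}\cdot\det\mathcal{S}$, and the Banachiewicz formula for $\mathcal{M}^{-1}$ yields both the lower bound (the $(\lg_2,\lg_2)$ block is $\mathcal{S}^{-1}$) and, via the blockwise triangle inequality and submultiplicativity, the upper bound with constant $4$. You merely supply the bookkeeping that the paper leaves implicit in its ``combines $\|\mathcal{A}\mathcal{B}\|\le\|\mathcal{A}\|\cdot\|\mathcal{B}\|$'' remark.
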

	\begin{proof}[Proof of Lemma \ref{scl}]
		(1) Since $\mathcal{A}$ is invertible, we have
		\begin{align*}
			\left(\begin{array}{cc}
				\mathcal{I}_{\lg_1} & \bm 0\\
				-\mathcal{C}\mathcal{A}^{-1}& \mathcal{I}_{\lg_2}
			\end{array} \right)\mathcal{M} \left(\begin{array}{cc}
				\mathcal{I}_{\lg_1} & -\mathcal{A}^{-1}\mathcal{B}\\
				\bm 0 & \mathcal{I}_{\lg_2}
			\end{array}\right)=\left(\begin{array}{cc}
				\mathcal{A} & \bm 0\\
				\bm 0 & \mathcal{S}
			\end{array}\right),
		\end{align*}
		which implies
		\begin{align*}
			\det\mathcal{M}=\det\mathcal{A}\cdot\det\mathcal{S}.
		\end{align*}
		
		(2) Direct computation shows
		\begin{align*}
			\mathcal{M}^{-1}=\left(\begin{array}{cc}
				\mathcal{A}^{-1}+\mathcal{A}^{-1}\mathcal{B}\mathcal{S}^{-1}\mathcal{C}\mathcal{A}^{-1} & -\mathcal{A}^{-1}\mathcal{B}\mathcal{S}^{-1}\\
				-\mathcal{S}^{-1}\mathcal{C}\mathcal{A}^{-1} & \mathcal{S}^{-1}
			\end{array}\right),
		\end{align*}
		
		which combines $\|\boA\boB\|\le \|\boA\|\cdot\|\boB\|$ implying \eqref{sc}.
	\end{proof}
	
	\begin{lem}\label{ef}
		Let $\bm l\in\frac{1}{2}\Z^d$ and let $\lg\subset\Z^d+\bm l$ be a finite set which is symmetrical about the origin (i.e., $\bm n\in\lg\Leftrightarrow-\bm n\in\lg$). Then
		\begin{align*}
			\det (\mathcal{T}(z))_{\lg}=\det((v(z+\bm n\cdot\bm\omega)-E)\delta_{\bm n,\bm n'}+\ep \mathcal{W})_{\lg}
		\end{align*}
		is an even function of $z$.
	\end{lem}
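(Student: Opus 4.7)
The plan is to exploit the reflection symmetry $\bm n\mapsto-\bm n$ on $\lg$ together with the evenness of $v$ (which is a consequence of the Morse condition \eqref{vdefn}) and the transpose invariance of the determinant.

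First I would extract evenness of $v$ from the Morse condition. Setting $z_2=-z_1$ in \eqref{vdefn}, one has $\|z_1+z_2\|_{\T}=\|0\|_{\T}=0$, so
\begin{align*}
|v(z_1)-v(-z_1)|\le \kappa_2\|2z_1\|_{\T}\cdot 0=0,
\end{align*}
which gives $v(z_1)=v(-z_1)$ on $\D_R$. In particular, $v(-z+\bm n\cdot\bm\omega)=v(z-\bm n\cdot\bm\omega)$ for every $\bm n$.

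Next, since $\lg\subset\Z^d+\bm l$ is symmetric about the origin, the reflection $\bm n\mapsto -\bm n$ is a bijection of $\lg$, so the operator $(\boP\psi)(\bm n)=\psi(-\bm n)$ defines an involution on $\ell^2(\lg)$. I would compute the matrix entries of $\boP^{-1}(\mathcal T(z))_{\lg}\boP$ directly: for $\bm n,\bm n'\in\lg$,
\begin{align*}
\bigl(\boP^{-1}(\mathcal T(z))_{\lg}\boP\bigr)(\bm n,\bm n')=(v(z-\bm n\cdot\bm\omega)-E)\delta_{\bm n,\bm n'}+\ep\,\phi(\bm n'-\bm n),
\end{align*}
where the sign flip in the Toeplitz kernel comes from substituting $\bm l\to-\bm l$ in the off-diagonal sum. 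Taking the transpose reverses the roles of $\bm n$ and $\bm n'$, giving
\begin{align*}
\bigl(\boP^{-1}(\mathcal T(z))_{\lg}\boP\bigr)^{T}(\bm n,\bm n')=(v(z-\bm n\cdot\bm\omega)-E)\delta_{\bm n,\bm n'}+\ep\,\phi(\bm n-\bm n'),
\end{align*}
which, by the evenness of $v$ just established, coincides with the entries of $(\mathcal T(-z))_{\lg}$.

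Finally, combining conjugation invariance and transpose invariance of the determinant yields
\begin{align*}
\det(\mathcal T(-z))_{\lg}=\det\bigl(\boP^{-1}(\mathcal T(z))_{\lg}\boP\bigr)^{T}=\det\bigl(\boP^{-1}(\mathcal T(z))_{\lg}\boP\bigr)=\det(\mathcal T(z))_{\lg},
\end{align*}
proving that $\det(\mathcal T(z))_{\lg}$ is an even function of $z$. The only subtle point is making sure the reflection is honestly a bijection of $\lg$ even when $\bm l\in\tfrac12\Z^d\setminus\Z^d$; this is exactly what the symmetry hypothesis $\bm n\in\lg\Leftrightarrow-\bm n\in\lg$ provides, so no real obstacle arises.
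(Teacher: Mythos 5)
Your proof is correct and uses the same reflection idea as the paper: conjugate by the involution $\psi(\bm n)\mapsto\psi(-\bm n)$ and compare with $(\mathcal T(-z))_\lg$. Two things you do more carefully than the paper's one-line argument: you derive $v(z)=v(-z)$ explicitly from the Morse condition \eqref{vdefn} (the paper uses this silently in writing $v(z-\bm n\cdot\bm\omega)$ as the diagonal of $(\mathcal T(-z))_\lg$), and you insert a transpose step to convert the conjugated Toeplitz kernel $\phi(\bm n'-\bm n)$ back into $\phi(\bm n-\bm n')$. The paper's proof writes the conjugated operator as $((v(z-\bm n\cdot\bm\omega)-E)\delta_{\bm n,\bm n'}+\ep\mathcal W)_\lg$ directly, which is only literally true if $\phi$ is even; your $\det M=\det M^T$ step removes that tacit assumption, so your version is sound without any hypothesis on $\phi$ beyond \eqref{wphi}. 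The reflection being a bijection of $\lg$ (your "subtle point") is indeed exactly the symmetry hypothesis, so nothing more is needed there.
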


	\begin{proof}
		Define the unitary map
		\begin{align*}
			\mathcal{U}_\lg:\ell^2(\lg)\rightarrow\ell^2(\lg)\ \text{with}\ (\mathcal{U}_{\lg}\psi)(\bm n)=\psi(-\bm n).
		\end{align*}
		Then
		\begin{align*}
			\mathcal{U}_{\lg}^{-1}(\mathcal{T}(z))_{\lg}\mathcal{U}_{\lg}=((v(z-\bm n\cdot\bm\omega)-E)\delta_{\bm n,\bm n'}+\ep \mathcal{W})_{\lg}=(\mathcal{T}(-z))_{\lg},
		\end{align*}
		which implies
		\begin{align*}
			\det (\mathcal{T}(z))_{\lg}=\det (\mathcal{T}(-z))_{\lg}.
		\end{align*}
		We completes this proof.
	\end{proof}

		\begin{lem}\label{det1}
		Let $\mathcal{A},\mathcal{B}:\ell^2(\Z^d)\rightarrow\ell^2(\Z^d)$ be linear operators and let $\lg$ be a finite subset of $\Z^d$. If $\sup_{\bm x\in\lg}\sum_{\bm y\in\lg}|\ji\delta_{\bm x},\mathcal{A}_{\lg}\delta_{\bm y}\jd|\le M$ and $\sup_{\bm x\in\lg}\sum_{\bm y\in\lg}|\ji\delta_{\bm x},\mathcal{B}_{\lg}\delta_{\bm y}\jd|\le \epsilon$, then
		\begin{align}\label{detd}
			\left|\det(\mathcal{A}_{\lg}+\mathcal{B}_{\lg})-\det \mathcal{A}_{\lg}\right|\le \ep (\#\lg)^2(M+\ep)^{\#\lg-1}.
		\end{align}
	\end{lem}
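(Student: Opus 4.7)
The plan is to linearly interpolate between $\mathcal{A}_{\lg}$ and $\mathcal{A}_{\lg}+\mathcal{B}_{\lg}$ and reduce the determinant difference to a trace against the adjugate, which can then be bounded pointwise via the Hadamard-type estimate of Lemma \ref{chi}. Concretely, set $\mathcal{M}(t):=\mathcal{A}_{\lg}+t\mathcal{B}_{\lg}$ for $t\in[0,1]$ and apply Jacobi's formula
\begin{align*}
\frac{d}{dt}\det \mathcal{M}(t)=\operatorname{tr}\bigl(\mathcal{M}(t)^{*}\cdot \mathcal{B}_{\lg}\bigr),
\end{align*}
where $\mathcal{M}(t)^{*}$ is the adjugate (in the sense of the paper's notation). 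Integration over $t\in[0,1]$ yields
\begin{align*}
\det(\mathcal{A}_{\lg}+\mathcal{B}_{\lg})-\det \mathcal{A}_{\lg}=\int_{0}^{1}\sum_{\bm x,\bm y\in\lg}\mathcal{M}(t)^{*}(\bm x,\bm y)\,\mathcal{B}_{\lg}(\bm y,\bm x)\,dt.
\end{align*}

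Next, I would observe that the row sums of the interpolant satisfy, for every $t\in[0,1]$,
\begin{align*}
\sup_{\bm x\in\lg}\sum_{\bm y\in\lg}|\mathcal{M}(t)(\bm x,\bm y)|\le M+t\epsilon\le M+\epsilon,
\end{align*}
by the triangle inequality and the hypotheses on $\mathcal{A}_{\lg}$ and $\mathcal{B}_{\lg}$. Applying Lemma \ref{chi} to the operator $\mathcal{M}(t)$ then gives the uniform pointwise bound
\begin{align*}
|\mathcal{M}(t)^{*}(\bm x,\bm y)|\le (M+\epsilon)^{\#\lg-1}\quad\text{for all }\bm x,\bm y\in\lg,\ t\in[0,1].
\end{align*}

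Finally, I would collapse the double sum using the row-sum hypothesis on $\mathcal{B}_{\lg}$: for fixed $\bm y\in\lg$ the inner sum $\sum_{\bm x}|\mathcal{B}_{\lg}(\bm y,\bm x)|$ is at most $\epsilon$, and summing over the $\#\lg$ values of $\bm y$ yields $\sum_{\bm x,\bm y}|\mathcal{B}_{\lg}(\bm y,\bm x)|\le (\#\lg)\cdot\epsilon$. Combining with the adjugate bound produces
\begin{align*}
|\det(\mathcal{A}_{\lg}+\mathcal{B}_{\lg})-\det \mathcal{A}_{\lg}|\le (\#\lg)\cdot\epsilon\cdot(M+\epsilon)^{\#\lg-1},
\end{align*}
which is in fact sharper than \eqref{detd}; the extra factor $\#\lg$ stated in the lemma is merely convenient slack. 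There is no genuine obstacle here: the argument is purely algebraic, Lemma \ref{chi} does all the heavy lifting, and the only point requiring care is the verification that the Hadamard-type bound is applied to $\mathcal{M}(t)$ (whose row sums are controlled uniformly in $t$) rather than to $\mathcal{A}_{\lg}$ or $\mathcal{B}_{\lg}$ separately.
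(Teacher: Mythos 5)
Your proposal is correct and follows essentially the same route as the paper: Jacobi's formula for the derivative of $\det(\mathcal{A}_{\lg}+t\mathcal{B}_{\lg})$, Lemma \ref{chi} applied to the interpolant to bound the adjugate entries by $(M+\epsilon)^{\#\lg-1}$, and the row-sum hypothesis on $\mathcal{B}_{\lg}$ to close the estimate (the paper invokes the mean-value theorem rather than integrating over $t$, which is an inessential variant). Your observation that the trace can be bounded by $\epsilon(\#\lg)(M+\epsilon)^{\#\lg-1}$ rather than $\epsilon(\#\lg)^2(M+\epsilon)^{\#\lg-1}$ is also accurate; the paper's $(\#\lg)^2$ is slack, so the stated inequality still holds.
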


	\begin{proof}
		Let $f(t)=\det(\mathcal{A}_{\lg}+t\mathcal{B}_{\lg})$. Then
		\begin{align*}
			f'(t)={\rm tr}(\mathcal{B}_{\lg}(\mathcal{A}_{\lg}+t\mathcal{B}_{\lg})^*).
		\end{align*}
		Since $\sup_{\bm x\in\lg}\sum_{\bm y\in\lg}|\ji\delta_{\bm x},\mathcal{A}_{\lg}\delta_{\bm y}\jd|\le M$ and $\sup_{\bm x\in\lg}\sum_{\bm y\in\lg}|\ji\delta_{\bm x},\mathcal{B}_{\lg}\delta_{\bm y}\jd|\le \epsilon$, we have
		\begin{align*}
		\sup_{\bm x\in\lg}\sum_{\bm y\in\lg}|\ji\delta_{\bm x},(\boA_{\lg}+t\boB_{\lg})\delta_{\bm y}\jd|&\le 	\sup_{\bm x\in\lg}\sum_{\bm y\in\lg}|\ji\delta_{\bm x},\mathcal{A}_{\lg}\delta_{\bm y}\jd|+|t|\cdot\sup_{\bm x\in\lg}\sum_{\bm y\in\lg}|\ji\delta_{\bm x},\mathcal{B}_{\lg}\delta_{\bm y}\jd|\\
		&\le M+\epsilon |t|.
		\end{align*}
		By Lemma \ref{chi}, we get for any $\bm i,\bm j\in\lg$,
		\begin{align*}
			|\ji \delta_{\bm i}, (\mathcal{A}_{\lg}+t\mathcal{B}_{\lg})^* \delta_{\bm j}\jd|\le&\ \left(\sup_{\bm x\in\lg}\sum_{\bm y\in\lg}|\ji\delta_{\bm x},(\boA_{\lg}+t\boB_{\lg})\delta_{\bm y}\jd|\right)^{\#\lg-1}\\
			\le&\  (M+\epsilon |t|)^{\#\lg-1}.
		\end{align*}
		Therefore,
		\begin{align*}
			|f'(t)|&\le \sum_{\bm i\in\lg}|\ji \delta_{\bm i},\mathcal{B}_{\lg}(\mathcal{A}_{\lg}+t\mathcal{B}_{\lg})^* \delta_{\bm i}\jd|\\
			&\le\epsilon (\#\lg)^2\max_{\bm i,\bm j\in\lg}|\ji \delta_{\bm i}, (\mathcal{A}_{\lg}+t\mathcal{B}_{\lg})^* \delta_{\bm j}\jd|\\
			&\le \epsilon(\#\lg)^2(M+\epsilon|t|)^{\#\lg-1}.
		\end{align*}
		According to the mean-value theorem, we obtain for some $\xi\in(0,1),$
		\begin{align*}
			|\det(\mathcal{A}_{\lg}+\mathcal{B}_{\lg})-\det \mathcal{A}_{\lg}|=|f(1)-f(0)|=|f'(\xi)|\le \epsilon (\#\lg)^2(M+\epsilon)^{\#\lg-1}.
		\end{align*}
		This completes the proof.
	\end{proof}
	
	\begin{lem}\label{93A}
	For any set $ B\subseteq \mathbb{Z}^{d},$ there is a $s$-regular deformation $ B^{*}$ satisfying
\[
B\subseteq B^*\subseteq \Lambda_{50N^{100}_{s}}(B)=\{\bm x\in \mathbb{Z}^{d}: \dist(\bm x,B)\leq 50N^{100}_{s}\}.
\]
\end{lem}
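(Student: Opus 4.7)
My plan is to construct $B^*$ by a finite bottom-up enlargement procedure across scales $1,2,\ldots,s$, then read off $s$-regularity and the size bound from the nested-block structure $(\bm a1)_{s}$ and the geometric growth of the scales $N_s$.

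More precisely, I would define inductively $B_0=B$, and for $1\le s'\le s$
\[
B_{s'}=B_{s'-1}\cup\bigcup\bigl\{\tilde\Omega_{\bm k}^{s'}:\ \bm k\in P_{s'},\ \tilde\Omega_{\bm k}^{s'}\cap B_{s'-1}\ne\emptyset\bigr\},
\]
and finally set $B^*:=B_{s}$. The inclusion $B\subseteq B^*$ is immediate. For the outer bound, each newly added point at stage $s'$ lies within $\diam(\tilde\Omega_{\bm k}^{s'})\le 2\tilde\zeta_{s'}\le 4N_{s'}^{100}$ of some point of $B_{s'-1}$, so $B_{s'}\subseteq\Lambda_{4N_{s'}^{100}}(B_{s'-1})$. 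Iterating and using $N_{s+1}\ge N_s^{10^5}-1$, the cumulative enlargement telescopes:
\[
B^*\subseteq\Lambda_{\sum_{s'=1}^{s}4N_{s'}^{100}}(B)\subseteq\Lambda_{50N_s^{100}}(B),
\]
the last inequality holding because the exponential gap between consecutive $N_{s'}$ makes $\sum_{s'\le s}N_{s'}^{100}$ dominated by a small multiple of $N_s^{100}$.

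The main step is verifying $s$-regularity. Fix $s'\le s$ and $\bm k\in P_{s'}$ with $\tilde\Omega_{\bm k}^{s'}\cap B^*\ne\emptyset$. I need to show $\tilde\Omega_{\bm k}^{s'}\subseteq B^*$. Pick any $\bm x\in\tilde\Omega_{\bm k}^{s'}\cap B^*$ and let $t$ be the stage at which $\bm x$ was introduced, so that either $\bm x\in B$ (set $t=0$) or $\bm x\in\tilde\Omega_{\bm k'}^{t}$ for some $\bm k'\in P_{t}$, $1\le t\le s$. If $t\le s'$, then $\tilde\Omega_{\bm k}^{s'}\cap B_{s'-1}\ni\bm x$ (when $t<s'$) or, when $t=s'$, by the separation clause of $(\bm a1)_{s'}$ we must have $\bm k'=\bm k$; in either case $\tilde\Omega_{\bm k}^{s'}$ is added in full at stage $s'$. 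If $t>s'$, then $\tilde\Omega_{\bm k'}^{t}\cap\tilde\Omega_{\bm k}^{s'}\ne\emptyset$ with $s'<t$, and the nesting clause of $(\bm a1)_{t}$ forces $\tilde\Omega_{\bm k}^{s'}\subseteq\tilde\Omega_{\bm k'}^{t}\subseteq B^*$. This dichotomy closes the argument.

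The main obstacle I anticipate is the last paragraph: one has to be careful that every point added to $B^*$ at a later stage $t>s'$ is genuinely covered by some $\tilde\Omega_{\bm k'}^{t}$ rather than by lower-scale enlargements carried forward, but the inductive definition keeps track of exactly this, so the nesting property $(\bm a1)$ does all the work. Once this is unpacked, both the $s$-regularity and the quantitative enclosure $B^*\subseteq\Lambda_{50N_s^{100}}(B)$ follow without further technicalities.
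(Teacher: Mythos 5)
Your proof is correct but takes a genuinely different route from the paper's. The paper constructs $B^*$ \emph{top-down}: starting at the largest scale $s$ and descending, at each scale $s-r$ it adjoins every block $\tilde{\Omega}_{\bm k}^{s-r}$ whose slightly padded neighbourhood $\lg_{2N_{s-r}^{100}}(\bm k)\supset\tilde{\Omega}_{\bm k}^{s-r}$ meets the current set, iterates within that scale until stabilization, proves the auxiliary claim that the iteration stabilizes after at most two passes (via a Diophantine spacing argument on $P_{s-r}$), and then verifies $s$-regularity a posteriori by showing that the padding is wide enough to absorb the cumulative drift coming from the subsequent, smaller scales. Your construction runs \emph{bottom-up}: at scale $s'$ you adjoin exactly the blocks $\tilde{\Omega}_{\bm k}^{s'}$ that already meet $B_{s'-1}$, with no padding and a single pass per scale. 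The single pass suffices because same-scale blocks are pairwise far apart (last clause of $(\bm a1)_{s'}$), so adding one block cannot create new intersections at that same scale; and regularity at scale $s'$ is secured against what later stages might add because the nesting clauses of $(\bm a1)_t$ for $t>s'$ guarantee that any block $\tilde{\Omega}_{\bm k'}^{t}$ meeting $\tilde{\Omega}_{\bm k}^{s'}$ actually contains it. This lets you dispense with both the stabilization claim $t_r\le 2$ and the safety-margin bookkeeping. Both arguments obtain the enclosure $B^*\subseteq\lg_{50N_s^{100}}(B)$ from the super-geometric growth of $N_{s'}$. Your version is, if anything, a bit cleaner, and I do not see a gap in it.
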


	\begin{proof}
				We start from
		\begin{align*}
			J_{0,0}=B.
		\end{align*}
		Inductively define
		\begin{align*}
			J_{r,0}\subsetneq J_{r,1}\subsetneq \cdot\subsetneq J_{r,t_r}:=J_{r+1,0},
		\end{align*}
		where
		\begin{align*}
			J_{r,t+1}=J_{r,t}\cup\left(\bigcup_{\{\bm k\in P_{s-r}:\ \lg_{2N_{s-r}^{100}}(\bm k)\cap J_{r,t}\}}\tilde{\Omega}_{\bm k}^{s-r}\right),
		\end{align*}
		and $t_r$ is the largest integer satisfying the $\subsetneq$ relationship. Thus,
		\begin{align}\label{93sub}
			\bm k\in P_{s-r},\ \lg_{2N_{s-r}^{100}}(\bm k)\cap J_{r+1,0}\ne\emptyset\Rightarrow \lg_{2N_{s-r}^5}(\bm k)\subset J_{r+1,0}.
		\end{align}
		We claim that $t_r\le 2$, $0\le r\le s-1$. Otherwise, there exist three different points $\bm k_1,\bm k_2,\bm k_3\in P_{s-r}$ such that
		\begin{align*}
			J_{r,0}\cap \lg_{2N_{s-r}^{100}}(\bm k_1)\ne\emptyset,\ \lg_{2N_{s-r}^{100}}(\bm k_1)\cap \lg_{2N_{s-r}^{100}}(\bm k_2)\ne\emptyset,\ \lg_{2N_{s-r}^{100}}(\bm k_2)\cap \lg_{2N_{s-r}^{100}}(\bm k_3)\ne\emptyset.
		\end{align*}
		Thus $\max(\|\bm k_1-\bm k_2\|,\|\bm k_2-\bm k_3\|,\|\bm k_1-\bm k_3\|)\le10N_s^{100}$. According to \eqref{as1} and \eqref{as2}, then two of the three points $\bm k_i$ must satisfy
		\begin{align*}
			\|(\bm k_i-\bm k_j)\cdot\bm\omega\|_{\T}<6\delta_{s-r}^{\frac{1}{100}},\ i\ne j
		\end{align*}
		which implies $\|\bm k_i-\bm k_j\|>\left(\frac{\g}{6\delta_{s-r}^{\frac{1}{100}}}\right)^{\frac{1}{\tau}}\gg N_{s-r}^{100}$, a contradiction. Thus, we have shown
		\begin{align*}
			J_{r+1,0}=J_{r,t_r}\subset \lg_{10N_{s-r}^{100}}(J_{r,0}).
		\end{align*}
		SInce
		\begin{align*}
			\sum_{r=0}^{s-1}20 N_{s-r}^{100}<50N_{s}^{100},
		\end{align*}
		we find $J_{s,0}$ to satisfy
		\begin{align}\label{93B}
			B=J_{0,0}\subset J_{s,0}\subset \lg_{50N_s^{100}}(B).
		\end{align}
		Assume that for some $\bm k\in P_{s'}$ $(1\le s'\le s)$, $J_{s,0}\cap \tilde{\Omega}_{\bm k}^{s'}\ne\emptyset$. From $\tilde{\Omega}_{\bm k}^{s'}\subset \lg_{1.5N_{s'}^{100}}(\bm k)$, we obtain
		\begin{align*}
			J_{s,0}\cap \lg_{1.5N_{s'}^{100}}(\bm k)\ne\emptyset.
		\end{align*}
		Recalling \eqref{93B}, we have
		\begin{align*}
			J_{s,0}\subset \lg_{50N_{s'-1}^{100}}(J_{s-s'+1,0}).
		\end{align*}
		From $50N_{s'-1}^{100}\ll 0.5 N_{s'}^{100}$, it follows that
		\begin{align*}
			J_{s-s'+1,0}\cap \lg_{2N_{s'}^{100}}(\bm k)\ne\emptyset.
		\end{align*}
		By \eqref{93sub}, we get
		\begin{align*}
			\tilde{\Omega}_{\bm k}^{s'}\subset\lg_{2N_{s'}^{100}}(\bm k)\ne\emptyset\subset J_{s-s'+1,0}\subset J_{s,0}.
		\end{align*}
		Then $B^*=J_{s,0}$ is the set set that satisfies all conditions.
	\end{proof}
	
	\bibliographystyle{alpha}

\end{document}